%% file: main.tex
\pdfoutput=1
\documentclass[12pt, reqno]{amsart}
\usepackage{amssymb, mathtools, bbm}
\usepackage[utf8]{inputenc}
\usepackage[margin=1.25in]{geometry}
\usepackage{times}
\usepackage[all=normal, bibbreaks=tight, floats=tight,
            mathdisplays=tight, bibnotes=tight]{savetrees}
\usepackage{graphicx}
\usepackage{caption, subcaption}
\usepackage[svgnames]{xcolor}
\usepackage{enumitem}
\usepackage{array, booktabs, threeparttable}
\usepackage{setspace}
\usepackage{float}
\usepackage{placeins}
\usepackage{etoolbox}
\usepackage{natbib}

\definecolor{LinkNavy}{RGB}{0,45,95}
\usepackage[colorlinks=true,
            linkcolor=LinkNavy,
            urlcolor=LinkNavy,
            citecolor=LinkNavy,
            filecolor=LinkNavy,
            pdfborder={0 0 0}
           ]{hyperref}
\hypersetup{pdftitle={Assumption-Lean Shrinkage and Model Averaging for Spatial Parameters},
            pdfauthor={Harvey Barnhard}}

\patchcmd{\section}{\scshape}{\bfseries}{}{}
\patchcmd{\subsection}{\bfseries\itshape}{\bfseries}{}{}
\makeatletter
\patchcmd{\@settitle}{\uppercasenonmath\@title}{\large}{}{}
\patchcmd{\@setauthors}{\MakeUppercase}{\vspace{-1em}\normalsize}{}{}
\makeatother
\newtheorem{theorem}{Theorem}[section]
\newtheorem{proposition}[theorem]{Proposition}
\newtheorem{lemma}[theorem]{Lemma}
\newtheorem{corollary}[theorem]{Corollary}
\theoremstyle{definition}
\newtheorem{definition}[theorem]{Definition}
\newtheorem{exampleinner}[theorem]{Example}
\newenvironment{example}
  {\pushQED{\qed}\exampleinner}
  {\popQED\endexampleinner}
\newtheorem{assumption}[theorem]{Assumption}
\theoremstyle{remark}
\newtheorem{remark}[theorem]{Remark}

\numberwithin{equation}{section}

\newcommand{\bbR}{\mathbb{R}}

\newcommand{\E}{\mathbb{E}}

\newcommand{\Cov}{\mathrm{Cov}}
\newcommand{\tr}{\mathrm{tr}}
\DeclareMathOperator*{\argmin}{arg\,min}
\DeclareMathOperator*{\argmax}{arg\,max}
\DeclareMathOperator{\diag}{diag}
\newcommand{\N}{\mathcal{N}}

\newcommand{\scmethod}[1]{\ifmmode\text{\textup{\textsc{#1}}}\else\textup{\textsc{#1}}\fi}
\newcommand{\sure}{\scmethod{sure}}
\newcommand{\mle}{\scmethod{mle}}
\newcommand{\nneb}{\scmethod{nn-eb}}
\newcommand{\close}{\scmethod{close}}
\newcommand{\closegauss}{\close-\scmethod{gauss}}
\newcommand{\gp}{\scmethod{gp}}

\newcommand{\bilat}{\scmethod{bilat}}
\newcommand{\gpbilat}{\gp-\bilat}

\input{tables/auto_oa_results}

\input{tables/auto_oa_proxy_gaps}

\input{tables/auto_oa_assure_welfare}

\input{tables/auto_oa_target_scatter}

\input{tables/auto_oa_ladder}

\begin{document}

\title{Assumption-Lean Shrinkage and Model Averaging for Spatial Parameters}
\author{Harvey Barnhard \\ Department of Economics, Harvard University}
\thanks{Email: \href{mailto:hbarnhard@g.harvard.edu}{hbarnhard@g.harvard.edu}. I am grateful to Isaiah Andrews, Jacob Carlson, Edward Glaeser, Neil Shephard, Rahul Singh, Elie Tamer, Davide Viviano, and participants in the Harvard econometrics workshop for helpful comments and suggestions. This material is based upon work supported by the National Science Foundation Graduate Research Fellowship Program under Grant No. DGE-2140743. Any opinions, findings, and conclusions or recommendations expressed in this material are those of the author and do not necessarily reflect the views of the National Science Foundation.}
\date{June 2026}

\begingroup
\onehalfspacing
\begin{abstract}
Economic decisions often depend on many noisy estimates of quantities such as neighborhood effects, school quality, and hospital performance. Shrinkage estimation can improve decisions by pooling information across related units, but geography, adjacency, and shared characteristics each define a different notion of relatedness, and each implies a different way of pooling. We treat the choice of relatedness as part of the estimation problem, using Stein's Unbiased Risk Estimate (\sure{}) to form a weighted average over a library of flexible shrinkage estimators. This comparison among the candidate estimators treats no prior or latent covariance structure as a correctly specified model for the parameters being estimated. Each candidate is judged by its \sure{} value. Under smoothness conditions on the estimators, the \sure{}-weighted average performs nearly as well as the best fixed weighted average of trained candidates, including nonlinear rules whose reported values use the full vector of noisy estimates. In an application to Opportunity Atlas economic mobility data from \oaNCzs{} commuting zones, the best individual spatial specification varies across zones, yet the \sure{}-weighted average tracks the best in each zone and reduces estimated mean squared error by about \oaReductionAggVsClose\% relative to the best-performing non-spatial empirical Bayes baseline in our library of estimators.
\end{abstract}

\maketitle
\endgroup
\newpage
\onehalfspacing

\section{Introduction}

When economic decisions rest on thousands of noisy estimates, how much should each estimate borrow from its neighbors, and which neighbors should count? The same issue arises for noisy estimates at the school \citep{kaneEstimatingTeacherImpacts2008, chettyMeasuringImpactsTeachersI2014}, hospital \citep{dimickRankingHospitalsReliability2010, hullEstimatingHospitalQuality2020}, firm \citep{klineSystemicDiscrimination2022}, or small-area \citep{fayEstimatesIncomeSmall1979} level whenever researchers have several plausible ways to define related units. The Opportunity Atlas \citep{chettyOpportunityAtlasMapping2018} estimates of neighborhood economic mobility helped identify high-opportunity neighborhoods in the Creating Moves to Opportunity program \citep{bergmanCreatingMovesOpportunity2024}, yet many of the underlying tract-level estimates are noisy due to few observations.\footnote{Title~I education funding under the Elementary and Secondary Education Act is allocated using the Census Bureau's Small Area Income and Poverty Estimates for over 13,000 school districts. Medicare's Hospital Readmissions Reduction Program adjusts payments to thousands of hospitals based on noisy risk-adjusted readmission rates \citep{guptaPerformancePayHospitals2021}. See \cite{waltersEmpiricalBayesMethods2024} for an overview of empirical Bayes methods in economics.} Shrinkage estimation can improve these estimates by borrowing strength from related units. But borrowing from \textit{which} units? Adjacent neighborhoods tend to have similar economic mobility, while neighborhoods separated by a highway or school district boundary may differ sharply. This paper treats that choice as part of the estimation problem: we build a library of shrinkage rules that encode different notions of relatedness, then use the data to average over the resulting estimates---with weights that can, and sometimes do, concentrate on a single rule.

We observe noisy estimates $Y=(Y_1,\ldots,Y_n)^\top$ of unobserved parameters $\theta = (\theta_1,\ldots,\theta_n)^\top$, written in vector form as
\[
Y = \theta + \varepsilon,
\]
where $\theta$ is the latent parameter vector to be estimated and $\varepsilon$ is sampling noise. The goal is to estimate $\theta$ accurately by pooling information across related units, and the question is how to pool. Empirical Bayes (EB) methods are a natural starting point. These methods estimate a prior distribution for $\theta$ and report the implied posterior means as the shrinkage estimates. Much of the nonparametric EB literature makes the prior flexible while retaining exchangeability across units:\footnote{A prior is \emph{exchangeable} if it is invariant to relabeling the units: it can encode how spread out the $\theta_i$ are, but not which particular units are related. Section~\ref{sec:estimators} states this formally.} the resulting rule does not use information about which units are close in space or otherwise linked \citep[e.g.,][]{kiefer_consistency_1956, jiang_general_2009, koenkerMizeraConvexOptimization2014, soloff_multivariate_2025}. Other EB approaches relax exchangeability by allowing the prior to vary with precision or covariates \citep{ignatiadisCovariatePoweredEmpiricalBayes2021, chenEmpiricalBayesWhen2024, luo_empirical_2025}. We treat such prior specifications as one way to construct candidate shrinkage maps. A prior specification or covariance structure can motivate a shrinkage map $Y\mapsto f(Y)$ that returns an estimate of $\theta$, but we ask which map performs better under squared-error loss, not which specification is the right model for $\theta$. In this sense our approach is \emph{assumption-lean} about the latent vector: it requires a model for the sampling noise in the estimates, but no model for the latent parameters themselves. The candidate library of shrinkage maps can therefore contain EB posterior-mean maps, spatial maps motivated by covariance models or adjacency structures, and maps whose tuning parameters are estimated from the same noisy estimates.

Candidate shrinkage maps can differ both in which other units enter each reported value and in how their tuning parameters are trained: one map may shrink through an estimated prior, another may give more weight to nearby or adjacent units, and another may estimate its smoothing weights from the observed vector $Y$. This flexibility can lower estimation error, but it creates an overfitting risk: the map that looks best for the observed vector $Y$ may be fitting the sampling noise $\varepsilon$ rather than estimating $\theta$ well. When the sampling noise is Gaussian with known covariance $\Sigma$, $\varepsilon \sim \N(0, \Sigma)$, Stein's Unbiased Risk Estimate (\sure{}; \citealp{steinEstimationMeanMultivariate1981}) provides an observable, unbiased estimate of each candidate's expected squared-error loss. Selecting the candidate with the smallest \sure{} therefore guards against overfitting, favoring the map expected to estimate $\theta$ best rather than the one that fits the observed $Y$ most closely. Such an estimate matters because cross-validation does not measure the right quantity here: with one noisy estimate per latent parameter $\theta_i$, holding out unit $i$ does not pin down the squared error relative to $\theta_i$. \sure{} avoids this validation problem without placing a distributional assumption on $\theta$, and lets us compare candidate shrinkage maps on the squared-error loss scale used to evaluate estimation of $\theta$.

Our procedure has two steps. The researcher first specifies candidate classes and how each is trained.  Applied to the observed data, these choices yield a finite library of trained maps $Y \mapsto f_k(Y)$, $k=1,\ldots,K$. \sure{} is then evaluated for each trained map, including the correction for parameters trained on the same data, and minimizing \sure{} over convex weights yields the \sure{}-weighted average $Y \mapsto \sum_{k=1}^K w_k f_k(Y)$. Choosing a single map is the special case in which the weights concentrate on it, so model selection is nested within this averaging. Section~\ref{sec:methodology} formalizes this sequence; Table~\ref{tab:workflow_summary} summarizes the workflow.

We make two methodological contributions. First, we give sufficient conditions under which \sure{} minimization can be used to choose within parameterized classes of shrinkage maps $Y\mapsto f_\gamma(Y)$ that use the full vector of noisy estimates. The resulting oracle inequalities show that the \sure{}-selected map performs nearly as well as the best map in the class, with performance measured by squared-error loss against $\theta$. A close point of comparison is \citet{kwonOptimalShrinkageEstimation2025}, who studies best-in-class shrinkage for panel fixed effects. That setting uses repeated observations over time and focuses on affine shrinkage rules. By contrast, the setting here is cross-sectional: one unit's estimate may enter another unit's fitted value through nonlinear rules based on geographic distance, spatial adjacency, or similarity in the observed estimates themselves. This within-class selection result also relates to work that uses estimated loss, \sure{}, or cross-validation to tune regularized many-parameter estimators \citep{abadieRiskMachineLearning2019, vives-i-bastidaSTRETCHINGNETMULTIDIMENSIONAL2023, adusumilliCrossValidationSURE2026}.

Second, we give a \sure{}-based model-averaging step for a finite library of trained shrinkage maps. This result combines the trained maps directly, taking each as a fixed building block. Each trained map must satisfy a regularity condition, checked separately for each candidate (Section~\ref{sec:theory}). The oracle comparison then applies whether the maps come from closed-form formulas or iterative optimization. Given a finite library of estimators, the procedure chooses convex weights by minimizing \sure{} for the weighted average, with the weights treated as fixed inside the criterion (Section~\ref{sec:model_averaging}). The oracle guarantee is stated for fixed weights. When the \sure{}-weighted average is reported, its \sure{} value is a separate evaluation of the map $Y\mapsto f_{\hat w(Y)}(Y)$. That evaluation accounts for the data-dependence of the selected weights and trained parameters. A close model-averaging comparison is \citet{hansenLeastSquaresModel2007}, who studies weights over linear least-squares fits. Here the weights range instead over trained nonlinear maps, including shrinkage rules whose tuning parameters are estimated from the data before averaging.

The empirical application uses Opportunity Atlas data to estimate tract-level economic mobility across \oaNCzs{} commuting zones. The tract-level estimates show strong spatial patterns: nearby tracts often have similar estimated mobility, but geography, adjacency, school-district boundaries, and historical segregation can make different forms of relatedness empirically relevant. In the main neighborhood mobility comparison, candidate rules differ by distance metric and incorporation of demographic covariates, and the empirical analysis reports the \sure{}-weighted average of candidate maps as the primary estimator. A separate Cook County comparison adds the value-similarity rule of Section~\ref{sec:estimators}, which lets large differences in the observed estimates reduce smoothing between nearby tracts. In this setting, the \sure{}-weighted average of candidate maps reduces \sure{}-estimated mean squared error (MSE) by about \oaReductionAggMle\% relative to the raw maximum-likelihood benchmark (\mle{}, the unshrunk tract estimates $Y$) and by about \oaReductionAggVsClose\% relative to \closegauss{}, the closed-form non-spatial EB benchmark---the Gaussian member of the conditional location-scale (CLOSE) family of \citet{chenEmpiricalBayesWhen2024}. The best individual spatial shrinkage rule varies across commuting zones. The broader lesson is that the relevant notion of relatedness is an empirical choice, and we recommend making that choice by \sure{}-weighted averaging rather than by committing to one form of spatial smoothing in advance.

The remainder of the paper proceeds as follows. Section~\ref{sec:methodology} sets up the Gaussian compound-decision problem for shrinkage estimation, introduces \sure{} as an observable estimate of squared-error loss, gives examples of candidate shrinkage maps, and then shows how \sure{} can be used to choose weighted averages of trained candidates. Section~\ref{sec:theory} states the two oracle inequalities, one for \sure{} selection within a class and one for \sure{}-weighted averaging across a library. Section~\ref{sec:empirical} applies the framework to Opportunity Atlas mobility estimates, first comparing estimators on \sure{}-estimated MSE for the latent mobility vector and then asking how the same shrinkage estimates affect a targeting exercise that selects high-mobility tracts. Section~\ref{sec:conclusion} concludes.

\section{Methodology}\label{sec:methodology}

\subsection{The Estimation Problem}

The researcher observes an $n$-dimensional vector $Y = \theta + \varepsilon$, where $\theta \in \bbR^n$ is a fixed but unknown parameter vector and $\varepsilon \sim \N(0, \Sigma)$ is Gaussian noise with known covariance matrix $\Sigma$. The Gaussian sampling model is a working approximation for microdata-derived estimates with reported precision, as in related empirical Bayes applications \citep[e.g.,][]{chenEmpiricalBayesWhen2024}. In such applications, each component $Y_i$ is itself an average or regression coefficient estimated from an underlying micro-sample, and the known covariance matrix $\Sigma$ reflects the sampling precision of those estimates. In the Opportunity Atlas application, $\Sigma$ is taken to be the diagonal matrix of reported marginal variances; Appendix~\ref{app:noise_covariance_misspecification} records what changes when \sure{} is computed with an approximate covariance matrix.

This is a compound decision problem in the sense of \citet{robbins1951asymptotically}: $\theta$ is fixed (no prior is placed on it), the researcher chooses a decision rule $f$ that returns a vector of actions $f(Y)$ with one action $f_i(Y)$ per unit, each allowed to depend on the entire vector $Y$, and the rule is judged by its average squared-error loss against the fixed $\theta$. Each such rule is a map $f\colon\bbR^n\to\bbR^n$, and we compare rules within a candidate class $\mathcal F$ by the realized loss
\[
L_n(f)=\frac{1}{n}\|f(Y)-\theta\|_2^2.
\]
The infeasible oracle in this class is
\[
f^* \in \argmin_{f\in\mathcal F} L_n(f).
\]
This oracle uses the unknown vector $\theta$ and is therefore only a benchmark. The statistical problem is to use the observed vector $Y$ to select, from the rules $f\in\mathcal F$, an estimate whose realized loss is close to this oracle benchmark.

Throughout this paper, expectations are over the sampling noise in $Y=\theta+\varepsilon$, treating $\theta$ as fixed. We reserve the term \emph{risk} for the expected realized loss, $R_n(f):=\E[L_n(f)]$. Because $\theta$ is unknown, neither $L_n(f)$ nor $R_n(f)$ can be evaluated directly, so effective estimation requires an observable criterion whose behavior tracks the unobserved loss. Appendix~\ref{app:mse_decision_relevance} gives a simple condition under which lower squared-error estimation error also reduces errors in downstream comparisons based on the estimated vector.

As a concrete example, each $Y_i$ is a tract-level estimate of economic mobility from the Opportunity Atlas \citep{chettyOpportunityAtlasMapping2018}, with known sampling variance $\Sigma_{ii} = \sigma_i^2$ reflecting the precision of the underlying microdata. The parameter vector $\theta \in \bbR^n$ represents true neighborhood-level mobility across hundreds to thousands of Census tracts in a commuting zone. Reported standard errors vary substantially across tracts because the underlying sample sizes differ. Because the units are neighborhoods, spatial structure is a natural source of pooling information.  The application therefore motivates shrinkage rules that can leverage geographic relationships rather than treating all tracts as exchangeable.

\subsection{Stein's Unbiased Risk Estimate (\sure{})}\label{sec:sure}

For any continuously differentiable estimator $f$ such that $\E[\|f(Y)-Y\|_2^2]<\infty$ and $\E[\sum_{i,j}|\Sigma_{ij}\partial_j f_i(Y)|]<\infty$, Stein's lemma gives an observable statistic $\sure_n(f)$ satisfying $\E[\sure_n(f)] = R_n(f)=\E[L_n(f)]$:\footnote{Continuous differentiability is used here as a convenient sufficient condition.  Unbiasedness of \sure{} also extends to weakly differentiable maps $Y\mapsto f(Y)$ satisfying the same integrability conditions. For example, $x\mapsto |x|$ has weak derivative $\operatorname{sign}(x)$.  The sufficient conditions used below are stated as continuous-differentiability conditions.}
\[
\sure_n(f) := \underbrace{\frac{1}{n}\|Y - f(Y)\|_2^2 - \frac{1}{n}\tr(\Sigma)}_{\text{noise-corrected in-sample MSE}} + \underbrace{\frac{2}{n}\tr\{\Sigma Df(Y)\}}_{\text{complexity correction}}.
\]
Here $Df(Y) = [\partial f_i(Y)/\partial Y_j]_{ij}$ is the $n \times n$ Jacobian matrix of $f$, and $\tr(\Sigma)=\sum_{i=1}^n \Sigma_{ii}$ is the trace of the noise covariance matrix, the total sampling variance in $Y$. The first term subtracts the irreducible noise variance $\tr(\Sigma)$ from the in-sample prediction error, converting it into an estimate of the estimation error $\|f(Y) - \theta\|_2^2$ rather than the prediction error $\|f(Y) - Y^{\mathrm{new}}\|_2^2$, where $Y^{\mathrm{new}}$ is an independent draw from the same model. But this estimate is generally biased downward when the rule uses the same $Y$ to form the reported values being evaluated.\footnote{Cross-fitting eliminates the downward bias issue by making the decision rule separable across sample splits \citep[e.g.,][]{ignatiadisCovariatePoweredEmpiricalBayes2021, chenCompoundSelectionDecisions2025}. \sure{} instead accounts for the dependence directly through the complexity correction, using the full sample without splitting.} The complexity correction measures the sensitivity of $f$ to the data---a penalty that is larger for more flexible estimators and corrects for this optimism.

For a linear smoother $f_S(Y)=SY$ with a data-independent matrix $S$, the same formula reduces to
\[
\sure_n(f_S)
=
\frac{1}{n}\|(I-S)Y\|_2^2
-\frac{1}{n}\tr(\Sigma)
+\frac{2}{n}\tr(\Sigma S).
\]
Once $S$ is fixed, every term on the right-hand side is computable from $Y$, $\Sigma$, and $S$.

The statistic $\sure_n(f)$ is the observable risk criterion used below to evaluate shrinkage estimators. It can also serve as a training criterion within a candidate class. Section~\ref{sec:theory} gives conditions under which $\sure_n(f)$ tracks $L_n(f)$ closely enough to justify those choices.

\subsection{Examples of Shrinkage Estimators}\label{sec:estimators}

Write $f_i(Y)$ for the $i$th coordinate of a shrinkage rule $f$, the value reported for unit $i$. The examples below supply the paper's candidate classes in the empirical application of Section~\ref{sec:empirical}. For a parameter space $\Gamma\subset \bbR^d$, each example represents a parameterized class of maps $\{f_\gamma:\gamma\in\Gamma\}$ encoding one notion of relatedness, and they are ordered in this section by the degree to which the full vector $Y$ enters each component rule $f_i(Y)$. In the normal--normal empirical Bayes (\nneb{}) example, $f_i(Y)$ depends on $Y$ only through unit $i$'s own estimate and indirectly through trained scalars shared by all units. The Gaussian-process (\gp{}) examples let other units' estimates enter $f_i(Y)$ directly, through weighted combinations $f_i(Y)=\sum_j s_{ij}Y_j$ whose smoothing weights $s_{ij}$ are determined by spatial relationships such as geographic distance. The value-similarity example lets the smoothing weights depend on the observed estimates themselves, $s_{ij}=s_{ij}(Y)$: among nearby units, those with similar values receive more weight. Each example is motivated by a working model for the latent vector $\theta$, but the working model serves only to construct the class $\{f_\gamma: \gamma\in \Gamma\}$. The \sure{} comparisons in the paper evaluate the resulting maps by squared-error loss for the fixed vector $\theta$, whether or not the working model is properly specified. When the tuning parameters of a class are themselves trained on $Y$, \sure{} must account for that dependence; Section~\ref{sec:learned_params} gives the correction.

\begin{example}[Normal--normal empirical Bayes (\nneb{})]\label{ex:normal_normal_eb}
The first candidate estimates a two-parameter prior for the latent parameters and reports the associated posterior means. The rule proceeds as if the parameters were exchangeable---as if relabeling units changed nothing, $(\theta_{\pi(1)},\ldots,\theta_{\pi(n)}) \overset{d}{=} (\theta_1,\ldots,\theta_n)$ for every permutation $\pi$ of $\{1,\ldots,n\}$---and adopts the normal working prior $\theta_i \stackrel{\mathrm{iid}}{\sim} \N(\mu, \tau^2)$ \citep{robbinsEmpiricalBayesApproach1956, efronSteinParadoxStatistics1977, morrisParametricEmpiricalBayes1983, xieSUREEstimatesHeteroscedastic2012a}. For fixed $\gamma=(\mu,\tau^2)$, the posterior mean shrinks each estimate toward $\mu$, more strongly when the sampling variance $\sigma_i^2$ is large relative to $\tau^2$:
\[
f_{\gamma,i}(Y)
=
\frac{\tau^2}{\sigma_i^2 + \tau^2}\, Y_i
+
\frac{\sigma_i^2}{\sigma_i^2 + \tau^2}\,\mu .
\]
The trained version replaces $\gamma$ by $\hat\gamma=(\hat\mu,\hat\tau^2)$, where $\hat\mu$ is the estimated global mean and $\hat\tau^2$ is the estimated prior variance. With these trained scalar parameters held fixed, the reported value for unit $i$ depends on its own estimate $Y_i$ and sampling variance $\sigma_i^2$; it does not use geography or adjacency to decide which other estimates enter unit $i$'s reported value.
\end{example}

One way to relax this working assumption keeps independence across units but drops identical distribution: conditioning on unit covariates $X_i$---which may include the standard error $\sigma_i$---replaces the two scalars with functions, so the working prior becomes $\theta_i \mid X_i \sim \N(\mu(X_i), \tau^2(X_i))$ and the trained rule shrinks $Y_i$ toward $\hat\mu(X_i)$ by an amount governed by $\hat\tau^2(X_i)$ and $\sigma_i^2$ \citep{ignatiadisCovariatePoweredEmpiricalBayes2021, chenEmpiricalBayesWhen2024}. Units with the same covariates are still treated symmetrically. Nothing in the rule links unit $i$ to specific other units.

\begin{example}[Gaussian-process shrinkage]\label{ex:gp_shrinkage}
\gp{} shrinkage starts from a covariance specification for the latent vector $\theta$ and uses the resulting posterior-mean formula as the shrinkage map. Let $K_\gamma\in\bbR^{n\times n}$ be a positive semidefinite covariance matrix whose $(i,j)$ entry records the covariance assigned to $\theta_i$ and $\theta_j$. Under a spatial specification, this entry is larger for units that are close under the chosen measure of distance, as in standard spatial covariance models \citep{steinInterpolationSpatialData1999, rasmussenGaussianProcessesMachine2006}. The parameter vector $\gamma$ indexes the covariance specification used to construct $K_\gamma$. The Gaussian prior specification $\theta \sim \N(0,K_\gamma)$ would deliver posterior mean $K_\gamma(K_\gamma+\Sigma)^{-1}Y$.\footnote{A prior mean can be included without changing the role of $K_\gamma$. If $\theta\sim\N(\mu,K_\gamma)$ for $\mu\in\bbR^n$, the posterior mean is $\mu+K_\gamma(K_\gamma+\Sigma)^{-1}(Y-\mu)$. The zero-mean display keeps the notation focused on the covariance structure.} We use this posterior-mean formula as a class of shrinkage maps over $\gamma\in \Gamma$, while continuing to treat $\theta$ as a fixed unknown vector:
\[
f_\gamma(Y) = S_\gamma\, Y, \qquad S_\gamma := K_\gamma(K_\gamma + \Sigma)^{-1}.
\]
For fixed $\gamma$, the map $f_\gamma(Y)=S_\gamma Y$ is the linear-smoother case above with $S=S_\gamma$.\footnote{The difference from \citet{kwonOptimalShrinkageEstimation2025} is which dimension the covariance matrix indexes. In Kwon's panel fixed-effect setting, the relevant covariance matrix is indexed by time periods within a unit, so the resulting smoother combines that unit's time-specific estimates. In this paper, $K_\gamma\in\bbR^{n\times n}$ is indexed by cross-sectional units, so $S_\gamma=K_\gamma(K_\gamma+\Sigma)^{-1}$ lets the reported value for one tract depend on estimates from other tracts through geography or adjacency.} The smoothing matrix $S_\gamma$ implements shrinkage by balancing cross-unit similarity against sampling precision: noisier units are moved more toward estimates from similar units, while precisely estimated units retain more of their own observation.

The covariance matrix $K_\gamma$ determines the smoothing matrix $S_\gamma$, and hence which other estimates enter each reported value. One common spatial covariance form, with $\gamma=(\sigma_{\mathrm{sp}}^2,\sigma_{\mathrm{nug}}^2,\ell)$, is
\[
K_{ij}(\gamma) := \sigma_{\mathrm{sp}}^2\, k(d_{ij};\ell) + \sigma_{\mathrm{nug}}^2\, \mathbf{1}\{i = j\},
\]
where $d_{ij}$ is the distance between units $i$ and $j$, $\sigma_{\mathrm{sp}}^2$ sets the variance scale of the shared spatial component, $\sigma_{\mathrm{nug}}^2$ is a nugget variance for idiosyncratic variation not explained by the spatial structure, and $\ell$ controls how quickly covariance decays with distance. The length scale $\ell$ sets the radius of effective pooling: small $\ell$ makes $K_\gamma$ nearly diagonal, so each estimate is shrunk on its own without local pooling; large $\ell$ pools over ever-wider neighborhoods, approaching a single common component shared by all units. Like the variance parameters, $\ell$ is not fixed in advance but trained on the data (Section~\ref{sec:learned_params}). The distance metric itself is a modeling choice. It might be geographic distance, road-network distance, or shortest-path distance on an adjacency graph. The Opportunity Atlas application uses the exponential kernel $k(d;\ell)=\exp(-d/\ell)$ with either geographic distance or contiguity distance.\footnote{This is the Mat\'ern-$\tfrac12$ correlation. More general Mat\'ern kernels add a smoothness parameter, but the empirical application fixes that parameter at $1/2$.} For any fixed $\gamma$, $K_{ij}$ depends only on relationships among units, not on their observed outcomes, so $f_\gamma$ is linear in $Y$.
\end{example}

Figure~\ref{fig:shrinkage_choropleth} illustrates the difference between global and spatial shrinkage on mobility estimates for Cook County tracts selected from the Chicago commuting zone. The raw estimates (panel~A) are visibly noisy. \nneb{} (panel~B) shrinks every tract toward the same global target $\bar Y$, the amount depending only on the tract's own precision: estimates above $\bar Y$ are pulled down and estimates below are pulled up, regardless of the values of neighboring tracts. The result both over-smooths and under-smooths: tracts in different community areas---Chicago's named groupings of Census tracts---are blurred toward one another, while a tract surrounded by similar neighbors is still dragged away from its local average toward the distant global mean $\bar Y$. The spatial \gp{} (panel~C) instead shrinks each tract toward a local neighborhood average, so the smoothed map retains more of the spatial pattern in the raw estimates while reducing tract-level noise.

\begin{figure}[t]
  \centering
  \includegraphics[width=\textwidth]{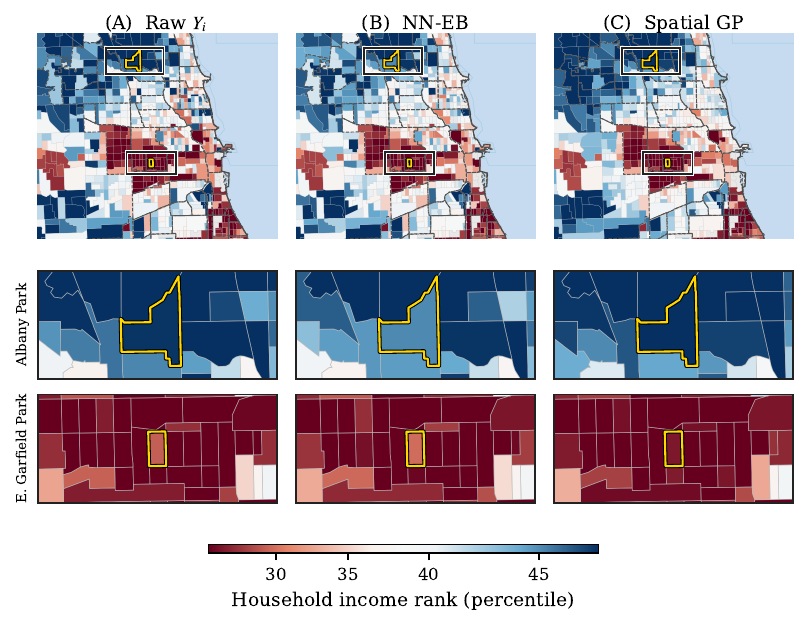}
  \caption{Choropleth maps of tract-level mobility estimates for Cook County tracts selected from the Chicago commuting zone. (A)~Raw \mle{}s. (B)~\nneb{} predictions, which shrink every tract toward the global mean $\bar Y$. (C)~Spatial \gp{} predictions, which smooth locally while retaining local geographic variation. Dashed lines mark community areas, official groupings of contiguous tracts. The bottom rows magnify the boxed regions: \nneb{} pulls a tract in Albany Park away from its uniformly high-mobility neighbors and a tract in East Garfield Park away from its uniformly low-mobility neighbors, while the spatial \gp{} holds each tract near its neighborhood. The same two tracts are highlighted in Figure~\ref{fig:target_scatter}. This figure and Figure~\ref{fig:target_scatter} use the pooled-male mobility outcome, whose tract estimates are noisier than the pooled outcome's.  The noisier estimates make the differences between the shrinkage rules easier to see. The empirical comparison in Section~\ref{sec:empirical} focuses on the pooled outcome, with additional outcomes in Appendix~\ref{app:oa_robustness}.}
  \label{fig:shrinkage_choropleth}
\end{figure}

Figure~\ref{fig:target_scatter} quantifies the competing shrinkage targets. Each tract's raw estimate $Y_i$ is plotted against its leave-one-out spatial \gp{} shrinkage target $\mu_i$, defined by writing the \gp{} prediction for tract $i$ as $f_i(Y)=S_{ii}Y_i+(1-S_{ii})\mu_i$. The spatial target $\mu_i$ plays the role that the global mean $\mu$ played in \nneb{}. The figure overlays the two targets: \nneb{} shrinks every tract toward $\bar{Y}$ (orange line), regardless of spatial context, while the spatial \gp{} shrinks toward $\mu_i$ (blue diagonal). Whenever a tract's raw estimate lies between the two targets---the shaded region of Figure~\ref{fig:target_scatter}, containing \oaWedgeSharePct\% of the \oaScatterNTracts{} tracts---the two procedures pull $Y_i$ in opposite directions. The tracts highlighted in Figure~\ref{fig:shrinkage_choropleth} sit on opposite lobes of this region: the spatial \gp{} pulls the tract in Albany Park up toward its high-mobility neighborhood while \nneb{} pulls it down toward $\bar Y$, and the mirror pattern holds for the tract in East Garfield Park.

\begin{figure}[t]
  \centering
  \includegraphics[width=0.7\textwidth]{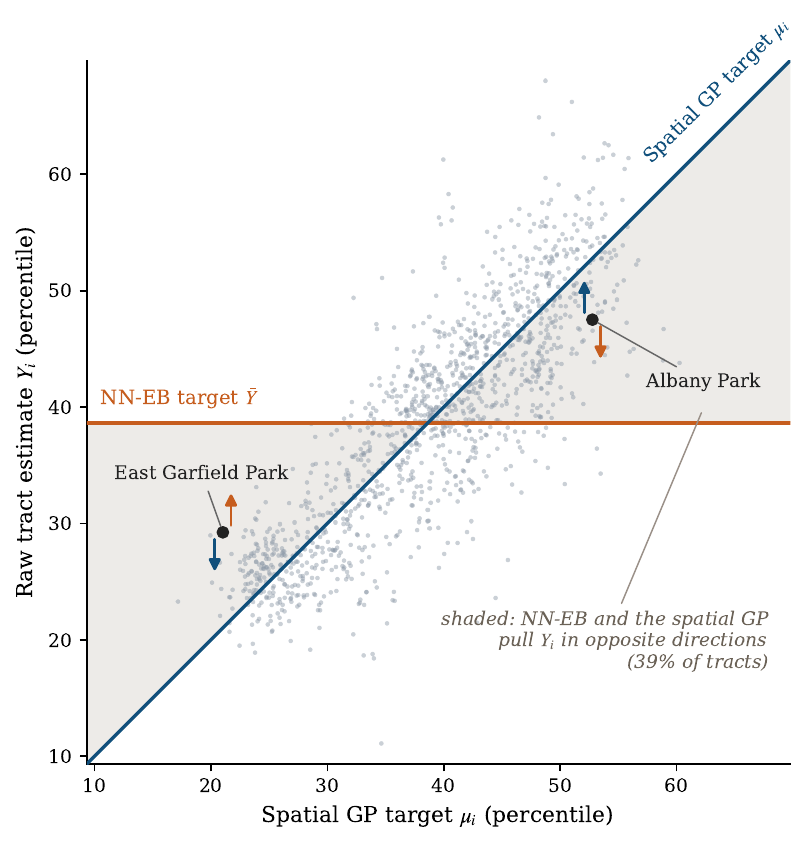}
  \caption{Competing shrinkage targets for the same Cook County tract-level mobility estimates from the Chicago commuting zone. Each tract's \mle{} $Y_i$ is plotted against its leave-one-out spatial \gp{} shrinkage target $\mu_i = (f_i(Y) - S_{ii} Y_i)/(1 - S_{ii})$, where $S = K_\gamma(K_\gamma + \Sigma)^{-1}$. The blue diagonal shows the spatial \gp{} target ($Y_i = \mu_i$); the orange horizontal line shows the \nneb{} target ($\bar{Y}$). Axes are in percentile units. The shaded wedge between the two target lines contains the tracts whose raw estimate lies between the targets, so \nneb{} and the spatial \gp{} pull $Y_i$ in opposite directions (\oaWedgeSharePct\% of tracts). Albany Park and East Garfield Park mark the two highlighted tracts from Figure~\ref{fig:shrinkage_choropleth}. Paired fixed-length arrows at each show the direction---not the magnitude---in which each procedure moves the raw estimate, in the color of the corresponding target line.}
  \label{fig:target_scatter}
\end{figure}

The \gp{} shrinkage maps above have the form $SY$ once the covariance matrix is fixed: for a given tract, the estimates that enter its reported value are determined by distance, not by the realized values. The next example keeps the spatial smoothing structure but lets differences in the observed estimates reduce the geography-based similarity between nearby tracts. The empirical motivation is that nearby tracts can be separated by highways, rivers, school-district boundaries, or boundaries associated with historical segregation.  In such cases, smoothing over geographic distance may average across places whose observed mobility estimates differ sharply. The construction is the bilateral filter, originally developed as an edge-preserving smoother in image processing \citep{tomasiBilateralFilteringGray1998}, adapted to the Gaussian shrinkage form above.

\begin{example}[Value-similarity shrinkage]\label{ex:bilateral_shrinkage}
A value-similarity rule starts from a geography-based covariance matrix $K^{\mathrm{geo}}_\gamma$ and reduces the entry for nearby tracts whose observed estimates are far apart:
\[
K_{ij}(Y):=K^{\mathrm{geo}}_{\gamma,ij}\exp\{-\lambda(Y_i-Y_j)^2\}.
\]
The exponential factor is close to one when the observed estimates $Y_i$ and $Y_j$ are similar and close to zero when they are far apart in value. Here $\lambda\ge 0$ is a tuning parameter that joins $\gamma$. At $\lambda=0$ the rule is the geography-only smoother of Example~\ref{ex:gp_shrinkage}, and larger $\lambda$ suppresses smoothing across large differences in observed values. The rule therefore smooths locally in geography while allowing large differences in the observed estimates to reduce cross-tract smoothing.
The corresponding shrinkage map has the same algebraic form as the \gp{} smoother, but now with a covariance matrix that depends on $Y$:
\[
f(Y)=S(Y)Y,\qquad S(Y):=K(Y)\{K(Y)+\Sigma\}^{-1}.
\]
Because the smoothing matrix now changes with $Y$, the shrinkage map is nonlinear in the data. The \sure{} complexity correction must therefore differentiate the whole map $Y\mapsto S(Y)Y$ using the product rule: it includes the fixed-smoother trace term $\tr\{\Sigma S(Y)\}$ plus additional terms from how $S(Y)$ changes with the observed estimates.
\end{example}

Figure~\ref{fig:bilateral_example} illustrates the difference between geography-only and value-similarity shrinkage targets for a set of central-Chicago tracts in Cook County. The figure displays the leave-one-out \emph{shrinkage targets} $\mu_i = (f_i(Y) - S_{ii} Y_i) / (1 - S_{ii})$. Panel~(A) shows the noisy estimates, including visible local contrasts across some community-area borders. Panel~(B) uses a geographic smoother, so nearby tracts enter the formula according to distance.  This attenuates some of those local contrasts. Panel~(C) also uses similarity in the observed estimates, so pairs of tracts with dissimilar values receive less weight and more of the visible contrast is retained. Compare panels~(B) and~(C) at communities \textcircled{1} and \textcircled{3}: geographic smoothing alone bleeds North Lawndale's low estimates across the boundary into the Lower West Side, while the value-similarity target preserves the contrast. Both smoothed panels use the same preliminary covariate adjustment (OLS residualization on demographic covariates), held fixed across panels, so the comparison isolates geography-only versus value-similarity smoothing.

\begin{figure}[!t]
    \centering
    \includegraphics[width=\textwidth]{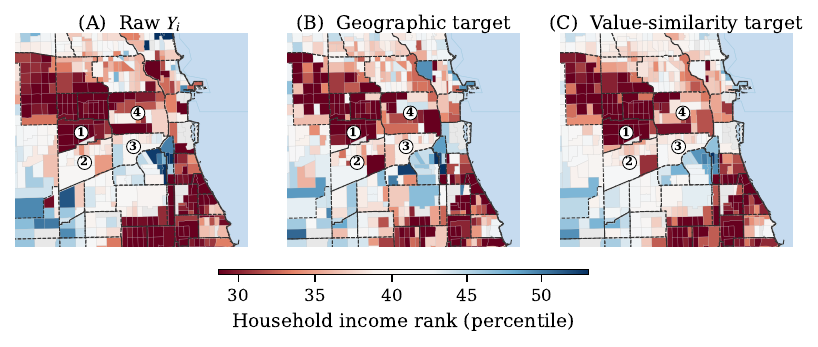}
    \caption{Leave-one-out shrinkage targets for geography-only and value-similarity smoothing: central-Chicago tracts in Cook County, economic mobility. (A)~Raw tract-level estimates $Y_i$. (B)~Shrinkage targets using geographic distance only. (C)~Shrinkage targets that also downweight tracts with dissimilar observed estimates, retaining more of the visible local contrast across nearby communities. Both smoothed panels use the same preliminary covariate adjustment, held fixed across panels. The target $\mu_i$ for tract~$i$ is the affine target satisfying $f_i(Y)=S_{ii}Y_i+(1-S_{ii})\mu_i$. Dashed lines mark community area boundaries. Numbered communities: \textcircled{1}~North Lawndale, \textcircled{2}~South Lawndale, \textcircled{3}~Lower West Side, \textcircled{4}~Near West Side.}
    \label{fig:bilateral_example}
\end{figure}

Each way of defining which estimates are related---through geographic distance, tract adjacency, observed-value similarity, or their combinations---yields a distinct candidate class with its own tuning parameters $\gamma$. A practitioner then faces two nested choices: \emph{within} each candidate class, how should $\gamma$ be trained? And when averaging \emph{across} classes, what weight should each trained candidate receive? The across-candidate comparison is posed in terms of \sure{}, while within-candidate training may use \sure{} or another criterion chosen by the researcher.  Either way, the resulting trained map is later evaluated by \sure{}. The value-similarity example shows one source of extra $Y$-dependence, through the matrix $S(Y)$. The next subsection turns to another: tuning parameters that are trained on $Y$.

\subsection{\sure{} for Trained Parameters}\label{sec:learned_params}

In practice, using a candidate class $\{f_\gamma:\gamma\in\Gamma\}$ requires a training rule chosen by the researcher. This rule maps the observed vector $Y$ to parameter values $\hat\gamma(Y)$. The tuning parameter $\gamma$ may be a length scale, a variance component, a regularization strength, or a value-similarity parameter. Write
\[
F(Y)=f_{\hat\gamma(Y)}(Y)
\]
for the trained map. For risk evaluation, \sure{} is applied to the full map $F$ using the chain rule, not to $f_\gamma$ with the realized value $\hat\gamma(Y)$ plugged in and treated as fixed. With output coordinates as rows, the chain rule gives
\[
DF(Y)
=
\underbrace{
D_y f_\gamma(Y)\big|_{\gamma=\hat\gamma(Y)}
}_{\text{direct sensitivity, holding $\gamma$ fixed}}
+
\underbrace{
D_\gamma f_\gamma(Y)\big|_{\gamma=\hat\gamma(Y)}D_Y\hat\gamma(Y)
}_{\text{sensitivity from training on $Y$}}.
\]
Therefore \sure{} for the trained map equals the fixed-parameter \sure{} formula evaluated at the realized $\hat\gamma(Y)$, plus an additional training correction:
\[
\begin{aligned}
\sure_n(F)
&=
\underbrace{
\frac{1}{n}\|Y-f_{\hat\gamma(Y)}(Y)\|_2^2
-\frac{1}{n}\tr(\Sigma)
+\frac{2}{n}
\tr\!\left[
\Sigma D_y f_\gamma(Y)\big|_{\gamma=\hat\gamma(Y)}
\right]
}_{\text{fixed-parameter \sure{} at the realized $\hat\gamma(Y)$}}
\\
&\quad+
\underbrace{
\frac{2}{n}
\tr\!\left[
\Sigma
D_\gamma f_\gamma(Y)\big|_{\gamma=\hat\gamma(Y)}
D_Y\hat\gamma(Y)
\right]
}_{\text{training correction}}.
\end{aligned}
\]
We call the first brace---the \sure{} formula with $\gamma$ treated as fixed---\emph{proxy} \sure{}. The second brace is the training correction.

The training rule can be a closed-form estimator or an iterative algorithm. Method-of-moments estimates, maximum-likelihood estimates, and fixed iterative optimization routines all produce a map $Y\mapsto\hat\gamma(Y)$.\footnote{Iterative optimization includes standard stochastic-gradient methods. The Opportunity Atlas implementation uses AdamW, a decoupled-weight-decay variant of Adam, for the trainable \gp{} candidates \citep{kingmaAdamMethodStochastic2017,loshchilovDecoupledWeightDecay2019}.} The training choice is part of candidate construction. For \sure{} evaluation, the relevant object is the resulting trained map $Y\mapsto f_{\hat\gamma(Y)}(Y)$, including the sensitivity of $\hat\gamma(Y)$ to the same data.

For a \gp{} candidate class indexed by covariance parameters $\gamma$, write the fixed-parameter zero-mean shrinkage map as $f_\gamma(Y)=K_\gamma(K_\gamma+\Sigma)^{-1}Y$. The conventional \gp{} training rule is maximum marginal likelihood. Given a working covariance family $\{K_\gamma:\gamma\in\Gamma\}$, this training rule chooses $\hat\gamma_{\mathrm{ML}}(Y)$ by maximizing the Gaussian marginal likelihood for $Y$ with covariance $K_\gamma+\Sigma$ \citep{rasmussenGaussianProcessesMachine2006}.\footnote{Under the auxiliary \gp{} prior specification $\theta\sim\N(0,K_\gamma)$ and $Y\mid\theta\sim\N(\theta,\Sigma)$, integrating out $\theta$ gives the marginal likelihood $Y\sim\N(0,K_\gamma+\Sigma)$. If this covariance specification is correct for some $\gamma_0\in\Gamma$, the posterior mean based on $K_{\gamma_0}$ is the squared-error Bayes rule, so likelihood-based covariance training and squared-error prediction are aligned. If no such $\gamma_0$ exists, the working covariance family is misspecified, and maximum marginal likelihood targets the $\gamma$ minimizing Kullback--Leibler divergence within $\{K_\gamma:\gamma\in\Gamma\}$, which need not minimize squared-error risk of the induced shrinkage map. \citet{bachocCrossValidationMaximum2013,bachocAsymptoticAnalysisCovariance2018} show this target mismatch for \gp{} covariance estimation under misspecification: maximum likelihood targets Kullback--Leibler divergence, whereas cross-validation targets prediction mean squared error. \sure{} plays the corresponding role here by estimating the mean squared error of the shrinkage rule.} A training rule need not minimize \sure{} for the resulting trained map to be evaluated using \sure{}.  A likelihood-trained \gp{} map could therefore be included as one trained candidate in the finite library considered for averaging in Section~\ref{sec:model_averaging}. In the empirical application, the trainable \gp{} candidates in Table~\ref{tab:oa_candidates} are trained by minimizing proxy \sure{} over $\gamma$.

In practice, the researcher does not need to derive $D_Y\hat\gamma(Y)$ by hand. Automatic differentiation can propagate sensitivities through an implemented training algorithm. When $\hat\gamma(Y)$ is characterized by first-order conditions, implicit-differentiation tools use those conditions to obtain the same sensitivity without deriving a new formula for each training problem \citep{blondel_efficient_2022}. The trace terms in \sure{} can then be computed efficiently using randomized trace estimation \citep{hutchinson_stochastic_1990,nobel_tractable_2023}.\footnote{If $v$ is a random vector with $\E[vv^\top]=\Sigma$, then $\E[v^\top DF(Y)v]=\tr\{\Sigma DF(Y)\}$. A randomized trace estimate averages $v^\top DF(Y)v$ over several independent draws of $v$. Each draw requires the Jacobian-vector product $DF(Y)v$, which automatic differentiation computes by propagating one direction through the chain rule. This avoids constructing $DF(Y)$ explicitly and avoids the matrix-matrix products that would arise from carrying the full Jacobian through the training rule.}

The chain-rule decomposition also clarifies how the theory treats trained candidates.  Once the training rule is fixed,
the composite map $Y\mapsto f_{\hat\gamma(Y)}(Y)$ is the estimator evaluated by \sure{}, and Appendix~\ref{app:regularity_verification} gives sufficient conditions under which this composite map satisfies the regularity condition used for averaging.

\subsection{\sure{} Model Averaging}\label{sec:model_averaging}

Suppose the preceding steps produce a finite library of $K$ candidate estimators, $f_1,\ldots,f_K$. The candidates may differ in the information they use, their preprocessing choices, or their training rules. Rather than committing in advance to one candidate class, \sure{} model averaging uses \sure{} itself to choose convex weights across the trained maps. When candidate rules make different errors, a convex combination can beat every single candidate. In the Opportunity Atlas application, the best single candidate differs across commuting zones. Averaged across them, however, the \sure{}-weighted average has lower estimated risk than any single candidate (Section~\ref{sec:empirical}). The oracle comparison below is therefore against the best fixed convex combination---a stronger benchmark than the best single candidate, since every single candidate is a vertex of the simplex.

Specifically, for weights in the simplex we form
\[
f_w(Y) = \sum_{k=1}^K w_k\, f_k(Y), \qquad w \in \Delta^{K-1} := \{w \in \bbR^K_+ : \textstyle\sum_k w_k = 1\},
\]
and choose weights by minimizing the fixed-weight \sure{} criterion:
\[
\hat{w} \in \argmin_{w \in \Delta^{K-1}} \sure_n(f_w).
\]
Within this minimization problem, each proposed weight vector $w$ is treated as fixed. This fixed-weight criterion is the object used for the oracle comparison in Section~\ref{sec:averaging_guarantee}. After the observed data select $\hat w(Y)$, the reported map is the \sure{}-weighted average,
\[
\tilde{f}(Y) := f_{\hat{w}(Y)}(Y) = \sum_{k=1}^K \hat{w}_k(Y)\, f_k(Y).
\]
As a function of the weights, $f_w$ is linear in $w$, and for each fixed $w$ its Jacobian satisfies $Df_w=\sum_k w_k Df_k$. Therefore the Jacobian trace $\tr(\Sigma\,Df_w) = \sum_k w_k \tr(\Sigma\,Df_k)$ is also linear in $w$, so the fixed-weight objective $\sure_n(f_w)$ is quadratic in $w$.\footnote{The simplex-constrained quadratic program can be solved with standard convex optimization tools. If \sure{} is reported for the final data-chosen average, differentiating the selected weights can be handled by differentiating the optimization conditions when the solution is locally stable.}

Before averaging over candidate estimators, each candidate's \sure{} value must be computed for the trained map actually produced from $Y$. When tuning parameters are trained on $Y$, evaluating \sure{} as if those parameters were fixed can understate risk.  This is the excess-optimism problem for \sure{}-tuned estimators studied by \citet{tibshiraniExcessOptimismBiased2019}. The same concern applies to weights chosen by minimizing \sure{}, so the reported \sure{} value for the \sure{}-weighted average evaluates the full map $\tilde f$, rather than the fixed-weight criterion used to choose $\hat w$.

\subsection{Workflow Summary}\label{sec:workflow_summary}

\begingroup
\newcommand{\workflowcell}[2]{\begin{minipage}[t]{\linewidth}\raggedright #1\\[-0.15em]#2\end{minipage}}
\begin{table}[!htbp]
\centering
\caption{Workflow for \sure{}-based selection, averaging, and final evaluation. Steps~3--4 compute $\sure_n(f_k)$ for the trained candidates and choose convex weights. Step~5 evaluates $\sure_n(\tilde f)$ for the \sure{}-weighted average $\tilde f$, whose data-dependent weights require their own regularity conditions (Appendix~\ref{app:averaging_adaptive_weights}).}
\label{tab:workflow_summary}
\footnotesize
\setlength{\tabcolsep}{3pt}
\begin{tabular}{@{}>{\raggedright\arraybackslash}p{0.17\textwidth}
                  >{\raggedright\arraybackslash}p{0.27\textwidth}
                  >{\raggedright\arraybackslash}p{0.21\textwidth}
                  >{\raggedright\arraybackslash}p{0.25\textwidth}@{}}
\toprule
Step & Researcher action & Resulting object & Notation \\
\midrule
\mbox{\textbf{1. Specify}} & Choose similarity rules & Candidate classes & \workflowcell{$\mathcal F_k=\{f_{k,\gamma}:\gamma\in\Gamma_k\}$}{$k=1,\ldots,K$} \\
\addlinespace[0.85em]
\mbox{\textbf{2. Train}} & Choose training rule & Trained maps & \workflowcell{$\hat\gamma_k(Y)$}{$f_k(Y)=f_{k,\hat\gamma_k(Y)}(Y)$} \\
\addlinespace[0.85em]
\mbox{\textbf{3. Evaluate}} & Compute \sure{} per candidate & Risk estimates & $\sure_n(f_k)$ \\
\addlinespace[0.85em]
\mbox{\textbf{4. Average}} & Choose convex weights & Averaged map & \workflowcell{$\hat w(Y)\in\Delta^{K-1}$}{$\tilde f(Y)=\sum_k \hat w_k(Y)f_k(Y)$} \\
\addlinespace[0.85em]
\mbox{\textbf{5. Report}} & Compute final-map \sure{} & Final risk estimate & $\sure_n(\tilde f)$ \\
\bottomrule
\end{tabular}
\end{table}
\endgroup

Table~\ref{tab:workflow_summary} summarizes the procedure and separates three uses of \sure{}. In Step~2, \sure{} may be used as a training criterion within a candidate class. In Steps~3 and~4, \sure{} gives a common squared-error criterion for trained candidates and fixed-weight averages, so the researcher can select the smallest-\sure{} candidate or choose convex weights across candidates. In Step~5, \sure{} is used to evaluate the map actually reported, including the sensitivity induced by trained parameters and, when relevant, by selected weights.

The training and averaging steps lead to two theoretical questions. The first question concerns selection within a fixed parameterized class of shrinkage maps $\{f_\gamma:\gamma\in\Gamma\}$: if the researcher chooses $\hat\gamma$ by minimizing $\sure_n(f_\gamma)$ over $\Gamma$, how close is the realized loss $L_n(f_{\hat\gamma})$ to the loss $\min_{\gamma\in\Gamma}L_n(f_\gamma)$ of the best parameter choice in the class? The second concerns averaging after a finite candidate library has been assembled: if the researcher chooses convex weights by minimizing the fixed-weight \sure{} criterion, how close is the resulting loss to the loss from the fixed convex combination with the smallest realized loss?
\section{Theoretical Guarantees}\label{sec:theory}

Throughout this section, the sampling experiment and the estimator classes are implicitly indexed by the dimension $n$. When the $n$-dependence needs to be explicit, we write $Y^{(n)},\theta^{(n)},\varepsilon^{(n)},\Sigma_n$ and $f_{n,\gamma}:\bbR^n\to\bbR^n$. Otherwise, we suppress the $n$-dependence and write $Y,\theta,\varepsilon,\Sigma$, $f_\gamma$, $\mathcal F=\{f_\gamma:\gamma\in\Gamma\}$, and $\Gamma$. Constants described as independent of $n$ are uniform over this sequence.

This section establishes two oracle inequalities for \sure{}-based selection among shrinkage maps. The within-class guarantee, Theorem~\ref{thm:uniform_conc}, applies to a compact parameterized class $\mathcal F=\{f_\gamma:\gamma\in\Gamma\}$ of maps that report vectors of shrinkage estimates. If $\hat\gamma\in\argmin_{\gamma\in\Gamma}\sure_n(f_\gamma)$ and $\gamma^*\in\argmin_{\gamma\in\Gamma}L_n(f_\gamma)$, then the theorem bounds the excess realized loss $L_n(f_{\hat\gamma})-L_n(f_{\gamma^*})$. The averaging guarantee, Proposition~\ref{prop:averaging}, applies after a finite library of trained candidate maps $f_1,\ldots,f_K$ has been assembled. For fixed weights $w\in\Delta^{K-1}$, write $f_w(Y)=\sum_{k=1}^K w_k f_k(Y)$. If $\hat w(Y)\in\argmin_{w\in\Delta^{K-1}}\sure_n(f_w)$ and $w^*\in\argmin_{w\in\Delta^{K-1}}L_n(f_w)$, then the proposition compares the \sure{}-weighted average $\tilde f(Y):=f_{\hat w(Y)}(Y)$ with the fixed-weight convex combination $f_{w^*}$ that has the smallest realized loss over the simplex.

The guarantees do not place a distribution on the latent vector $\theta$. For each dimension $n$, we fix $\theta\in\bbR^n$ and take expectations only over the sampling noise in $Y=\theta+\varepsilon$. Thus $\gamma^*$ and $w^*$ are realized-loss benchmarks for the candidate maps under consideration, not procedures derived from a correctly specified prior on the latent parameters. Although Bayesian or empirical Bayes specifications can motivate maps such as $f_\gamma$, the guarantees below evaluate the resulting maps directly. If every map in $\mathcal F$ or every convex average of $f_1,\ldots,f_K$ has high realized loss for the fixed vector $\theta$, the theory does not remove that approximation error; it controls the additional loss from choosing within the class or library using \sure{}.

Theorem~\ref{thm:uniform_conc} and Proposition~\ref{prop:averaging} impose different requirements because they apply at different points in the construction. Theorem~\ref{thm:uniform_conc} studies exact minimization of \sure{} over the full parameter set $\Gamma$, so its regularity condition is uniform over the parameterized class $\mathcal F$. Proposition~\ref{prop:averaging} begins after the finite candidate maps $f_1,\ldots,f_K$ have already been constructed. Those candidates may include trained parameters and may come from different training rules; Proposition~\ref{prop:averaging} does not require each $f_k$ to solve a within-class \sure{} minimization problem. For averaging, the optimization is over the weights $w$, and the regularity condition is imposed separately on each composite candidate map, including the dependence introduced when a candidate's tuning parameters are trained on the same data.

\subsection{Regularity for Within-Class \sure{} Minimization}\label{sec:regularity}

Theorem~\ref{thm:uniform_conc} studies the rule selected by minimizing $\sure_n(f_\gamma)$ over the compact parameter set $\Gamma$. For this oracle comparison, the observable criterion $\sure_n(f_\gamma)$ must track the realized loss $L_n(f_\gamma)$ uniformly over the candidate class, not only at a fixed map. Existing \sure{} selection guarantees cover finite collections of globally Lipschitz estimators \citep{bellecSecondOrderStein2021}; the result below extends that benchmark in two ways. First, it allows $\Gamma$ to be a compact continuum of tuning parameters, such as length scales and kernel variances. Second, it replaces global Lipschitzness with a broader regularity condition: the adjustment relative to the raw estimate, its derivative, and its variation with $\gamma$ are controlled by polynomial envelopes in the input vector of raw estimates.

The value-similarity rule shows why the broader regularity condition is needed. Changing one observed estimate can change both its own reported value and the weights assigned to other estimates, so the resulting map can fail to be globally Lipschitz even in a two-dimensional fixed-parameter case.\footnote{This issue is not specific to the value-similarity example. \citet{kimLipschitzConstantSelfAttention2021} show that standard dot-product self-attention is not globally Lipschitz in its input. Self-attention is not part of the empirical library in this paper, but it is another example of a flexible data-adaptive map that would not be covered by a theory taking global Lipschitzness as a primitive condition.} Appendix~\ref{sec:verification} gives this non-Lipschitz calculation, and Appendix~\ref{sec:bilateral_verification} verifies a fixed value-similarity building block as a single candidate for averaging. Under a bounded-row-sum condition on the fixed geographic factor of the kernel, Appendix~\ref{sec:bilateral_verification} shows that this fixed value-similarity building block is no more costly, in the regularity order used below, than fixed linear smoothers. Training of tuning parameters raises a related verification issue: regularity of the fixed maps $f_\gamma$ does not by itself establish regularity of the reported map $Y\mapsto f_{\hat\gamma(Y)}(Y)$, because the trained parameters $\hat{\gamma}$ are also functions of $Y$. Appendix~\ref{app:learned_parameter_regularity} gives primitive conditions under which trained maps still satisfy the per-candidate regularity condition used for averaging.

For symmetric matrices, write $A\succ0$ for positive definite and $A\succeq0$ for positive semidefinite. Let $\lambda_{\max}(A)$ denote the largest eigenvalue of a symmetric matrix. Norms $\|\cdot\|_2$, $\|\cdot\|_{\mathrm{op}}$, and $\|\cdot\|_F$ denote Euclidean, operator, and Frobenius norms, respectively. The first assumption restates the Gaussian sampling model of Section~\ref{sec:methodology} along the sequence of experiments and adds two uniform bounds: on the noise scale and on the average magnitude of the latent vector.

\begin{assumption}[Sampling array]\label{asm:sampling_array}
For each dimension $n$,
\[
Y^{(n)}=\theta^{(n)}+\varepsilon^{(n)},\qquad
\theta^{(n)}\in\bbR^n,\qquad
\varepsilon^{(n)}\sim\N(0,\Sigma_n).
\]
The noise covariance $\Sigma_n$ is positive definite\footnote{Positive definiteness is used only to reduce the Gaussian noise to a standard normal vector in the proof. If $\Sigma_n$ is positive semidefinite, the same argument applies after restricting the Gaussian experiment to the support of $\Sigma_n$.} with $\lambda_{\max}(\Sigma_n)\leq\bar\sigma^2$, and the latent vector satisfies $\|\theta^{(n)}\|_2/\sqrt n\leq C_\theta$, for constants $\bar\sigma^2,C_\theta<\infty$ not depending on $n$.
\end{assumption}

The bound on $\theta^{(n)}$ is an average-magnitude condition: $\|\theta^{(n)}\|_2^2/n$ remains bounded, so the fixed latent vectors do not grow in average squared size along the sequence. This condition is implied by putting each coordinate of $\theta^{(n)}$ in a fixed compact set, but it is slightly more general: individual coordinates may exceed any fixed bound as long as their squared magnitudes remain controlled on average.

For a generic input $y\in\bbR^n$, write $g_\gamma(y) := f_\gamma(y) - y$ for the adjustment relative to the raw estimate.\footnote{Under Tweedie's formula \citep{efronTweediesFormulaSelection2011}, the posterior mean of the normal location model satisfies $\E[\theta | Y] = Y + \Sigma \nabla \log p(Y)$, where $p$ is the marginal density of $Y$ and $\nabla\log p(Y)$ is the score of that marginal density. Thus, when a candidate map is motivated by a posterior-mean formula, $g_\gamma$ can be read as an estimate of this score term. Assumption~\ref{asm:regularity} places regularity conditions on the candidate maps themselves; it does not require a correctly specified marginal density for $Y$. See \citet{ghoshSteinSUREScoreMatching2025} for a unified treatment of \sure{} and Hyv\"arinen score matching.} Write $\|g(y)\|_{W} := \|g(y)\|_2 + \|Dg(y)\|_F$ for the combined function--Jacobian norm. Assumption~\ref{asm:regularity} formalizes polynomial-envelope regularity by controlling $\|g_\gamma(y)\|_W$ and the corresponding variation with $\gamma$ for every $y\in\bbR^n$, rather than only at the realized random vector $Y$.

\begin{assumption}[Pointwise-envelope regularity]\label{asm:regularity}
The candidate class is $\mathcal F=\{f_\gamma:\gamma\in\Gamma\}$, where $\Gamma\subset\bbR^{d_\Gamma}$ is compact and
\[
\operatorname{diam}(\Gamma)
:=
\sup_{\gamma,\gamma'\in\Gamma}\|\gamma-\gamma'\|_2
\leq D_\Gamma
\]
for a constant $D_\Gamma<\infty$ not depending on $n$. There exist $\beta \geq 0$, a scaling sequence $\nu_n > 0$, and a reference point $\gamma_0\in\Gamma$ such that, for all $y \in \bbR^n$,
\[
\|g_{\gamma_0}(y)\|_W
+
\sup_{\gamma\ne\gamma'}
\frac{\|g_\gamma(y)-g_{\gamma'}(y)\|_W}{\|\gamma-\gamma'\|_2}
\leq
\nu_n \left(1 + \frac{\|y\|_2}{\sqrt{n}}\right)^{2\beta}.
\]
The maps $g_{\gamma_0}$ and $g_\gamma-g_{\gamma'}$, $\gamma\ne\gamma'$, have continuous first partial derivatives. If $\Gamma$ is a singleton, the supremum is interpreted as zero.
\end{assumption}

The reference point $\gamma_0$ corresponds to the map $f_{\gamma_0}$ in the class. The first term in Assumption~\ref{asm:regularity} controls the adjustment $g_{\gamma_0}$ at this reference point, while the supremum controls how both $g_\gamma$ and $Dg_\gamma$ vary with $\gamma$. If the class contains the identity map, then $g_{\gamma_0}\equiv0$ is the natural reference choice. Otherwise, any fixed member of the class satisfying the displayed envelope can serve as the reference. The exponent $\beta$ controls how the envelope may grow with the normalized input magnitude $\|y\|_2/\sqrt n$: when $\beta=0$, the bound is uniform in $y$, while $\beta>0$ permits polynomial growth. For fixed linear smoothers $f(y)=Sy$, Lemma~\ref{lem:row_norm_linear_smoothers} in Appendix~\ref{app:general_regularity} shows that the pointwise envelope holds with $\beta=1/2$ and $\nu_n=O(\sqrt n)$ when the smoothing matrix $S$ has bounded operator norm and bounded maximum row Euclidean norm, $\max_{i\leq n}\|S_{i\cdot}\|_2$, where $S_{i\cdot}$ denotes the $i$th row of $S$. In spatial applications, the row bound captures a bounded-sensitivity form of local borrowing: the reported estimate for one spatial unit can average information from nearby units, but the Euclidean norm of the corresponding row of $S$ does not grow with $n$.

The continuous-differentiability requirement excludes some familiar nonsmooth shrinkage rules. For example, the one-dimensional hard-thresholding rule $f(y)=y\,1\{|y|>\tau\}$, $\tau>0$, has jumps at $y=\pm\tau$, so its derivative is not defined there and the rule is not covered by Assumption~\ref{asm:regularity}. The pointwise polynomial envelope is a sufficient condition for the concentration theorem below.

With these regularity conditions in hand, we state the main result: \sure{} tracks the realized loss uniformly over $\mathcal F$, and the expected excess realized loss of the \sure{}-selected rule is of order $\nu_n \max\{d_\Gamma,1\}^{4+\beta}/n$.

\begin{theorem}[Concentration and oracle inequality]\label{thm:uniform_conc}
Suppose Assumptions~\ref{asm:sampling_array} and~\ref{asm:regularity} hold.
Assume the \sure{} error process $f\mapsto\sure_n(f)-L_n(f)$ is separable and the minimizers
\[
\hat\gamma\in\argmin_{\gamma\in\Gamma}\sure_n(f_\gamma),
\qquad
\gamma^*\in\argmin_{\gamma\in\Gamma}L_n(f_\gamma)
\]
are measurable in $Y$. Set $\hat f:=f_{\hat\gamma}$ and $f^*:=f_{\gamma^*}$, with $f^*$ the realized-loss oracle. Then
\[
\E\left[\sup_{f\in \mathcal{F}}|\sure_n(f) - L_n(f)|\right] \lesssim \frac{1}{\sqrt{n}} + \frac{\nu_n \, \max\{d_\Gamma,1\}^{4+\beta}}{n}.
\]
The \sure{}-selected estimator also satisfies the oracle comparison
\[
\E\left[L_n(\hat{f}) - L_n(f^*)\right] \lesssim \frac{\nu_n\,\max\{d_\Gamma,1\}^{4+\beta}}{n}.
\]
\end{theorem}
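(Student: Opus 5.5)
Write $\varepsilon=\Sigma^{1/2}Z$ with $Z\sim\N(0,I_n)$. Expanding $\|Y-f(Y)\|_2^2=\|\varepsilon-g(Y)\|_2^2$ and $\|f(Y)-\theta\|_2^2=\|\varepsilon+g(Y)\|_2^2$ gives, for $f=f_\gamma$ and $g=g_\gamma$, the exact decomposition
\[
\sure_n(f_\gamma)-L_n(f_\gamma)
=\frac{1}{n}\bigl(\|\varepsilon\|_2^2-\tr\Sigma\bigr)
-\frac{2}{n}\Bigl(\langle\varepsilon,g_\gamma(Y)\rangle-\tr\{\Sigma Dg_\gamma(Y)\}\Bigr),
\]
because the identity part of $Df$ contributes $\tr\Sigma$ to the complexity term, which cancels the $-\tr\Sigma$ terms appropriately. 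The first term does not depend on $\gamma$. Its expected absolute value is at most $\sqrt{2\tr(\Sigma^2)}/n\le \sqrt2\,\bar\sigma^2/\sqrt n$, which produces the $1/\sqrt n$ term in the concentration bound. This term cancels in the oracle comparison, since
\[
L_n(\hat f)-L_n(f^*)\le \bigl[\sure_n(f^*)-L_n(f^*)\bigr]-\bigl[\sure_n(\hat f)-L_n(\hat f)\bigr]
\]
and the $\gamma$-free part drops out. So the whole task reduces to bounding $\E\sup_\gamma|W(\gamma)|/n$, where $W(\gamma):=\langle\varepsilon,g_\gamma(Y)\rangle-\tr\{\Sigma Dg_\gamma(Y)\}$ is a centered process. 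It is centered by Stein's lemma, with integrability supplied by the polynomial envelope and Gaussian moments.

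To control $W$, I would center at the reference point, $W(\gamma)=W(\gamma_0)+[W(\gamma)-W(\gamma_0)]$, and apply a chaining argument over $\Gamma\subset\bbR^{d_\Gamma}$ with diameter at most $D_\Gamma$. The basic tool is a second-order Stein / Gaussian Poincaré bound of the kind in \citet{bellecSecondOrderStein2021}: for a map $h$ with continuous first derivatives,
\[
\E\bigl[\bigl(\langle\varepsilon,h(Y)\rangle-\tr\{\Sigma Dh(Y)\}\bigr)^2\bigr]\le \bar\sigma^2\,\E\|h(Y)\|_2^2+\bar\sigma^4\,\E\|Dh(Y)\|_F^2 .
\]
I would apply it to $h=g_{\gamma_0}$ and to the increments $h=g_\gamma-g_{\gamma'}$. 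Assumption~\ref{asm:regularity} bounds the right-hand side by $\nu_n^2\,\E(1+\|Y\|_2/\sqrt n)^{4\beta}$ and by $\nu_n^2\|\gamma-\gamma'\|_2^2\,\E(1+\|Y\|_2/\sqrt n)^{4\beta}$, respectively. Assumption~\ref{asm:sampling_array} bounds $\E(1+\|Y\|_2/\sqrt n)^{p}$ uniformly in $n$, since $\|Y\|_2/\sqrt n\le C_\theta+\bar\sigma\|Z\|_2/\sqrt n$. This gives $L^2$-Lipschitz increments, $\|W(\gamma)-W(\gamma')\|_{L^2}\lesssim\nu_n\|\gamma-\gamma'\|_2$, and $\|W(\gamma_0)\|_{L^2}\lesssim \nu_n$.

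$L^2$ increments do not suffice for a supremum over a continuum, so the main obstacle is upgrading them to $L^p$ bounds with the right dependence on $p$. I would use Gaussian moment bounds for the centered Stein functional, for example via the Poincaré / Pisier $L^p$ inequality applied to $\langle\varepsilon,h\rangle-\tr\{\Sigma Dh\}$, giving $\|\cdot\|_{L^p}\lesssim p\,\nu_n\|\gamma-\gamma'\|_2$. The factor $p$ comes from chaos of order two. The polynomial envelope adds a $p^{\beta}$ factor through $\E(1+\|Z\|/\sqrt n)^{2\beta p}$. Then I would use a Kolmogorov-type / generic chaining bound with $L^p$ increments and $p\asymp d_\Gamma$, which costs a polynomial factor in $d_\Gamma$. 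Covering $\Gamma$ by $(D_\Gamma/\delta)^{d_\Gamma}$ balls and summing dyadic levels gives $\E\sup_\gamma|W(\gamma)-W(\gamma_0)|\lesssim \nu_n D_\Gamma\max\{d_\Gamma,1\}^{c+\beta}$. Careful bookkeeping of the chaos and envelope orders should yield the stated exponent $4+\beta$.

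Separability of the error process makes the supremum measurable and lets me pass from a countable dense set to all of $\Gamma$. Dividing by $n$ gives $\E\sup|W|/n\lesssim \nu_n\max\{d_\Gamma,1\}^{4+\beta}/n$. Combining this with the $\gamma$-free term gives the concentration bound. The oracle inequality follows by taking expectations in the displayed basic inequality and bounding the right side by $(2/n)\cdot 2\,\E\sup_\gamma|W(\gamma)|$, which is where measurability of $\hat\gamma$ and $\gamma^*$ is used.
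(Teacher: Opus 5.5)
Your outer structure matches the paper's proof. It has the representation $\sure_n(f_\gamma)-L_n(f_\gamma)=\frac1n\{\tr(\Sigma)-\|\varepsilon\|_2^2\}+\frac2n\Psi(g_\gamma)$ with $\Psi(h)=\tr\{\Sigma Dh(Y)\}-\langle\varepsilon,h(Y)\rangle$. Your common term has the wrong sign, because $Y-f(Y)=-g(Y)$ and so $\|Y-f(Y)\|_2^2=\|g(Y)\|_2^2$, but this is harmless. The common term cancels in the oracle comparison, you center at $\gamma_0$, and you chain using separability. Your second-order Stein bound is also a valid way to get $\E|\Psi(g_{\gamma_0})|\lesssim\nu_n$.

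The gap is the step you yourself flag: an $L^p$ bound on $\Psi(h)$, $h=g_\gamma-g_{\gamma'}$, with polynomial growth in $p$ and no loss in $n$. Write $\tilde h(z)=\Sigma^{1/2}h(\theta+\Sigma^{1/2}z)$ and $\delta(\tilde h)(z)=z^\top\tilde h(z)-\operatorname{div}\tilde h(z)$, so that $\Psi(h)=-\delta(\tilde h)(Z)$. Pisier's inequality or a Poincar\'e-type $L^p$ inequality applied to $F=\delta(\tilde h)$ needs $\nabla F=\tilde h+(D\tilde h)^\top z-\nabla\operatorname{div}\tilde h$, and both extra terms are a problem:
\begin{itemize}
\item The last term involves second derivatives of $g_\gamma-g_{\gamma'}$, which Assumption~\ref{asm:regularity} does not control.
\item With only Frobenius control, the middle term can be bounded only by $\|D\tilde h\|_F\|z\|_2$. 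That loses a factor $\sqrt n$, which wipes out the $\nu_n/n$ rate.
\end{itemize}
The heuristic that ``chaos of order two gives a factor $p$'' is exact only for linear $h$. For nonlinear $h$, $\delta(\tilde h)$ has components in every chaos. If your claimed $p^{1+\beta}$ growth held, chaining would give the stronger factor $\max\{d_\Gamma,1\}^{1+\beta}$. The paper reaches that exponent only in the limit of Theorem~\ref{thm:general_conc}, where arbitrarily many derivative levels are controlled. So saying that ``careful bookkeeping should yield $4+\beta$'' leaves the central estimate unproved.

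The missing ingredient is $L^p$-continuity of the Gaussian divergence using first derivatives only:
\[
\|\delta(\tilde h)\|_{L^p}\le c_p\bigl(\|\tilde h\|_{L^p}+\|D\tilde h\|_{L^p}\bigr),
\]
with $c_p$ polynomial in $p$ and independent of $n$. The paper proves this by duality, following \citet[Proposition~1.5.4]{nualartMalliavinCalculusRelated2006}:
\begin{itemize}
\item Test against a polynomial $G$ and integrate by parts.
\item The mean of $\tilde h$ contributes a first-chaos term of size $O(\sqrt p)$.
\item The centered part is rewritten through $DC^{-2}D\widetilde G$. It is controlled by two applications of Meyer's inequality with the dimension-free Riesz-transform constant $K_q\le 2(p-1)$, plus the multiplier bound $\|\sqrt R\,\widetilde G\|_q\lesssim p^2\|\widetilde G\|_q$.
\end{itemize}
This gives $c_p=O(p^4)$ (Lemma~\ref{lem:first_order_divergence_continuity}). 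Multiplying by the envelope's $p^\beta$ makes the increments $\psi_\alpha$ with $\alpha=1/(4+\beta)$. Chaining for $\psi_\alpha$ processes then yields $\max\{d_\Gamma,1\}^{1/\alpha}=\max\{d_\Gamma,1\}^{4+\beta}$, which is where the exponent $4$ comes from. Once this lemma replaces your Pisier step, the rest of your argument goes through, including Kolmogorov-type chaining at $p\asymp d_\Gamma$.
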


The two displays control different quantities: the first is an uncentered uniform approximation bound for $\sure_n(f)$ over $\mathcal F$, while the second is the excess realized loss from choosing $\hat f$ by minimizing $\sure_n$. In the oracle comparison, the component of $\sure_n(f)-L_n(f)$ that does not depend on $f$ is common to $\hat f$ and $f^*$, so the $n^{-1/2}$ term from the first display does not enter. The proof is deferred to Appendix~\ref{app:general_regularity}, which first establishes a general Sobolev-moment concentration result and then applies Assumption~\ref{asm:regularity} to obtain the displayed bound.

The factor $\max\{d_\Gamma,1\}^{4+\beta}$ is a convention for including singleton classes in the same rate display. When $d_\Gamma\geq1$, this factor is $d_\Gamma^{4+\beta}$; when $\mathcal F$ is a singleton, with $d_\Gamma=0$, the uniform concentration bound still contains the $n^{-1/2}$ term and the $\nu_n/n$ contribution from the reference map $f_{\gamma_0}$, even though there is no variation over $\gamma$ to control. The constants hidden by $\lesssim$ depend only on the fixed bounds $C_\theta$, $\bar\sigma$, the regularity exponent $\beta$, and the diameter bound $D_\Gamma$. Apart from the displayed factors, these constants are uniform in $n$, $d_\Gamma$, and $\nu_n$.

\begin{remark}[Interpreting the rate]
The rate depends on three quantities: the envelope scale $\nu_n$, the parameter dimension $d_\Gamma$, and the permitted growth in the input vector, summarized by $\beta$. When $\beta=0$, neither the adjustment $g_\gamma$ nor its variation across $\gamma$ grows with the input $y$. The excess-loss bound is then $\nu_n\max\{d_\Gamma,1\}^4/n$; the separate uniform-approximation bound also contains the common $n^{-1/2}$ fluctuation term. This bounded-envelope case is different from global Lipschitzness. For example, a fixed linear smoother $f(y)=Sy$ has adjustment $g(y)=(S-I)y$, which can grow with $\|y\|_2$. Lemma~\ref{lem:row_norm_linear_smoothers} shows that bounded operator norm and bounded row norms are enough to cover such smoothers with $\beta=1/2$ and $\nu_n=O(\sqrt n)$.
\end{remark}

\begin{remark}[Relation to Bellec and Zhang (2021)]
The closest antecedent is \citet{bellecSecondOrderStein2021}, who analyze \sure{} selection from a finite collection of globally Lipschitz candidates. Theorem~\ref{thm:uniform_conc} studies \sure{} minimization over a compact $d_\Gamma$-dimensional continuum under polynomial-envelope regularity. Because the candidate classes and regularity conditions differ, the displayed rates answer different questions. We do not claim a sharper rate in their setting. The price of moving from a finite collection to a continuum is the polynomial factor in $d_\Gamma$.
\end{remark}

\begin{remark}[Kernel and prior misspecification]
The oracle inequality evaluates the maps in $\mathcal F=\{f_\gamma:\gamma\in\Gamma\}$ under realized squared-error loss for the fixed vector $\theta$. It does not require the \gp{} prior, spatial kernel, distance metric, or prior covariance that motivates those maps to be correctly specified. If the resulting class has large oracle loss $\inf_{\gamma\in\Gamma}L_n(f_\gamma)$ for the fixed vector $\theta$, that approximation error remains in the oracle benchmark. The theorem controls only the additional loss from selecting $\gamma$ with the observed data. This candidate-class issue is distinct from noise-covariance misspecification: if the covariance matrix used inside \sure{} differs from the true sampling covariance, Appendix~\ref{app:noise_covariance_misspecification} records the resulting bias term.
\end{remark}

\begin{remark}[Sobolev refinements]
Assumption~\ref{asm:regularity} is a convenient sufficient condition for the concentration argument: it gives an envelope that holds for every input $y\in\bbR^n$. Appendix~\ref{app:general_regularity} gives a more general formulation using moment bounds on derivatives under the law $P_Y=\N(\theta,\Sigma)$ of $Y$, rather than pointwise bounds in $y$. In that formulation, $k=0$ requires moment control of the adjustment and its first derivative, as implied by Assumption~\ref{asm:regularity}; $k=1$ adds second-derivative bounds, and larger $k$ adds bounds on the corresponding higher derivatives. Under the $k$th Sobolev moment condition, Theorem~\ref{thm:general_conc} gives dimension exponent $1+3\cdot2^{-k}+\beta$ in place of the $4+\beta$ exponent from the $k=0$ case. This exponent equals $4+\beta$ at $k=0$ and approaches $1+\beta$ as $k$ increases, provided the corresponding higher-order envelope can be verified with a scale $\nu_n$ of the same order, with an implied constant that may grow with $k$.
\end{remark}

\subsection{\sure{} Model Averaging over Trained Candidates}\label{sec:averaging_guarantee}

The averaging result returns to the finite library of trained maps $f_1,\ldots,f_K$ from Section~\ref{sec:model_averaging} and imposes regularity on each final trained map $f_k$ separately, rather than uniformly over the full training family for each candidate.  For averages of affine estimators $f_k(Y)=S_kY$ with fixed matrices $S_k$, sharp oracle inequalities based on unbiased risk estimates are available \citep[and references therein]{dalalyanSharpOracle2012}; Proposition~\ref{prop:averaging} covers finite libraries of nonlinear trained maps whose parameters are learned from the same data.

\begin{assumption}[Regularity for model averaging]\label{asm:averaging_regularity}
Let $f_1,\ldots,f_K$ be the trained candidate maps used for averaging, and write $g_k(y):=f_k(y)-y$. Each $g_k$ is continuously differentiable. There exist $\beta_k\geq0$ and $\nu_n^{(k)}>0$ such that
\[
\left(\E\left[\|g_k(Y)\|_W^p\right]\right)^{1/p}
\leq
\nu_n^{(k)}\,p^{\beta_k},
\qquad p\geq2,\qquad k=1,\ldots,K ,
\]
where $\|g(y)\|_W=\|g(y)\|_2+\|Dg(y)\|_F$ is the function--Jacobian norm of Section~\ref{sec:regularity}.
\end{assumption}

Assumption~\ref{asm:averaging_regularity} is a per-candidate moment condition on each final trained map $f_k$, including derivative contributions from any training rule used to construct that map. Because the averaging library is finite, the assumption does not require uniform increment bounds over a parameter set. A convenient sufficient condition is a pointwise polynomial envelope: if, for all $y\in\bbR^n$ and $k=1,\ldots,K$,
\[
\|g_k(y)\|_2+\|Dg_k(y)\|_F
\leq
\nu_n^{(k)}\left(1+\frac{\|y\|_2}{\sqrt n}\right)^{2\beta_k},
\]
then, under Assumption~\ref{asm:sampling_array}, the Gaussian moment bound in the proof of Lemma~\ref{lem:pointwise_regular_to_sobolev} gives Assumption~\ref{asm:averaging_regularity} with the same exponent $\beta_k$ and with $\nu_n^{(k)}$ inflated by a constant depending only on $\beta_k$, $C_\theta$, and $\bar\sigma$. Appendix~\ref{app:general_regularity} gives the corresponding class-level Sobolev-moment formulation, and Appendix~\ref{app:learned_parameter_regularity} gives sufficient conditions for the composite map $Y\mapsto f_{\hat\gamma_k(Y)}^{(k)}(Y)$ to satisfy Assumption~\ref{asm:averaging_regularity} when the parameter estimate $\hat\gamma_k(Y)$ is trained on the same data.

The proposition below applies the fixed-weight \sure{} criterion to this finite library and shows that, with $\bar\beta := \max_k \beta_k$ and $\bar\nu_n := \max_k \nu_n^{(k)}$, the expected regret of the \sure{}-weighted average, relative to the best fixed convex combination, is of order $(\log(eK))^{4+\bar\beta}\,\bar\nu_n/n$. For fixed $w\in\Delta^{K-1}$, write
\[
f_w(Y)=\sum_{k=1}^K w_k f_k(Y),
\]
where the Jacobian of $f_w$ treats the entries of $w$ as constants.

\begin{proposition}[Oracle inequality for model averaging]\label{prop:averaging}
Suppose Assumptions~\ref{asm:sampling_array} and~\ref{asm:averaging_regularity} hold and $\bar\beta\leq B<\infty$ for a fixed constant $B$. Let
\[
\hat w(Y)\in\argmin_{w\in\Delta^{K-1}}\sure_n(f_w)
\]
be a measurable global minimizer of the fixed-weight \sure{} criterion, and set $\tilde f(Y):=f_{\hat w(Y)}(Y)$. Then
\[
\E\left[L_n(\tilde f) - \min_{w \in \Delta^{K-1}} L_n(f_w)\right] \lesssim (\log(eK))^{4+\bar\beta}\, \frac{\bar\nu_n}{n}.
\]
The constants hidden by $\lesssim$ may depend on the fixed sampling bounds in Assumption~\ref{asm:sampling_array} and on $B$, but not on $n$, $K$, or $\bar\nu_n$ except through the displayed terms.
\end{proposition}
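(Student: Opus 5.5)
The plan is the standard basic-inequality reduction followed by a maximal inequality over the simplex, using the finite library to replace the continuum-covering argument of Theorem~\ref{thm:uniform_conc}.

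\textbf{Step 1: reduce regret to a supremum.} Write $\Delta_n(w):=\sure_n(f_w)-L_n(f_w)$ and let $w^*$ be a realized-loss minimizer. Since $\hat w$ minimizes $\sure_n(f_w)$,
\[
L_n(\tilde f)-L_n(f_{w^*})\le \Delta_n(w^*)-\Delta_n(\hat w)\le 2\sup_{w\in\Delta^{K-1}}|\Delta_n(w)-\Delta_n(w_0)|
\]
for any fixed reference $w_0$. Here $L_n(\tilde f)=L_n(f_{\hat w})$ pointwise, because $\tilde f(Y)=f_{\hat w(Y)}(Y)$ as vectors. It therefore suffices to bound the expected supremum of the \emph{centered} error process, and the common $n^{-1/2}$ term cancels as in Theorem~\ref{thm:uniform_conc}.

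\textbf{Step 2: algebraic structure in $w$.} Expand with $g_k=f_k-Y$ and $\varepsilon=Y-\theta$:
\[
\Delta_n(w)=\tfrac1n\bigl(\|\varepsilon\|^2-\tr\Sigma\bigr)-\tfrac2n\sum_k w_k Z_k,\qquad Z_k:=\langle\varepsilon,g_k(Y)\rangle-\tr\{\Sigma Dg_k(Y)\}.
\]
The quadratic terms $\|g_w\|^2$ cancel between $\sure_n$ and $L_n$, so $\Delta_n(w)$ minus the $w$-free term is \emph{linear} in $w$. The supremum over the simplex is then attained at vertices, and
\[
\sup_w|\Delta_n(w)-\Delta_n(w_0)|\le \tfrac4n\max_{k\le K}|Z_k|.
\]
By Stein's identity each $Z_k$ is mean zero under the integrability supplied by Assumption~\ref{asm:averaging_regularity}.

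\textbf{Step 3: moment bounds and the maximal inequality (the main step).} I need $(\E|Z_k|^p)^{1/p}\lesssim \bar\nu_n\,p^{c+\bar\beta}$ with $c$ a fixed exponent, here targeted at $c=4$. Then, choosing $p\asymp\log(eK)$, the bound
\[
\E\max_k|Z_k|\le K^{1/p}\max_k\|Z_k\|_p
\]
gives $\lesssim \bar\nu_n(\log eK)^{4+\bar\beta}$, which is the claim after dividing by $n$.

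To get the moment bound, whiten $\varepsilon=\Sigma^{1/2}\xi$ and view $Z_k=h(\xi)$ as a divergence-type functional $\langle\xi,u(\xi)\rangle-\operatorname{div}u(\xi)$, where $u=\Sigma^{1/2}g_k$. I would invoke the Sobolev-moment concentration result (Theorem~\ref{thm:general_conc} with $k=0$, or Lemma~\ref{lem:pointwise_regular_to_sobolev}) applied to the singleton class $\{f_k\}$. That result already controls $L^p$ norms of this Stein residual by Gaussian Poincaré/Rosenthal-type bounds in terms of $\|u\|_W$ moments. For example, the second-order Stein bound gives $\E Z_k^2\le \E\|u\|^2+\E\|Du\|_F^2$, and iterating in $p$ via hypercontractivity or the Pinelis moment form inflates the power of $p$.

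The obstacle is tracking the exact power of $p$ that those appendix bounds produce, ensuring it is at most $4+\bar\beta$ uniformly in $n$ and $K$, with constants depending only on $\bar\sigma$, $C_\theta$ and $B$. Singleton application of Theorem~\ref{thm:uniform_conc}'s machinery gives exponent $d=0$ structure, so I expect the $4$ to arise from the same $k=0$ chaining-free moment argument, with $\log(eK)$ playing the role of $d_\Gamma$. The hypothesis $\bar\beta\le B$ is used only to keep the $p^{\beta_k}$ constants uniform across $k$.
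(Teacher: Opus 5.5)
Your Steps~1--2 are the paper's Steps~1--2: the basic inequality, cancellation of the $w$-free term, and linearity in $w$. Note that the $w$-free term is $n^{-1}\{\tr(\Sigma)-\|\varepsilon\|_2^2\}$, the reverse of the sign you wrote, which is harmless because it cancels. Together these reduce the regret to $O(n^{-1}\max_k|Z_k|)$ with $Z_k=-\Psi(g_k)$.

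The genuine difference is the maximal inequality. The paper first converts the per-candidate bound $\|\Psi(g_k)\|_{L^p}\le Cp^{4+\bar\beta}\bar\nu_n$ into a $\psi_{\bar\alpha}$ bound, with $\bar\alpha=1/(4+\bar\beta)$, via the sub-Weibull moment characterization. It then takes a union bound over the $K$ tails and integrates the tail split at $t_0=C\bar\nu_n\{\log(2K)\}^{1/\bar\alpha}$. You instead use $\E\max_k|Z_k|\le K^{1/p}\max_k\|Z_k\|_{L^p}$ at the single order $p=2\log(eK)\ge2$. For this $p$, $K^{1/p}\le e^{1/2}$ and $p^{4+\bar\beta}\le 2^{4+B}(\log(eK))^{4+\bar\beta}$, which is exactly where $\bar\beta\le B$ is used. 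Your route gives the same $(\log(eK))^{4+\bar\beta}\bar\nu_n/n$ rate with less machinery: no Orlicz conversion and no incomplete-gamma tail estimate. The paper's route additionally yields a sub-Weibull tail for $\max_k|\Psi(g_k)|$, which the expectation statement does not need.

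Step~3 needs to be pinned to the right result. The statement of Theorem~\ref{thm:general_conc} does not supply it: for a singleton it bounds only $\E|\sure_n(f_k)-L_n(f_k)|$, a first moment, which would cost a factor $K$ rather than $\log(eK)$. Lemma~\ref{lem:pointwise_regular_to_sobolev} is redundant, since Assumption~\ref{asm:averaging_regularity} is already a moment condition.

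What you need is the $L^p$ divergence-continuity bound inside that theorem's proof. Lemma~\ref{lem:divergence_continuity} with $k=0$, applied to $\tilde h(z)=\Sigma^{1/2}g_k(\theta+\Sigma^{1/2}z)$, gives the dimension-free bound $\|\Psi(g_k)\|_{L^p}\lesssim p^{4}\|g_k\|_{W^{1,p}(P_Y)}\lesssim p^{4+\beta_k}\nu_n^{(k)}$. This settles the ``obstacle'' you flag. The exponent is $1+3\cdot2^{-0}=4$, coming from two applications of Meyer's inequality with constant $O(p)$ and the $O(p^2)$ bound on the multiplier $\sqrt R$.

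The Poincar\'e or hypercontractive iteration you sketch would not work under Assumption~\ref{asm:averaging_regularity}. The gradient of the Stein residual is $u+\delta(Du)$ (Lemma~\ref{lem:one_step_recurrence}), which involves second derivatives the assumption does not control. Bounding $\langle\xi,u\rangle$ and $\tr(Du)$ separately loses a factor $\sqrt n$, which is fatal when $\bar\nu_n\asymp\sqrt n$. With Lemma~\ref{lem:divergence_continuity} cited at that step, your argument is complete.
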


The proof, in Appendix~\ref{app:model_averaging_proof}, decomposes the fixed-weight \sure{} error into a noise term common to all weights plus a weighted average of candidate-specific terms. The maximum over the library yields the $(\log(eK))^{4+\bar\beta}$ factor.

The fixed-weight \sure{} criterion is a quadratic function of $w$ and can be computed exactly. Since $Df_w(Y)=\sum_k w_k Df_k(Y)$ when the weights are held fixed, define $A_{k\ell}(Y):=n^{-1}f_k(Y)^\top f_\ell(Y)$, $b_k(Y):=n^{-1}Y^\top f_k(Y)$, and $c_k(Y):=n^{-1}\tr\{\Sigma Df_k(Y)\}$. Then the fixed-weight criterion can be written as
\[
\sure_n(f_w)
=
\underbrace{
w^\top A(Y)w
}_{\text{quadratic in }w}
+
\underbrace{
2\{c(Y)-b(Y)\}^\top w
}_{\text{linear in }w}
+
\underbrace{
\frac{1}{n}\|Y\|_2^2-
\frac{1}{n}\tr(\Sigma)
}_{\text{constant in }w}.
\]
The matrix $A(Y)$ is positive semidefinite because $a^\top A(Y)a=n^{-1}\|\sum_k a_k f_k(Y)\|_2^2$ for any $a\in\bbR^K$. The display therefore shows that minimizing $\sure_n(f_w)$ over the simplex is a convex quadratic program (QP). The selected weights $\hat w(Y)$ are the minimizer of this QP.

\begin{remark}[Fixed weights and final evaluation]
The comparator in Proposition~\ref{prop:averaging} is the best fixed-weight convex combination for the realized $(Y,\theta)$. Since the individual candidates are vertices of the simplex,
\[
\min_{w \in \Delta^{K-1}} L_n(f_w) \leq \min_{1 \leq k \leq K} L_n(f_k),
\]
so the averaging oracle benchmark is weakly no worse than the best individual candidate. The proposition is stated for the trained maps $f_1,\ldots,f_K$ and does not require the candidates themselves to solve within-class optimization problems; any construction is allowed once the resulting maps satisfy Assumption~\ref{asm:averaging_regularity}. Evaluating the \sure{}-weighted average $\tilde f(Y)=f_{\hat w(Y)}(Y)$ is a different \sure{} calculation: it treats $\hat w(Y)$ as part of the estimator and therefore differentiates through the weight map $Y\mapsto\hat w(Y)$. Appendix~\ref{app:averaging_adaptive_weights} gives sufficient conditions under which this full-map \sure{} calculation remains unbiased for the risk of $\tilde f$.
\end{remark}

\begin{remark}[Uniqueness of the fixed-weight QP]
At the realized value of $Y$, the QP has a unique minimizer $\hat w(Y)$ whenever different simplex weights produce different averaged prediction vectors:
\[
w,w'\in\Delta^{K-1},\quad w\neq w'
\quad\Longrightarrow\quad
\sum_{k=1}^K w_k f_k(Y)\neq \sum_{k=1}^K w'_k f_k(Y).
\]
Uniqueness here concerns the selected weights $\hat w(Y)$, not the infeasible oracle problem $\min_{w\in\Delta^{K-1}}L_n(f_w)$. The one-to-one condition implies $(w-w')^\top A(Y)(w-w')>0$ for any distinct $w,w'\in\Delta^{K-1}$. Hence the QP has a unique minimizer $\hat w(Y)$, and the \sure{}-weighted average $\tilde f(Y)=f_{\hat w(Y)}(Y)$ is unique. The one-to-one condition is sufficient but not necessary for uniqueness of the weights. Even when the quadratic part is flat along some simplex direction, the full QP can still have a unique solution because the linear component $2\{c(Y)-b(Y)\}^\top w$ may favor one weight vector. For example, if $f_j(Y)=f_k(Y)$ for some $j\neq k$, shifting weight between candidates $j$ and $k$ leaves the averaged prediction vector unchanged, so the one-to-one condition fails.
\end{remark}

\begin{remark}[Spread of realized losses across candidates]
For any fixed weights $w$, the realized loss of the average can be decomposed by expanding squared norms:
\[
L_n(f_w) = \sum_k w_k \, L_n(f_k) \;-\; \underbrace{\frac{1}{n}\sum_k w_k \|f_k(Y) - f_w(Y)\|_2^2}_{\text{dispersion} \;\geq\; 0}.
\]
The decomposition shows why the realized-loss oracle over convex combinations can be below the best single candidate: averaging subtracts a nonnegative dispersion term from the weighted average of individual realized losses. The dispersion term is positive whenever $w_j,w_k>0$ for some candidates $j\neq k$ with $f_j(Y)\neq f_k(Y)$. A positive dispersion term is not, by itself, a finite-sample guarantee that the \sure{}-selected convex average has lower realized loss than the best individual candidate. Proposition~\ref{prop:averaging} instead controls regret relative to the infeasible best convex average, $\min_{w\in\Delta^{K-1}}L_n(f_w)$. A fixed average beats the best single candidate exactly when
\[
\frac{1}{n}\sum_k w_k \|f_k(Y)-f_w(Y)\|_2^2
>
\sum_k w_k L_n(f_k)-\min_j L_n(f_j).
\]
\end{remark}

\begin{remark}[Envelope scale and averaging rate]
The averaging bound is driven by the largest per-candidate envelope scale $\bar\nu_n=\max_k\nu_n^{(k)}$. For fixed spatial smoothers $f(y)=Sy$, the scale $\bar\nu_n=O(\sqrt n)$ corresponds to bounded per-unit sensitivity: the row-norm condition in Lemma~\ref{lem:row_norm_linear_smoothers} requires each row of $S-I$ to remain bounded as $n$ grows.\footnote{The lemma's pointwise envelope implies Assumption~\ref{asm:averaging_regularity} via the sufficient condition stated after that assumption.} Under this envelope scale, Proposition~\ref{prop:averaging} gives
\[
\E\left[
L_n(\tilde f)-\min_{w\in\Delta^{K-1}}L_n(f_w)
\right]
\lesssim
\frac{(\log(eK))^{4+\bar\beta}}{\sqrt n}.
\]
Thus the regret vanishes for fixed $K$, and more generally whenever $(\log(eK))^{4+\bar\beta}=o(\sqrt n)$.
\end{remark}

Appendix~\ref{app:regularity_verification} verifies Assumption~\ref{asm:averaging_regularity} for the main estimator building blocks used in the Opportunity Atlas application, and gives trained-parameter and closure tools for assembling trained candidates from those pieces. These sufficient-condition checks connect Proposition~\ref{prop:averaging} to the empirical candidate library: the application uses \sure{} to average over the trained library. The within-class result, Theorem~\ref{thm:uniform_conc}, is the separate guarantee for exact \sure{} training over parameterized shrinkage classes satisfying the stronger uniform regularity condition; Appendix~\ref{app:implementation} records how the reported \sure{} values are computed. We now turn to the Opportunity Atlas application, where the candidate maps differ in how they define which tracts are neighbors and the empirical question is whether shrinkage and averaging reduce estimated squared-error loss across commuting zones.

\section{Economic Mobility in the Opportunity Atlas}\label{sec:empirical}

Does spatial shrinkage improve estimates of neighborhood economic mobility, and does averaging over spatial specifications reduce sensitivity to that choice?  The Opportunity Atlas \citep{chettyOpportunityAtlasMapping2018} estimates tract-level intergenerational economic mobility for over 70,000 Census tracts in the United States.  The application is motivated by evidence that economic mobility varies substantially across places and that childhood exposure to neighborhoods can affect adult outcomes \citep{chettyWhereLandOpportunity2014, chettyImpactsNeighborhoodsIntergenerational2018}.  The target here is narrower: estimating the latent tract-level mean of the released Opportunity Atlas outcome, not re-estimating causal exposure effects.  Opportunity Atlas estimates are used to rank neighborhoods in settings such as housing mobility programs \citep{bergmanCreatingMovesOpportunity2024}, so reducing estimation error can change which places are identified as high-opportunity.  The released tract-level estimates are noisy measurements of latent neighborhood mobility, with reported standard errors and pronounced spatial patterning across nearby tracts.  Because each tract estimate is observed only once, the empirical comparison cannot be organized around holdout performance.  We therefore use \sure{}, with sampling variances implied by the reported standard errors, as the common risk scale for these comparisons.

The main empirical comparison yields two findings.  First, spatial shrinkage substantially improves \sure{}-estimated MSE relative to non-spatial empirical Bayes baselines.  Second, multiple plausible spatial specifications compete across commuting zones (CZs): geographic distance has lower \sure{}-estimated MSE in some CZs, while contiguity (tract adjacency) has lower \sure{}-estimated MSE in others, and using OLS to residualize tract estimates on demographic covariates before spatial smoothing can change rankings within each distance family.  This heterogeneity is the reason the application reports a \sure{}-weighted average rather than choosing a single spatial specification for all CZs.  Candidates are trained, evaluated, and averaged separately within each CZ. National summaries average each CZ's \emph{\sure{} ratio}, the CZ-level \sure{}-estimated MSE divided by the corresponding raw-\mle{} benchmark, weighting each CZ by its tract count. Selected averaging weights are also averaged across CZs using tract-count weights.  The \sure{}-weighted average matches or improves on the best individual candidate's \sure{} ratio in \oaAggBeatsBest{} of \oaAggTotal{} CZs.

\subsection{Data}\label{sec:oa_data}

We study \oaNCzs{} CZs spanning a range of sizes ($n$ from \oaMinN{} to \oaMaxN{} tracts, median \oaMedianN{}; \oaTotalTracts{} tracts in total).  Together, these CZs cover more than one-third of the U.S. population.  The main outcome is pooled household income rank in adulthood for children with parents at the 25th percentile of the national income distribution (\texttt{kfr\_pooled\_pooled}; we refer to this family of children's household income-rank outcomes as KFR); in the Opportunity Atlas Table~1 data, this rank is measured in 2014--2015 for the 1978--1983 birth cohorts.  Each tract $i$ has an estimate $Y_i$ and reported standard error $\mathrm{se}_i$, with variance $\sigma_i^2=\mathrm{se}_i^2$.  These are the raw, unshrunk tract estimates and sampling standard errors, so $f(Y)=Y$ is the maximum-likelihood benchmark.\footnote{The empirical analysis treats the reported standard errors as fixed known sampling standard errors and does not account for uncertainty in the standard-error estimates themselves.}  The Gaussian location model is $Y_i = \theta_i + \varepsilon_i$, $\varepsilon_i \sim \N(0, \sigma_i^2)$, with variances varying by a factor of $10$--$100$ across tracts within a CZ.  This heteroskedasticity reflects differences in tract-level effective sample size and in how precisely the underlying Opportunity Atlas regressions estimate outcomes at the 25th percentile of parent income.

The empirical comparison uses two metrics to encode spatial proximity.  \emph{Geographic distance} is the Euclidean distance between tract centroids in longitude--latitude coordinates.  \emph{Contiguity distance} is the shortest-path distance on the tract adjacency graph, where two tracts are adjacent if they share a boundary or vertex, so that distance 1 means direct neighbors, distance 2 means neighbors-of-neighbors, and so on.  These metrics capture different notions of spatial relatedness: contiguity can reflect administrative and social boundaries that may not align with physical distance, such as tracts separated by a river or highway.

\subsection{Candidate Estimators}\label{sec:oa_models}

The main comparison across CZs uses a fixed library of $K = 7$ candidate maps $Y\mapsto f_k(Y)$, summarized in Table~\ref{tab:oa_candidates}; additional variants appear only in supporting analyses.  The library of candidate maps is designed to vary three empirical choices: non-spatial versus spatial pooling, geographic versus contiguity distance, and spatial smoothing with versus without covariate residualization.  The candidates include non-spatial baselines (\mle{}, \nneb{}, \closegauss{}) and spatial \gp{} candidates that vary the distance metric and preprocessing.  The OLS-preprocessed spatial candidates residualize tract estimates on four tract-level demographic covariates: percent White, percent Black, percent Hispanic, and median age.  This covariate-residualization step has the same motivation as the small-area-estimation use of auxiliary covariates with noisy area-level estimates \citep{fayEstimatesIncomeSmall1979}. It is used only for the OLS-labeled spatial candidates.  The \closegauss{} benchmark is a Gaussian, precision-dependent EB rule in the spirit of \citet{chenEmpiricalBayesWhen2024}: estimates are locally centered and scaled using weights formed from log reported variance, and the standardized values are then shrunk by the same heteroskedastic normal--normal posterior-mean formula as \nneb{}.

\begin{table}[!htbp]
  \centering
  \begin{threeparttable}
  \caption{Candidate estimators in the main Opportunity Atlas comparison across CZs.}\label{tab:oa_candidates}
  \begin{tabular}{lllll}
    \toprule
    Estimator & GP kernel & Distance metric & Preprocessing & Training rule \\
    \midrule
    \mle{} & --- & --- & --- & --- \\
    \nneb{} & --- & --- & --- & Closed-form \\
    \closegauss{} & --- & --- & Local NW & Closed-form \\
    \addlinespace
    \gp{} Geo & Mat\'ern-$\tfrac{1}{2}$ & Euclidean & Local NW & AdamW--\sure{} \\
    \gp{} Contig & Mat\'ern-$\tfrac{1}{2}$ & Contiguity & Local NW & AdamW--\sure{} \\
    \gp{} Geo OLS & Mat\'ern-$\tfrac{1}{2}$ & Euclidean & Local NW, OLS & AdamW--\sure{} \\
    \gp{} Contig OLS & Mat\'ern-$\tfrac{1}{2}$ & Contiguity & Local NW, OLS & AdamW--\sure{} \\
    \addlinespace
    \bottomrule
  \end{tabular}
  \begin{tablenotes}[flushleft]\footnotesize
  \item \textit{Notes:} Euclidean is coordinate distance between tract centroids in longitude-latitude coordinates; Contiguity is shortest-path distance on the tract adjacency graph (Section~\ref{sec:oa_data}).  Mat\'ern-$\tfrac{1}{2}$ is the exponential covariance.  Local NW denotes Nadaraya--Watson centering and scaling with weights based on log reported variance.  OLS denotes residualization on demographic covariates before the Nadaraya--Watson standardization and spatial \gp{} smoothing steps.  \nneb{} and \closegauss{} are closed-form heteroskedastic normal--normal EB rules with method-of-moments parameter estimates; \nneb{} uses the global rule, while \closegauss{} applies the same shrinkage formula after Nadaraya--Watson preprocessing.  AdamW--\sure{} means that each \gp{} candidate trains its tuning parameters with AdamW on the proxy \sure{} criterion.  After training, reported \sure{} values are recomputed for the implemented map, including trained-parameter dependence (Section~\ref{sec:oa_risk_eval}; Appendix~\ref{app:implementation}).
  \end{tablenotes}
  \end{threeparttable}
\end{table}

The preprocessing labels describe transformations that are part of the candidate map \(Y\mapsto f_k(Y)\).  For rows labeled Local NW, Nadaraya--Watson weights based on log reported variance define a local mean and scale; shrinkage is applied to the standardized estimates, and predictions are then transformed back to the original rank scale.  For rows also labeled OLS, the estimates are first residualized on demographic covariates; the same Nadaraya--Watson standardization and spatial \gp{} shrinkage are then applied to the residualized estimates, and the fitted covariate component is added back afterward.  For \sure{} evaluation, automatic differentiation tracks the full map \(Y\mapsto f_k(Y)\) for each candidate, including residualization, standardization, shrinkage in the transformed space, and transformations back to ranks, conditional on the fixed Nadaraya--Watson weights and the scale floors (small constants that bound the local scale estimates away from zero).  The value-similarity rule in Example~\ref{ex:bilateral_shrinkage} appears in the Cook County comparison in Section~\ref{sec:oa_value_similarity_ladder}.  It is not part of the main seven-candidate average across CZs.  Rather than select a single row of the table, we take as the primary empirical estimator the convex average whose weights are chosen by minimizing \sure{} over the seven candidate maps, as in Section~\ref{sec:model_averaging}; the reported \sure{} for this average evaluates the full map $\tilde f$, including both trained-parameter and selected-weight dependence.

\subsection{Risk Evaluation}\label{sec:oa_risk_eval}

All main empirical comparisons evaluate the trained candidate maps \(Y\mapsto f_k(Y)\), including their preprocessing and training steps, and the final \sure{}-weighted convex average of those maps.  For a map \(f\), the loss of interest is \(L_n(f)=n^{-1}\|f(Y)-\theta\|_2^2\), and the corresponding risk is \(R_n(f)=\E[L_n(f)]\).  Because the latent tract-level vector \(\theta\) is unobserved, the empirical tables use \sure{} as the observable risk estimate.  Under the Gaussian location model, if the covariance matrix used in the \sure{} formula equals the true sampling covariance of \(Y\), then \(\E[\sure_n(f)]=R_n(f)\).  Thus \sure{} puts candidate maps on a common mean-squared-error scale under the stated sampling model.

The Opportunity Atlas reports a standard error for each tract-level regression estimate.  The empirical evaluation uses these standard errors to form
\[
\Sigma=\diag(\mathrm{se}_1^2,\ldots,\mathrm{se}_n^2),
\]
which treats sampling errors across tract estimates as uncorrelated. This covariance assumption concerns the estimation noise in the released tract estimates, not the spatial dependence in the latent mobility vector.  We interpret each \(Y_i\) as a direct estimate of tract \(i\)'s latent mobility mean \(\theta_i\); under separate tract-level estimation with disjoint underlying observations, the reported marginal standard errors are a natural working choice for the covariance input in the Gaussian location approximation.  The main remaining source of off-diagonal sampling covariance would be overlap in the underlying children contributing to multiple tract estimates, for example among movers.  Appendix~\ref{app:noise_covariance_misspecification} gives the omitted-covariance bias decomposition and shows that comparisons are most affected when candidate maps differ in how they smooth across pairs of tracts with correlated estimation errors.

The reported \sure{} values are also distinct from the values of proxy \sure{} used to train the \gp{} candidates.  After training, \sure{} is recomputed for the implemented map \(Y\mapsto f_k(Y)\), accounting for trained-parameter dependence.  The same issue arises for the final convex average: the fixed-weight \sure{} criterion selects the weights, while the reported \sure{} value evaluates the full map \(\tilde f\colon Y\mapsto f_{\hat w(Y)}(Y)\), including the derivative of the selected weights with respect to the data.  Under the conditions in Appendix~\ref{app:averaging_adaptive_weights}, \sure{} remains unbiased for the risk of \(\tilde f\).  Appendix~\ref{app:proxy_exact_case} compares the reported \sure{} values with fixed-parameter proxies that treat trained parameters as constants. For individual candidates, this gap is the training correction, and its sign shows the optimism from ignoring trained-parameter dependence.  For the \sure{}-weighted average, the gap also reflects the correction for the data-selected weights.  The coupled-bootstrap procedure of \citet{oliveiraUnbiasedRiskEstimation2024} is a practical derivative-free alternative: by refitting on one perturbed sample and evaluating on its coupled counterpart, it gives an unbiased estimate of the risk for the rule trained on a variance-inflated input, approaching the original-risk target as the perturbation level shrinks.  Appendix~\ref{app:cb} reports this comparison for one CZ.  The main evaluation across CZs uses \sure{} because the candidate maps are differentiable and the automatic differentiation runs efficiently. Appendix~\ref{app:implementation} gives the implementation details.

\subsection{Results}\label{sec:oa_results}

\subsubsection{Heterogeneity Across Commuting Zones}\label{sec:oa_heterogeneity}

The central empirical finding is that the spatial candidates have lower \sure{}-estimated MSE than the non-spatial empirical Bayes baselines in every CZ, while the best spatial specification varies across CZs.  Figure~\ref{fig:oa_heterogeneity} plots each highlighted candidate's \sure{} ratio in each CZ.  The highlighted series are the geographic and contiguity \gp{} variants, with and without OLS preprocessing, together with the \sure{}-weighted average; the gray dashed line traces the best individual candidate in each CZ.

The best geographic-distance candidate wins in \oaGeoWins{} CZs while the best contiguity-distance candidate wins in \oaContigWins{}.  The pattern does not follow CZ size or an obvious geographic rule.  Neither distance metric systematically dominates.  This heterogeneity extends beyond distance metrics: within a given CZ, OLS preprocessing can also change the ranking of spatial candidates.  The geographic \gp{} has lower \sure{}-estimated MSE than its contiguity counterpart in \oaGeoPlainWins{} CZs without OLS preprocessing, while the geographic OLS-preprocessed \gp{} has lower \sure{}-estimated MSE than the contiguity OLS-preprocessed version in \oaGeoOlsWins{} CZs.

The pattern in Figure~\ref{fig:oa_heterogeneity} is the empirical case for averaging.  The highlighted spatial candidates do not move in parallel across CZs: a distance metric or preprocessing choice that performs well in one CZ can lie well above the lower envelope---the best individual candidate's ratio in each CZ---in another.  The \sure{}-weighted average uses the same \sure{}-estimated MSE scale within each CZ to choose convex weights over the candidate maps, rather than imposing a national choice between geographic distance, contiguity distance, and OLS preprocessing.  The resulting \sure{}-weighted average closely tracks the lower envelope, matching or improving on the best individual candidate's \sure{} ratio in \oaAggBeatsBest{}/\oaAggTotal{} CZs.  Thus the role of \sure{}-weighted averaging in the application is to retain the gains from spatial shrinkage while reducing sensitivity to which spatial specification is best in a given CZ.

\begin{figure}[!t]
  \centering
  \includegraphics[height=0.74\textheight,keepaspectratio]{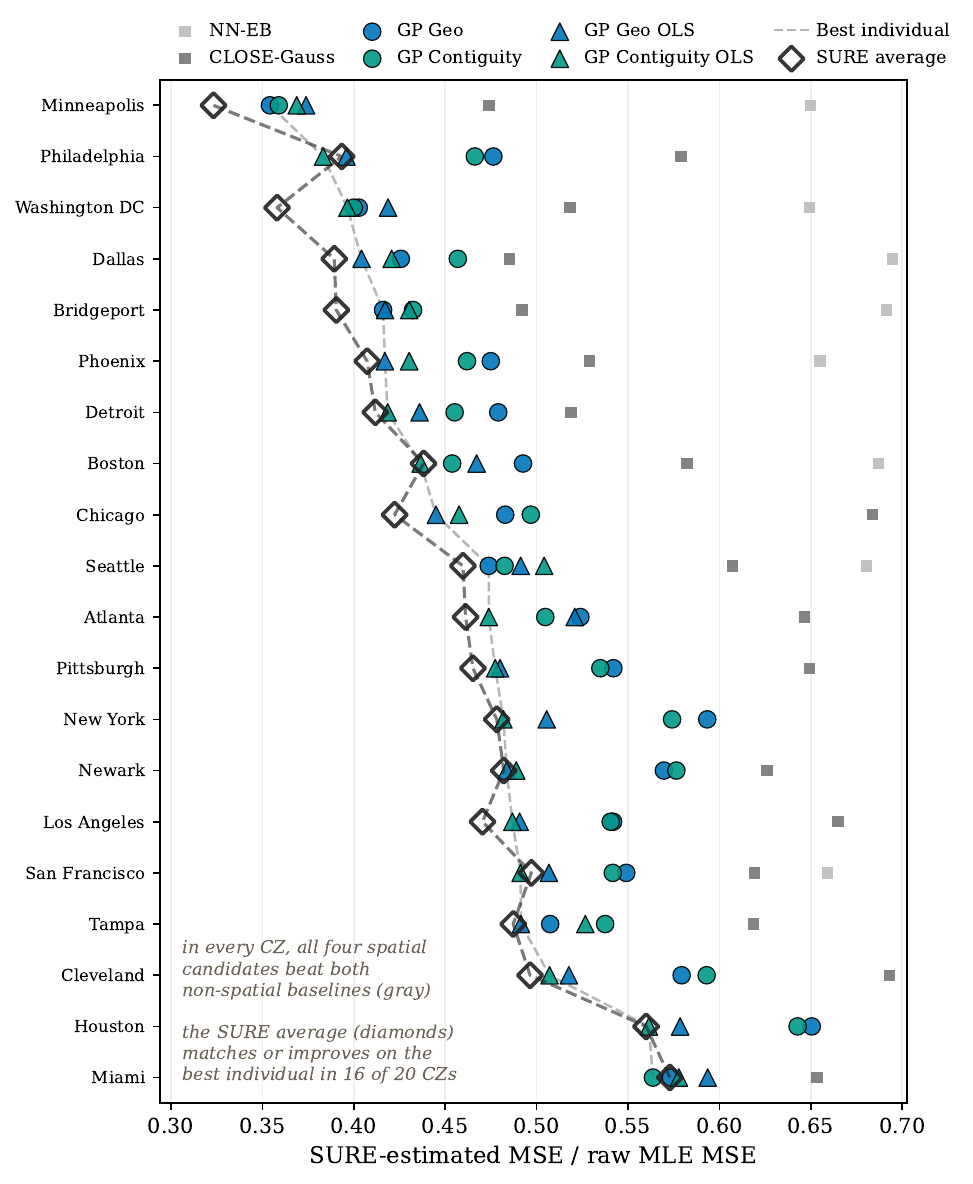}
  \caption{Heterogeneity across commuting zones.  The horizontal axis reports each candidate's \sure{} ratio, its \sure{}-estimated MSE divided by the MSE of the raw \mle{} estimator; lower values are better, and a value of $0.5$ corresponds to a 50\% lower estimated MSE than the raw \mle{}.  CZs are ordered vertically by the best individual candidate's value of the same ratio, and the gray dashed line traces that best individual candidate.  Highlighted series show geographic-distance and contiguity-distance \gp{} variants, with and without OLS preprocessing, and the \sure{}-weighted average.  Geographic distance (blue) and contiguity distance (teal) alternate as the better choice across CZs, and the \sure{}-weighted average (diamonds) tracks the lower envelope, matching or improving on the best individual candidate's \sure{} ratio in \oaAggBeatsBest{}/\oaAggTotal{} CZs.  The horizontal axis is truncated at the worst spatial candidate's ratio; some non-spatial values extend further right.}
  \label{fig:oa_heterogeneity}
\end{figure}

\subsubsection{\sure{} Averaging Across Commuting Zones}\label{sec:oa_averaging}

Table~\ref{tab:oa_results} reports tract-weighted average performance across all \oaNCzs{} CZs.  All four spatial candidates have substantially lower \sure{}-estimated MSE than the non-spatial baselines: the reduction relative to raw \mle{} ranges from \oaReductionGpGeo\% for the geographic \gp{} to \oaReductionGpContigOls\% for the OLS-preprocessed contiguity \gp{}, compared to \oaReductionLsClose\% for \closegauss{}.
\input{tables/auto_oa_table}

The \sure{}-weighted average has a tract-weighted average \sure{} ratio of \oaAggReportedSure{}, lower than every individual candidate in Table~\ref{tab:oa_results}. The improvement illustrates the averaging logic of Section~\ref{sec:averaging_guarantee}: candidate rankings differ across CZs, so minimizing \sure{} over convex combinations can lower \sure{}-estimated MSE without committing to a single spatial specification. The reported table value is the \sure{} evaluation of the \sure{}-weighted average $\tilde f$. This evaluation differentiates through both the trained candidate parameters and the selected weights; Appendix~\ref{app:averaging_adaptive_weights} gives smoothness conditions under which it is unbiased, and the combined correction relative to the fixed-weight, fixed-parameter proxy average is $+\oaAggFullMinusFixed{}$ on the \mle{}-normalized scale (Table~\ref{tab:oa_proxy_exact}). The selected averaging weights concentrate on the spatial candidates: OLS-preprocessed contiguity \gp{} receives \oaWeightGpContigOls\%, OLS-preprocessed geographic \gp{} receives \oaWeightGpGeoOls\%, and geographic \gp{} receives \oaWeightGpGeo\%, with the remaining weight spread across the other candidates.

\subsubsection{A Value-Similarity Comparison for Cook County Tracts in the Chicago Commuting Zone}\label{sec:oa_value_similarity_ladder}

As a supporting comparison, Figure~\ref{fig:oa_geo_bilateral_ladder} reports what happens when value-similarity smoothing is added in one setting where nearby neighborhoods differ sharply: Cook County tracts selected from the Chicago CZ.  The stepwise comparison begins with non-spatial baselines, then adds the geographic \gp{}, the OLS-preprocessed geographic \gp{}, and finally \gpbilat{}.  The \gpbilat{} candidate implements the value-similarity idea from Example~\ref{ex:bilateral_shrinkage}: starting from OLS-preprocessed geographic smoothing, it gives more weight to nearby tracts whose observed mobility estimates are similar.  On the same \sure{}-ratio scale, adding \gpbilat{} lowers both ratios: the \sure{}-weighted average falls from $\oaLadderAggBeforeBilat$ to $\oaLadderAggAfterBilat$, and the best individual ratio falls from $\oaLadderBestBeforeBilat$ for the OLS-preprocessed geographic \gp{} to $\oaLadderBestAfterBilat$ once \gpbilat{} is added.  The \sure{}-weighted average places weight $\oaLadderBilatWeight$ on \gpbilat{} in the final comparison step.  This focused comparison shows that, in a setting where nearby neighborhoods differ sharply, adding value similarity lowers \sure{}-estimated MSE further.

\begin{figure}[!t]
    \centering
    \includegraphics[width=0.84\textwidth]{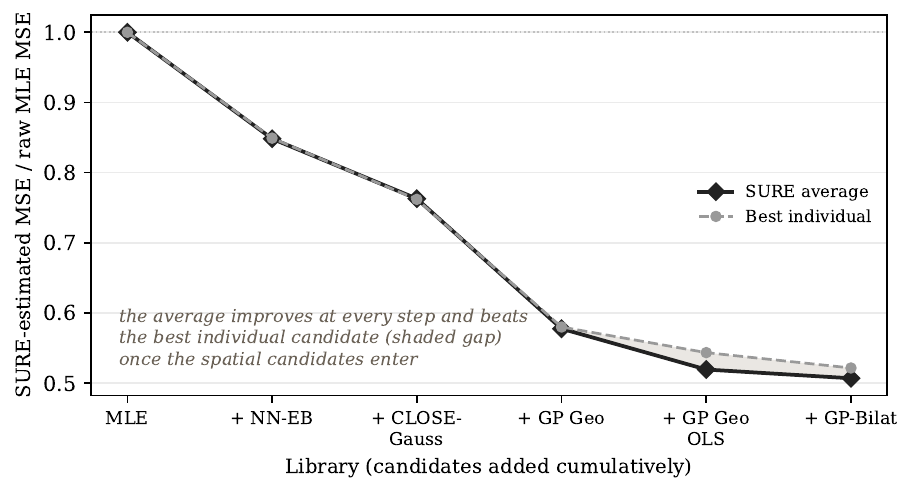}
    \caption{Value-similarity comparison for pooled economic mobility on Cook County tracts within the Chicago commuting zone.  Each step adds one estimator to this comparison-specific library and recomputes the \sure{}-weighted average for the same Cook County sample.  The shaded region marks the gap between the \sure{}-weighted average and the best individual candidate at each step.  The final step adds \gpbilat{}, the OLS-preprocessed geographic \gp{} with an additional value-similarity (bilateral-filter) kernel.  Adding \gpbilat{} lowers both the best individual candidate's \sure{} ratio and the \sure{}-weighted average's ratio.}
    \label{fig:oa_geo_bilateral_ladder}
\end{figure}

\subsubsection{Targeting High-Mobility Tracts}\label{sec:oa_targeting_application}

The \sure{}-estimated-MSE comparisons evaluate accuracy of the estimated mobility vector, but these estimates are often used as inputs into ranking and selection decisions, a compound-decision setting studied by \citet{guKoenkerInvidiousComparisons2023}.  Related work studies inference for ranks and selected high-opportunity neighborhoods from noisy Opportunity Atlas estimates \citep{mogstadInferenceRanksApplications2024, andrewsInferenceWinners2024}.  We use the shrinkage estimates in a targeting exercise under a top-third rule, comparing which tracts each rule selects and the selected group's average latent mobility rank.  Within each CZ, each rule ranks tracts by the estimated outcome, selects the top third, and estimates that group's average latent mobility rank using an evaluation strategy motivated by the coupled-bootstrap procedure of \citet{oliveiraUnbiasedRiskEstimation2024}; Appendix~\ref{app:oa_robustness} describes the implementation and Appendix~\ref{app:assure} gives the corresponding unbiasedness calculation.  The targeting library contains four trained maps: \mle{}, \nneb{}, \closegauss{}, and the geographic-distance \gp{} candidate from Table~\ref{tab:oa_candidates}, which uses Local NW preprocessing and no OLS residualization.  Table~\ref{tab:oa_targeting_welfare} reports the three non-\mle{} maps and the \sure{}-weighted average formed from all four, with estimated gains measured relative to the raw-\mle{} targeting rule for the pooled outcome and three subgroup outcomes.  \mle{} enters as the zero benchmark rather than as a separate row.  Dollar-equivalent gains use the official Opportunity Atlas 2015 percentile-dollar crosswalk, so they should be read as an interpretation of rank gains rather than as a separately estimated dollar outcome.

\input{tables/auto_oa_targeting_welfare_table}

Appendix~\ref{app:oa_robustness} reports the same four-candidate comparison with the geographic-distance \gp{} on the \sure{}-ratio scale for these related KFR outcomes.  In Table~\ref{tab:oa_targeting_welfare}, the largest absolute targeting gain occurs for the Black-male outcome: the \sure{}-weighted average improves top-third targeting by an estimated 1.21 rank points, or about \$1,265 in dollar-equivalent terms.  The table also separates overall shrinkage gains from the incremental gain of geographic smoothing over the non-spatial \closegauss{} benchmark: relative to \closegauss{}, the geographic-distance \gp{} adds more for the pooled-male and White-male outcomes than for the Black-male outcome.
\FloatBarrier

\section{Conclusion}\label{sec:conclusion}

We develop \sure{}-based model averaging, and the selection guarantees that underpin it, for shrinkage maps that exploit spatial structure. The risk comparisons are among the resulting maps, so the prior distribution, prior covariance structure, or similarity rule used to motivate a candidate map need not be correctly specified as a model for \(\theta\). The theory gives sufficient conditions for two uses of \sure{} with nonlinear shrinkage maps that have cross-unit dependence. One result covers selection within a parameterized class by minimizing \sure{}. The model-averaging result shows that once a candidate library has been assembled, the fixed-weight \sure{} criterion can average its members with a finite-library oracle guarantee.

Empirically, in the main pooled-outcome comparison, every spatial candidate has a lower \sure{} ratio than both non-spatial empirical Bayes baselines in every one of the \oaNCzs{} CZs. The best individual spatial specification varies with local geography, and the \sure{}-weighted average of candidate maps has a \sure{} ratio of \oaAggReportedSure{}.  A supporting top-third targeting exercise finds that the \sure{}-weighted average selects tracts with higher estimated average income ranks than the raw-\mle{} targeting rule across the reported outcomes.

For applications with several plausible notions of similarity, the practical lesson is to build a diverse library of shrinkage estimators and let \sure{} evaluate and average them, rather than committing to a single exchangeable model \emph{ex ante}. The same issue arises for noisy area, school, hospital, or firm-level estimates whenever researchers have several credible ways to pool information across units. Targeting and other decision-focused analyses remain important \citep[e.g.,][]{chenCompoundSelectionDecisions2025}. This paper instead applies \sure{} to average shrinkage maps for estimating the latent vector, with targeting treated as a downstream application of those estimates. \sure{}'s unbiasedness leans on the Gaussian sampling model for the noise in $Y$. For tract-level regression predictions built from many observations this is an approximation we consider mild. In the application, \sure{} is computed treating the reported standard errors as known and the sampling errors across tracts as uncorrelated. The main threat to that covariance assumption is overlap in the underlying children contributing to multiple tract estimates, and Appendix~\ref{app:noise_covariance_misspecification} characterizes what that misspecification costs.

\input{main.bbl}
\clearpage
\appendix

\section*{Appendix Roadmap}
The appendices are organized around four supporting components: risk and averaging guarantees, regularity verification for estimator forms used in the application, empirical implementation details, and supplementary empirical comparisons.
\begin{itemize}[leftmargin=*,nosep]
  \item \hyperref[app:theory_proofs]{Appendix~\ref*{app:theory_proofs}: Theory and Proofs}
  \begin{itemize}[leftmargin=1.5em,nosep]
    \item \hyperref[app:mse_decision_relevance]{MSE and stable plug-in decisions} states a stable plug-in condition under which lower latent squared-error loss controls downstream ranking or selection errors.
    \item \hyperref[app:general_regularity]{Regularity framework and Sobolev extensions} states the general concentration result and derives the within-class oracle inequality.
    \item \hyperref[app:model_averaging_proof]{Model-averaging proof} proves the fixed-weight model-averaging oracle inequality.
    \item \hyperref[app:noise_covariance_misspecification]{Noise-covariance misspecification} records the bias term that appears when \sure{} is computed with an approximate covariance matrix.
    \item \hyperref[app:averaging_adaptive_weights]{\sure{} evaluation of the weighted average} states sufficient conditions under which \sure{} remains unbiased when the averaging weights are smooth functions of $Y$.
  \end{itemize}
  \item \hyperref[app:regularity_verification]{Appendix~\ref*{app:regularity_verification}: Regularity Verification for Estimators}
  \begin{itemize}[leftmargin=1.5em,nosep]
    \item \hyperref[app:learned_parameter_regularity]{Trained-parameter regularity} gives sufficient conditions for trained composites, and the \hyperref[app:regularity_closure]{closure tools} provide the affine-map, product, composition, and standardization steps used to verify candidate constructions.
    \item \hyperref[sec:verification]{Estimator-form verifications} cover normal--normal EB shrinkage, \close{}-style Nadaraya--Watson centering-and-scaling constructions, and fixed-kernel \gp{} smoothing, and show that value-similarity smoothing requires a separate moment-envelope argument because a basic value-similarity map need not be globally Lipschitz.
    \item \hyperref[sec:bilateral_verification]{Fixed-parameter value-similarity regularity} verifies the per-candidate moment condition for a single fixed value-similarity map used as a building block in the Cook County comparison; under a bounded-row-sum condition on the fixed geographic covariance factor, the envelope scale is \(\nu_n=O(\sqrt n)\).
  \end{itemize}
  \item \hyperref[app:implementation_computation]{Appendix~\ref*{app:implementation_computation}: Implementation and Computation}
  \begin{itemize}[leftmargin=1.5em,nosep]
    \item \hyperref[app:implementation]{Opportunity Atlas implementation details} summarize diagonal-covariance \sure{} evaluation, training, tangent tracking, Hutchinson trace estimation, preprocessing, distance construction, and \sure{}-based model averaging.
  \end{itemize}
  \item \hyperref[app:empirical_diagnostics]{Appendix~\ref*{app:empirical_diagnostics}: Supplementary Empirical Analyses and Additional Outcomes}
  \begin{itemize}[leftmargin=1.5em,nosep]
    \item Supplementary analyses include \hyperref[app:proxy_exact_case]{fixed-parameter proxy comparisons}, \hyperref[app:oa_robustness]{additional KFR outcomes and the top-third targeting implementation}, \hyperref[app:cb]{the coupled-bootstrap comparison}, and \hyperref[app:assure]{the ASSURE targeting-welfare comparison}.
  \end{itemize}
\end{itemize}

\section{Theory and Proofs}\label{app:theory_proofs}

\subsection{MSE and Stable Plug-In Decisions}\label{app:mse_decision_relevance}

We use latent squared-error loss as an estimation criterion.  This does
not require squared-error loss to coincide with a policymaker's welfare function.
The following bound makes this precise: for stable plug-in objectives, accurate latent estimation controls the downstream welfare approximation error.  Empirical Bayes work makes the same bridge
in more specialized problems; for example,
\citet[Theorem~4]{chenEmpiricalBayesWhen2024} shows that squared-error
accuracy of posterior-mean estimates controls regret for plug-in ranking and
selection decisions.

\begin{lemma}[Stable welfare calculations from latent estimation error]\label{lem:plugin_regret_mse}
Let $\mathcal D$ be a decision set and let $W_n(d,\vartheta)$ be a welfare
criterion defined for $d\in\mathcal D$ and latent vectors
$\vartheta\in\Theta\subseteq\bbR^n$.  Suppose the true vector
$\theta$ and the estimate $\hat\theta$ belong to $\Theta$, and suppose
there exists $L<\infty$ such that, for all $d\in\mathcal D$ and all
$\vartheta,\vartheta'\in\Theta$,
\[
|W_n(d,\vartheta)-W_n(d,\vartheta')|
\leq
L\,\frac{\|\vartheta-\vartheta'\|_2}{\sqrt n}.
\]
Then
\[
\sup_{d\in\mathcal D}
\left|W_n(d,\hat\theta)-W_n(d,\theta)\right|
\leq
L\,\frac{\|\hat\theta-\theta\|_2}{\sqrt n}.
\]
Consequently, if $\hat\theta=f(Y)$ and $f(Y)\in\Theta$ almost surely,
then
\[
\E\!\left[
\sup_{d\in\mathcal D}
\left|W_n(d,f(Y))-W_n(d,\theta)\right|
\right]
\leq
L\,\sqrt{R_n(f)}.
\]
If, in addition, $d^*(\theta)$ is any maximizer of $W_n(d,\theta)$ and
$\hat d$ is any maximizer of $W_n(d,\hat\theta)$, then the plug-in regret satisfies
\[
W_n(d^*(\theta),\theta)-W_n(\hat d,\theta)
\leq
2L\,\frac{\|\hat\theta-\theta\|_2}{\sqrt n}.
\]
\end{lemma}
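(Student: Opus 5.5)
The argument is direct and has three parts, one for each conclusion of Lemma~\ref{lem:plugin_regret_mse}.

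\emph{First bound.} Fix any $d\in\mathcal D$ and apply the assumed Lipschitz condition with $\vartheta=\hat\theta$ and $\vartheta'=\theta$. This is allowed because both vectors lie in $\Theta$. The right-hand side $L\|\hat\theta-\theta\|_2/\sqrt n$ does not depend on $d$, so taking the supremum over $d\in\mathcal D$ gives the first display.

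\emph{Expectation bound.} Set $\hat\theta=f(Y)$. On the almost-sure event $f(Y)\in\Theta$, the first bound holds pointwise. Taking expectations gives
\[
\E\Bigl[\sup_{d\in\mathcal D}|W_n(d,f(Y))-W_n(d,\theta)|\Bigr]\leq L\,\E\bigl[\|f(Y)-\theta\|_2/\sqrt n\bigr].
\]
By Jensen's inequality (equivalently Cauchy--Schwarz), this is at most $L\sqrt{\E[\|f(Y)-\theta\|_2^2/n]}=L\sqrt{R_n(f)}$, using the definitions of $L_n$ and $R_n$ from Section~\ref{sec:methodology}.

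\emph{Regret bound.} Use the standard three-term decomposition:
\[
W_n(d^*,\theta)-W_n(\hat d,\theta)=\bigl[W_n(d^*,\theta)-W_n(d^*,\hat\theta)\bigr]+\bigl[W_n(d^*,\hat\theta)-W_n(\hat d,\hat\theta)\bigr]+\bigl[W_n(\hat d,\hat\theta)-W_n(\hat d,\theta)\bigr].
\]
The middle bracket is $\leq0$ because $\hat d$ maximizes $W_n(\cdot,\hat\theta)$. Each outer bracket is bounded by the first display, since both $d^*$ and $\hat d$ belong to $\mathcal D$. Adding the two outer bounds gives $2L\|\hat\theta-\theta\|_2/\sqrt n$.

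\emph{Technical points.} There is no real obstacle here. Two minor issues need attention. The first is measurability of the supremum in the expectation step. If it is not measurable, I would interpret the expectation as an outer expectation, or note that the pointwise bound dominates it by a measurable quantity. The second is that the maximizers $d^*$ and $\hat d$ are assumed to exist, as the statement stipulates, so no approximate-maximizer argument is needed.
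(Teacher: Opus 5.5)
Your proof is correct and follows the paper's argument step for step. You take the supremum over $d$ in the Lipschitz condition, apply Jensen's inequality for the expectation bound, and use the same three-term decomposition in which the middle term is nonpositive because $\hat d$ is optimal; your remark on measurability of the supremum is an extra precaution that the paper leaves implicit.
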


\begin{proof}[Proof of Lemma~\ref{lem:plugin_regret_mse}]
The first display follows directly by taking the supremum over $d$ in the
Lipschitz condition with $\vartheta=\hat\theta$ and
$\vartheta'=\theta$.  Taking expectations and applying Jensen's inequality
gives
\[
\E\frac{\|f(Y)-\theta\|_2}{\sqrt n}
\leq
\left(\E\frac{\|f(Y)-\theta\|_2^2}{n}\right)^{1/2}
=
\sqrt{R_n(f)}.
\]
For the plug-in regret statement, add and subtract welfare evaluated at
$\hat\theta$:
\begin{align*}
W_n(d^*(\theta),\theta)-W_n(\hat d,\theta)
&=
\{W_n(d^*(\theta),\theta)-W_n(d^*(\theta),\hat\theta)\} \\
&\quad+
\{W_n(d^*(\theta),\hat\theta)-W_n(\hat d,\hat\theta)\} \\
&\quad+
\{W_n(\hat d,\hat\theta)-W_n(\hat d,\theta)\}.
\end{align*}
The middle term is nonpositive because $\hat d$ maximizes
$W_n(d,\hat\theta)$.  The first and third terms are each bounded by
$L\|\hat\theta-\theta\|_2/\sqrt n$ by the Lipschitz condition.
\end{proof}

\begin{remark}[Generality of the MSE-to-welfare bridge]
Lemma~\ref{lem:plugin_regret_mse} is stated at an abstract level.  For stable plug-in welfare objectives, latent MSE controls expected welfare approximation error, while decision-specific welfare analysis remains application-specific.  Related
empirical Bayes work makes this bridge in more specialized decision problems,
including downstream regret in CLOSE \citep{chenEmpiricalBayesWhen2024} and
direct welfare optimization for compound selection in ASSURE
\citep{chenCompoundSelectionDecisions2025}.  The present paper keeps MSE as the
common estimation target because the same latent mobility estimates can feed
many downstream analyses.
\end{remark}

\subsection{Regularity Framework and Sobolev Extensions}\label{app:general_regularity}

The pointwise regularity conditions in Assumption~\ref{asm:regularity} can be replaced by moment-based derivative conditions.
Assumption~\ref{asm:regularity} requires uniform-in-$y$ polynomial-growth bounds for the shrinkage adjustment
$g(y)=f(y)-y$, its Jacobian $Dg$, and the pairwise differences
$g_\gamma-g_{\gamma'}$ that control variation across the parameterized class.
For a probability measure $P$ on $\bbR^n$, the \emph{Sobolev space}
$W^{k,p}(P)$ consists of functions $g \colon \bbR^n \to \bbR^n$ whose
derivative arrays up to order $k$ have finite $L^p(P)$ norm:
\[
\|g\|_{W^{k,p}(P)} :=
\sum_{m=0}^k \|D^m g\|_{L^p(P)} .
\]
Here $D^0g=g$, measured in Euclidean norm; $D^1g=Dg$, measured in Frobenius norm; and for $m\geq2$, $D^m g$ is the array of all $m$th-order partial derivatives, measured in Hilbert--Schmidt norm, meaning the square root of the sum of squared entries of the derivative array. At $k=1$, this gives $\|g\|_{W^{1,p}(P)} = \|g\|_{L^p(P)} + \|Dg\|_{L^p(P)}$. The pointwise norm $\|g(y)\|_W = \|g(y)\|_2 + \|Dg(y)\|_F$ from Section~\ref{sec:regularity} satisfies $\|g\|_{W^{1,p}(P)} \leq 2\big\|\|g(\cdot)\|_W\big\|_{L^p(P)}$, so a pointwise bound on $\|g(y)\|_W$ immediately implies $W^{1,p}$ membership. The following lemma records a common $\sqrt n$ scaling case for fixed linear smoothers.

\begin{lemma}[Row-norm scaling for linear smoothers]\label{lem:row_norm_linear_smoothers}
Let $f(y)=Sy$ and $g(y)=f(y)-y=(S-I)y$. Suppose that, for constants $C_{\mathrm{op}},C_{\mathrm{row}}<\infty$ not depending on $n$,
\[
\|S\|_{\mathrm{op}}\leq C_{\mathrm{op}},
\qquad
\max_{i\leq n}\|S_{i\cdot}\|_2\leq C_{\mathrm{row}} .
\]
Then the singleton class $\mathcal F=\{f\}$ satisfies Assumption~\ref{asm:regularity} with $\beta=1/2$ and $\nu_n=C\sqrt n$ for a constant $C<\infty$ not depending on $n$.
\end{lemma}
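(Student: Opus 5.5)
For a singleton class, Assumption~\ref{asm:regularity} reduces to the reference-point term: the supremum over $\gamma\ne\gamma'$ is zero by convention, and $\Gamma=\{\gamma_0\}$ is compact with diameter $0\le D_\Gamma$. So the plan is to bound $\|g(y)\|_W=\|(S-I)y\|_2+\|S-I\|_F$ pointwise by $C\sqrt n\,(1+\|y\|_2/\sqrt n)^{2\beta}$ with $\beta=1/2$. With $\beta=1/2$ the right-hand side is exactly
\[
C\sqrt n+C\|y\|_2 .
\]
The two pieces of $\|g(y)\|_W$ can therefore be matched to these two terms separately.

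For the function term, the operator-norm bound gives $\|(S-I)y\|_2\le \|S-I\|_{\mathrm{op}}\|y\|_2\le (C_{\mathrm{op}}+1)\|y\|_2$. This is of the form $C\|y\|_2$.

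For the Jacobian term, $Dg(y)=S-I$ is constant in $y$. Its Frobenius norm is controlled row by row, which is where the row-norm hypothesis enters:
\[
\|S-I\|_F^2=\sum_{i=1}^n\|S_{i\cdot}-e_i^\top\|_2^2\le \sum_{i=1}^n(C_{\mathrm{row}}+1)^2=n(C_{\mathrm{row}}+1)^2 .
\]
Hence $\|S-I\|_F\le (C_{\mathrm{row}}+1)\sqrt n$. This is the step that forces $\nu_n$ to scale as $\sqrt n$: the operator-norm bound alone would only give $\|S-I\|_F\le\sqrt n\,\|S-I\|_{\mathrm{op}}$, which happens to yield the same order. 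I would still present the row bound, since it is the hypothesis the lemma emphasizes and it gives the sharper per-row interpretation.

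Combining the two pieces with $C:=\max\{C_{\mathrm{op}},C_{\mathrm{row}}\}+1$ gives
\[
\|g(y)\|_W\le C\|y\|_2+C\sqrt n=C\sqrt n\,(1+\|y\|_2/\sqrt n)^{2\cdot\frac12}.
\]
This holds for every $y\in\bbR^n$, and $C$ does not depend on $n$. Continuity of the first partial derivatives is immediate because $g$ is linear. I expect no real obstacle here. The only care needed is to check that the exponent convention $2\beta$ gives exactly the linear growth in $\|y\|_2$ when $\beta=1/2$, and that $C$ is independent of $n$.
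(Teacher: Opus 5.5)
Your proof is correct and matches the paper's argument. The paper uses the same steps: the operator-norm bound for $\|(S-I)y\|_2$, the row-by-row bound $\|S-I\|_F\le (C_{\mathrm{row}}+1)\sqrt n$, and the identity $(1+\|y\|_2/\sqrt n)^{2\beta}=1+\|y\|_2/\sqrt n$ at $\beta=1/2$; it only differs in taking the constant $C_{\mathrm{op}}+C_{\mathrm{row}}+2$ instead of your $\max\{C_{\mathrm{op}},C_{\mathrm{row}}\}+1$. Your side remark is also right: since $\|S_{i\cdot}\|_2=\|S^\top e_i\|_2\le\|S\|_{\mathrm{op}}$, the row-norm hypothesis is actually implied by the operator-norm hypothesis.
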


\begin{proof}[Proof of Lemma~\ref{lem:row_norm_linear_smoothers}]
Take $\Gamma$ to be a singleton, so the parameter-increment term in Assumption~\ref{asm:regularity} is zero. The map $g$ is linear and continuously differentiable with $Dg=S-I$. The operator-norm bound gives
\[
\|g(y)\|_2\leq\|S-I\|_{\mathrm{op}}\|y\|_2
\leq
(C_{\mathrm{op}}+1)\|y\|_2
\]
for all $y\in\bbR^n$. The row-norm bound gives, writing $e_i$ for the $i$th coordinate vector,
\[
\|(S-I)_{i\cdot}\|_2\leq \|S_{i\cdot}\|_2+\|e_i\|_2\leq C_{\mathrm{row}}+1.
\]
Therefore
\[
\|Dg(y)\|_F^2
=
\sum_{i=1}^n \|(S-I)_{i\cdot}\|_2^2
\leq
n(C_{\mathrm{row}}+1)^2 .
\]
Thus
\[
\|g(y)\|_2+\|Dg(y)\|_F
\leq
(C_{\mathrm{op}}+1)\|y\|_2+(C_{\mathrm{row}}+1)\sqrt n
\leq
(C_{\mathrm{op}}+C_{\mathrm{row}}+2)\sqrt n
\left(1+\frac{\|y\|_2}{\sqrt n}\right),
\]
so the pointwise envelope in Assumption~\ref{asm:regularity} holds with $\beta=1/2$ and $\nu_n=(C_{\mathrm{op}}+C_{\mathrm{row}}+2)\sqrt n$.
\end{proof}

\begin{definition}[Sobolev moment envelope]\label{def:sobolev_moment_envelope}
Under Assumption~\ref{asm:sampling_array}, write $P_Y=\N(\theta,\Sigma)$ for the law of $Y$. For $g:\bbR^n\to\bbR^n$, $\beta\geq0$, and a nonnegative integer $k$, define
\[
M_{P_Y, k}^\beta(g) := \sup_{p \geq 2} p^{-\beta}\|g\|_{W^{k+1,p}(P_Y)},
\]
with the convention that $M_{P_Y,k}^{\beta}(g)=\infty$ if
$g\notin W^{k+1,p}(P_Y)$ for some $p\geq2$.
\end{definition}

Definition~\ref{def:sobolev_moment_envelope} packages the derivative-moment
bounds the proof needs.  Its index $k$ is shifted by one derivative: $k=0$
corresponds to $W^{1,p}$ control of $g$ and $Dg$, while $k\geq1$ adds
$L^p(P_Y)$ control of higher derivative arrays.  The general concentration
theorem, Theorem~\ref{thm:general_conc}, is stated in terms of this envelope
because the proof needs $L^p(P_Y)$ derivative bounds, not necessarily uniform
pointwise bounds in $y$.  This Sobolev formulation is more general than the
pointwise polynomial envelope because the needed derivative-moment bounds may
hold even when no convenient uniform polynomial bound is available.
Lemma~\ref{lem:pointwise_regular_to_sobolev} shows that
Assumption~\ref{asm:regularity} implies the $k=0$ Sobolev moment condition,
and this $k=0$ implication is the route used to prove
Theorem~\ref{thm:uniform_conc} through Theorem~\ref{thm:general_conc}.

\begin{assumption}[Sobolev moment regularity]\label{asm:sobolev_reg}
The candidate class is $\mathcal F=\{f_\gamma:\gamma\in\Gamma\}$, where $\Gamma\subset\bbR^{d_\Gamma}$ is compact and
\[
\operatorname{diam}(\Gamma)
:=
\sup_{\gamma,\gamma'\in\Gamma}\|\gamma-\gamma'\|_2
\leq D_\Gamma
\]
for a constant $D_\Gamma<\infty$ not depending on $n$. Write $g_\gamma(y):=f_\gamma(y)-y$. With $M_{P_Y,k}^{\beta}$ as in Definition~\ref{def:sobolev_moment_envelope}, there exist $\beta \geq 0$, a nonnegative integer $k$, a scaling sequence $\nu_n > 0$, and a point $\gamma_0\in\Gamma$ such that
\[
M_{P_Y,k}^{\beta}(g_{\gamma_0})
+
\sup_{\gamma\ne\gamma'}
\frac{M_{P_Y,k}^{\beta}(g_\gamma-g_{\gamma'})}{\|\gamma-\gamma'\|_2}
\leq
\nu_n .
\]
The maps $g_{\gamma_0}$ and $g_\gamma-g_{\gamma'}$, $\gamma\ne\gamma'$, have continuous partial derivatives through order $k+1$. If $\Gamma$ is a singleton, the supremum is interpreted as zero.
\end{assumption}

\begin{lemma}[Pointwise polynomial regularity implies Sobolev moment regularity]\label{lem:pointwise_regular_to_sobolev}
Under Assumption~\ref{asm:sampling_array}, Assumption~\ref{asm:regularity}
implies Assumption~\ref{asm:sobolev_reg} at $k=0$, up to constants depending
only on fixed sampling and envelope constants.
\end{lemma}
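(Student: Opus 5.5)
The plan is to take the pointwise envelope in Assumption~\ref{asm:regularity} and integrate it against $P_Y=\N(\theta,\Sigma)$, using Gaussian moment bounds for $\|Y\|_2/\sqrt n$. Everything then reduces to the single scalar quantity $Z:=1+\|Y\|_2/\sqrt n$. At $k=0$, Definition~\ref{def:sobolev_moment_envelope} asks for $L^p(P_Y)$ control of $g$ and $Dg$ only. Using the inequality $\|h\|_{W^{1,p}(P_Y)}\le 2\,\big\|\|h(\cdot)\|_W\big\|_{L^p(P_Y)}$ recorded above, it suffices to bound $\big\|\|h(Y)\|_W\big\|_{L^p}$ for two choices: $h=g_{\gamma_0}$, and $h=(g_\gamma-g_{\gamma'})/\|\gamma-\gamma'\|_2$ for each pair $\gamma\ne\gamma'$. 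In both cases the pointwise envelope gives
\[
\|h(y)\|_W\le \nu_n\Big(1+\frac{\|y\|_2}{\sqrt n}\Big)^{2\beta},
\]
and the suprema over $\gamma\ne\gamma'$ pass through because the bound is uniform in the pair. Hence
\[
\big\|\|h(Y)\|_W\big\|_{L^p}\le \nu_n\,\big(\E Z^{2\beta p}\big)^{1/p}.
\]
The smoothness requirement at $k=0$, continuous first partials of $g_{\gamma_0}$ and $g_\gamma-g_{\gamma'}$, is exactly the one assumed in Assumption~\ref{asm:regularity}.

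The main step is to show that $(\E Z^{q})^{1/p}\le C\,p^{\beta}$ with $q=2\beta p$, for all $p\ge2$ and with $C$ depending only on $C_\theta,\bar\sigma,\beta$.

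1. Write $Y=\theta+\Sigma^{1/2}\xi$ with $\xi\sim\N(0,I_n)$. Then
\[
\frac{\|Y\|_2}{\sqrt n}\le C_\theta+\bar\sigma\,\frac{\|\xi\|_2}{\sqrt n}.
\]
2. The map $\xi\mapsto\|\xi\|_2/\sqrt n$ is $n^{-1/2}$-Lipschitz. Its mean is at most $1$. Gaussian concentration then gives
\[
\Big\|\frac{\|\xi\|_2}{\sqrt n}\Big\|_{L^r}\le 1+C\sqrt{r/n}\le 1+C\sqrt r,
\]
which is sub-Gaussian growth in $r$.
3. Minkowski's inequality gives $\|Z\|_{L^r}\le C'(1+\sqrt r)$ with $C'$ depending only on $C_\theta$ and $\bar\sigma$.
4. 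Take $r=2\beta p$ (if $\beta=0$ there is nothing to prove). This yields
\[
(\E Z^{2\beta p})^{1/p}=\|Z\|_{L^{2\beta p}}^{2\beta}\le C''(1+\sqrt{2\beta p})^{2\beta}\le C'''\,p^{\beta}
\]
for $p\ge2$, with $C'''$ depending on $\beta,C_\theta,\bar\sigma$.

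Combining, $p^{-\beta}\|h\|_{W^{1,p}(P_Y)}\le 2C'''\nu_n$ uniformly in $p\ge2$ and in the parameter pair. Therefore Assumption~\ref{asm:sobolev_reg} holds at $k=0$ with the same $\beta$, the same $\gamma_0$ and $\Gamma$, and scale $\nu_n$ inflated by a constant factor. This is the claimed statement.

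The only genuine obstacle is the moment bound in step 4. The exponent is calibrated precisely: the factor $2\beta$ in the envelope combined with the $\sqrt r$ growth of Gaussian norms produces exactly $p^{\beta}$. The bound must be checked not to pick up $n$-dependence, which the $\sqrt{r/n}\le\sqrt r$ step handles. As an alternative to concentration, one could use the chi-square moment bound $\E\|\xi\|_2^{r}\le (n+r)^{r/2}$, which gives the same conclusion directly.
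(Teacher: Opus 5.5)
Your proposal is correct and follows the paper's proof essentially step for step. You write $Y=\theta+\Sigma^{1/2}\xi$ and use Minkowski plus Gaussian concentration of the Euclidean norm to get $\bigl\|1+\|Y\|_2/\sqrt n\bigr\|_{L^q(P_Y)}\lesssim 1+\sqrt q$. You then raise this to the power $2\beta$ to obtain the $p^\beta$ factor, and integrate the pointwise envelope for $g_{\gamma_0}$ and the normalized increments.

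The one detail the paper handles explicitly and you skip is the case $2\beta p<1$, where Minkowski's inequality in $L^{2\beta p}$ is unavailable. The paper applies the bound at $q=\max\{1,2\beta p\}$ and uses monotonicity of $L^q$ norms, which is the one-line fix your step 4 needs.
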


\begin{proof}[Proof of Lemma~\ref{lem:pointwise_regular_to_sobolev}]
Throughout this proof, $\lesssim$ hides constants that may depend on
$\beta$, $C_\theta$, and $\bar\sigma$, but not on
$p$, $n$, $\gamma,\gamma'$, $d_\Gamma$, or $\nu_n$.  By
Assumption~\ref{asm:regularity}, for every $y\in\bbR^n$ and
$\gamma\ne\gamma'$,
\[
\|g_\gamma(y)-g_{\gamma'}(y)\|_W
\leq
\nu_n\left(1+\frac{\|y\|_2}{\sqrt n}\right)^{2\beta}
\|\gamma-\gamma'\|_2 .
\]

The only probabilistic input is a Gaussian moment bound for the polynomial
factor.  Write $Y=\theta+\Sigma^{1/2}Z$, where $Z\sim\N(0,I_n)$.  By
Minkowski's inequality and the triangle inequality,
\[
\left\|1+\frac{\|Y\|_2}{\sqrt n}\right\|_{L^q(P_Y)}
\leq
1+\frac{\|\theta\|_2}{\sqrt n}
+\frac{\|\Sigma^{1/2}\|_{\mathrm{op}}}{\sqrt n}
\|\,\|Z\|_2\,\|_{L^q}.
\]
Assumption~\ref{asm:sampling_array} gives
$\|\theta\|_2/\sqrt n\leq C_\theta$ and
$\|\Sigma^{1/2}\|_{\mathrm{op}}\leq\bar\sigma$.  The Euclidean norm is
one-Lipschitz by the reverse triangle inequality: for any $z,z'\in\bbR^n$,
\[
\big|\|z\|_2-\|z'\|_2\big|
\leq
\|z-z'\|_2 .
\]
Also, by Cauchy--Schwarz,
\[
\E[\|Z\|_2]
\leq
\left(\E[\|Z\|_2^2]\right)^{1/2}
=
\sqrt n,
\]
where the equality uses $\|Z\|_2^2\sim\chi_n^2$, so
$\E[\|Z\|_2^2]=n$.

Apply the Gaussian concentration inequality
\citep[Theorem~5.2.2]{vershyninHighdimensionalProbabilityIntroduction2018} to
the one-Lipschitz function $z\mapsto\|z\|_2$.  This gives a universal
sub-Gaussian bound on
$\|Z\|_2-\E[\|Z\|_2]$.  By the tail characterization in
\citet[Proposition~2.5.2]{vershyninHighdimensionalProbabilityIntroduction2018},
for every $t\geq0$,
\[
\Pr\left(\big|\|Z\|_2-\E[\|Z\|_2]\big|\geq t\right)
\leq
2\exp(-ct^2)
\]
for a universal constant $c>0$.  The moment characterization in the same
proposition implies
\[
\left\|\|Z\|_2-\E[\|Z\|_2]\right\|_{L^q}
\lesssim
\sqrt q,\qquad q\geq1.
\]
Therefore, by the triangle inequality in $L^q$, applied to
\[
\|Z\|_2
=
\E[\|Z\|_2]+\{\|Z\|_2-\E[\|Z\|_2]\},
\]
we have
\[
\|\,\|Z\|_2\,\|_{L^q}
\leq
\|\E[\|Z\|_2]\|_{L^q}
+
\left\|\|Z\|_2-\E[\|Z\|_2]\right\|_{L^q}.
\]
The first term is just the constant $\E[\|Z\|_2]$, so
\[
\|\,\|Z\|_2\,\|_{L^q}
\lesssim
\sqrt n+\sqrt q,\qquad q\geq1.
\]
Therefore
\[
\left\|1+\frac{\|Y\|_2}{\sqrt n}\right\|_{L^q(P_Y)}
\lesssim
1+\sqrt q,\qquad q\geq1.
\]
Using this bound with $q=\max\{1,2\beta p\}$ and monotonicity of $L^q$
norms gives, for $p\geq2$,
\[
\left\|\left(1+\frac{\|Y\|_2}{\sqrt n}\right)^{2\beta}\right\|_{L^p(P_Y)}
\lesssim
p^\beta ,
\]
with the case $\beta=0$ understood as the constant-one bound.

Taking $L^p(P_Y)$ norms of the pointwise increment bound therefore gives,
for every $p\geq2$,
\[
\|g_\gamma-g_{\gamma'}\|_{W^{1,p}(P_Y)}
\lesssim
\nu_n p^\beta\|\gamma-\gamma'\|_2 .
\]
Thus
\[
M_{P_Y,0}^{\beta}(g_\gamma-g_{\gamma'})
\lesssim
\nu_n\|\gamma-\gamma'\|_2,
\]
and $g_\gamma-g_{\gamma'}\in W^{1,p}(P_Y)$ for all $p\geq2$.  The case
$\gamma=\gamma'$ is the zero map.

The same argument applied to the reference-point bound
\[
\|g_{\gamma_0}(y)\|_W
\leq
\nu_n\left(1+\frac{\|y\|_2}{\sqrt n}\right)^{2\beta}
\]
shows that $g_{\gamma_0}\in W^{1,p}(P_Y)$ for all $p\geq2$ and
\[
M_{P_Y,0}^{\beta}(g_{\gamma_0})\lesssim \nu_n.
\]
Combining the reference-point and increment bounds gives
\[
M_{P_Y,0}^{\beta}(g_{\gamma_0})
+
\sup_{\gamma\ne\gamma'}
\frac{M_{P_Y,0}^{\beta}(g_\gamma-g_{\gamma'})}{\|\gamma-\gamma'\|_2}
\lesssim
\nu_n.
\]
This is Assumption~\ref{asm:sobolev_reg} at $k=0$, up to constants depending
only on fixed sampling and envelope constants.
\end{proof}

For fixed $k$, once the corresponding higher-derivative moment bounds are
verified with scale $\nu_n$, the dimension exponent in
Theorem~\ref{thm:general_conc} is $1+3\cdot 2^{-k}+\beta$. This equals
$4+\beta$ at $k=0$ (the case delivered by
Lemma~\ref{lem:pointwise_regular_to_sobolev} and used to prove
Theorem~\ref{thm:uniform_conc}) and decreases toward $1+\beta$ as $k$ grows.

\begin{theorem}[General concentration]\label{thm:general_conc}
Suppose Assumptions~\ref{asm:sampling_array} and~\ref{asm:sobolev_reg} hold.
Assume the \sure{} error process $f\mapsto\sure_n(f)-L_n(f)$ is separable and the minimizers
\[
\hat\gamma\in\argmin_{\gamma\in\Gamma}\sure_n(f_\gamma),
\qquad
\gamma^*\in\argmin_{\gamma\in\Gamma}L_n(f_\gamma)
\]
are measurable in $Y$. Set $\hat f:=f_{\hat\gamma}$ and $f^*:=f_{\gamma^*}$, with $f^*$ the realized-loss oracle. Then
\[
\E\left[\sup_{f\in \mathcal{F}}|\sure_n(f) - L_n(f)|\right] \lesssim \frac{1}{\sqrt{n}} + \frac{\nu_n \, \max\{d_\Gamma,1\}^{1+3\cdot 2^{-k}+\beta}}{n}.
\]
The corresponding oracle comparison satisfies
\[
\E[L_n(\hat{f}) - L_n(f^*)]
\lesssim
\frac{\nu_n \, \max\{d_\Gamma,1\}^{1+3\cdot 2^{-k}+\beta}}{n}.
\]
The implicit constants may depend on fixed $k$, $\beta$, $C_\theta$,
$\bar\sigma$, and the diameter bound $D_\Gamma$, but not on $n$,
$d_\Gamma$, $\nu_n$, or the particular candidate class.
\end{theorem}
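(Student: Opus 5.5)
We start from the exact decomposition of the \sure{} error. Writing $g=g_\gamma$ and $\varepsilon=Y-\theta$, expanding $\|Y-f(Y)\|_2^2=\|f(Y)-\theta\|_2^2-2\varepsilon^\top(f(Y)-\theta)+\|\varepsilon\|_2^2$ and $f(Y)-\theta=\varepsilon+g(Y)$ gives
\[
\sure_n(f_\gamma)-L_n(f_\gamma)
=
\underbrace{\tfrac1n\big(\|\varepsilon\|_2^2-\tr\Sigma\big)}_{=:A_n}
\;-\;\tfrac2n\Big(\varepsilon^\top g_\gamma(Y)-\tr\{\Sigma Dg_\gamma(Y)\}\Big)
\;+\;\tfrac2n\tr(\Sigma)-\tfrac2n\tr(\Sigma)\cdot 0 ,
\]
more precisely $\sure_n(f)-L_n(f)=A_n-\tfrac2n Z(g)$ with $Z(g):=\varepsilon^\top g(Y)-\tr\{\Sigma Dg(Y)\}$, using $\tr(\Sigma Df)=\tr\Sigma+\tr(\Sigma Dg)$. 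The term $A_n$ does not depend on $\gamma$ and is $O_{L^1}(n^{-1/2})$ by the bound $\lambda_{\max}(\Sigma)\le\bar\sigma^2$ (its variance is $2\tr(\Sigma^2)/n^2\le 2\bar\sigma^4/n$); this gives the $n^{-1/2}$ term in the first display and cancels in the oracle comparison. The oracle comparison then follows from the uniform bound via the standard argument:
\[
L_n(\hat f)-L_n(f^*)\le [\sure_n(f^*)-L_n(f^*)]-[\sure_n(\hat f)-L_n(\hat f)]\le \tfrac4n\sup_\gamma|Z(g_\gamma)|,
\]
since $A_n$ cancels. So everything reduces to showing $\E\sup_{\gamma}|Z(g_\gamma)|\lesssim \nu_n \max\{d_\Gamma,1\}^{1+3\cdot2^{-k}+\beta}$.

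The plan for the process $\gamma\mapsto Z(g_\gamma)$ proceeds in three steps. First, we obtain $L^p$ moment bounds for a single centered Stein term. $Z(g)$ has mean zero by Stein's lemma, and is linear in $g$. Reducing to $\varepsilon=\Sigma^{1/2}\xi$ with $\xi$ standard normal (using positive definiteness), we bound $\|Z(g)\|_{L^p}$ by the second-order Stein/Gaussian Poincaré-type moment inequalities (as in \citet{bellecSecondOrderStein2021}): $\|Z(g)\|_{L^p}\lesssim \sqrt p\,\bar\sigma\|g\|_{L^{p'}}+p\,\bar\sigma^2\|Dg\|_{L^{p'}}$ (via a Pisier/Gaussian $L^p$ Poincaré inequality applied to $\xi^\top h(\xi)-\operatorname{div}h$). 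Combined with Assumption~\ref{asm:sobolev_reg} at level $k=0$, applied to increments $g_\gamma-g_{\gamma'}$, this gives $\|Z(g_\gamma)-Z(g_{\gamma'})\|_{L^p}\lesssim \nu_n p^{1+\beta}\|\gamma-\gamma'\|_2$, and similarly $\|Z(g_{\gamma_0})\|_{L^p}\lesssim \nu_n p^{1+\beta}$. For $k\ge1$, higher derivatives allow iterating the Poincaré step on the remainder (a Hermite/chaos truncation), trading powers of $p$ for better behaviour, which is what eventually improves the exponent.

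Second, we chain. Increments with $L^p$ norm growing like $p^{\alpha}$ give a $\psi_{1/\alpha}$-Orlicz process, and Dudley chaining over $\Gamma\subset\bbR^{d_\Gamma}$ of diameter $D_\Gamma$, with covering numbers $(3D_\Gamma/\epsilon)^{d_\Gamma}$, yields $\E\sup_\gamma|Z(g_\gamma)-Z(g_{\gamma_0})|\lesssim \nu_n\int_0^{D_\Gamma}(d_\Gamma\log(3D_\Gamma/\epsilon))^{\alpha}d\epsilon\lesssim \nu_n d_\Gamma^{\alpha}$. Separability justifies taking the supremum over a countable dense set.

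The main obstacle is getting exactly the advertised exponent $1+3\cdot2^{-k}+\beta$ rather than the naive $1+\beta$ (or worse) this count suggests. I expect the loss to come from the fact that $Z$ is not Lipschitz in the Gaussian input with deterministic constant: the $p^{1+\beta}$ moment bound holds only after Hölder splitting of the random envelope, and controlling the tail of the derivative part at level $k$ costs a factor whose exponent halves with each additional derivative, giving the $3\cdot2^{-k}$ term. Concretely, I would prove the $k$-level moment bound by induction, at each level applying the $L^p$ Poincaré inequality to the derivative of the previous remainder and using Cauchy--Schwarz, which produces the recursion $e_{k}=1+ e_{k-1}/2$-type on the excess exponent starting from $3$ at $k=0$. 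I would then feed the resulting $p^{\alpha_k}$ growth with $\alpha_k=1+3\cdot2^{-k}+\beta$ (after absorbing the $d_\Gamma$-dependence of the chaining entropy) into the chaining bound. If the recursion does not close exactly, I would accept a cruder exponent and verify that at least $k=0$ gives $4+\beta$, which is what Theorem~\ref{thm:uniform_conc} needs.
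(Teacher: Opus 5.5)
Your outer architecture matches the paper's. You decompose $\sure_n(f)-L_n(f)$ into the $f$-free term $n^{-1}\{\tr(\Sigma)-\|\varepsilon\|_2^2\}$ plus $\tfrac2n\{\tr(\Sigma Dg(Y))-\langle\varepsilon,g(Y)\rangle\}$; your $A_n$ has the opposite sign, which is harmless. You cancel the common term in the oracle step, convert $L^p$ growth $p^{1/\alpha}$ into a $\psi_\alpha$ bound, and chain with entropy $d_\Gamma\log(1/u)$ to get the factor $\max\{d_\Gamma,1\}^{1/\alpha}$. For $\alpha<1$ you must chain with a convex surrogate of $\psi_\alpha$, as the paper does, but that is routine.

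The gap is the analytic core. In standard-normal coordinates you need a dimension-free bound $\|\delta(h)\|_{L^p}\lesssim p^{1+3\cdot2^{-k}}\|h\|_{W^{k+1,p}}$ for the Gaussian divergence $\delta(h)=\xi^\top h-\operatorname{div}h$, and your proposal does not supply one.

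Your Step~1 asserts a first-order bound of the form $\sqrt p\,\|h\|+p\,\|Dh\|$ ``via a Pisier/Poincar\'e inequality applied to $\delta(h)$.'' That route fails at $k=0$. Since $D\delta(h)=h+\delta(Dh)$, the $L^p$ Poincar\'e inequality only gives $\|\delta(h)\|_p\lesssim\sqrt p\,(\|h\|_p+\|\delta(Dh)\|_p)$. This involves second derivatives of $h$, which Assumption~\ref{asm:sobolev_reg} with $k=0$ does not control, so you still need an independent base case.

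Iterating this linear step is also the wrong mechanism for higher $k$. Each level adds $1/2$ to the exponent instead of halving the excess, so it cannot produce $1+3\cdot2^{-k}$.

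Notice too that if your claimed first-order bound were available, you would already have exponent $1+\beta$ for every $k$, which implies the theorem outright. There is no ``obstacle'' in matching the advertised exponent, and the story about losses from H\"older splitting is not an argument. The first-order divergence bound is exactly the hard lemma. As written it is an unproved claim carrying the whole theorem, so the proposal establishes no exponent at all, not even $4+\beta$ at $k=0$.

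The missing ingredients are the two lemmas the paper proves.

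\emph{Base case (Lemma~\ref{lem:first_order_divergence_continuity}).} This gives $\|\delta(D^mh)\|_p\le c_0p^4(\|D^mh\|_p+\|D^{m+1}h\|_p)$ by duality against polynomial test functions $G$.
\begin{itemize}
\item The mean part $\delta(\E u)$ is Gaussian and costs $O(\sqrt p)$.
\item The centered part uses the identity $\E\langle\widetilde u,D\widetilde G\rangle=\E\langle D\widetilde u,DC^{-2}D\widetilde G\rangle$.
\item It then applies Meyer's inequality twice, with the dimension-free Gaussian Riesz-transform constant $K_q\le 2(p-1)$.
\item Finally it uses an $O(p^2)$ bound on the multiplier $\sqrt R$ obtained from hypercontractivity.
\end{itemize}

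\emph{Improvement with $k$ (Lemma~\ref{lem:one_step_recurrence}).} Test $A=\delta(D^mh)$ against $A|A|^{p-2}$ and use $D\delta(D^mh)=D^mh+\delta(D^{m+1}h)$. This yields the quadratic inequality
\[
\|\delta(D^mh)\|_p^2\le(p-1)\|D^mh\|_p^2+(p-1)\|D^mh\|_p\,\|\delta(D^{m+1}h)\|_p .
\]
The next-level divergence enters multiplied by a lower-order factor and is then square-rooted. Hence the excess exponent halves at each level: $e_r-1=(e_{r-1}-1)/2$ with $e_0=4$, which is where $1+3\cdot2^{-k}$ comes from.
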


\noindent The proof, together with its supporting lemmas, occupies the remainder of this subsection; the \hyperref[proof:general_conc]{main argument} concludes it.

\subsubsection{Proof of General Concentration}\label{app:proofs}

The proof rewrites the \sure{} error as a Gaussian divergence functional.  The key analytic input is a continuity bound for this divergence operator: if the shrinkage adjustment and its derivatives have controlled Sobolev moments, then the \sure{} error has controlled sub-Weibull tails \citep{vladimirovaSubWeibullDistributionsGeneralizing2020}.  The remaining steps are a covering argument over the compact parameter space and a centering argument at the reference point.

\begin{lemma}[Tail-to-expectation conversion]\label{lem:tail_to_expectation}
Let $Z \geq 0$ satisfy $\Pr(Z \geq m + t) \leq 2\exp(-(t/s)^\alpha)$ for all $t \geq 0$, some $0\leq m < \infty$, $s > 0$, and $\alpha \in (0,1]$. Then $\E[Z] \leq m + 2s\,\Gamma(1 + 1/\alpha)$.
\end{lemma}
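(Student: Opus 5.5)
The approach is the layer-cake formula. Since $Z\geq0$, we write
\[
\E[Z]=\int_0^\infty \Pr(Z\geq u)\,du ,
\]
and split the integral at $u=m$. On $[0,m]$ we bound the probability trivially by one, which contributes at most $m$. On $[m,\infty)$ we substitute $u=m+t$ and apply the assumed tail bound. This gives
\[
\E[Z]\leq m+\int_0^\infty \min\{1,\,2e^{-(t/s)^\alpha}\}\,dt\leq m+2\int_0^\infty e^{-(t/s)^\alpha}\,dt .
\]
Here we simply drop the $\min$ with one; this is wasteful but matches the constant in the statement.

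It remains to evaluate the integral. With the change of variables $v=(t/s)^\alpha$ we have $t=s v^{1/\alpha}$ and $dt=(s/\alpha)v^{1/\alpha-1}\,dv$. Hence
\[
\int_0^\infty e^{-(t/s)^\alpha}\,dt=\frac{s}{\alpha}\int_0^\infty v^{1/\alpha-1}e^{-v}\,dv=\frac{s}{\alpha}\Gamma(1/\alpha)=s\,\Gamma(1+1/\alpha),
\]
using $\Gamma(1+x)=x\Gamma(x)$. Combining the pieces yields $\E[Z]\leq m+2s\,\Gamma(1+1/\alpha)$.

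There is no real obstacle. The only points to check are the measurability and nonnegativity needed for the layer-cake identity, which are given. The restriction $\alpha\in(0,1]$ plays no role beyond ensuring that $\Gamma(1+1/\alpha)$ is finite, which holds for every $\alpha>0$. If $\E[Z]$ were infinite, the same computation would show it is finite, so no separate integrability argument is needed.
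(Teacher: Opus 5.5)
Your proof is correct and follows the paper's argument exactly: the layer-cake formula, splitting at $m$ with the trivial bound on $[0,m]$, the substitution $u=m+t$ together with the tail bound, and the change of variables $v=(t/s)^\alpha$ giving $s\,\Gamma(1+1/\alpha)$.
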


\begin{proof}[Proof of Lemma~\ref{lem:tail_to_expectation}]
Writing the expectation as an integral of tail probabilities above the level $m$,
\[
\E[Z]
=
\int_0^\infty \Pr(Z\geq u)\,du
\leq
m+\int_0^\infty \Pr(Z\geq m+t)\,dt .
\]
Using the assumed tail bound,
\[
\E[Z]\leq m+2\int_0^\infty e^{-(t/s)^\alpha}\,dt.
\]
With the change of variables $u=(t/s)^\alpha$, so
$dt=s\alpha^{-1}u^{1/\alpha-1}\,du$,
\[
\int_0^\infty e^{-(t/s)^\alpha}\,dt
=
\frac{s}{\alpha}\int_0^\infty e^{-u}u^{1/\alpha-1}\,du
=
s\,\Gamma(1+1/\alpha),
\]
where the last equality uses $\Gamma(1+x)=x\Gamma(x)$.  This gives the stated
bound.
\end{proof}

For $\alpha\in(0,1]$, use the $\psi_\alpha$ Orlicz quasi-norm
\citep{vershyninHighdimensionalProbabilityIntroduction2018}:
\[
\lVert X\rVert_{\psi_\alpha}
:=
\inf\left\{
s>0:
\E\exp\left[\left(\frac{|X|}{s}\right)^\alpha\right]\leq 2
\right\}.
\]

\begin{proposition}[Chaining bound for $\psi_\alpha$ processes]\label{prop:chaining}
Let $\{X_\gamma : \gamma \in \Gamma\}$ be a stochastic process indexed by $\Gamma \subset \bbR^{d_\Gamma}$ with $\mathrm{diam}(\Gamma) < \infty$, containing a reference point $\gamma_0$ with $X_{\gamma_0} = 0$ a.s. Suppose there exist $\alpha \in (0,1]$ (the $\psi_\alpha$ tail exponent) and $L > 0$ such that
\[
\|X_\gamma - X_{\gamma'}\|_{\psi_\alpha} \leq L\,\|\gamma - \gamma'\|_2 \quad \text{for all } \gamma, \gamma' \in \Gamma.
\]
Assume the process is separable \citep[\S2.3.3]{vaartWeakConvergenceEmpirical2023}: there is a countable $\Gamma_0\subseteq\Gamma$ such that
$\sup_{\gamma\in\Gamma}|X_\gamma|=\sup_{\gamma\in\Gamma_0}|X_\gamma|$ and
$\sup_{\gamma,\gamma'\in\Gamma}|X_\gamma-X_{\gamma'}|=\sup_{\gamma,\gamma'\in\Gamma_0}|X_\gamma-X_{\gamma'}|$
almost surely. The two suprema are therefore measurable. Then
\[
\E[\sup_\gamma |X_\gamma|]
\leq
C_\alpha\,L\,\mathrm{diam}(\Gamma)\,\max\{d_\Gamma,1\}^{1/\alpha},
\]
where $C_\alpha$ depends only on $\alpha$.
\end{proposition}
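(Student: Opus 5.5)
The plan is a generic chaining argument in the style of Dudley's entropy integral, adapted to $\psi_\alpha$ quasi-norms. It uses two inputs: a maximal inequality for finitely many $\psi_\alpha$ variables, and metric entropy of a subset of $\bbR^{d_\Gamma}$.

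First I would restrict to the countable set $\Gamma_0$ given by separability. By monotone convergence it then suffices to bound $\E\max_{\gamma\in T}|X_\gamma|$ uniformly over finite $T\subseteq\Gamma_0$; I may assume $\gamma_0\in T$.

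Write $D=\mathrm{diam}(\Gamma)$. For $j\ge0$ let $N_j$ be a $D2^{-j}$-net of $\Gamma$ with $N_0=\{\gamma_0\}$. A volumetric argument gives $|N_j|\le (3\cdot 2^{j})^{d_\Gamma}$, since $\Gamma$ sits inside a ball of radius $D$ around $\gamma_0$. Let $\pi_j(\gamma)$ denote a nearest point of $N_j$, and choose $J$ so large that $\pi_J(\gamma)=\gamma$ for every $\gamma\in T$; this is possible because $T$ is finite. Then
\[
X_\gamma=\sum_{j=1}^{J}\{X_{\pi_j(\gamma)}-X_{\pi_{j-1}(\gamma)}\},
\]
using $X_{\gamma_0}=0$. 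Each increment satisfies $\|X_{\pi_j}-X_{\pi_{j-1}}\|_2\le 3D2^{-j}$, so by hypothesis its $\psi_\alpha$ quasi-norm is at most $3LD2^{-j}$. The number of distinct links at level $j$ is at most $|N_j||N_{j-1}|\le (3\cdot2^j)^{2d_\Gamma}$.

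Next I need the maximal inequality. For $M$ variables with $\|\xi_i\|_{\psi_\alpha}\le \sigma$,
\[
\E\max_i|\xi_i|\le C_\alpha\,\sigma\,(\log(eM))^{1/\alpha}.
\]
I would prove this directly: by Markov's inequality and a union bound, $\Pr(\max_i|\xi_i|>\sigma t)\le 2M e^{-t^\alpha}$. Integrating this tail as in Lemma~\ref{lem:tail_to_expectation}, with threshold $m=\sigma(\log(2M))^{1/\alpha}$, gives the bound. Applying it level by level, the expected supremum is at most
\[
\sum_{j\ge1}C_\alpha\,LD2^{-j}\bigl(2d_\Gamma\log(3\cdot 2^j)+1\bigr)^{1/\alpha}.
\]
Since $(a+b)^{1/\alpha}\le 2^{1/\alpha}(a^{1/\alpha}+b^{1/\alpha})$, this is bounded by
\[
C_\alpha LD\max\{d_\Gamma,1\}^{1/\alpha}\sum_j 2^{-j}(1+j)^{1/\alpha}.
\]
The last series converges to a constant depending only on $\alpha$. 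Note that the sum of expectations of maxima bounds the expectation of the sum of maxima because of linearity, so the quasi-norm's failure of the triangle inequality never enters.

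This last remark is the point I expect to need the most care: for $\alpha<1$, $\|\cdot\|_{\psi_\alpha}$ is only a quasi-norm, so I must avoid taking the $\psi_\alpha$ norm of the chained sum. I avoid it by converting to expectations at each level before summing. I also need to track the constants when $d_\Gamma=0$ (the singleton case), where the bound is trivially zero. If $D=0$ the statement is likewise immediate.
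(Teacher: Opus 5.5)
Your proof is correct, and it takes a different route from the paper's. The paper does not chain by hand. It replaces $\psi_\alpha$ by an equivalent convex Orlicz function $\tilde\psi_\alpha$ (van der Vaart and Wellner, Problem~2.14.1) and checks the growth condition $\limsup_{x,y\to\infty}\tilde\psi_\alpha(x)\tilde\psi_\alpha(y)/\tilde\psi_\alpha(2^{1/\alpha}xy)<\infty$ required by their Theorem~2.2.4 and Corollary~2.2.5. It then bounds the entropy integral $\int_0^\Delta\tilde\psi_\alpha^{-1}(D(u;\Gamma,\rho))\,du$ by Euclidean packing numbers, which gives $\|\sup_{\gamma,\gamma'}|X_\gamma-X_{\gamma'}|\|_{\tilde\psi_\alpha}\lesssim_\alpha L\,\mathrm{diam}(\Gamma)\max\{d_\Gamma,1\}^{1/\alpha}$. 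Only then does it pass to the mean, via Jensen and $X_{\gamma_0}=0$. You work with first moments throughout: a union-bound maximal inequality $\E\max_{i\leq M}|\xi_i|\lesssim_\alpha\sigma(\log(eM))^{1/\alpha}$ at each chaining level, summed by linearity of expectation. You therefore never need a convex minorant, the growth condition, or a triangle inequality for the quasi-norm. The paper's route yields a stronger statement, a $\psi_\alpha$-type tail bound on the supremum rather than just its expectation, and it delegates separability to the cited theorem. Yours is self-contained, has explicit constants, and proves exactly what the statement requires, which is also all that Step~2 of the proof of Theorem~\ref{thm:general_conc} uses.

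Two details need tightening. First, $\pi_J(\gamma)=\gamma$ need not hold for arbitrary nets of $\Gamma$. Take $N_0=\{\gamma_0\}$ and, for $j\geq1$, let $N_j$ be a maximal $D2^{-j}$-separated subset of the finite set $T\ni\gamma_0$. This keeps the volumetric bound $(3\cdot2^j)^{d_\Gamma}$ and forces $N_J=T$ once $D2^{-J}$ is below the minimum separation of $T$. Second, for $\alpha<1$ you cannot literally feed $\Pr(\max_i|\xi_i|>\sigma t)\leq 2Me^{-t^\alpha}$ into Lemma~\ref{lem:tail_to_expectation} with $m=\sigma(\log 2M)^{1/\alpha}$. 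Because $x\mapsto x^\alpha$ is subadditive, the required form $\Pr(Z\geq m+t)\leq 2e^{-(t/s)^\alpha}$ does not follow with a scale $s$ free of $M$. Instead, integrate the tail directly above $t_0=\sigma(\log 2M)^{1/\alpha}$, as in Step~4 of the proof of Proposition~\ref{prop:averaging}; this gives the same $(\log(eM))^{1/\alpha}$ bound.
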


\begin{proof}[Proof of Proposition~\ref{prop:chaining}]
The $\psi_\alpha$-Lipschitz condition defines the metric $\rho(\gamma,\gamma') := L\|\gamma - \gamma'\|_2$ with diameter $\Delta := L\,\mathrm{diam}(\Gamma)$. If $\Delta=0$, then $X_\gamma=0$ for all $\gamma\in\Gamma$ and the result is immediate. Hence take $\Delta>0$.

For $\alpha=1$ the function $\psi_1(x)=e^x-1$ is convex; set $\tilde\psi_1:=\psi_1$. For $\alpha\in(0,1)$, $\psi_\alpha$ is not convex near the origin and $\|\cdot\|_{\psi_\alpha}$ is only a quasi-norm; however, there is a convex, nondecreasing $\tilde\psi_\alpha$ with $\tilde\psi_\alpha(0)=0$ and
\[
\|X\|_{\tilde\psi_\alpha}\leq\|X\|_{\psi_\alpha}\leq \kappa_\alpha\|X\|_{\tilde\psi_\alpha}
\quad\text{for every random variable }X
\]
\citep[Problem~2.14.1]{vaartWeakConvergenceEmpirical2023}, so the two norms may be used interchangeably up to $\alpha$-dependent constants. In particular, since $\tilde\psi_\alpha\leq\psi_\alpha$ pointwise, the first inequality holds with constant one, and the increments satisfy $\|X_\gamma-X_{\gamma'}\|_{\tilde\psi_\alpha}\leq\rho(\gamma,\gamma')$.

The function $\tilde\psi_\alpha$ also satisfies the growth condition required below:
\[
\limsup_{x,y\to\infty}\tilde\psi_\alpha(x)\,\tilde\psi_\alpha(y)\big/\tilde\psi_\alpha\bigl(2^{1/\alpha}xy\bigr)<\infty.
\]
Indeed, the construction of \citet[Problem~2.14.1]{vaartWeakConvergenceEmpirical2023} gives $\tilde\psi_\alpha(x)=e^{x^\alpha-c_\alpha^\alpha}-1$ for $x\geq c_\alpha:=((1-\alpha)/\alpha)^{1/\alpha}$ and $\tilde\psi_\alpha\equiv0$ on $[0,c_\alpha]$ ($c_\alpha$ is the inflection point of $e^{x^\alpha}$, so $\tilde\psi_\alpha$ is convex), so for $x,y\geq\max\{c_\alpha,1\}$,
\[
\tilde\psi_\alpha(x)\,\tilde\psi_\alpha(y)
\leq
e^{x^\alpha+y^\alpha-2c_\alpha^\alpha}
\leq
e^{x^\alpha+y^\alpha},
\]
while $(2^{1/\alpha}xy)^\alpha=2(xy)^\alpha$ exactly, so that, once $2(xy)^\alpha-c_\alpha^\alpha\geq\log2$ (using $e^z-1\geq\tfrac12 e^z$ for $z\geq\log2$),
\[
\tilde\psi_\alpha\bigl(2^{1/\alpha}xy\bigr)
=
e^{2(xy)^\alpha-c_\alpha^\alpha}-1
\geq
\tfrac12\,e^{-c_\alpha^\alpha}\,e^{2(xy)^\alpha}.
\]
Because $x^\alpha+y^\alpha\leq2(xy)^\alpha$ for $x,y\geq1$ (as $y\geq1$ implies $x^\alpha\leq(xy)^\alpha$ and symmetrically), the exponent in the ratio of the two displays is nonpositive, whence
\[
\limsup_{x,y\to\infty}
\frac{\tilde\psi_\alpha(x)\,\tilde\psi_\alpha(y)}{\tilde\psi_\alpha(2^{1/\alpha}xy)}
\leq
2\,e^{c_\alpha^\alpha}<\infty .
\]
For $\alpha=1$ the same computation applies with $c_1:=0$.

\citet[Corollary~4.2.13]{vershyninHighdimensionalProbabilityIntroduction2018} bounds the covering numbers of a Euclidean ball, giving $N(u;\,\Gamma,\,\rho) \leq (C\Delta/u)^{d_\Gamma}$ for $u \in (0, \Delta]$, hence packing numbers $D(u;\Gamma,\rho)\leq N(u/2;\Gamma,\rho)\leq(2C\Delta/u)^{d_\Gamma}$. The maximal inequality for processes with Orlicz-Lipschitz increments \citep[Theorem~2.2.4 and Corollary~2.2.5]{vaartWeakConvergenceEmpirical2023}, applied with the convex function $\tilde\psi_\alpha$ (the process is separable by hypothesis), gives
\[
\Bigl\|\sup_{\gamma,\gamma'}|X_\gamma-X_{\gamma'}|\Bigr\|_{\tilde\psi_\alpha}
\leq
K_\alpha\int_0^{\Delta}\tilde\psi_\alpha^{-1}\bigl(D(u;\Gamma,\rho)\bigr)\,du ,
\]
where $K_\alpha$ is the constant of \citet[Theorem~2.2.4]{vaartWeakConvergenceEmpirical2023}, which depends only on the Orlicz function and the Lipschitz constant of the increments --- here $\tilde\psi_\alpha$ and constant one, since $L$ is absorbed into $\rho$ --- hence only on $\alpha$. From the explicit form $\tilde\psi_\alpha(t)=e^{t^\alpha-c_\alpha^\alpha}-1$ for $t\geq c_\alpha$, the inverse is $\tilde\psi_\alpha^{-1}(x)=\{\log(1+x)+c_\alpha^\alpha\}^{1/\alpha}$ for $x>0$; since $\log(1+x)\geq\log2$ for $x\geq1$,
\[
\tilde\psi_\alpha^{-1}(x)
\leq
C_\alpha\{\log(1+x)\}^{1/\alpha},
\qquad x\geq1,
\qquad
C_\alpha:=\Bigl(1+\frac{c_\alpha^\alpha}{\log2}\Bigr)^{1/\alpha}.
\]
Substituting $u = \Delta v$ and bounding $\log(1 + (2C/v)^{d_\Gamma}) \leq \max\{d_\Gamma,1\}\log(4C/v)$ for $v \in (0,1]$ yields
\begin{align*}
\int_0^{\Delta}\tilde\psi_\alpha^{-1}\bigl(D(u;\Gamma,\rho)\bigr)\,du
&\leq
C_\alpha\,\Delta\,\max\{d_\Gamma,1\}^{1/\alpha}\int_0^1\{\log(4C/v)\}^{1/\alpha}\,dv \\
&\leq
C_\alpha'\,\Delta\,\max\{d_\Gamma,1\}^{1/\alpha},
\end{align*}
the last integral converging and depending only on $\alpha$.

It remains to convert the $\tilde\psi_\alpha$-norm bound on the supremum of the \emph{increments} into the expectation bound on $\sup_\gamma|X_\gamma|$ asserted in the proposition. Write $W:=\sup_{\gamma,\gamma'}|X_\gamma-X_{\gamma'}|$ and $s:=\|W\|_{\tilde\psi_\alpha}$; combining the two preceding displays,
\[
s\leq K_\alpha C_\alpha'\,\Delta\,\max\{d_\Gamma,1\}^{1/\alpha}.
\]
First, the Orlicz norm controls the mean. By definition of the norm, $\E[\tilde\psi_\alpha(W/s)]\leq1$; Jensen's inequality for the convex function $\tilde\psi_\alpha$ gives $\tilde\psi_\alpha(\E[W]/s)\leq\E[\tilde\psi_\alpha(W/s)]\leq1$, and applying the increasing function $\tilde\psi_\alpha^{-1}$ to both sides yields $\E[W]\leq\tilde\psi_\alpha^{-1}(1)\,s$. Second, the supremum of the process is dominated by the supremum of its increments: the proposition assumes a reference point $\gamma_0\in\Gamma$ with $X_{\gamma_0}=0$ almost surely, so
$\sup_\gamma|X_\gamma|=\sup_\gamma|X_\gamma-X_{\gamma_0}|\leq W$ almost surely. Combining the three bounds,
\[
\E\Bigl[\sup_\gamma|X_\gamma|\Bigr]
\leq
\E[W]
\leq
\tilde\psi_\alpha^{-1}(1)\,K_\alpha C_\alpha'\,\Delta\,\max\{d_\Gamma,1\}^{1/\alpha}
=
C_\alpha''\,L\,\mathrm{diam}(\Gamma)\,\max\{d_\Gamma,1\}^{1/\alpha},
\]
where $C_\alpha'':=\tilde\psi_\alpha^{-1}(1)\,K_\alpha C_\alpha'$ depends only on $\alpha$ and the final equality substitutes $\Delta=L\,\mathrm{diam}(\Gamma)$. This is the claimed bound, with the statement's constant $C_\alpha:=C_\alpha''$.
\end{proof}

\begin{lemma}[Representation of $\sure_n(f) - L_n(f)$]\label{lem:representation}
Let $f(y)=y+g(y)$, where $g$ is weakly differentiable,
$g(Y)\in L^2(P_Y)$, and
$\E[\sum_{i,j}|\Sigma_{ij}\partial_j g_i(Y)|]<\infty$, where
$\partial_j g_i$ denotes the weak derivative.  Writing $\varepsilon := Y - \theta$:
\[
\sure_n(f) - L_n(f) = \frac{1}{n}\bigl[\tr(\Sigma) - \|\varepsilon\|_2^2\bigr] + \frac{2}{n}\,\Psi(g),
\]
where $\Psi(g) := \tr\{\Sigma\,Dg(Y)\} - \langle \varepsilon,\, g(Y)\rangle$. Moreover, $\E[\Psi(g)] = 0$ by Stein's lemma.
\end{lemma}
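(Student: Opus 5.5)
The representation follows from expanding both squared norms and cancelling the common terms. Write $f(Y)=Y+g(Y)$ and $Y-\theta=\varepsilon$. The \sure{} formula reads
\[
\sure_n(f)=\tfrac1n\|g(Y)\|_2^2-\tfrac1n\tr(\Sigma)+\tfrac2n\tr\{\Sigma(I+Dg(Y))\}
=\tfrac1n\|g(Y)\|_2^2+\tfrac1n\tr(\Sigma)+\tfrac2n\tr\{\Sigma Dg(Y)\},
\]
because $Df=I+Dg$ and $\tr(\Sigma I)=\tr(\Sigma)$. The realized loss is
\[
L_n(f)=\tfrac1n\|\varepsilon+g(Y)\|_2^2=\tfrac1n\|\varepsilon\|_2^2+\tfrac2n\langle\varepsilon,g(Y)\rangle+\tfrac1n\|g(Y)\|_2^2 .
\]
Subtracting, the $\|g(Y)\|_2^2$ terms cancel. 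What remains is $\tfrac1n[\tr(\Sigma)-\|\varepsilon\|_2^2]+\tfrac2n[\tr\{\Sigma Dg(Y)\}-\langle\varepsilon,g(Y)\rangle]$, which is the claimed identity with $\Psi(g)$ as defined. This is pure algebra, valid pointwise in $Y$ wherever $Dg$ is defined, and hence almost everywhere for weakly differentiable $g$.

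For $\E[\Psi(g)]=0$, I would invoke the multivariate Stein identity $\E[\langle\varepsilon,g(Y)\rangle]=\E[\tr\{\Sigma Dg(Y)\}]$. To justify it under the stated integrability, write $\varepsilon=\Sigma^{1/2}Z$ with $Z\sim\N(0,I_n)$, using positive definiteness from Assumption~\ref{asm:sampling_array}. Set $h(z):=\Sigma^{1/2}g(\theta+\Sigma^{1/2}z)$, so that
\[
\langle\varepsilon,g(Y)\rangle=\langle Z,h(Z)\rangle
\quad\text{and}\quad
\tr\{Dh(Z)\}=\tr\{\Sigma^{1/2}Dg(Y)\Sigma^{1/2}\}=\tr\{\Sigma Dg(Y)\}.
\]
Then apply the coordinatewise standard-normal Stein lemma $\E[Z_ih_i(Z)]=\E[\partial_ih_i(Z)]$ and sum over $i$.

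The integrability requirements are as follows. For the right side, $\E|\partial_ih_i(Z)|<\infty$ follows from $\E[\sum_{i,j}|\Sigma_{ij}\partial_jg_i(Y)|]<\infty$, since $\partial_ih_i$ is a finite linear combination of the terms $\Sigma^{1/2}\cdots\Sigma^{1/2}$ entries times $\partial_jg_k$. For the left side, $\E|\langle Z,h(Z)\rangle|\le(\E\|Z\|^2)^{1/2}(\E\|h\|^2)^{1/2}<\infty$ because $g(Y)\in L^2(P_Y)$.

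The main obstacle is the one-dimensional Stein identity for weakly differentiable (rather than $C^1$) $h_i$ with only $L^1$ derivative. I would handle it as in \citet{steinEstimationMeanMultivariate1981}. Weakly differentiable functions are absolutely continuous along almost every coordinate line. On each such line, Fubini applied to $\phi(z_i)=\int_{z_i}^\infty t\phi(t)\,dt$ moves the derivative onto the Gaussian density. The boundary terms vanish by integrability, so the identity holds conditional on the other coordinates. Integrating over those coordinates gives the full expectation statement. Summing over $i$ yields $\E[\Psi(g)]=0$.
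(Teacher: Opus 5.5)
Your expansion of $n\,\sure_n(f)$ and $n\,L_n(f)$ and the cancellation of $\|g(Y)\|_2^2$ match the paper's argument exactly, as does the Cauchy--Schwarz bound for $\langle\varepsilon,g(Y)\rangle$. The difference is in the Stein step. The paper invokes Stein's lemma directly in correlated form, $\E[\varepsilon_jg_j(Y)]=\sum_l\Sigma_{jl}\E[\partial_lg_j(Y)]$, and sums over $j$. You instead whiten and apply the standard-normal identity coordinatewise.

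That detour contains one incorrect claim: when $\Sigma$ is not diagonal, $\E|\partial_ih_i(Z)|<\infty$ does not follow from the stated hypothesis. Since $\partial_ih_i=\sum_{j,k}(\Sigma^{1/2})_{ik}(\Sigma^{1/2})_{ji}\,\partial_jg_k(Y)$, you need $\partial_jg_k$ to be integrable whenever some product $(\Sigma^{1/2})_{ji}(\Sigma^{1/2})_{ik}$ is nonzero. The hypothesis only controls pairs with $\Sigma_{jk}=\sum_i(\Sigma^{1/2})_{ji}(\Sigma^{1/2})_{ik}\neq0$, and this sum can vanish by cancellation.

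For a concrete counterexample, take $n=3$ and $\Sigma^{1/2}$ symmetric with diagonal entries $2$, $(1,2)$ entry $-1/16$, and $(1,3)$ and $(2,3)$ entries $1/2$. This matrix is diagonally dominant, hence positive definite, and $\Sigma_{12}=0$. Let $g=(u(y_2),0,0)$ with $u$ bounded and smooth but $\E|u'(Y_2)|=\infty$. The hypothesis then holds trivially. Yet $\partial_1h_1,\partial_2h_2,\partial_3h_3$ equal $-\tfrac18u'(Y_2)$, $-\tfrac18u'(Y_2)$, and $\tfrac14u'(Y_2)$. So none of your coordinatewise identities $\E[Z_ih_i(Z)]=\E[\partial_ih_i(Z)]$ is available; only their sum, which is identically zero, is integrable.

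The conclusion is still true, and the repair is easy. One option is to add the assumption $\E\|Dg(Y)\|_F<\infty$, which holds in all the paper's later uses where $Dg\in L^p(P_Y)$. Under it your whitening argument goes through verbatim, and for diagonal $\Sigma$, as in the application, your claim is already correct.

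The other option is to skip whitening and integrate by parts against the $\N(0,\Sigma)$ density $\phi_\Sigma$ itself, using $\varepsilon\,\phi_\Sigma(\varepsilon)=-\Sigma\nabla\phi_\Sigma(\varepsilon)$. This gives $\E[\varepsilon_jg_j(Y)]=-\sum_l\Sigma_{jl}\int g_j\,\partial_l\phi_\Sigma$, where each term is integrable because $g(Y)\in L^2(P_Y)$. Then apply your absolute-continuity-on-lines argument along the $l$th coordinate direction, but only for those $l$ with $\Sigma_{jl}\neq0$. That step uses precisely the assumed integrability of $\Sigma_{jl}\partial_lg_j(Y)$. It yields the correlated identity the paper cites, and summing over $j$ with the symmetry of $\Sigma$ gives $\E[\Psi(g)]=0$.
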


\begin{proof}[Proof of Lemma~\ref{lem:representation}]
Write $f(y) = y + g(y)$, so $Y - f(Y) = -g(Y)$ and $\theta - f(Y) = -\varepsilon - g(Y)$. Expanding:
\begin{align*}
n\,\sure_n(f) &= \|g(Y)\|_2^2 - \tr(\Sigma) + 2\tr(\Sigma(I + Dg(Y))) \\
&= \|g(Y)\|_2^2 + \tr(\Sigma) + 2\tr\{\Sigma Dg(Y)\}, \\
n\,L_n(f) &= \|\varepsilon + g(Y)\|_2^2 = \|\varepsilon\|_2^2 + 2\langle \varepsilon, g(Y)\rangle + \|g(Y)\|_2^2.
\end{align*}
The $\|g(Y)\|_2^2$ terms cancel in the difference, giving the representation.
The derivative term in $\Psi(g)$ is integrable by assumption, and
$\langle\varepsilon,g(Y)\rangle$ is integrable by Cauchy--Schwarz because
$\varepsilon$ is Gaussian and $g(Y)\in L^2(P_Y)$.  Stein's lemma for weak
derivatives gives
\[
\E[\varepsilon_j g_j(Y)]
=
\sum_l \Sigma_{jl}\,\E[\partial_l g_j(Y)].
\]
Summing over $j$ and using symmetry of $\Sigma$ and the row-Jacobian convention gives
$\E[\langle\varepsilon,g(Y)\rangle]=\E[\tr\{\Sigma Dg(Y)\}]$, and hence
$\E[\Psi(g)]=0$.
\end{proof}

For the remaining lemmas, let $Z\sim \N(0,I_n)$, and let $P_Z$ denote its
law.  For a fixed $p\geq2$, write
$\|\cdot\|_p=\|\cdot\|_{L^p(P_Z)}$.  All derivative arrays are vectorized before
taking Euclidean norms.  In particular, matrix norms below are Frobenius norms unless an operator norm is explicitly indicated.

Write $D$ for differentiation with respect to $z$, and write $D^m h$ for
the full array of $m$th-order partial derivatives of $h$.  For $m\geq1$,
let
\[
\ell=(i_1,\ldots,i_m)\in\{1,\ldots,n\}^m
\]
denote a derivative label.  For $m=0$, use one empty derivative label
$\ell$, so $(D^0h)_{\ell,j}=h_j$.  For $m\geq1$,
\[
(D^m h)_{\ell,j}(z)
=
\frac{\partial^m h_j(z)}
{\partial z_{i_1}\cdots \partial z_{i_m}}.
\]
When differentiating $(D^m h)_{\ell,j}$ once more with respect to $z_i$,
the corresponding derivative label for $D^{m+1}h$ is
$(i,i_1,\ldots,i_m)$.
Define $\delta(D^m h)$ by applying the divergence over the original output
coordinate $j$, while carrying the derivative label $\ell$ as an array
entry:
\[
\{\delta(D^m h)(z)\}_\ell
=
\sum_{j=1}^n z_j(D^m h)_{\ell,j}(z)
-
\sum_{j=1}^n
\frac{\partial}{\partial z_j}(D^m h)_{\ell,j}(z).
\]
Thus $\delta(D^0h)=\delta(h)$.  For array-valued derivatives, the Sobolev
norm is
\[
\|D^m h\|_{W^{r,p}(P_Z)}
=
\sum_{s=0}^{r}
\|D^{m+s}h\|_{L^p(P_Z)},
\]
where each $L^p(P_Z)$ norm uses the Euclidean norm after vectorizing the
corresponding derivative array.

Throughout the following lemmas we also use the Wiener-chaos decomposition
$L^2(P_Z)=\bigoplus_{r\geq0}\mathcal H_r$, writing $J_r$ for the orthogonal
projection onto chaos order $r$ and $T_t=\sum_{r\geq0}e^{-rt}J_r$ for the
Ornstein--Uhlenbeck semigroup \citep[Ch.~1]{nualartMalliavinCalculusRelated2006};
these conventions apply entrywise to array-valued functions.

\begin{lemma}[Nualart constant bound for $N=2$]
\label{lem:nualart_K_N2_bound}
Fix $p>1$.  Let $K(p,2)$ denote the constant in
\citet[Lemma~1.4.1]{nualartMalliavinCalculusRelated2006} for exponent $p$
and $N=2$.  Then
\[
K(p,2)\leq c p^2
\qquad\text{if }p\geq2,
\]
and
\[
K(p,2)
\leq
c\left(\frac{p}{p-1}\right)^2
\qquad\text{if }1<p<2,
\]
where $c$ is a universal constant.
\end{lemma}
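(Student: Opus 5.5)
The plan is to re-run Nualart's proof of Lemma~1.4.1 for $N=2$ and keep track of the constant explicitly, rather than quote it. Recall what that lemma asserts. For $p>1$ it gives
\[
\|T_t(I-J_0-J_1)F\|_p\leq K(p,2)\,e^{-2t}\|F\|_p
\qquad\text{for all } t>0 .
\]
Here $T_t=\sum_{r\ge0}e^{-rt}J_r$ and $\|\cdot\|_p$ is the $L^p(P_Z)$ norm. Write $\Pi:=I-J_0-J_1$.

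The only tool is Nelson's hypercontractivity, $\|T_sF\|_{q(s)}\leq\|F\|_2$ with $q(s)=1+e^{2s}$. Two consequences are used.
\begin{itemize}
\item For $p\geq2$, set $t_0$ by $e^{2t_0}=p-1$. Then $T_{t_0}$ maps $L^2$ contractively into $L^p$.
\item Applying hypercontractivity to a single chaos, on which $T_s$ acts as $e^{-rs}$, gives $\|J_rF\|_p\leq(p-1)^{r/2}\|J_rF\|_2\leq(p-1)^{r/2}\|F\|_p$ for $p\geq2$. The last step uses orthogonality of $J_r$ in $L^2$ and $\|F\|_2\leq\|F\|_p$ on a probability space.
\end{itemize}

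\textbf{Case $p\geq2$.} Split on the size of $t$.

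\emph{Large times, $t\geq t_0$.} Factor $T_t=T_{t_0}T_{t-t_0}$ and use that $\Pi F$ lives in chaoses of order at least two:
\[
\|T_t\Pi F\|_p\leq\|T_{t-t_0}\Pi F\|_2\leq e^{-2(t-t_0)}\|F\|_2\leq (p-1)\,e^{-2t}\|F\|_p .
\]

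\emph{Small times, $t<t_0$.} $T_t$ is a contraction on $L^p$, so
\[
\|T_t\Pi F\|_p\leq\|F\|_p+\|J_0F\|_p+\|J_1F\|_p\leq\bigl(2+\sqrt{p-1}\bigr)\|F\|_p .
\]
Since $e^{-2t}>e^{-2t_0}=(p-1)^{-1}$ in this range, this is at most $(p-1)\bigl(2+\sqrt{p-1}\bigr)e^{-2t}\|F\|_p$.

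Combining the two ranges gives $K(p,2)\leq(p-1)(2+\sqrt{p-1})\lesssim p^{3/2}\leq p^2$.

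\textbf{Case $1<p<2$.} Argue by duality. $T_t$ and each $J_r$ are self-adjoint on $L^2(P_Z)$, hence so is $T_t\Pi$. Pair against $G\in L^{p'}$, where $p'=p/(p-1)>2$, and extend by density from the $L^2\cap L^p$ functions. This gives
\[
\|T_t\Pi\|_{p\to p}=\|T_t\Pi\|_{p'\to p'} ,
\]
so $K(p,2)\leq K(p',2)\leq c\,(p')^2=c\bigl(p/(p-1)\bigr)^2$.

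\textbf{Main obstacle.} The delicate step is the bound on the lower chaos projections $J_0,J_1$ in $L^p$. In Nualart the constant is left implicit, and one must check that the hypercontractive estimate on a fixed chaos gives polynomial growth $(p-1)^{r/2}$ and nothing worse. The duality step also needs care for $p<2$, specifically that the identity is extended correctly from $L^2\cap L^p$. The rest is bookkeeping.
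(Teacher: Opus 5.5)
Your proposal is correct and follows the same route as the paper. For $p\geq2$ it uses the hypercontractivity split at $t_0$ with $e^{2t_0}=p-1$, which is exactly Nualart's proof of Lemma~1.4.1, and for $1<p<2$ it uses the duality reduction $K(p,2)\leq K(p/(p-1),2)$ via self-adjointness of $T_t(I-J_0-J_1)$. The only difference is that you re-derive the constants instead of reading off Nualart's $e^{2t_0}$ and $2e^{4t_0}+e^{2t_0}$. Your sharper chaos bound $\|J_1F\|_p\leq\sqrt{p-1}\,\|F\|_p$ gives $K(p,2)\lesssim p^{3/2}$ rather than the paper's $2(p-1)^2+(p-1)$, which is more than enough for the stated $O(p^2)$ claim.
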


\begin{proof}
First suppose $p\geq2$.  In the proof of Nualart's Lemma~1.4.1, $t_0$ is
chosen so that
\[
p=e^{2t_0}+1,
\qquad\text{hence}\qquad
e^{2t_0}=p-1.
\]
For $t\geq t_0$, the proof gives
\[
K=e^{Nt_0}.
\]
For $t<t_0$, the proof gives
\[
K=Ne^{2Nt_0}+e^{Nt_0}.
\]
With $N=2$, this means
\[
\begin{aligned}
K(p,2)
&\leq
\max\{e^{2t_0},\,2e^{4t_0}+e^{2t_0}\}  \\
&\leq
2e^{4t_0}+e^{2t_0}  \\
&=
2(p-1)^2+(p-1)  \\
&\leq
c p^2.
\end{aligned}
\]
The last inequality absorbs $2(p-1)^2+(p-1)$ into $c p^2$ for a universal
constant $c$, using $p\geq2$.

Now suppose $1<p<2$, and set $q=p/(p-1)>2$.  \citet{nualartMalliavinCalculusRelated2006} obtains this case by
duality.  The second equality below is the duality step used in
\citet[Exercise~1.4.5]{nualartMalliavinCalculusRelated2006}.  Explicitly,
\[
\begin{aligned}
\|T_t(I-J_0-J_1)G\|_p
&=
\sup_{\|H\|_q\leq1}
\left|
\E\langle T_t(I-J_0-J_1)G,H\rangle
\right|  \\
&=
\sup_{\|H\|_q\leq1}
\left|
\E\langle G,T_t(I-J_0-J_1)H\rangle
\right|  \\
&\leq
\sup_{\|H\|_q\leq1}
\|G\|_p
\|T_t(I-J_0-J_1)H\|_q  \\
&\leq
K(q,2)e^{-2t}\|G\|_p .
\end{aligned}
\]
By the $q>2$ case already proved,
\[
K(q,2)\leq c q^2.
\]
Therefore
\[
K(p,2)
\leq
c q^2
=
c\left(\frac{p}{p-1}\right)^2.
\]
Although the proof is stated for polynomial random variables, Exercise~1.4.6 of
\citet{nualartMalliavinCalculusRelated2006} extends the argument to
Hilbert-valued random variables, which covers the finite-dimensional Euclidean
arrays used here.
\end{proof}

\begin{lemma}[Boundedness of the multiplier operator $\sqrt R$]
\label{lem:sqrt_R_bounded}
Fix $p\geq2$, and set $q=p/(p-1)$.  Let $V$ be a finite-dimensional
Euclidean array space, and let $\widetilde G\in L^q(P_Z;V)$.  On polynomial
inputs, let $R$ denote the multiplier operator with multiplier $r/(r-1)$ on
chaos order $r\geq2$, and with multiplier zero on chaos orders $0$ and $1$.
Then $\sqrt R$ extends to a bounded operator on $L^q(P_Z;V)$, and this
extension satisfies
\[
\|\sqrt R\,\widetilde G\|_q
=
\|\sqrt R(I-J_0-J_1)\widetilde G\|_q
\leq
c p^2\|\widetilde G\|_q.
\]
\end{lemma}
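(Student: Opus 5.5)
The plan is to avoid any general Meyer-type multiplier theorem and build $\sqrt R$ directly out of the Ornstein--Uhlenbeck semigroup, using Lemma~\ref{lem:nualart_K_N2_bound} as the only analytic input. Throughout, $1<q\le2$ and $q/(q-1)=p\geq2$, so the second case of Lemma~\ref{lem:nualart_K_N2_bound} gives
\[
K(q,2)\leq c\,p^2 .
\]
Hence $\|T_t(I-J_0-J_1)G\|_q\le c\,p^2 e^{-2t}\|G\|_q$ for all $t\ge0$ and every polynomial $V$-valued $G$. This holds entrywise and, by Nualart's Exercise~1.4.6, in the vector-valued form. I will prove the bound for polynomial $\widetilde G$ and then extend by density of polynomials in $L^q(P_Z;V)$.

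\textbf{Step 1 (projection onto higher chaos).} Taking $t=0$ gives $\|(I-J_0-J_1)\widetilde G\|_q\le c p^2\|\widetilde G\|_q$. The stated equality $\sqrt R\,\widetilde G=\sqrt R(I-J_0-J_1)\widetilde G$ is immediate, because the multiplier vanishes on chaos orders $0$ and $1$.

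\textbf{Step 2 (powers of the inverse shifted number operator).} Define $S:=\sum_{r\ge2}(r-1)^{-1}J_r$. For each integer $m\ge1$, use the Laplace representation
\[
(r-1)^{-m}=\int_0^\infty \frac{t^{m-1}}{(m-1)!}e^{-(r-1)t}\,dt ,
\]
which gives
\[
S^m=\int_0^\infty \frac{t^{m-1}}{(m-1)!}\,e^{t}\,T_t(I-J_0-J_1)\,dt
\]
on polynomials. Minkowski's integral inequality and the semigroup bound then yield
\[
\|S^m G\|_q\le c p^2\|G\|_q\int_0^\infty \frac{t^{m-1}}{(m-1)!}e^{-t}\,dt=c p^2\|G\|_q .
\]
The key feature is that this bound is uniform in $m$.

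\textbf{Step 3 (binomial series).} On chaos order $r\ge2$ the multiplier is $\sqrt{1+x}$ with $x=(r-1)^{-1}\in(0,1]$. The binomial series $\sqrt{1+x}=\sum_{m\ge0}a_m x^m$ has $|a_m|=O(m^{-3/2})$, so $A:=\sum_m|a_m|<\infty$ and the series converges absolutely on $[0,1]$. Therefore, on polynomials,
\[
\sqrt R=(I-J_0-J_1)+\sum_{m\ge1}a_m S^m ,
\]
where the identity holds chaos by chaos and each polynomial has only finitely many chaoses. Combining Steps 1 and 2,
\[
\|\sqrt R\,\widetilde G\|_q\le (1+A)\,c p^2\|\widetilde G\|_q .
\]
The operator series converges absolutely in operator norm on $L^q$, so this formula also defines the bounded extension.

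The main obstacle is Step 3. A naive power series in the operator $S$ would need $\|S\|<1$, but we only know $\|S\|\lesssim p^2$. The resolution is that, via the Laplace representation, every power $S^m$ is bounded by the same constant $K(q,2)$ rather than by $K(q,2)^m$. That uniformity, together with summability of the binomial coefficients at the endpoint $x=1$, is what closes the argument, and I would check it carefully, including the interchange of the $dt$ integral with $T_t$ on polynomials.
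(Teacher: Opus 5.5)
Your proof is correct and is essentially the paper's argument: both reduce everything to the $N=2$ semigroup bound $\|T_t(I-J_0-J_1)G\|_q\le cp^2e^{-2t}\|G\|_q$ from Lemma~\ref{lem:nualart_K_N2_bound}, represent negative powers of the chaos index as Laplace integrals of $T_t$, and sum a power-series expansion of $\sqrt{r/(r-1)}$ term by term. The only differences are cosmetic. The paper expands $(1-1/r)^{-1/2}$ in $1/r\le 1/2$, which gives nonnegative coefficients, operators of norm at most $cp^2 2^{-k}$, and total $\varphi(1/2)=\sqrt2$. You instead expand $\sqrt{1+1/(r-1)}$ in $1/(r-1)\le 1$, which gives operators of uniform norm at most $cp^2$ via the shifted kernel $e^{t}T_t$, with summability coming from $\sum_m|a_m|=2$ at the endpoint. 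Your extension to $L^q(P_Z;V)$ by density of the absolutely convergent operator series is also self-contained rather than citing Nualart's Theorem~1.4.2.
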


\begin{proof}
We use the framework of \citet[Theorem~1.4.2]{nualartMalliavinCalculusRelated2006}
to bound $\sqrt R$.  On chaos order $r\geq2$, the operator $\sqrt R$
corresponds to the multiplier sequence
\[
\phi(r) = \sqrt{\frac{r}{r-1}} = \left(1-\frac{1}{r}\right)^{-1/2}.
\]
To verify the multiplier conditions in \citet[Theorem~1.4.2]{nualartMalliavinCalculusRelated2006},
we expand the scalar function
$\varphi(z) = (1-z)^{-1/2}$ in a Taylor series around $z=0$:
\[
(1-z)^{-1/2} = 1 + \frac{1}{2}z + \frac{3}{8}z^2 + \cdots = \sum_{k=0}^\infty a_k z^k.
\]
Substituting $z=1/r$,
\[
\phi(r) = \sum_{k=0}^\infty a_k r^{-k}.
\]

We first prove the bound for polynomial array-valued $\widetilde G$.  Since
$R$ has multiplier zero on chaos orders $0$ and $1$,
\[
\sqrt R\,\widetilde G
=
\sqrt R(I-J_0-J_1)\widetilde G.
\]
Define
\[
S_0\widetilde G
:=
(I-J_0-J_1)\widetilde G.
\]
Since $p\geq2$, $q=p/(p-1)\leq2$.  Using \citet[Lemma~1.4.1]{nualartMalliavinCalculusRelated2006}
with $N=2$ and Lemma~\ref{lem:nualart_K_N2_bound}, with the endpoint
$q=2$ covered by its $p\geq2$ case,
\begin{align*}
    \|S_0\widetilde G\|_q
    &= \|(I - J_0 - J_1) \widetilde G\|_q\\
    &\leq K(q, 2)\|\widetilde G\|_q\\
    &\leq c\left(\frac{q}{q-1}\right)^2 \|\widetilde G\|_q\\
    &= c p^2 \|\widetilde G\|_q,
\end{align*}
where $c$ is a universal constant.  This gives the bound for $k=0$.
For $k\geq1$, define
\[
S_k\widetilde G
=
\frac{1}{(k-1)!}\int_0^\infty
t^{k-1}T_t(I-J_0-J_1)\widetilde G\,dt .
\]
The proof of \citet[Theorem~1.4.2]{nualartMalliavinCalculusRelated2006} gives
this formula for the operator with multiplier $r^{-k}$ on chaos order
$r\geq2$.  Thus, for polynomial $\widetilde G$,
\[
S_k\widetilde G
=
\sum_{r=2}^\infty r^{-k}J_r\widetilde G,
\]
where the sum is finite.
Using \citet[Lemma~1.4.1]{nualartMalliavinCalculusRelated2006} with $N=2$ and
Lemma~\ref{lem:nualart_K_N2_bound}, with the endpoint $q=2$ covered by
its $p\geq2$ case,
\[
\begin{aligned}
\|T_t(I-J_0-J_1)\widetilde G\|_q
&\leq
K(q,2)e^{-2t}\|\widetilde G\|_q \\
&\leq
c\left(\frac{q}{q-1}\right)^2e^{-2t}\|\widetilde G\|_q \\
&=
c p^2e^{-2t}\|\widetilde G\|_q.
\end{aligned}
\]
Using the integral expression for the operator $S_k$ and
Minkowski's inequality for integrals,
\[
\begin{aligned}
\|S_k\widetilde G\|_q
&=
\left\|
\frac{1}{(k-1)!}
\int_0^\infty t^{k-1}T_t(I-J_0-J_1)\widetilde G\,dt
\right\|_q  \\
&\leq
\frac{1}{(k-1)!}
\int_0^\infty
t^{k-1}\|T_t(I-J_0-J_1)\widetilde G\|_q\,dt  \\
&\leq
\frac{c p^2}{(k-1)!}
\int_0^\infty t^{k-1}e^{-2t}\,dt\,\|\widetilde G\|_q .
\end{aligned}
\]
The integral is a Gamma function integral.  Using the change of variables
$u=2t$, so $t=u/2$ and $dt=du/2$, gives
\[
\int_0^\infty t^{k-1}e^{-2t}dt = \frac{1}{2^k}\int_0^\infty u^{k-1}e^{-u}du = \frac{1}{2^k}\Gamma(k) = \frac{(k-1)!}{2^k},
\]
so
\[
\|S_k\widetilde G\|_q
\leq
\frac{c p^2}{(k-1)!}\frac{(k-1)!}{2^k}\|\widetilde G\|_q
=
c p^2 2^{-k}\|\widetilde G\|_q.
\]
Together with the $k=0$ calculation above, this gives
\[
\|S_k\widetilde G\|_q\leq c p^2 2^{-k}\|\widetilde G\|_q
\qquad\text{for all }k\geq0.
\]
For polynomial $\widetilde G$, write
\[
(I-J_0-J_1)\widetilde G
=
\sum_{r=2}^M J_r\widetilde G
\]
for some finite $M$.  Then
\[
\sqrt R\,\widetilde G
=
\sum_{r=2}^M
\sqrt{\frac r{r-1}}\,J_r\widetilde G
=
\sum_{r=2}^M
\left(\sum_{k=0}^\infty a_kr^{-k}\right)J_r\widetilde G
=
\sum_{k=0}^\infty a_kS_k\widetilde G,
\]
because, for each $r\geq2$,
\[
\sum_{k=0}^\infty a_kr^{-k}
=
\left(1-\frac1r\right)^{-1/2}
=
\sqrt{\frac r{r-1}},
\]
and both $\sqrt R$ and $S_k$ act as zero on chaos orders $0$ and $1$.
Since $a_k\geq0$, the triangle inequality and the preceding bound give
\[
\|\sqrt{R}\,\widetilde G\|_q
\leq
c p^2
\left(\sum_{k=0}^\infty \frac{a_k}{2^k}\right)\|\widetilde G\|_q.
\]
The sum in the parentheses is exactly the value of the original
function $\varphi(z)$ evaluated at $z=1/2$, so
\[
\sum_{k=0}^\infty \frac{a_k}{2^k}
=
\varphi(1/2)
=
\phi(2)
=
\sqrt{\frac{2}{2-1}}
=
\sqrt{2}.
\]
Thus, for polynomial array-valued $\widetilde G$,
\[
\|\sqrt{R}\,\widetilde G\|_q
\leq
c p^2\sqrt2\,\|\widetilde G\|_q .
\]
Theorem~1.4.2 of \citet{nualartMalliavinCalculusRelated2006} gives the
corresponding $L^q$ multiplier bound, and
\citet[Exercise~1.4.6]{nualartMalliavinCalculusRelated2006} extends the
multiplier theorem to Hilbert-valued random variables.  Since $V$ is
finite-dimensional Euclidean, this applies to $V$-valued $\widetilde G$ and
gives
\[
\|\sqrt{R}\,\widetilde G\|_q
\leq
c p^2\sqrt2\,\|\widetilde G\|_q .
\]
Since this extension of $\sqrt R$ has multiplier zero on chaos orders $0$
and $1$, and absorbing $\sqrt2$ into the numerical constant $c$,
\[
\|\sqrt R\,\widetilde G\|_q
=
\|\sqrt R(I-J_0-J_1)\widetilde G\|_q
\leq
c p^2\|\widetilde G\|_q .
\]
\end{proof}

\begin{lemma}[First-order Gaussian divergence-continuity bound]
\label{lem:first_order_divergence_continuity}
Fix $k\geq0$ and $p\geq2$.  Suppose $h:\bbR^n\to\bbR^n$ belongs to the
Gaussian Sobolev space with $k+1$ derivatives in $L^r(P_Z)$ for every
finite $r\geq2$, written
\[
h\in W^{k+1,r}(P_Z;\bbR^n)
\quad\text{for all }r\in[2,\infty),
\]
and has continuous partial derivatives through order $k+1$.
Then, for every integer $0\leq m\leq k$,
\[
\|\delta(D^m h)\|_{L^p(P_Z)}
\leq
c_0 p^4
\left\{
\|D^m h\|_{L^p(P_Z)}
+
\|D^{m+1}h\|_{L^p(P_Z)}
\right\}.
\]
\end{lemma}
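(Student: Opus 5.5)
The plan is to argue by duality and reduce the divergence bound to Meyer-type inequalities whose $p$-dependence is tracked explicitly. Write $F:=D^m h$. This is an $n$-valued function of $z$ that carries the derivative label $\ell$ as a finite-dimensional Euclidean array index, so the statement is a vector-valued divergence estimate applied entrywise in $\ell$. By the smoothness and all-$r$ Sobolev hypotheses we may first prove the bound for polynomial $F$ and pass to the limit by density, since both sides are continuous in $W^{1,p}(P_Z)$. Set $q=p/(p-1)\le 2$. Then
\[
\|\delta F\|_p=\sup_{\|G\|_q\le1}\E\langle \delta F,G\rangle=\sup_{\|G\|_q\le1}\E\langle F,DG\rangle ,
\]
and the right-hand side is unchanged if $G$ is replaced by $(I-J_0)G$. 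The $J_1$ part of $G$ has constant derivative. It contributes at most $\|F\|_p\,\|J_1G\|_q$, which is $\lesssim p^2\|F\|_p$ by Lemma~\ref{lem:nualart_K_N2_bound} with $N=1,2$.

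For the chaos-$\ge2$ part of $G$, I will transfer one half-derivative from $G$ to $F$ using the chaos bookkeeping. If $G_{r+1}\in\mathcal H_{r+1}$, then $DG_{r+1}$ lies in chaos $r$ and has $L^2$ size $\sqrt{r+1}\,\|G_{r+1}\|$. I will write
\[
DG=(I-L)^{1/2}\,\bigl[(I-L)^{-1/2}D\,\sqrt{R}\,\cdot\,\sqrt{R}^{-1}\bigr]G ,
\]
where $\sqrt{R}$, the multiplier $\sqrt{r/(r-1)}$ from Lemma~\ref{lem:sqrt_R_bounded}, absorbs the index shift between $\sqrt{r+1}$ and $\sqrt r$ that appears when $D$ lowers the chaos order. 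Self-adjointness of $(I-L)^{1/2}$ then gives
\[
\E\langle F,DG\rangle=\E\bigl\langle (I-L)^{1/2}F,\;(I-L)^{-1/2}DG\bigr\rangle .
\]
Applying H\"older's inequality leaves two factors.

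The first factor is $\|(I-L)^{1/2}F\|_p$. The reverse Meyer inequality bounds it by $\lesssim p^2\{\|F\|_p+\|DF\|_p\}$; in Nualart's development it comes from the same multiplier machinery (Theorem~1.4.2 with the Riesz transform), and I will track its constant through Lemma~\ref{lem:nualart_K_N2_bound}. The second factor is $\|(I-L)^{-1/2}DG\|_q$. Using the commutation $D(I-L)^{-1/2}=(2I-L)^{-1/2}D$, together with the direct Meyer inequality $\|D(-L)^{-1/2}G\|_q\lesssim\|G\|_q$ and the bounded operator $\sqrt{R}$ on $L^q$ (constant $\lesssim p^2$ by Lemma~\ref{lem:sqrt_R_bounded}), this factor should be $\lesssim p^2\|G\|_q$. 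Multiplying the two factors gives the claimed $c_0p^4\{\|D^mh\|_p+\|D^{m+1}h\|_p\}$ after taking the supremum over $G$.

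The main obstacle is keeping the constants polynomial in $p$, rather than merely finite, in both Meyer inequalities. Nualart's proofs go through Riesz transforms and Pisier's method, and they are not written with explicit $p$-dependence. My first attempt is to route every multiplier through the representation $\sum_k a_k S_k$ used in Lemma~\ref{lem:sqrt_R_bounded}, so that each operator costs only $K(\cdot,2)\lesssim p^2$. For the Riesz-transform step I would quote a dimension-free bound with a constant of order $p$ (or at worst $p^2$). I would also verify carefully that chaos orders $0$ and $1$, where $\sqrt R$ vanishes, are handled separately in both factors.
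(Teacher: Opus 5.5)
Your route differs from the paper's, and its core mechanism is sound. The paper never applies a fractional power of $-L$ to $u=D^mh$. It splits off $\E u$, handled through $\delta(\E u)$ (a linear Gaussian, at cost $\sqrt p$). It then uses $\E\langle\tilde u,D\tilde G\rangle=\E\langle D\tilde u,DC^{-2}D\tilde G\rangle$ to move the \emph{whole} derivative onto $u$. Finally it bounds the polynomial factor $DC^{-2}D\tilde G$ in $L^q$ by two applications of the direct Meyer (Riesz-transform) inequality, each costing $K_q\le 2(p-1)$ by Arcozzi's bound, plus the multiplier $\sqrt R$ at cost $cp^2$. That is where $K_q^2p^2\sim p^4$ comes from, and no reverse Meyer inequality is used.

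You instead split the derivative symmetrically, $\E\langle F,DG\rangle=\E\langle (I-L)^{1/2}F,(I-L)^{-1/2}DG\rangle$. This needs the reverse inequality on $F$ in addition to the direct one on $G$. Carried out cleanly (see below), it gives the sharper factor $p^3$. The price is that you apply an unbounded nonlocal operator to $F=D^mh$ itself, so the density step needs an identification argument. You must show that the $L^p$-limit of $\delta F_k$, for polynomials $F_k\to F$, equals the pointwise $\delta(D^mh)$, e.g.\ by testing against polynomial $G$ and using the moment hypotheses to place $\delta(D^mh)$ in $L^2$. Saying ``both sides are continuous in $W^{1,p}$'' takes for granted the continuity being proved. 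The paper avoids this by applying every operator to the polynomial test function and only integrating by parts on the $u$ side.

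As written, however, your exponent count does not close. The $\sqrt R$ insertion rests on a misreading of the chaos bookkeeping. Since $-L$ on chaos $r$ coincides with $I-L$ on chaos $r-1$, one has exactly $(I-L)^{-1/2}D=DC^{-1}$ on mean-zero functions, so no index shift remains to be absorbed. The paper needs $\sqrt R$ only because it applies $C^{-1}$ rather than $(I-L)^{-1/2}$ after $D$, and $C^{-1}D=DC^{-1}\sqrt R$ on chaos $\ge2$. Your claim that the second factor is $\lesssim p^2$ treats the direct Meyer constant as universal. At exponent $q=p/(p-1)$ it is of order $p$, as you concede at the end. So $\sqrt R$ (or the projection $I-J_0-J_1$ that your $J_1$ split imposes, which already costs $K(q,2)\sim p^2$), times $K_q\sim p$, makes that factor of order $p^3$ at best and the product of order $p^5$, not $p^4$.

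The repair is to drop both the $J_1$ split and $\sqrt R$, and bound $\|(I-L)^{-1/2}DG\|_q=\|DC^{-1}(G-\E G)\|_q\le 4(p-1)\|G\|_q$. For the first factor, the reverse inequality does not come out of Lemma~\ref{lem:nualart_K_N2_bound}. With $\tilde F=F-\E F$ and $\tilde G=G-\E G$, the identity $\E\langle C\tilde F,\tilde G\rangle=\E\langle D\tilde F,DC^{-1}\tilde G\rangle$ and the same Riesz bound give $\|C\tilde F\|_p\le 2K_q\|DF\|_p$. The multiplier $\sqrt{(r+1)/r}$ on chaos $r\ge1$, bounded by $O(p)$ via Nualart's Lemma~1.4.1 with $N=1$, then converts $C$ into $(I-L)^{1/2}$. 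The first factor is then $\lesssim p^2\{\|F\|_p+\|DF\|_p\}$, the second is $\lesssim p$, and the product is $\lesssim p^3\le p^4$. You do need the order-$p$ Riesz constant here (Arcozzi's $2(q^*-1)$, as in the paper). Your ``at worst $p^2$'' fallback would give $p^5$ even in the clean version.
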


\begin{proof}
Let $q=p/(p-1)$.  We prove the base bound in two steps.  First,
we follow the divergence-continuity proof in
\citet[Proposition~1.5.4]{nualartMalliavinCalculusRelated2006}, keeping explicit
the constant $K_q$ in Meyer's inequality
\[
\|DF\|_q\leq K_q\|CF\|_q,
\]
for mean-zero $F$, where $D$ denotes differentiation with respect to $z$,
$L$ denotes the Ornstein--Uhlenbeck generator ($Lf=\Delta f-z\cdot\nabla f$
on smooth functions, so $Lf=-\sum_{r\geq0}r\,J_rf$ on the chaos
decomposition), and $C=(-L)^{1/2}$ is the corresponding square-root
operator.  Second,
we use the dimension-free operator norm bound in
\citet{arcozziRieszTransformsCompact1998} to show that $K_q=O(p)$.

Following the proof of
\citet[Proposition~1.5.4]{nualartMalliavinCalculusRelated2006}, first take
$G$ to be a polynomial array-valued test function with
$\|G\|_q\leq1$, taking values in the same finite-dimensional Euclidean space
as $\delta(D^m h)$.  For such
$G$, Stein's Gaussian integration-by-parts identity gives
\[
\left|
\E\left[
\left\langle \delta(D^m h),G\right\rangle
\right]
\right|
=
\left|
\E\left[
\left\langle D^m h,DG\right\rangle
\right]
\right|.
\]
Here $DG$ has entries
$\partial G_\ell/\partial z_j$, so the inner product sums over both the
divergence coordinate $j$ and the derivative-array label $\ell$.

Apply the decomposition used in
\citet[Proposition~1.5.4]{nualartMalliavinCalculusRelated2006}, with the
derivative array treated as a finite-dimensional Euclidean vector under the
norm convention above.  Set $u=D^m h$, and write
\[
u=\E u+\widetilde u,
\qquad
G=\E G+\widetilde G.
\]
Then
\[
\left|
\E\left[
\left\langle u,DG\right\rangle
\right]
\right|
\leq
\left|
\E\left[
\left\langle \E u,DG\right\rangle
\right]
\right|
+
\left|
\E\left[
\left\langle \widetilde u,D\widetilde G\right\rangle
\right]
\right|.
\]
For the term involving $\E u$, note that $\E u$ is deterministic.  Applying
Gaussian integration by parts to this constant array gives
\[
\E\left[
\left\langle \E u,DG\right\rangle
\right]
=
\E\left[
\left\langle \delta(\E u),G\right\rangle
\right].
\]
Since the derivative of $\E u$ with respect to $z$ is zero,
$\delta(\E u)$ is linear in $Z$.  Hence H\"older's inequality gives
\[
\left|
\E\left[
\left\langle \E u,DG\right\rangle
\right]
\right|
\leq
\|\delta(\E u)\|_p\|G\|_q.
\]
Because $\delta(\E u)$ is linear in the underlying Gaussian vector, its
$L^p$ norm grows as $O(\sqrt p)$
\citep[Proposition~2.5.2]{vershyninHighdimensionalProbabilityIntroduction2018}:
$\|\delta(\E u)\|_p$ is bounded by $c\sqrt p$ times the Euclidean norm of
the deterministic array $\E u$, with $c$ independent of $n$ and of the
array dimension.  Therefore, since $\|G\|_q\leq1$ and $u=D^m h$,
\[
\left|
\E\left[
\left\langle \E u,DG\right\rangle
\right]
\right|
\leq
c\sqrt p\,|\E u|
\leq
c\sqrt p\,\|u\|_p
=
c\sqrt p\,\|D^m h\|_p,
\]
where the second inequality is Jensen's inequality.
For the second term, the first-chaos part of $\widetilde G$ does not
contribute.  Indeed,
\[
\E\left[
\left\langle \widetilde u,\E(D\widetilde G)\right\rangle
\right]
=
\left\langle \E\widetilde u,\E(D\widetilde G)\right\rangle
=0.
\]
Therefore
\[
\E\left[
\left\langle \widetilde u,D\widetilde G\right\rangle
\right]
=
\E\left[
\left\langle
\widetilde u,\,
D\widetilde G-\E(D\widetilde G)
\right\rangle
\right].
\]
Thus the first-chaos part of $\widetilde G$ does not contribute to the
centered term.  In the operator notation below, this removal is handled by
$R$, which has multiplier zero on chaos orders $0$ and $1$.  Thus we keep
$\widetilde G=G-\E G$ in the notation.  The identity used in
\citet[Proposition~1.5.4]{nualartMalliavinCalculusRelated2006} gives
\[
\left|
\E\left[
\left\langle \widetilde u,D\widetilde G\right\rangle
\right]
\right|
=
\left|
\E\left[
\left\langle
D\widetilde u,\,
D C^{-2}D\widetilde G
\right\rangle
\right]
\right|,
\]
where $C=(-L)^{1/2}$ is as above.  The expectation in $\E u$ is over
$Z$, so $\E u$ is a fixed array and has derivative zero.  Since $u=D^m h$,
\[
D\widetilde u
=
D(u-\E u)
=
D^{m+1}h.
\]
H\"older's inequality gives
\[
\left|
\E\left[
\left\langle
D\widetilde u,\,
D C^{-2}D\widetilde G
\right\rangle
\right]
\right|
\leq
\|D^{m+1}h\|_p\,
\|D C^{-2}D\widetilde G\|_q.
\]
The next displays follow the operator step in
\citet[Proposition~1.5.4]{nualartMalliavinCalculusRelated2006}. We record them
only to track the $p$-dependence of the constant.  Let $R$ denote the
multiplier operator in Nualart's proof.  The operator identity below is applied
to the centered derivative term $D\widetilde G-\E(D\widetilde G)$ justified
above. To keep Nualart's notation, we continue writing $D\widetilde G$.
Meyer's inequality first gives
\[
\|D C^{-2}D\widetilde G\|_q
\leq
K_q\|C^{-1}D\widetilde G\|_q.
\]
The multiplier identity used in
\citet[Proposition~1.5.4]{nualartMalliavinCalculusRelated2006} gives
\[
\|C^{-1}D\widetilde G\|_q
=
\|D C^{-1}\sqrt R\,\widetilde G\|_q.
\]
Applying Meyer's inequality a second time gives
\[
\|D C^{-1}\sqrt R\,\widetilde G\|_q
\leq
K_q\|\sqrt R\,\widetilde G\|_q.
\]
Lemma~\ref{lem:sqrt_R_bounded} gives
\[
\|\sqrt R\,\widetilde G\|_q
=
\|\sqrt R(I-J_0-J_1)\widetilde G\|_q
\leq
c p^2\|\widetilde G\|_q.
\]
Finally, $\widetilde G=G-\E G$, so
\[
\|\widetilde G\|_q\leq \|G\|_q+|\E G|\leq 2\|G\|_q,
\]
where the last step uses Jensen's inequality, $|\E G|\leq\E|G|\leq\|G\|_q$.
Combining the two Meyer-inequality steps, the multiplier identity, the
multiplier bound, and the centering bound,
\[
\|D C^{-2}D\widetilde G\|_q
\leq
c p^2K_q^2\|G\|_q,
\]
and therefore
\[
\left|
\E\left[
\left\langle \delta(D^m h),G\right\rangle
\right]
\right|
\leq
c\sqrt p\,\|D^m h\|_p
+
c p^2K_q^2\,\|D^{m+1}h\|_p.
\]
The numerical constants in this display do not depend on $n$ or on the number
of derivative-array entries.

To bound $K_q$, use the dimension-free bound from
\citet[Theorem~4]{arcozziRieszTransformsCompact1998} for the real-valued
Gaussian Riesz transform $DC^{-1}$:
\[
\|DC^{-1}\|_{q\to q}
\leq
2(q^*-1),
\qquad
q^*=\max\{q,q/(q-1)\},
\]
on mean-zero Gaussian $L^q$ functions.  \citet[Section~3.2]{banuelosFoundationalInequalities2010}
records that real-valued Riesz-transform bounds extend to Hilbert-valued
functions.  Since the derivative arrays here take values in finite-dimensional
Euclidean spaces, the Hilbert-valued form covers the array-valued $G$ used
above, with the same dimension-free constant.

\citet[Section~3.4]{banuelosFoundationalInequalities2010} summarizes this
Gaussian Riesz-transform inequality and the relevant constant bounds: the
underlying dimension-free boundedness is due to
\citet{meyerTransformationsRieszLois1984};
\citet{pisierRieszTransformsSimpler1988} gave an earlier explicit control of
the growth of the constant; and the Arcozzi bound above gives the sharper
constant $2(q^*-1)$ used here.  This is the same operator that enters the
displayed Meyer inequality above.  Indeed, for any mean-zero $F$,
\[
\|DF\|_q
=
\|DC^{-1}(CF)\|_q
\leq
\|DC^{-1}\|_{q\to q}\|CF\|_q
\leq
2(q^*-1)\|CF\|_q.
\]
Thus Meyer's inequality holds with
\[
K_q
\leq
\|DC^{-1}\|_{q\to q}
\leq
2(q^*-1).
\]
In this proof $q=p/(p-1)$, so $q/(q-1)=p$ and $q\leq p$ for $p\geq2$.
Thus $q^*=p$, and
\[
K_q\leq 2(p-1)=O(p).
\]
Consequently $p^2K_q^2=O(p^4)$, which dominates the $O(\sqrt p)$ term for
$p\geq2$.  Thus, for every polynomial array-valued $G$ with
$\|G\|_q\leq1$, we have shown
\[
\left|
\E\left[
\left\langle \delta(D^m h),G\right\rangle
\right]
\right|
\leq
c_0p^4
\left\{
\|D^m h\|_{L^p(P_Z)}
+
\|D^{m+1}h\|_{L^p(P_Z)}
\right\}.
\]
This is the testing argument used in
\citet[Proposition~1.5.4]{nualartMalliavinCalculusRelated2006}.  We now
deduce the stated $L^p$ bound from the testing bound by duality.  The
duality identity
$\|X\|_{L^p(P_Z)}=\sup\{\E[\langle X,G\rangle]\,:\,\|G\|_q\leq1\}$ is valid
only when $X\in L^p(P_Z)$, so we first verify this membership for
$X=\delta(D^m h)$.  Coordinatewise, the Cauchy--Schwarz inequality bounds the two sums in the
defining expression of $\delta(D^m h)$,
\begin{align*}
\Bigl|\sum_{j=1}^n z_j\,(D^m h)_{\ell,j}(z)\Bigr|
&\leq
\|z\|_2\,\Bigl(\sum_{j=1}^n (D^m h)_{\ell,j}(z)^2\Bigr)^{1/2},
\\
\Bigl|\sum_{j=1}^n \partial_j(D^m h)_{\ell,j}(z)\Bigr|
&\leq
\sqrt n\,\Bigl(\sum_{j=1}^n \bigl(\partial_j(D^m h)_{\ell,j}(z)\bigr)^2\Bigr)^{1/2},
\end{align*}
and combining the derivative labels $\ell$ by the triangle inequality in
$\ell^2$ gives the pointwise bound
\[
\|\delta(D^m h)(z)\|_2
\leq
\|z\|_2\,\|D^m h(z)\|_2+\sqrt n\,\|D^{m+1}h(z)\|_2 .
\]
The second term on the right has finite $p$th moment directly from the
hypothesis with $r=p$.  The first term has finite $p$th moment by the
Cauchy--Schwarz inequality in $L^2(P_Z)$,
\[
\E\bigl[\,\|Z\|_2^p\,\|D^m h(Z)\|_2^p\,\bigr]
\leq
\bigl(\E\|Z\|_2^{2p}\bigr)^{1/2}
\bigl(\E\|D^m h(Z)\|_2^{2p}\bigr)^{1/2}
<\infty,
\]
which uses the hypothesis with $r=2p$. This is the only place the
hypothesis is used beyond $r=p$.  Hence $\delta(D^m h)\in L^p(P_Z)$.
Finally, since polynomials are dense in $L^q(P_Z)$, the supremum in the
duality identity may be restricted to polynomial $G$, where the testing
bound applies.  It follows that
\[
\|\delta(D^m h)\|_{L^p(P_Z)}
\leq
c_0 p^4
\left\{
\|D^m h\|_{L^p(P_Z)}
+
\|D^{m+1}h\|_{L^p(P_Z)}
\right\}.
\]
\end{proof}

\begin{lemma}[Admissibility of the test function]
\label{lem:test_function_admissible}
Let $Z\sim \N(0,I_n)$, and let $P_Z$ denote its law.  Fix $p\geq2$ and
set $q=p/(p-1)$.  Let $A:\bbR^n\to\bbR^M$ be continuously differentiable
for some finite $M$.  Throughout this lemma, $|\cdot|$ denotes the
Euclidean norm after vectorizing finite-dimensional arrays, and $L^p$ norms
are taken after applying this pointwise norm.  In particular, $|DA|$ is the
Frobenius norm of the Jacobian of $A$.  Suppose
\[
A\in L^p(P_Z),
\qquad
DA\in L^p(P_Z),
\]
and define
\[
F:=A|A|^{p-2}.
\]
Then $F\in W^{1,q}(P_Z;\bbR^M)$.  Moreover,
\[
|DF|
\leq
(p-1)|A|^{p-2}|DA|.
\]
\end{lemma}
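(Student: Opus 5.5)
The plan has three parts: first compute the Jacobian of $F=A|A|^{p-2}$ where $A\neq0$ and bound it pointwise; then handle the set $\{A=0\}$, where the map $x\mapsto x|x|^{p-2}$ may fail to be smooth when $2<p<3$; finally convert the pointwise bound into the integrability statements $F,DF\in L^q(P_Z)$ by H\"older's inequality.

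\textbf{Step 1: the Jacobian where $A\neq0$.} Write $F=\Phi\circ A$ with $\Phi(x):=x|x|^{p-2}$ on $\bbR^M$. Since $p\geq2$, $\Phi$ is $C^1$ on $\bbR^M\setminus\{0\}$, with
\[
D\Phi(x)=|x|^{p-2}I_M+(p-2)|x|^{p-4}xx^\top .
\]
This matrix is symmetric with eigenvalues $|x|^{p-2}$ (multiplicity $M-1$) and $(p-1)|x|^{p-2}$, so $\|D\Phi(x)\|_{\mathrm{op}}=(p-1)|x|^{p-2}$. By the chain rule, $DF=D\Phi(A)\,DA$. Bounding the Frobenius norm of a product by operator norm times Frobenius norm gives $|DF|\leq(p-1)|A|^{p-2}|DA|$ on $\{A\neq0\}$.

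\textbf{Step 2: the point $x=0$.} For $p>2$, $\Phi$ is in fact $C^1$ on all of $\bbR^M$ with $D\Phi(0)=0$. This holds because $\|D\Phi(x)\|_{\mathrm{op}}\to0$ as $x\to0$, and $\Phi(x)=o(|x|)$. Hence $F$ is $C^1$ everywhere, and the bound holds trivially on $\{A=0\}$.

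For $p=2$, $\Phi$ is the identity, so $F=A$ and the bound is immediate with constant $1=p-1$, under the convention $0^0=1$. So in every case $F$ is classically $C^1$, and its classical derivative is the weak derivative.

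\textbf{Step 3: integrability.} The main remaining step is to verify membership in $W^{1,q}(P_Z)$. Note that $|F|^q=|A|^{(p-1)q}=|A|^p$, so $\|F\|_q^q=\|A\|_p^p<\infty$. For the derivative, apply H\"older's inequality with exponents $r=(p-1)/(p-2)$ and $s=p-1$, which satisfy $1/r+1/s=1$:
\[
\E\bigl[|A|^{(p-2)q}|DA|^q\bigr]\leq\bigl(\E|A|^{(p-2)qr}\bigr)^{1/r}\bigl(\E|DA|^{qs}\bigr)^{1/s}.
\]
Since $(p-2)qr=p$ and $qs=p$, the right-hand side equals $\|A\|_p^{p(p-2)/(p-1)}\|DA\|_p^{p/(p-1)}<\infty$. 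In the case $p=2$, the first factor is simply $1$.

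\textbf{Expected obstacle.} The only delicate point is the definition of the Sobolev space $W^{1,q}(P_Z)$. If it is defined as the closure of smooth functions rather than via weak derivatives, one must also check that a $C^1$ function with $F,DF\in L^q(P_Z)$ lies in that closure. I would handle this by a standard truncation and mollification argument: multiply by cutoffs $\chi(\cdot/R)$ and mollify. Convergence in $L^q(P_Z)$ follows by dominated convergence, because the Gaussian weight is locally bounded above and below.
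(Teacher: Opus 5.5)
Your proposal is correct and follows essentially the same route as the paper: write $F=\Phi\circ A$ with $\Phi(x)=x|x|^{p-2}$, show $\Phi$ is $C^1$ with $D\Phi(0)=0$ for $p>2$ (and $F=A$ for $p=2$), bound $\|D\Phi(x)\|_{\mathrm{op}}\le(p-1)|x|^{p-2}$, and finish with H\"older. Your exponents $(p-1)/(p-2)$ and $p-1$ on $q$-th powers are equivalent to the paper's $p/(p-2)$ and $p$ on norms. The remaining differences are cosmetic: you compute the operator norm exactly via eigenvalues where the paper uses the triangle inequality, and your closure/mollification remark is not needed, since the paper defines $W^{1,q}(P_Z)$ directly by finiteness of $\|F\|_{L^q}$ and $\|DF\|_{L^q}$.
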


\begin{proof}
The Sobolev claim has two parts: $F\in L^q(P_Z)$ and
$DF\in L^q(P_Z)$.  Since $|F|=|A|^{p-1}$ and $q=p/(p-1)$,
\[
\E[|F|^q]
=
\E[|A|^{(p-1)q}]
=
\E[|A|^p]
<\infty.
\]
Thus $F\in L^q(P_Z)$.

It remains to check the derivative.  If $p=2$, then $F=A$, so
\[
DF=DA\in L^2(P_Z)=L^q(P_Z),
\]
and the displayed derivative bound is immediate.  Now suppose $p>2$.  Define
\[
\Phi(x):=x|x|^{p-2},
\qquad x\in\bbR^M.
\]
For $x\ne0$, the Jacobian of $\Phi$ is
\[
D\Phi(x)
=
|x|^{p-2}I_M
+
(p-2)|x|^{p-4}xx^\top .
\]
At $x=0$, the derivative is zero.  Indeed,
\[
\frac{|\Phi(x)-\Phi(0)|}{|x|}
=
\frac{|x|^{p-1}}{|x|}
=
|x|^{p-2}
\to0
\qquad\text{as }x\to0.
\]
Thus $D\Phi(0)=0$.  The formula for $D\Phi(x)$ when $x\ne0$ is
compatible with this value at zero.  The second term satisfies
\[
\left\|(p-2)|x|^{p-4}xx^\top\right\|_{\mathrm{op}}
=
(p-2)|x|^{p-2}
\to0
\qquad\text{as }x\to0,
\]
and the first term $|x|^{p-2}I_M$ also converges to zero in operator norm.
Therefore $\Phi$ is continuously differentiable.

By the chain rule, $F=\Phi(A)$ satisfies
\[
DF=D\Phi(A)DA.
\]
We now bound the operator norm of $D\Phi(x)$.  For the first term in the
Jacobian,
\[
\left\||x|^{p-2}I_M\right\|_{\mathrm{op}}
=
|x|^{p-2}.
\]
For the second term, use $\|xx^\top\|_{\mathrm{op}}=|x|^2$.  Thus
\[
\left\|(p-2)|x|^{p-4}xx^\top\right\|_{\mathrm{op}}
=
(p-2)|x|^{p-4}\|xx^\top\|_{\mathrm{op}}
=
(p-2)|x|^{p-2}.
\]
Since $D\Phi(x)$ is the sum of these two terms, the triangle inequality for
operator norms gives, for every $x\in\bbR^M$,
\[
\begin{aligned}
\|D\Phi(x)\|_{\mathrm{op}}
&\leq
\left\||x|^{p-2}I_M\right\|_{\mathrm{op}}
+
\left\|(p-2)|x|^{p-4}xx^\top\right\|_{\mathrm{op}} \\
&=
|x|^{p-2}
+
(p-2)|x|^{p-2}
=
(p-1)|x|^{p-2}.
\end{aligned}
\]
Hence
\[
|DF|
\leq
\|D\Phi(A)\|_{\mathrm{op}}|DA|
\leq
(p-1)|A|^{p-2}|DA|.
\]
H\"older's inequality, with exponents $p/(p-2)$ and $p$, gives
\[
\|DF\|_{L^q(P_Z)}
\leq
(p-1)
\left\||A|^{p-2}\right\|_{L^{p/(p-2)}(P_Z)}
\|DA\|_{L^p(P_Z)}.
\]
The first factor is
\[
\left\||A|^{p-2}\right\|_{L^{p/(p-2)}(P_Z)}
=
\left(\E[|A|^p]\right)^{(p-2)/p}
=
\|A\|_{L^p(P_Z)}^{p-2}.
\]
Therefore
\[
\|DF\|_{L^q(P_Z)}
\leq
(p-1)\|A\|_{L^p(P_Z)}^{p-2}\|DA\|_{L^p(P_Z)}
<\infty.
\]
Thus $DF\in L^q(P_Z)$, and so $F\in W^{1,q}(P_Z;\bbR^M)$.
\end{proof}

\begin{lemma}[One-step recurrence in the derivative order]
\label{lem:one_step_recurrence}
Fix $p\geq2$ and $m\geq0$.  Suppose $h:\bbR^n\to\bbR^n$ has continuous partial
derivatives through order $m+2$ and
$h\in W^{m+2,r}(P_Z;\bbR^n)$ for all $r\in[2,\infty)$.  Then
\[
\|\delta(D^m h)\|_p^2
\leq
(p-1)\|D^m h\|_p^2
+
(p-1)\|D^m h\|_p\,\|\delta(D^{m+1}h)\|_p.
\]
\end{lemma}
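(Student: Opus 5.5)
The plan is a duality argument. We test $A:=\delta(D^m h)$ against the nonlinear test function $F:=A|A|^{p-2}$, move the divergence onto $F$ by Gaussian integration by parts, and handle the resulting derivative $DA$ with the commutation relation between $D$ and $\delta$. Here $A$ is the array indexed by derivative labels $\ell$. If $\|A\|_p=0$ there is nothing to prove, so assume $\|A\|_p>0$. Then
\[
\|A\|_p^p=\E\langle A,F\rangle=\sum_{\ell}\E\bigl[\delta(D^m h)_\ell F_\ell\bigr]
=\sum_{\ell,j}\E\bigl[(D^m h)_{\ell,j}\,\partial_j F_\ell\bigr].
\]
The last equality is Stein's identity $\E[\delta(u)G]=\E\langle u,DG\rangle$, applied for each fixed label $\ell$ exactly as in Lemma~\ref{lem:first_order_divergence_continuity}.

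\textbf{Admissibility.} To justify the integration by parts, I will invoke Lemma~\ref{lem:test_function_admissible} with $A=\delta(D^m h)$. This needs $A\in L^p(P_Z)$ and $DA\in L^p(P_Z)$. The first follows from the pointwise bound $\|\delta(D^m h)(z)\|_2\le\|z\|_2\|D^m h(z)\|_2+\sqrt n\|D^{m+1}h(z)\|_2$ together with the moment hypotheses for all $r<\infty$, as in the proof of Lemma~\ref{lem:first_order_divergence_continuity}. The second follows from the commutation identity below and the same pointwise bound applied one order higher; this is where continuity through order $m+2$ and $h\in W^{m+2,r}$ are used. The lemma then gives $F\in W^{1,q}(P_Z)$. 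Gaussian integration by parts for a $W^{1,q}$ test function against $D^m h$ holds because $D^m h$ has all moments, so the products are integrable by H\"older.

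\textbf{Commutation.} The key algebraic step is the Heisenberg relation $\partial_i\delta(u)=u_i+\delta(\partial_i u)$. For $u=D^m h$ with label $\ell$, it reads
\[
\partial_i A_\ell=(D^m h)_{\ell,i}+\{\delta(D^{m+1}h)\}_{(i,\ell)}.
\]
This is checked directly from the definition: differentiating $\sum_j z_j(D^m h)_{\ell,j}-\sum_j\partial_j(D^m h)_{\ell,j}$ in $z_i$ produces the extra term $(D^m h)_{\ell,i}$ from $\partial_i z_j=\delta_{ij}$, and the derivative-label convention for $D^{m+1}h$ absorbs the rest. Consequently $|DA|\le|D^m h|+|\delta(D^{m+1}h)|$ pointwise, with Frobenius/Euclidean norms on vectorized arrays.

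\textbf{Assembly.} Write $\partial_jF=D\Phi(A)\,\partial_jA$ with $\Phi(x)=x|x|^{p-2}$. Applying Cauchy--Schwarz over $(\ell,j)$ and the operator bound $\|D\Phi(A)\|_{\mathrm{op}}\le(p-1)|A|^{p-2}$ from Lemma~\ref{lem:test_function_admissible} gives
\[
\|A\|_p^p\le(p-1)\,\E\bigl[|A|^{p-2}\,|D^m h|\,(|D^m h|+|\delta(D^{m+1}h)|)\bigr].
\]
H\"older with exponents $p/(p-2),p,p$ then yields
\[
\|A\|_p^p\le(p-1)\|A\|_p^{p-2}\|D^m h\|_p\bigl(\|D^m h\|_p+\|\delta(D^{m+1}h)\|_p\bigr).
\]
Dividing by $\|A\|_p^{p-2}$ gives the claim. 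For $p=2$ the H\"older step is trivial since $|A|^{0}=1$.

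\textbf{Main obstacle.} I expect the algebra to be routine. The delicate part is the bookkeeping of array indices, namely matching the derivative label $(i,\ell)$ in the commutation relation with the convention for $D^{m+1}h$, together with rigorously justifying the integration by parts and the membership $DA\in L^p$ from the stated Sobolev hypotheses.
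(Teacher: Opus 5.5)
Your proposal is correct and follows the paper's proof essentially step for step. It uses the same test function $F=A|A|^{p-2}$ with admissibility from Lemma~\ref{lem:test_function_admissible}, the same commutation identity giving $|DA|\le|D^m h|+|\delta(D^{m+1}h)|$, and an equivalent H\"older step: your exponents $p/(p-2),p,p$ match the paper's $p/(p-2),p/2$ followed by Cauchy--Schwarz. The only cosmetic difference is that the paper gets $A\in L^p$ and $\delta(D^{m+1}h)\in L^p$ by citing Lemma~\ref{lem:first_order_divergence_continuity} with $k=m+1$, whereas you rerun its pointwise bound directly, which amounts to the same argument.
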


\begin{proof}
If $\|\delta(D^m h)\|_p=0$, the displayed inequality is immediate, so assume
$\|\delta(D^m h)\|_p>0$.  For each derivative label $\ell$ and each input
coordinate $i$, differentiating the definition of $\delta(D^m h)$ gives
\[
\begin{aligned}
\frac{\partial}{\partial z_i}\{\delta(D^m h)\}_{\ell}
&=
\frac{\partial}{\partial z_i}
\left[
\sum_{j=1}^n z_j (D^m h)_{\ell,j}
-
\sum_{j=1}^n
\frac{\partial}{\partial z_j}(D^m h)_{\ell,j}
\right]  \\
&=
(D^m h)_{\ell,i}
+
\sum_{j=1}^n z_j
\frac{\partial}{\partial z_i}(D^m h)_{\ell,j}
-
\sum_{j=1}^n
\frac{\partial}{\partial z_j}
\frac{\partial}{\partial z_i}(D^m h)_{\ell,j}  \\
&=
(D^m h)_{\ell,i}
+
\{\delta(D^{m+1}h)\}_{i,i_1,\ldots,i_m}.
\end{aligned}
\]
The last equality uses the assumed continuous mixed partial derivatives.
When $m=0$, the derivative label on the last term is just $i$.
Stacking these identities over all $i$ and all derivative labels $\ell$,
and applying the triangle inequality for the Euclidean norm, gives
\begin{equation}
\label{eq:dm_derivative_bound}
|D\{\delta(D^m h)\}|
\leq
|D^m h|
+
|\delta(D^{m+1}h)|.
\end{equation}
Set $A:=\delta(D^m h)$, so $DA=D\{\delta(D^m h)\}$.
Set $F:=A|A|^{p-2}$.
Before applying Gaussian integration by parts with $F$, we check that $F$
is admissible.  Lemma~\ref{lem:test_function_admissible} says it is enough to
show $A\in L^p(P_Z)$ and $DA\in L^p(P_Z)$.
Lemma~\ref{lem:first_order_divergence_continuity} with $k=m+1$ and
derivative order $m$ gives $A=\delta(D^m h)\in L^p(P_Z)$.  The same lemma
with derivative order $m+1$ gives $\delta(D^{m+1}h)\in L^p(P_Z)$.
Together with
\[
|DA|=|D\{\delta(D^m h)\}|
\leq
|D^m h|+|\delta(D^{m+1}h)|,
\]
this gives $DA\in L^p(P_Z)$.  Hence
Lemma~\ref{lem:test_function_admissible} gives $F\in W^{1,q}(P_Z)$, where
$q=p/(p-1)$.  Now
\begin{equation}
\label{eq:dm_expanded_bound}
\begin{aligned}
\|A\|_p^p
&=
\E[|A|^p]  \\
&=
\E\langle F,A\rangle  \\
&=
\E\langle F,\delta(D^m h)\rangle  \\
&=
\E\langle DF,D^m h\rangle
&&\text{(divergence-gradient duality)}  \\
&\leq
\E\{|DF|\,|D^m h|\}  \\
&\leq
(p-1)
\E\{|A|^{p-2}|DA|\,|D^m h|\}
&&\text{(Lemma~\ref{lem:test_function_admissible})}  \\
&\leq
(p-1)
\E\!\left[
|A|^{p-2}
\left\{
|D^m h|+|\delta(D^{m+1}h)|
\right\}
|D^m h|
\right]
&&\text{(by \eqref{eq:dm_derivative_bound})}  \\
&=
(p-1)
\E\!\left[
|A|^{p-2}|D^m h|^2
\right]  \\
&\quad+
(p-1)
\E\!\left[
|A|^{p-2}|D^m h|\,|\delta(D^{m+1}h)|
\right].
\end{aligned}
\end{equation}

When $p=2$, \eqref{eq:dm_expanded_bound} gives
\[
\begin{aligned}
\|A\|_2^2
&\leq
\E[|D^m h|^2]
+
\E\!\left[
|D^m h|\,|\delta(D^{m+1}h)|
\right]  \\
&=
\|D^m h\|_2^2
+
\E\!\left[
|D^m h|\,|\delta(D^{m+1}h)|
\right]  \\
&\leq
\|D^m h\|_2^2
+
\|D^m h\|_2\,\|\delta(D^{m+1}h)\|_2.
\end{aligned}
\]
Since $A=\delta(D^m h)$, this is the recurrence at $p=2$.

Now suppose $p>2$.  For the first term in
\eqref{eq:dm_expanded_bound}, H\"older's inequality with exponents
$p/(p-2)$ and $p/2$ gives
\[
\E\!\left[
|A|^{p-2}|D^m h|^2
\right]
\leq
\|A\|_p^{p-2}\|D^m h\|_p^2.
\]
For the second term, H\"older's inequality with the same exponents $p/(p-2)$ and $p/2$ gives
\[
\E\!\left[
|A|^{p-2}|D^m h|\,|\delta(D^{m+1}h)|
\right]
\leq
\|A\|_p^{p-2}
\left\|
|D^m h|\,|\delta(D^{m+1}h)|
\right\|_{p/2}.
\]
The $L^{p/2}(P_Z)$ factor in this display is bounded as follows:
\begin{align*}
\left\|
|D^m h|\,|\delta(D^{m+1}h)|
\right\|_{p/2}
&=
\left(
\E\left[
|D^m h|^{p/2}|\delta(D^{m+1}h)|^{p/2}
\right]
\right)^{2/p}  \\
&\leq
\left[
\left(\E[|D^m h|^p]\right)^{1/2}
\left(\E[|\delta(D^{m+1}h)|^p]\right)^{1/2}
\right]^{2/p}  \\
&=
\|D^m h\|_p\,\|\delta(D^{m+1}h)\|_p.
\end{align*}
Using these two inequalities in \eqref{eq:dm_expanded_bound} gives
\[
\|A\|_p^p
\leq
(p-1)\|A\|_p^{p-2}\|D^m h\|_p^2
+
(p-1)\|A\|_p^{p-2}
\|D^m h\|_p\,\|\delta(D^{m+1}h)\|_p.
\]
Since $\|A\|_p>0$, division by $\|A\|_p^{p-2}$ proves the recurrence for
$p>2$, and hence for all $p\geq2$.
\end{proof}

\begin{lemma}[Iterated Gaussian divergence-continuity bound]
\label{lem:divergence_continuity}
Fix $k\geq0$ and $p\geq2$.  Suppose $h:\bbR^n\to\bbR^n$ belongs to the
Gaussian Sobolev space with $k+1$ derivatives in $L^r(P_Z)$ for every
finite $r\geq2$, written
\[
h\in W^{k+1,r}(P_Z;\bbR^n)
\quad\text{for all }r\in[2,\infty),
\]
and has continuous partial derivatives through order $k+1$.
Then
\[
\|\delta(h)\|_{L^p(P_Z)}
\leq
c_k p^{1+3\cdot 2^{-k}}\|h\|_{W^{k+1,p}(P_Z)}.
\]
The constant $c_k$ depends only on $k$, not on $n$.
\end{lemma}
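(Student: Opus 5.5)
The plan is a downward induction on the derivative order $m$, from $m=k$ to $m=0$. Lemma~\ref{lem:first_order_divergence_continuity} supplies the top level. Lemma~\ref{lem:one_step_recurrence} transfers the bound one level down, and each step roughly halves the excess exponent above $1$.

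Fix $p\geq2$ and set
\[
a_m:=\|\delta(D^m h)\|_{L^p(P_Z)},\qquad b_m:=\|D^m h\|_{L^p(P_Z)},\qquad S:=\sum_{m=0}^{k+1}b_m=\|h\|_{W^{k+1,p}(P_Z)}.
\]
\emph{Base case.} Apply Lemma~\ref{lem:first_order_divergence_continuity} with derivative order $m=k$. Its hypotheses are exactly those of the present lemma, namely $h\in W^{k+1,r}(P_Z)$ for all finite $r$ and continuous partials through order $k+1$. It gives
\[
a_k\leq c_0p^4(b_k+b_{k+1})\leq c_0p^{e_k}S,\qquad e_k:=4 .
\]

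\emph{Inductive step.} For $0\leq m\leq k-1$, Lemma~\ref{lem:one_step_recurrence} requires continuous derivatives through order $m+2\leq k+1$ and $h\in W^{m+2,r}(P_Z)$ for all $r$, both of which hold. It yields
\[
a_m^2\leq (p-1)b_m^2+(p-1)b_m a_{m+1}.
\]
Using $\sqrt{x+y}\leq\sqrt x+\sqrt y$ together with $b_m\leq S$, this gives
\[
a_m\leq \sqrt p\,S+\sqrt{p\,S\,a_{m+1}} .
\]
Suppose inductively that $a_{m+1}\leq C_{m+1}p^{e_{m+1}}S$. Then
\[
a_m\leq\bigl(\sqrt p+\sqrt{C_{m+1}}\,p^{(1+e_{m+1})/2}\bigr)S .
\]
Define $e_m:=(1+e_{m+1})/2$. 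Every $e_m$ is at least $1$, so $\sqrt p\leq p^{e_m}$ for $p\geq2$, and therefore $a_m\leq C_m p^{e_m}S$ with $C_m:=1+\sqrt{C_{m+1}}$.

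\emph{Exponent recursion.} The recursion gives $e_m-1=(e_{m+1}-1)/2$, hence
\[
e_0-1=2^{-k}(e_k-1)=3\cdot2^{-k},
\]
which is the claimed exponent $1+3\cdot2^{-k}$ for $a_0=\|\delta(h)\|_{L^p(P_Z)}$. The constant $c_k:=C_0$ is produced by $k$ iterations of $C\mapsto 1+\sqrt C$ starting from $c_0$. It therefore depends only on $k$, and not on $n$ or $p$, because $c_0$ is dimension-free and no other constants enter.

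The main point to check carefully is that every application of Lemma~\ref{lem:one_step_recurrence} has its hypotheses available under the standing assumption with only $k+1$ derivatives. This holds because the recurrence is used only for $m\leq k-1$, and the level $m=k$ is handled directly by the first-order bound. When $k=0$ there is no recursion at all, and the statement reduces to Lemma~\ref{lem:first_order_divergence_continuity} at $m=0$, with exponent $4=1+3\cdot2^{0}$.
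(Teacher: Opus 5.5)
Your proof is correct and is essentially the paper's argument. The paper inducts on the number $r$ of extra derivative levels, proving $\|\delta(D^m h)\|_p\le c_r p^{1+3\cdot 2^{-r}}\|D^m h\|_{W^{r+1,p}(P_Z)}$ for all $m+r\le k$. Along the chain $(m,r)=(k,0),(k-1,1),\dots,(0,k)$ this is exactly your downward induction in $m$: the same base case from Lemma~\ref{lem:first_order_divergence_continuity}, the same application of Lemma~\ref{lem:one_step_recurrence}, and the same halving of the exponent's excess over $1$. The only differences are cosmetic and leave the conclusion unchanged: the paper carries the localized norm $\|D^m h\|_{W^{r+1,p}(P_Z)}$ rather than your global $\|h\|_{W^{k+1,p}(P_Z)}$, and its constants follow $c_r=(1+c_{r-1})^{1/2}$ rather than your $1+\sqrt{C_{m+1}}$.
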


\begin{proof}
The lemma only requires a bound for $\delta(h)$, but the proof controls it
by iterating a one-step recurrence over higher derivative arrays.  One use of
the recurrence bounds $\delta(D^m h)$ in terms of $D^m h$ and the next
divergence term $\delta(D^{m+1}h)$.  Therefore, starting from $m=0$, the
argument must keep track of the sequence
\[
\delta(h),\quad \delta(Dh),\quad \delta(D^2h),\quad \ldots .
\]
The index $m$ records the current derivative order, and $r$ records
how many further derivative levels are used to bound that term.
The condition $m+r+1\leq k+1$ ensures that the right-hand side never uses
derivatives of $h$ beyond order $k+1$.  For each $r=0,\ldots,k$ and each
$m$ satisfying $m+r+1\leq k+1$, we prove
\begin{equation}
\label{eq:modular_auxiliary_claim}
\|\delta(D^m h)\|_p
\leq
c_r p^{1+3\cdot 2^{-r}}\|D^m h\|_{W^{r+1,p}(P_Z)}.
\end{equation}
Taking $r=k$ and $m=0$ in \eqref{eq:modular_auxiliary_claim} gives the
bound in the lemma statement.

First take $r=0$, and fix $m$ satisfying $m+1\leq k+1$.
Lemma~\ref{lem:first_order_divergence_continuity} gives
\[
\begin{aligned}
\|\delta(D^m h)\|_{L^p(P_Z)}
&\leq
c_0p^4
\left\{
\|D^m h\|_{L^p(P_Z)}
+
\|D^{m+1}h\|_{L^p(P_Z)}
\right\}  \\
&=
c_0p^4\|D^m h\|_{W^{1,p}(P_Z)}  \\
&=
c_0p^{1+3\cdot 2^{-0}}\|D^m h\|_{W^{1,p}(P_Z)}.
\end{aligned}
\]
This proves \eqref{eq:modular_auxiliary_claim} when $r=0$.

For $r\geq1$, the induction is on $r$, not on $m$.  The integer $r$
records how many additional derivative levels are being used beyond $D^m h$.
Thus the $r=0$ case uses only $D^m h$ and $D^{m+1}h$, while the
$r$-case uses derivatives from $D^m h$ through $D^{m+r+1}h$.  For each
fixed $r$, the claim is proved for every derivative order $m$ for which
these derivatives exist, namely $m+r+1\leq k+1$.  The one-step recurrence
applied at derivative order $m$ produces the next divergence term
$\delta(D^{m+1}h)$.  That term is handled by the already-proved
$(r-1)$-case, applied with $m$ replaced by $m+1$.

Now fix $r$ with $1\leq r\leq k$.  Assume that
\eqref{eq:modular_auxiliary_claim} has already been proved with $r$ replaced
by $r-1$.  Thus, for every integer $m$ satisfying $m+r\leq k+1$,
\[
\|\delta(D^m h)\|_p
\leq
c_{r-1}p^{1+3\cdot 2^{-(r-1)}}\|D^m h\|_{W^{r,p}(P_Z)}.
\]
Fix $m$ satisfying $m+r+1\leq k+1$.  We need to prove
\[
\|\delta(D^m h)\|_p
\leq
c_rp^{1+3\cdot 2^{-r}}\|D^m h\|_{W^{r+1,p}(P_Z)}.
\]
We prove this by bounding $\|\delta(D^m h)\|_p^2$ and then taking square
roots.  Since $r\geq1$, the condition $m+r+1\leq k+1$ implies
$m+2\leq k+1$, so Lemma~\ref{lem:one_step_recurrence} applies at derivative
order $m$.  Since $(m+1)+r\leq k+1$, the induction hypothesis applies to
derivative order $m+1$.
\[
\begin{aligned}
\|\delta(D^m h)\|_p^2
&\leq
(p-1)\|D^m h\|_p^2
+
(p-1)\|D^m h\|_p\,\|\delta(D^{m+1}h)\|_p
&&\text{(a)}  \\
&\leq
(p-1)\|D^m h\|_p^2
+
(p-1)c_{r-1}p^{1+3\cdot 2^{-(r-1)}}
\|D^m h\|_p\|D^{m+1}h\|_{W^{r,p}(P_Z)}
&&\text{(b)}  \\
&\leq
(p-1)\|D^m h\|_{W^{r+1,p}(P_Z)}^2
+
(p-1)c_{r-1}p^{1+3\cdot 2^{-(r-1)}}
\|D^m h\|_{W^{r+1,p}(P_Z)}^2
&&\text{(c)}  \\
&\leq
p\|D^m h\|_{W^{r+1,p}(P_Z)}^2
+
c_{r-1}p^{2+3\cdot 2^{-(r-1)}}
\|D^m h\|_{W^{r+1,p}(P_Z)}^2
&&\text{(d)}  \\
&\leq
p^{2+3\cdot 2^{-(r-1)}}
\|D^m h\|_{W^{r+1,p}(P_Z)}^2
+
c_{r-1}p^{2+3\cdot 2^{-(r-1)}}
\|D^m h\|_{W^{r+1,p}(P_Z)}^2
&&\text{(e)}  \\
&=
(1+c_{r-1})p^{2+3\cdot 2^{-(r-1)}}
\|D^m h\|_{W^{r+1,p}(P_Z)}^2 .
\end{aligned}
\]
In this display, (a) applies Lemma~\ref{lem:one_step_recurrence}, (b)
applies the induction hypothesis, (c) uses the Sobolev norm comparisons
stated below, (d) uses \(p-1\leq p\), and (e) uses
\(p\leq p^{2+3\cdot 2^{-(r-1)}}\) for \(p\geq2\).  The Sobolev norm
comparisons used in (c) are
\[
\|D^m h\|_p\leq \|D^m h\|_{W^{r+1,p}(P_Z)},
\qquad
\|D^{m+1}h\|_{W^{r,p}(P_Z)}
\leq
\|D^m h\|_{W^{r+1,p}(P_Z)}.
\]
Taking square roots and using
\[
\frac{2+3\cdot 2^{-(r-1)}}{2}=1+3\cdot 2^{-r}
\]
gives
\[
\|\delta(D^m h)\|_p
\leq
(1+c_{r-1})^{1/2}
p^{1+3\cdot 2^{-r}}\|D^m h\|_{W^{r+1,p}(P_Z)}.
\]
Set $c_r:=(1+c_{r-1})^{1/2}$.  This proves
\eqref{eq:modular_auxiliary_claim} for this value of $r$.  By induction,
\eqref{eq:modular_auxiliary_claim} holds for every $r=0,\ldots,k$.  Taking
$r=k$ and $m=0$ gives the stated bound.
\end{proof}

\noindent\phantomsection\label{proof:general_conc}
\begin{proof}[Proof of Theorem~\ref{thm:general_conc}]
Let
\[
G_n(f):=\sure_n(f)-L_n(f)
\]
denote the \sure{} error process whose uniform size must be controlled.

\medskip\noindent\textit{Step 1: Sub-Weibull increments.}
By Lemma~\ref{lem:representation},
\[
G_n(f_\gamma)-G_n(f_{\gamma'})
=
\frac{2}{n}\Psi(h),
\qquad
h:=g_\gamma-g_{\gamma'},
\]
because the $f$-independent term
$\{\tr(\Sigma)-\|\varepsilon\|_2^2\}/n$ cancels and $\Psi$ is linear in
the shrinkage adjustment.

We next rewrite the increment in standard-normal coordinates, because
Lemma~\ref{lem:divergence_continuity} is stated for functions of a standard
Gaussian vector.  Write
\[
Z:=\Sigma^{-1/2}(Y-\theta)\sim \N(0,I_n),
\qquad
Y=\theta+\Sigma^{1/2}Z.
\]
To express the shrinkage adjustment $h(Y)$ in these coordinates, define
\[
\tilde h(z):=\Sigma^{1/2}h(\theta+\Sigma^{1/2}z).
\]
With this definition, the linear part of the standard-normal divergence
corresponds to the noise inner product in the original coordinates:
\[
z^\top \tilde h(z)
=
(Y-\theta)^\top h(Y)
=
\langle \varepsilon,h(Y)\rangle.
\]
The derivative term transforms in the same coordinates.  By the chain rule,
\[
\tr\{D_z\tilde h(z)\}
=
\tr\{\Sigma^{1/2}D_y h(\theta+\Sigma^{1/2}z)\Sigma^{1/2}\}
=
\tr\{\Sigma Dh(Y)\}.
\]
Therefore, for the standard-normal divergence
$\delta(a)(z)=z^\top a(z)-\tr\{Da(z)\}$,
\[
\delta(\tilde h)(Z)
=
\langle\varepsilon,h(Y)\rangle-\tr\{\Sigma Dh(Y)\}
=
-\Psi(h).
\]
Applying Lemma~\ref{lem:divergence_continuity} to $\tilde h$ and then translating the resulting Sobolev norm back to $Y$-coordinates gives
\[
\|\Psi(h)\|_{L^p} = \|\delta(\tilde{h})\|_p \leq C_k\,p^{1+3\cdot 2^{-k}}\,\|\tilde{h}\|_{W^{k+1,p}(P_Z)} \leq C'_k\,p^{1+3\cdot 2^{-k}}\,\|h\|_{W^{k+1,p}(P_Y)},
\]
where the last inequality uses Assumption~\ref{asm:sampling_array}: each
$z$-derivative contributes factors of $\Sigma^{1/2}$, whose operator norm is
uniformly bounded.  The moment envelope converts this Sobolev bound into the
semi-metric $\rho_n^\beta(f_\gamma, f_{\gamma'}) := \frac{1}{n}M_{P_Y,k}^\beta(g_\gamma - g_{\gamma'})$.  By the moment envelope definition,
$\|h\|_{W^{k+1,p}} \leq p^\beta M_{P_Y,k}^\beta(h)$, so
\[
\|G_n(f_\gamma) - G_n(f_{\gamma'})\|_{L^p}
\lesssim
p^{1+3\cdot 2^{-k}+\beta}\,
\rho_n^\beta(f_\gamma, f_{\gamma'}).
\]
The sub-Weibull moment characterization
\citep[Theorem~2.1]{vladimirovaSubWeibullDistributionsGeneralizing2020}
implies, under the present $\psi_\alpha$ convention, that
$\|X\|_{L^p}\leq C s\,p^{1/\alpha}$ for all $p\geq2$ implies
$\|X\|_{\psi_\alpha}\leq C's$.  With
\[
\alpha:=\frac{1}{1+3\cdot 2^{-k}+\beta},
\]
the preceding display gives
\[
\|G_n(f_\gamma)-G_n(f_{\gamma'})\|_{\psi_\alpha}
\lesssim
\rho_n^\beta(f_\gamma,f_{\gamma'})
\leq
\frac{\nu_n}{n}\|\gamma-\gamma'\|_2.
\]
The final inequality is Assumption~\ref{asm:sobolev_reg} divided by $n$.
Thus $\{G_n(f_\gamma)\}$ is a $\psi_\alpha$-increment process over
$\Gamma$ with Euclidean Lipschitz scale $L\lesssim \nu_n/n$.

\medskip\noindent\textit{Step 2: Chaining and centering.}
Consider the centered process
\[
X_\gamma:=G_n(f_\gamma)-G_n(f_{\gamma_0}).
\]
By construction, $X_{\gamma_0}=0$, and Step~1 gives the required
$\psi_\alpha$ increment bound with Euclidean Lipschitz scale
$L\lesssim\nu_n/n$.  The separability hypothesis is preserved by centering:
subtracting the fixed random variable $G_n(f_{\gamma_0})$ from each
$G_n(f_\gamma)$ leaves the process's approximating sequences unchanged.
Write $\mathcal F_0=\{f_\gamma:\gamma\in\Gamma_0\}$ for a
countable $\Gamma_0\subseteq\Gamma$. Almost surely, any sequence in
$\Gamma_0$ along which $G_n(f_{\gamma_m})\to G_n(f_\gamma)$ also satisfies
$X_{\gamma_m}\to X_\gamma$, so $\Gamma_0\cup\{\gamma_0\}$ is a countable
separating set for $\{X_\gamma\}$ in the sense required by
Proposition~\ref{prop:chaining}.  Proposition~\ref{prop:chaining} therefore gives
\[
\E\!\left[\sup_{\gamma\in\Gamma}|X_\gamma|\right]
\leq
C_\alpha\,\frac{\nu_n}{n}\,\mathrm{diam}(\Gamma)\,\max\{d_\Gamma,1\}^{1/\alpha}.
\]
Since $1/\alpha=1+3\cdot 2^{-k}+\beta$ and $\mathrm{diam}(\Gamma)$ is absorbed
into the constant, this term is
\[
O\!\left(\frac{\nu_n \max\{d_\Gamma,1\}^{1+3\cdot 2^{-k}+\beta}}{n}\right).
\]

It remains to bound the \sure{} error process at the reference point:
\[
G_n(f_{\gamma_0})
=
\frac{1}{n}\{\tr(\Sigma)-\|\varepsilon\|_2^2\}
+\frac{2}{n}\Psi(g_{\gamma_0}).
\]
Because $\varepsilon\sim\N(0,\Sigma)$,
\[
\E[\|\varepsilon\|_2^2]=\tr(\Sigma),
\qquad
\mathrm{Var}(\|\varepsilon\|_2^2)=2\tr(\Sigma^2)
\]
by Isserlis' theorem.  Hence
\[
\E\left[\left|\frac{1}{n}\{\tr(\Sigma)-\|\varepsilon\|_2^2\}\right|\right]
\leq
\frac{\sqrt{2\tr(\Sigma^2)}}{n}
=O(n^{-1/2})
\]
under Assumption~\ref{asm:sampling_array}.  For the second term, the same divergence-continuity argument as in Step~1, applied to $g_{\gamma_0}$, gives
\[
\E[|\Psi(g_{\gamma_0})|]
\leq
\|\Psi(g_{\gamma_0})\|_{L^2}
\lesssim
M_{P_Y,k}^{\beta}(g_{\gamma_0})
\leq
\nu_n.
\]
Thus $\E[|G_n(f_{\gamma_0})|]\lesssim n^{-1/2}+\nu_n/n$. Combining the reference-point and increment terms,
\[
\E\left[\sup_{f\in \mathcal{F}}|\sure_n(f) - L_n(f)|\right] \lesssim \frac{1}{\sqrt{n}} + \frac{\nu_n\,\max\{d_\Gamma,1\}^{1+3\cdot 2^{-k}+\beta}}{n}.
\]

\medskip\noindent\textit{Step 3: Oracle inequality.}
The final step converts the uniform deviation bound into the oracle comparison.
Since $\hat f$ minimizes
$\sure_n$ over $\mathcal F$,
\[
\sure_n(\hat f)\leq \sure_n(f^*).
\]
Writing $G_n(f)=\sure_n(f)-L_n(f)$,
\begin{align*}
L_n(\hat f)-L_n(f^*)
&=
\{\sure_n(\hat f)-G_n(\hat f)\}
-\{\sure_n(f^*)-G_n(f^*)\} \\
&\leq
G_n(f^*)-G_n(\hat f).
\end{align*}
Using the reference map $f_{\gamma_0}$,
\[
G_n(f^*)-G_n(\hat f)
=
\{G_n(f^*)-G_n(f_{\gamma_0})\}
-\{G_n(\hat f)-G_n(f_{\gamma_0})\}
\leq
2\sup_{f\in\mathcal F}|G_n(f)-G_n(f_{\gamma_0})|.
\]
Step~2 bounds the expectation of this centered supremum by
$C\nu_n \max\{d_\Gamma,1\}^{1+3\cdot 2^{-k}+\beta}/n$.  The $n^{-1/2}$ centering term does not
appear in the oracle inequality because it is common to all $f$ and cancels.
\end{proof}

\subsubsection{Proof of Theorem~\ref{thm:uniform_conc} (via Theorem~\ref{thm:general_conc})}

\begin{proof}[Proof of Theorem~\ref{thm:uniform_conc}]
By Lemma~\ref{lem:pointwise_regular_to_sobolev}, Assumptions~\ref{asm:sampling_array}
and~\ref{asm:regularity} imply Assumption~\ref{asm:sobolev_reg} at $k=0$,
up to constants depending only on fixed sampling and envelope constants.  Theorem~\ref{thm:general_conc}
with $k=0$ gives exponent $1+3\cdot 2^{-0}+\beta=4+\beta$, yielding
\[
\E\left[\sup_{f\in \mathcal{F}}|\sure_n(f) - L_n(f)|\right] \lesssim \frac{1}{\sqrt{n}} + \frac{\nu_n\,\max\{d_\Gamma,1\}^{4+\beta}}{n}.
\]
The oracle-inequality part of Theorem~\ref{thm:general_conc} gives the
matching theorem conclusion
\[
\E[L_n(\hat f)-L_n(f^*)]
\lesssim
\frac{\nu_n\,\max\{d_\Gamma,1\}^{4+\beta}}{n}.
\]
\end{proof}

\begin{remark}[Role of the higher-order cases]
The main concentration theorem uses only the $k=0$ case of Theorem~\ref{thm:general_conc}. The higher-order cases are included for completeness: the exponent $1+3\cdot 2^{-k}+\beta$ interpolates between $4+\beta$ (when $k=0$, so only first derivatives are controlled) and $1+\beta$ (as arbitrarily many derivative levels are controlled). The improvement with $k$ follows by iterating the Meyer-type divergence bound through Lemma~\ref{lem:one_step_recurrence}: for $g$ satisfying the $k$th Sobolev moment condition, the divergence-continuity constant improves from $c_p=O(p^4)$ to $c_p=O(p^{1+3\cdot 2^{-k}})$.
\end{remark}

\begin{remark}[Bounded-envelope benchmark]
The case $\beta=0$ in Theorem~\ref{thm:uniform_conc} covers families whose shrinkage adjustments and parameter increments are uniformly bounded in $y$. A globally Lipschitz smoother can still have a linearly growing adjustment, for example $g(y)=(S-I)y$, and such cases are handled by the polynomial-envelope condition, typically with $\beta=1/2$. We do not pursue a separate Lipschitz sharpening here. Section~\ref{sec:regularity} explains how this continuum-class result relates to the finite-collection oracle inequalities of \citet{bellecSecondOrderStein2021}.
\end{remark}

\subsection{Proof of the \sure{} Model-Averaging Oracle Inequality}\label{app:model_averaging_proof}

\subsubsection{Proof of Proposition~\ref{prop:averaging}}

\begin{proof}[Proof of Proposition~\ref{prop:averaging}]
\textit{Step 1 (Fixed-weight oracle comparison).}\;
Fix any $v\in\Delta^{K-1}$, and recall that
$\tilde f(Y)=f_{\hat w(Y)}(Y)$. In this proof,
$f_w$ denotes the map $y\mapsto\sum_k w_k f_k(y)$ with $w$ held fixed,
and $\sure_n^{\mathrm{fix}}(w;Y):=\sure_n(f_w)$ differentiates this
fixed-weight map. When $w$ is random, $L_n(f_w)$ denotes the realized loss $n^{-1}\|\sum_k w_k(Y)f_k(Y)-\theta\|_2^2$, whereas $\sure_n^{\mathrm{fix}}(w;Y)$ treats the supplied weight vector as fixed when taking derivatives. Since $\hat{w}$ minimizes this fixed-weight criterion over
the simplex,
\begin{align*}
L_n(\tilde f) - L_n(f_v)
&= [L_n(f_{\hat{w}}) - \sure_n^{\mathrm{fix}}(\hat w;Y)]
+ [\sure_n^{\mathrm{fix}}(\hat w;Y) - \sure_n^{\mathrm{fix}}(v;Y)] \\
&\qquad + [\sure_n^{\mathrm{fix}}(v;Y) - L_n(f_v)] \\
&\leq E^{\mathrm{fix}}(v) - E^{\mathrm{fix}}(\hat{w}),
\end{align*}
where $E^{\mathrm{fix}}(w) := \sure_n^{\mathrm{fix}}(w;Y) - L_n(f_w)$.  This is a pointwise oracle comparison for the fixed-weight family. It is distinct from \sure{} evaluation for the \sure{}-weighted average $Y\mapsto f_{\hat w(Y)}(Y)$.

\textit{Step 2 (Candidate-specific \sure{} error cancellation).}\;
We write the fixed-weight \sure{} error
$E^{\mathrm{fix}}(w)=\sure_n^{\mathrm{fix}}(w;Y)-L_n(f_w)$ as a common noise term plus a weighted average of candidate-specific terms. For any differentiable adjustment $g$, define
\[
\Psi(g):=\tr\{\Sigma Dg(Y)\}-\langle Y-\theta,g(Y)\rangle .
\]
For fixed $w \in \Delta^{K-1}$, write $g_w(y):=f_w(y)-y$, so $g_w=\sum_{k=1}^K w_k g_k$. Expanding the definitions of $L_n(f_w)$ and the fixed-weight \sure{} criterion gives the algebraic identity
\[
\sure_n(f_w)-L_n(f_w)
=
\frac{1}{n}\bigl[\tr(\Sigma) - \|\varepsilon\|_2^2\bigr]
+\frac{2}{n}\Psi(g_w).
\]
The fixed-weight criterion $\sure_n^{\mathrm{fix}}(\hat w;Y)$ is obtained by substituting the realized vector $\hat w(Y)$ into the displayed formula for the fixed map $f_w$; no derivative of the weight map $Y\mapsto\hat w(Y)$ appears in this substitution. By the linearity of $\Psi(g)$ in $g$, the fixed-weight expansion applies both to the deterministic comparison weight $v$ and to the realized selected weight vector $\hat w(Y)$:
\[
E^{\mathrm{fix}}(w) = \frac{1}{n}\bigl[\tr(\Sigma) - \|\varepsilon\|_2^2\bigr] + \frac{2}{n}\Psi(g_w)
= E_n + \frac{2}{n}\sum_{k=1}^K w_k\Psi(g_k),
\]
where $E_n := n^{-1}\{\tr(\Sigma)-\|\varepsilon\|_2^2\}$ is common across
the fixed-weight family.
Hence the common term cancels in the regret difference:
\begin{align*}
E^{\mathrm{fix}}(v) - E^{\mathrm{fix}}(\hat w)
&= \frac{2}{n}\sum_{k=1}^K (v_k - \hat w_k)\Psi(g_k) \\
&\leq
\frac{2}{n}\left\{
\sum_{k=1}^K v_k \max_{1\leq j\leq K}\Psi(g_j)
-\sum_{k=1}^K \hat w_k \min_{1\leq j\leq K}\Psi(g_j)
\right\} \\
&=
\frac{2}{n}\left\{
\max_{1 \leq k \leq K}\Psi(g_k)
- \min_{1 \leq k \leq K}\Psi(g_k)
\right\}
&& \text{since } v,\hat w\in\Delta^{K-1} \\
&\leq \frac{4}{n} \max_{1 \leq k \leq K} \left|\Psi(g_k)\right|.
\end{align*}
Substituting this bound into the Step~1 inequality gives
\[
L_n(\tilde f) - L_n(f_v)
\leq \frac{4}{n} \max_{1 \leq k \leq K} \left|\Psi(g_k)\right|.
\]
Because the display holds for every $v\in\Delta^{K-1}$,
\[
L_n(\tilde f)-\inf_{v\in\Delta^{K-1}}L_n(f_v)
\leq
\frac{4}{n} \max_{1 \leq k \leq K} \left|\Psi(g_k)\right|.
\]
The infimum is the realized-loss benchmark for the best fixed convex average
$f_v(y)=\sum_k v_k f_k(y)$ of the trained candidate maps.

\textit{Step 3 (Per-candidate sub-Weibull bounds).}\;
Step~2 reduces the realized-loss gap to the maximum of the candidate-specific
terms $\Psi(g_k)$. To prove the proposition, it remains to
take expectations and control this maximum over the $K$ trained candidates.
We first bound each candidate-specific term in $L^p$.  Step~4 then
turns those one-dimensional tail bounds into a bound for the maximum over
candidates. For each candidate $f_k$, the order-zero, first-derivative case of
the divergence-continuity bound used
in Step~1 of the proof of Theorem~\ref{thm:general_conc} gives
\[
\|\Psi(g_k)\|_{L^p} \lesssim p^4\, \|g_k\|_{W^{1,p}(P_Y)},
\]
where the hidden constant absorbs the eigenvalue bound on $\Sigma$ from
Assumption~\ref{asm:sampling_array}. Assumption~\ref{asm:averaging_regularity}
bounds the Sobolev norm of $g_k$ directly:
\[
\|g_k\|_{W^{1,p}(P_Y)}
=
\|g_k\|_{L^p(P_Y)}+\|Dg_k\|_{L^p(P_Y)}
\leq
2\big\|\|g_k(\cdot)\|_W\big\|_{L^p(P_Y)}
\leq 2\nu_n^{(k)} p^{\beta_k}.
\]
Combining the divergence-continuity bound with the Sobolev-envelope bound,
\[
\|\Psi(g_k)\|_{L^p} \leq C\, p^{4+\beta_k}\, \nu_n^{(k)},
\qquad p\geq2.
\]
Since $\beta_k\leq\bar\beta$ and $\nu_n^{(k)}\leq\bar\nu_n$,
\[
\|\Psi(g_k)\|_{L^p}\leq C\,p^{4+\bar\beta}\bar\nu_n,
\qquad k=1,\ldots,K,\quad p\geq2.
\]
The moment-to-$\psi_{\bar\alpha}$ implication in
\citet[Theorem~2.1]{vladimirovaSubWeibullDistributionsGeneralizing2020}, applied with
$\bar\alpha:=1/(4+\bar\beta)$, gives
\[
\|\Psi(g_k)\|_{\psi_{\bar\alpha}}\leq C\bar\nu_n,
\qquad k=1,\ldots,K .
\]

\textit{Step 4 (Union bound over the finite candidate collection).}\;
The resulting $\psi_{\bar\alpha}$-norm bound from Step~3 gives, for some constant $C$,
\[
\Pr\!\left(\left|\Psi(g_k)\right|\geq t\right)
\leq
2\exp\!\left[-\left(\frac{t}{C\bar\nu_n}\right)^{\bar\alpha}\right],
\qquad t\geq0.
\]
Let
\[
M:=\max_{1\leq k\leq K}\left|\Psi(g_k)\right|,
\qquad
s:=C\bar\nu_n.
\]
Because
$\{M\geq t\}=\bigcup_{k=1}^K\{|\Psi(g_k)|\geq t\}$, the union bound gives
\[
\Pr(M\geq t)
\leq
2K\exp\!\left[-\left(\frac{t}{s}\right)^{\bar\alpha}\right].
\]
Combining this inequality with the trivial bound $\Pr(M\geq t)\leq1$,
\[
\Pr(M\geq t)
\leq
\min\left\{1,\;2K\exp\!\left[-\left(\frac{t}{s}\right)^{\bar\alpha}\right]\right\}.
\]
Set $t_0:=s\{\log(2K)\}^{1/\bar\alpha}$.  Splitting the tail integral at
$t_0$ gives
\begin{align*}
\E[M]
&=
\int_0^\infty \Pr(M\geq t)\,dt \\
&\leq
\int_0^{t_0}1\,dt
+2K\int_{t_0}^{\infty}
\exp\!\left[-\left(\frac{t}{s}\right)^{\bar\alpha}\right]\,dt \\
&=
t_0
+\frac{2Ks}{\bar\alpha}
\int_{\log(2K)}^\infty
e^{-r}r^{1/\bar\alpha-1}\,dr \\
&\leq
s\{\log(2K)\}^{1/\bar\alpha}
+C_B\,s\,2K e^{-\log(2K)}
\{1+\log(2K)\}^{1/\bar\alpha-1} \\
&\leq
C_B\,s\{\log(eK)\}^{1/\bar\alpha},
\end{align*}
where the equality uses the same change of variables as
Lemma~\ref{lem:tail_to_expectation}, namely
$r=(t/s)^{\bar\alpha}$. The term
$s\{\log(2K)\}^{1/\bar\alpha}$ is the cutoff contribution $t_0$. The
tail-integral bound uses
$\int_x^\infty e^{-r}r^q\,dr\leq C_q e^{-x}(1+x)^q$ with
$x=\log(2K)$ and $q=1/\bar\alpha-1$, so the factor
$2K e^{-\log(2K)}$ equals one. The convention
$\log(eK)=1+\log K$ keeps the logarithmic factor bounded away from zero when
$K=1$; for $K\geq1$, both $\log(2K)$ and $1+\log(2K)$ are bounded by constants
times $\log(eK)$, and $\bar\alpha\in(0,1]$ gives
$1/\bar\alpha-1\leq1/\bar\alpha$. Here $C_B$ depends only on the proposition's
upper bound $\bar\beta\leq B$.
Using $s=C\bar\nu_n$ and $\bar\alpha=1/(4+\bar\beta)$, the preceding display gives
\[
\E[M]
\lesssim
s\{\log(eK)\}^{1/\bar\alpha}
\lesssim
(\log(eK))^{4+\bar\beta}\,\bar\nu_n.
\]
Multiplying this maximal bound by the $4/n$ factor in the pointwise comparison
from Step~2 proves the stated oracle inequality against the fixed convex
combination with the smallest realized loss:
\[
\E\!\left[
L_n(\tilde f)-\min_{w\in\Delta^{K-1}}L_n(f_w)
\right]
\lesssim
\frac{\bar\nu_n(\log(eK))^{4+\bar\beta}}{n}.
\]
The final comparison to the best individual candidate follows because the
simplex contains the vertices $e_1,\ldots,e_K$:
\[
\min_{w\in\Delta^{K-1}}L_n(f_w)
\leq
\min_{1\leq k\leq K}L_n(f_k).
\]
The final display gives the comparison to the best individual candidate recorded in the remark following Proposition~\ref{prop:averaging}.
\end{proof}

\subsection{\sure{} under Noise-Covariance Misspecification}\label{app:noise_covariance_misspecification}

The main text assumes that \sure{} is computed with the true sampling covariance $\Sigma$.  In applications, researchers may instead compute \sure{} with an approximate noise covariance $\widehat\Sigma$, most commonly the diagonal matrix of marginal variances.  The following calculation isolates the resulting bias when $\widehat\Sigma$ does not depend on $Y$.

\begin{proposition}[\sure{} under noise-covariance misspecification]\label{prop:noise_covariance_misspecification}
Let $Y=\theta+\varepsilon$, $\varepsilon\sim\N(0,\Sigma)$, let $\widehat\Sigma$ be fixed with respect to $Y$, and let $f(y)=y+g(y)$ be weakly differentiable with $g(Y)\in L^2(P_Y)$ and
\[
\E\left[\sum_{i,j}\bigl(|\Sigma_{ij}|+|\widehat\Sigma_{ij}|\bigr)
|\partial_j g_i(Y)|\right]<\infty ,
\]
where $\partial_j g_i$ denotes the weak derivative.  Define
\[
\sure_n(f;\widehat\Sigma)
:=
\frac{1}{n}\|Y-f(Y)\|_2^2
-\frac{1}{n}\tr(\widehat\Sigma)
+\frac{2}{n}\tr\!\left(\widehat\Sigma Df(Y)\right).
\]
Then
\[
\E\!\left[\sure_n(f;\widehat\Sigma)-L_n(f)\right]
=
\frac{1}{n}\tr(\widehat\Sigma-\Sigma)
+\frac{2}{n}\E\!\left[
\tr\!\left\{(\widehat\Sigma-\Sigma)Dg(Y)\right\}
\right].
\]
In particular, if $\widehat\Sigma=\diag(\Sigma)$, then
$n^{-1}\tr(\widehat\Sigma-\Sigma)=0$ and
\[
\E\!\left[\sure_n(f;\widehat\Sigma)-L_n(f)\right]
=
-\frac{2}{n}\E\!\left[
\tr\!\left\{\Sigma_{\mathrm{off}}Dg(Y)\right\}
\right],
\qquad
\Sigma_{\mathrm{off}}:=\Sigma-\diag(\Sigma).
\]
\end{proposition}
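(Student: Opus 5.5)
The plan is to compare the misspecified criterion with the correctly specified one, then apply the unbiasedness identity already established in Lemma~\ref{lem:representation}. Writing $f(y)=y+g(y)$, so that $Df=I+Dg$, the same expansion used in the proof of Lemma~\ref{lem:representation} gives
\[
n\,\sure_n(f;\widehat\Sigma)
=
\|g(Y)\|_2^2-\tr(\widehat\Sigma)+2\tr(\widehat\Sigma)+2\tr\{\widehat\Sigma Dg(Y)\}
=
\|g(Y)\|_2^2+\tr(\widehat\Sigma)+2\tr\{\widehat\Sigma Dg(Y)\}.
\]
Subtracting the corresponding expression with $\Sigma$ in place of $\widehat\Sigma$ yields the pathwise identity
\[
\sure_n(f;\widehat\Sigma)-\sure_n(f)
=
\frac{1}{n}\tr(\widehat\Sigma-\Sigma)+\frac{2}{n}\tr\{(\widehat\Sigma-\Sigma)Dg(Y)\}.
\]
This step is purely algebraic. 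It uses only that $\widehat\Sigma$ is a constant matrix, so no derivative of $\widehat\Sigma$ with respect to $Y$ arises.

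Next I will take expectations. By Lemma~\ref{lem:representation}, $\E[\sure_n(f)-L_n(f)]=0$. That lemma requires $g(Y)\in L^2(P_Y)$ and $\E\sum_{i,j}|\Sigma_{ij}\partial_jg_i(Y)|<\infty$, and both are contained in the stated hypotheses. The expectation of the difference term is finite because
\[
|\tr\{(\widehat\Sigma-\Sigma)Dg\}|\leq\sum_{i,j}\bigl(|\Sigma_{ij}|+|\widehat\Sigma_{ij}|\bigr)|\partial_j g_i|,
\]
which is exactly the integrability condition assumed in the statement. Adding the two expectations gives the first display. The only real care needed is checking that each piece is separately integrable, so that the expectation may be split; this is the main obstacle, and it is handled by the assumed integrability condition.

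For the special case $\widehat\Sigma=\diag(\Sigma)$, the diagonals of $\widehat\Sigma$ and $\Sigma$ agree, so $\tr(\widehat\Sigma-\Sigma)=0$. The remaining matrix is $\widehat\Sigma-\Sigma=-\Sigma_{\mathrm{off}}$, and substituting this into the second term gives the final display.
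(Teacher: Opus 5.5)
Your proof is correct and is essentially the paper's argument. The paper expands $n\,\sure_n(f;\widehat\Sigma)$ and $n\,L_n(f)$, subtracts, and applies Stein's lemma $\E\langle\varepsilon,g(Y)\rangle=\E\tr\{\Sigma Dg(Y)\}$ directly; you instead insert the correctly specified $\sure_n(f)$, so your pathwise identity plus Lemma~\ref{lem:representation} amounts to the same Stein step. Your explicit integrability check for splitting the expectation is slightly more careful than the paper's; note that the bound on $|\tr\{(\widehat\Sigma-\Sigma)Dg\}|$ relies on $\widehat\Sigma$ being symmetric, which the paper also assumes implicitly.
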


\begin{proof}[Proof of Proposition~\ref{prop:noise_covariance_misspecification}]
Since $f(y)=y+g(y)$,
\[
n\,\sure_n(f;\widehat\Sigma)
=
\|g(Y)\|_2^2+\tr(\widehat\Sigma)+2\tr(\widehat\Sigma Dg(Y)).
\]
Also
\[
n\,L_n(f)=\|\varepsilon+g(Y)\|_2^2
=\|\varepsilon\|_2^2+2\langle\varepsilon,g(Y)\rangle+\|g(Y)\|_2^2.
\]
Subtracting and taking expectations gives
\[
\E[n\{\sure_n(f;\widehat\Sigma)-L_n(f)\}]
=
\tr(\widehat\Sigma)-\tr(\Sigma)
+2\E\!\left[\tr(\widehat\Sigma Dg(Y))-\langle\varepsilon,g(Y)\rangle\right].
\]
By Stein's lemma, the integrability assumption lets us write $\E[\langle\varepsilon,g(Y)\rangle]=\E[\tr\{\Sigma Dg(Y)\}]$, which gives the first display.  The diagonal noise-covariance approximation has $\tr\{\diag(\Sigma)-\Sigma\}=0$ and $\widehat\Sigma-\Sigma=-\Sigma_{\mathrm{off}}$.
\end{proof}

\begin{corollary}[Diagonal covariance and finite-candidate comparisons]\label{cor:diagonal_covariance}
Consider fixed linear candidates $f_k(Y)=S_kY$, $k=1,\ldots,K$, and write
\[
G_k:=\sure_n(f_k;\Sigma)-L_n(f_k),
\qquad
B_k:=-\frac{2}{n}\tr(\Sigma_{\mathrm{off}}S_k).
\]
Let
\[
\hat k\in\argmin_{1\leq k\leq K}\sure_n(f_k;\diag(\Sigma)),
\qquad
k^\star\in\argmin_{1\leq k\leq K}L_n(f_k).
\]
Then
\[
\E\!\left[L_n(f_{\hat k})-L_n(f_{k^\star})\right]
\leq
2\E\max_{1\leq k\leq K}|G_k|
+
\left\{\max_{1\leq k\leq K}B_k-\min_{1\leq k\leq K}B_k\right\}.
\]
Equivalently, the additional comparison distortion from using the diagonal
noise-covariance approximation is
\[
\max_{k,\ell}
\left|
\frac{2}{n}\tr\{\Sigma_{\mathrm{off}}(S_k-S_\ell)\}
\right|.
\]
\end{corollary}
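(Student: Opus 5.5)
The plan is to mimic the three-term decomposition in Step~1 of the proof of Proposition~\ref{prop:averaging}, now with the misspecified criterion. Write
\[
\widehat{\sure}_k:=\sure_n(f_k;\diag(\Sigma)),\qquad
\widehat{\sure}_k-L_n(f_k)=G_k+\Delta_k,\qquad
\Delta_k:=\widehat{\sure}_k-\sure_n(f_k;\Sigma).
\]
For a fixed linear map, $Df_k=S_k$ is deterministic. The $\tr(\widehat\Sigma-\Sigma)$ term vanishes for the diagonal approximation. Hence, directly from the defining formula of $\sure_n(f;\widehat\Sigma)$ (the same algebra as in Proposition~\ref{prop:noise_covariance_misspecification}, but without taking expectations), we get
\[
\Delta_k=\frac{2}{n}\tr\{(\diag(\Sigma)-\Sigma)S_k\}=-\frac{2}{n}\tr(\Sigma_{\mathrm{off}}S_k)=B_k .
\]
It is worth checking that the $Dg$ versus $Df$ distinction is harmless here: $\tr\{\Sigma_{\mathrm{off}}I\}=0$, so $\tr(\Sigma_{\mathrm{off}}S_k)=\tr\{\Sigma_{\mathrm{off}}(S_k-I)\}$. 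Thus $B_k$ is a deterministic, nonrandom shift.

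Next, use minimality of $\hat k$: $\widehat{\sure}_{\hat k}\le\widehat{\sure}_{k^\star}$. Then
\[
L_n(f_{\hat k})-L_n(f_{k^\star})
=\bigl[\widehat{\sure}_{\hat k}-G_{\hat k}-B_{\hat k}\bigr]-\bigl[\widehat{\sure}_{k^\star}-G_{k^\star}-B_{k^\star}\bigr]
\le (G_{k^\star}-G_{\hat k})+(B_{k^\star}-B_{\hat k}).
\]
Bound the first difference pointwise by $2\max_k|G_k|$. Bound the second by $\max_kB_k-\min_kB_k$, since $\hat k$ and $k^\star$ are random but the $B_k$ are fixed numbers. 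Taking expectations then gives the displayed inequality. No integrability issue arises, because $\max_k|G_k|$ is integrable for linear maps with Gaussian $Y$; the inequality is trivial if the expectation is infinite.

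For the ``equivalently'' statement, observe that $\max_kB_k-\min_kB_k=\max_{k,\ell}(B_k-B_\ell)=\max_{k,\ell}\bigl|\tfrac{2}{n}\tr\{\Sigma_{\mathrm{off}}(S_k-S_\ell)\}\bigr|$, by linearity of the trace and symmetry in $(k,\ell)$.

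The only step needing care is the identification $\Delta_k=B_k$ pathwise rather than merely in expectation. This is what lets the bias sit outside the expectation of a random index; for nonlinear maps one would instead need $\E\max_k$ of the random trace term. I expect the linear case to go through cleanly as above.
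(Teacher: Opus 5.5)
Your proposal is correct and follows essentially the same route as the paper's proof. Both use the pathwise identity $\sure_n(f_k;\diag(\Sigma))-L_n(f_k)=G_k+B_k$ with deterministic $B_k$, then minimality of $\hat k$ to get $L_n(f_{\hat k})-L_n(f_{k^\star})\leq (G_{k^\star}-G_{\hat k})+(B_{k^\star}-B_{\hat k})$, then the $2\max_k|G_k|$ and $\max_k B_k-\min_k B_k$ bounds before taking expectations; the paper additionally records $\E[\sure_n(f_k;\diag(\Sigma))-L_n(f_k)]=B_k$ via Proposition~\ref{prop:noise_covariance_misspecification}, which your direct pathwise computation makes unnecessary.
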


\begin{proof}[Proof of Corollary~\ref{cor:diagonal_covariance}]
For a fixed linear candidate, $g_k(Y)=(S_k-I)Y$, so
$Dg_k=S_k-I$.  Proposition~\ref{prop:noise_covariance_misspecification}
therefore gives
\[
\E\!\left[\sure_n(f_k;\diag(\Sigma))-L_n(f_k)\right]
=
-\frac{2}{n}\tr\{\Sigma_{\mathrm{off}}(S_k-I)\}
=
B_k,
\]
because $\Sigma_{\mathrm{off}}$ has zero diagonal.  Moreover, for fixed linear
candidates,
\[
\sure_n(f_k;\diag(\Sigma))-\sure_n(f_k;\Sigma)
=
\frac{2}{n}\tr\{(\diag(\Sigma)-\Sigma)S_k\}
=
B_k,
\]
so the diagonal-covariance correction is deterministic, and hence
\[
\sure_n(f_k;\diag(\Sigma))-L_n(f_k)=G_k+B_k.
\]
Since $\hat k$ minimizes $\sure_n(f_k;\diag(\Sigma))$,
\[
L_n(f_{\hat k})-L_n(f_{k^\star})
\leq
\{\sure_n(f_{k^\star};\diag(\Sigma))-L_n(f_{k^\star})\}
-
\{\sure_n(f_{\hat k};\diag(\Sigma))-L_n(f_{\hat k})\}.
\]
Substituting the preceding decomposition gives
\[
L_n(f_{\hat k})-L_n(f_{k^\star})
\leq
(G_{k^\star}-G_{\hat k})+(B_{k^\star}-B_{\hat k}).
\]
The first difference is bounded by $2\max_k |G_k|$, and the second is bounded
by $\max_k B_k-\min_k B_k$.  Taking expectations proves the first display.
The second display follows from
\[
B_k-B_\ell
=
-\frac{2}{n}\tr\{\Sigma_{\mathrm{off}}(S_k-S_\ell)\}.
\]
\end{proof}

\begin{remark}[Interpretation for spatial shrinkage]
Diagonal-covariance \sure{} can be biased in levels and still compare
estimators well if the omitted-covariance correction is nearly common across
candidates.  It is most concerning when candidates differ sharply in cross-unit
smoothing on pairs whose sampling errors are correlated.  For example, among the fixed linear
candidates of Corollary~\ref{cor:diagonal_covariance}, a componentwise rule has
no off-diagonal smoothing contribution, while a spatial smoother can have a
nonzero correction.  Comparisons across those types
are therefore more sensitive to omitted off-diagonal covariance than comparisons
among smoothers with similar cross-unit weighting patterns.
\end{remark}

\subsection{\sure{} Evaluation of the Weighted Average}\label{app:averaging_adaptive_weights}

Proposition~\ref{prop:averaging} is a fixed-weight oracle comparison.  The
estimator reported in the application is instead the \sure{}-weighted average
\[
\tilde f(Y)=\sum_{k=1}^K \hat w_k(Y) f_k(Y),
\]
where the weights are selected by the simplex-constrained quadratic program (QP) of Section~\ref{sec:model_averaging}.  Valid \sure{} evaluation
of $\tilde f$ requires differentiating the full map, including the
dependence of $\hat w(Y)$ on $Y$.  The next proposition records a
sufficient condition under which $\tilde f$ has the derivative
needed for valid \sure{} evaluation.  The continuous-differentiability
hypothesis on the weight map is restrictive for QP solution paths.  See the
remark following the proof.

\begin{proposition}[Regularity of the weighted average]\label{prop:adaptive_averaging_validity}
Let $Y=\theta+\varepsilon$, $\varepsilon\sim\N(0,\Sigma)$, with
$\|\Sigma\|_{\mathrm{op}}\leq\bar\sigma^2<\infty$, and let
$f_k(y)=y+g_k(y)$, $k=1,\ldots,K$, for finite $K$, where each
$g_k:\bbR^n\to\bbR^n$ is continuously
differentiable.  Let $\hat w_k:\bbR^n\to[0,1]$ be continuously
differentiable weights with $\sum_{k=1}^K\hat w_k(y)=1$ for every $y$.
Define
\[
\tilde f(y)=\sum_{k=1}^K \hat w_k(y)f_k(y).
\]
Then
\[
\tilde f(y)=y+\tilde g(y),
\qquad
\tilde g(y)=\sum_{k=1}^K \hat w_k(y)g_k(y),
\]
and $\tilde g$ is continuously differentiable.
Assume also that, for each $k$,
\[
g_k(Y)\in L^2(P_Y),
\qquad
\E\!\left[\sum_{i,j}|\Sigma_{ij}\partial_j g_{k,i}(Y)|\right]<\infty .
\]
Finally, assume the weight maps are uniformly Lipschitz in $y$: there is a
finite constant $L_w$ such that, for every $k$,
\[
|\hat w_k(y)-\hat w_k(y')|
\leq
L_w\|y-y'\|_2,
\qquad y,y'\in\bbR^n .
\]
Then \sure{} is unbiased for the risk of $\tilde f$:
\[
\E\{\sure_n(\tilde f;\Sigma)(Y)\}
=
\E\{L_n(\tilde f)\}.
\]
\end{proposition}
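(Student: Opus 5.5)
The plan is to reduce the claim to Stein's lemma for the adjustment $\tilde g$, i.e.\ to the identity $\E[\langle\varepsilon,\tilde g(Y)\rangle]=\E[\tr\{\Sigma D\tilde g(Y)\}]$. Once this identity is in hand, Lemma~\ref{lem:representation} applied to $\tilde f=y+\tilde g$ gives
\[
\sure_n(\tilde f)-L_n(\tilde f)=\frac1n\{\tr(\Sigma)-\|\varepsilon\|_2^2\}+\frac2n\Psi(\tilde g).
\]
Both terms have mean zero, which is the claim.

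\textbf{Algebraic step.} Since $\sum_k\hat w_k\equiv1$, we have $\tilde f(y)=y+\sum_k\hat w_k(y)g_k(y)$. The product rule then gives
\[
D\tilde g(y)=\sum_k \hat w_k(y)\,Dg_k(y)+\sum_k g_k(y)\,\nabla\hat w_k(y)^\top .
\]
This is continuous because $\hat w_k$ and $g_k$ are $C^1$. The Lipschitz hypothesis gives $\|\nabla\hat w_k(y)\|_2\le L_w$ for every $y$.

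\textbf{Integrability.} Lemma~\ref{lem:representation} needs $\tilde g(Y)\in L^2(P_Y)$ and $\E\sum_{i,j}|\Sigma_{ij}\partial_j\tilde g_i(Y)|<\infty$. The first follows from $0\le\hat w_k\le1$, which gives $\|\tilde g\|_2\le\sum_k\|g_k\|_2$. For the second, write $\sum_{i,j}|\Sigma_{ij}\partial_j \tilde g_i|\le \sum_k \hat w_k\sum_{i,j}|\Sigma_{ij}\partial_j g_{k,i}| + \sum_k\sum_{i,j}|\Sigma_{ij}|\,|g_{k,i}|\,|\partial_j\hat w_k|$.
\begin{itemize}
\item The first piece is integrable by hypothesis.
\item The second piece is bounded by $\sum_k \|\,|\Sigma|\,\|_{\mathrm{op}}\,\|g_k\|_2\,\|\nabla\hat w_k\|_2$, where $|\Sigma|$ is the entrywise absolute value. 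This is at most $L_w\|\,|\Sigma|\,\|_{\mathrm{op}}\sum_k\|g_k(Y)\|_2$, which is integrable since $g_k(Y)\in L^2$.
\end{itemize}
The fixed-dimension quantity $\|\,|\Sigma|\,\|_{\mathrm{op}}\le\|\,|\Sigma|\,\|_F=\|\Sigma\|_F\le\sqrt n\,\bar\sigma^2$ is finite, which is all that is needed here: for each $n$ this is a finite bound, not a uniform rate.

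\textbf{Conclusion.} Stein's lemma for weakly (here classically $C^1$) differentiable maps then applies coordinatewise to $\tilde g$, exactly as in the proof of Lemma~\ref{lem:representation}. This yields $\E[\Psi(\tilde g)]=0$. Taking expectations in the representation, and using $\E\|\varepsilon\|_2^2=\tr(\Sigma)$, gives $\E[\sure_n(\tilde f)]=\E[L_n(\tilde f)]$.

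\textbf{Main obstacle.} The delicate point is the integrability of the weight-derivative term $\sum_k g_k\nabla\hat w_k^\top$, since the weights introduce a Jacobian contribution not controlled by the candidate-level hypotheses. The uniform Lipschitz bound on $\hat w$ is what converts this into an $L^1$ bound, via Cauchy--Schwarz against $\|g_k\|_2$. I would check carefully that the entrywise $|\Sigma_{ij}|$ weighting is handled by a finite constant for fixed $n$, rather than by $\|\Sigma\|_{\mathrm{op}}$ alone.
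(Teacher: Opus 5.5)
Your proposal is correct and follows the paper's proof essentially step for step. Both arguments write $\tilde f=y+\sum_k\hat w_k g_k$ and differentiate by the product rule. Both get $L^2$ membership from $0\le\hat w_k\le1$, check the two integrability conditions, and then invoke the Stein/\sure{} unbiasedness identity. The only difference is in the weight-derivative term: you bound it through $\|\,|\Sigma|\,\|_{\mathrm{op}}\le\sqrt n\,\bar\sigma^2$ and $\|\nabla\hat w_k\|_2\le L_w$, which is slightly sharper than the paper's entrywise bounds $|\Sigma_{ij}|\le\bar\sigma^2$ and $|\partial_j\hat w_k|\le L_w$, though both arguments only need finiteness at fixed $n$.
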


\begin{proof}[Proof of Proposition~\ref{prop:adaptive_averaging_validity}]
Since the weights sum to one,
\[
\tilde f(y)
=
\sum_{k=1}^K \hat w_k(y)\{y+g_k(y)\}
=
y+\sum_{k=1}^K \hat w_k(y)g_k(y).
\]
Thus $\tilde g(y)=\sum_k\hat w_k(y)g_k(y)$.  Here $D\hat w_k(y)$ is the
derivative of the scalar weight map $\hat w_k$.  Each $g_k$ is continuously
differentiable and each $\hat w_k$ is continuously differentiable, so the
product rule gives, for every $y$,
\[
D\{\hat w_k(y)g_k(y)\}
=
\hat w_k(y)Dg_k(y)+g_k(y)D\hat w_k(y).
\]
Therefore
\[
D\tilde g(y)
=
\sum_{k=1}^K
\left\{
\hat w_k(y)Dg_k(y)
+g_k(y)D\hat w_k(y)
\right\}.
\]
Since $0\leq\hat w_k\leq1$ and $K<\infty$, Minkowski's inequality gives
\[
\left(\E[\|\tilde g(Y)\|_2^2]\right)^{1/2}
\leq
\sum_{k=1}^K
\left(\E[\|g_k(Y)\|_2^2]\right)^{1/2}
<\infty .
\]
Thus $\tilde g(Y)\in L^2(P_Y)$.

For the derivative integrability condition, the formula for $D\tilde g$
and the triangle inequality give
\[
\begin{aligned}
\sum_{i,j}|\Sigma_{ij}\partial_j\tilde g_i(Y)|
&\leq
\sum_{k=1}^K
\sum_{i,j}|\Sigma_{ij}\hat w_k(Y)\partial_j g_{k,i}(Y)|  \\
&\quad+
\sum_{k=1}^K
\sum_{i,j}|\Sigma_{ij}|
|g_{k,i}(Y)\partial_j\hat w_k(Y)|.
\end{aligned}
\]
The first term is bounded by
\[
\sum_{k=1}^K
\sum_{i,j}|\Sigma_{ij}\partial_j g_{k,i}(Y)|,
\]
because $0\leq \hat w_k\leq1$.  Since $\hat w_k$ is continuously
differentiable and uniformly Lipschitz with constant $L_w$,
$|\partial_j\hat w_k(y)|\leq L_w$ for every $j$ and $y$.  Therefore the
second term is bounded by
\[
L_w
\sum_{k=1}^K
\sum_{i,j}|\Sigma_{ij}|\,|g_{k,i}(Y)|.
\]
The first bound has finite expectation by the assumed derivative-integrability condition on
each $g_k$.  For the second bound, $\|\Sigma\|_{\mathrm{op}}\leq
\bar\sigma^2$ implies $|\Sigma_{ij}|\leq\bar\sigma^2$, so
\[
L_w
\sum_{k=1}^K
\sum_{i,j}|\Sigma_{ij}|\,|g_{k,i}(Y)|
\leq
L_w\bar\sigma^2 n
\sum_{k=1}^K
\sum_{i=1}^n |g_{k,i}(Y)|.
\]
Taking expectations in the preceding bound gives a finite bound: $K$
and $n$ are finite, and $g_k(Y)\in L^2(P_Y)$ implies
$g_{k,i}(Y)\in L^1(P_Y)$ for each coordinate $i$.  Thus $\tilde g$
satisfies the same derivative-integrability condition.  Since
$\tilde f(y)=y+\tilde g(y)$, the \sure{} unbiasedness identity of
Section~\ref{sec:sure} applies under the stated integrability conditions,
which gives the displayed unbiasedness statement.
\end{proof}

\begin{remark}[Smoothness of the QP weight map]\label{rem:qp_weight_smoothness}
Proposition~\ref{prop:adaptive_averaging_validity} establishes \sure{}
unbiasedness for the risk of $\tilde f$ under continuous differentiability
and a uniform Lipschitz bound on the selected weights.  The selected
weights can fail to be differentiable in $Y$ where the set of candidates
receiving positive weight changes.  The estimator evaluated by \sure{} is
the \sure{}-weighted average $\tilde f(Y)=f_{\hat w(Y)}(Y)$, where
$f_w(Y)=\sum_{k=1}^K w_k f_k(Y)$.  When different weights $w$ on the
simplex give different values of $f_w(Y)$, the selected weights
$\hat w(Y)$ are unique, and so is $\tilde f(Y)$.  The weights can fail to
be unique only when two different weight vectors give the same value of
$f_w(Y)$, and all minimizing weights then give that same value
(Section~\ref{sec:averaging_guarantee}).  The \sure{}-weighted average
$\tilde f(Y)$ is therefore insensitive to which minimizing weights are
selected.
\end{remark}

\begin{remark}[Per-candidate vs.\ within-class bounds]
When each $f_k = f_{\hat\gamma_k(Y)}^{(k)}$ is trained from a family satisfying Assumption~\ref{asm:regularity} with parameters $(\nu_n^{(k)}, \beta_k, d_k)$, the within-class uniform result gives a sufficient but typically conservative bound on the centered \sure{}-error variation
\[
\bar E_k
:=
\{\sure_n(f_k)-L_n(f_k)\}
-
n^{-1}\{\tr(\Sigma)-\|\varepsilon\|_2^2\}
=
(2/n)\Psi(g_k)
\]
with scale $\nu_n^{(k)}\,\max\{d_k,1\}^{4+\beta_k}/n$, where the $\max\{d_k,1\}^{4+\beta_k}$ factor is the covering cost for the $d_k$-dimensional parameter space, including the singleton case. The omitted centering term is common across candidates and cancels in the oracle and averaging comparisons. The per-candidate condition $\big\|\|g_k\|_W\big\|_{L^p} \leq \nu_n^{(k)}\,p^{\beta_k}$ evaluates regularity at the single composite function $y \mapsto f_{\hat\gamma_k(y)}^{(k)}(y)$ (including implicit differentiation through $\hat\gamma_k$), giving moment envelope $M_{P_Y,0}^{\beta_k}(g_k) \leq 2\nu_n^{(k)}$ and sub-Weibull scale $\nu_n^{(k)}/n$ for the centered increment $(2/n)\Psi(g_k)$ without the covering cost. The within-class result is the route used here for the within-class oracle inequality (comparing $f_{\hat\gamma_k}$ to $f_{\gamma_k^*}$), but the averaging step does not require this uniform control.
\end{remark}

\section{Regularity Verification for Estimators}\label{app:regularity_verification}

This appendix gives primitive sufficient conditions under which the estimator forms used in the empirical library satisfy the per-candidate regularity condition used by Proposition~\ref{prop:averaging} (Assumption~\ref{asm:averaging_regularity}).  The verification is procedural: first record training-stability and closure tools, then check the nontrivial estimator forms, and finally verify the fixed value-similarity building block used in the Cook County comparison.  For the value-similarity building block the verification holds under a bounded row-sum condition on the fixed geographic factor of the kernel, with envelope scale $\nu_n=O(\sqrt n)$ matching the other building blocks.

\subsection{Trained-Parameter Regularity from Training Stability}\label{app:learned_parameter_regularity}

We begin with trained candidates: primitive conditions on the base map and the training rule under which the composite $y\mapsto f_{\hat\gamma(y)}(y)$ satisfies Assumption~\ref{asm:averaging_regularity}.  This does not replace the within-class uniform bounds needed for Theorem~\ref{thm:uniform_conc}.  Those require separate control of $g_\gamma-g_{\gamma'}$ uniformly over $\gamma,\gamma'\in\Gamma$.  Write a trained candidate as
\[
f(y)=f_{\hat\gamma(y)}(y),
\qquad
g(y)=f_{\hat\gamma(y)}(y)-y=g_{\hat\gamma(y)}(y),
\]
where $g_\gamma(y)=f_\gamma(y)-y$ is the base shrinkage adjustment.  The target is the moment bound in Assumption~\ref{asm:averaging_regularity},
\begin{equation}
\left(\E[\|g(Y)\|_W^p]\right)^{1/p}\leq \nu_n p^\beta,
\qquad p\geq2.
\label{eq:per_estimator_regularity}
\end{equation}
Lemma~\ref{lem:learned_parameter_regularity} separates this target into two transparent pieces: regularity of the base map $(y,\gamma)\mapsto g_\gamma(y)$ along the trained path, and stability of the training rule $Y\mapsto\hat\gamma(Y)$ as measured by $D_y\hat\gamma(Y)$.
Thus, for any trained candidate in the empirical library, verification can be done in two steps: first verify the base bounds along the trained path $\{(Y,\hat\gamma(Y))\}$, then verify either the exact-training sensitivity bound in Proposition~\ref{prop:learned_parameter_ift} or the finite-step sensitivity bound in Proposition~\ref{prop:learned_parameter_unrolling}.

Throughout this subsection, Jacobians use output coordinates as rows:
\[
D_y g_\gamma\in\bbR^{n\times n},\qquad
D_\gamma g_\gamma\in\bbR^{n\times d},\qquad
D_y\hat\gamma\in\bbR^{d\times n}.
\]
Here $D_y g_{\hat\gamma(y)}(y)$ denotes the partial derivative of $(y,\gamma)\mapsto g_\gamma(y)$ with respect to $y$, holding $\gamma$ fixed and then evaluating at $\gamma=\hat\gamma(y)$.  The full derivative of the trained composite $y\mapsto g_{\hat\gamma(y)}(y)$ is denoted by $Dg(y)$.

\begin{lemma}[Trained-parameter regularity from sensitivity]\label{lem:learned_parameter_regularity}
Let $\Gamma\subseteq\bbR^d$.  Suppose $\hat\gamma:\bbR^n\to\Gamma$ is continuously differentiable and $(y,\gamma)\mapsto g_\gamma(y)$ is continuously differentiable on an open neighborhood of the trained path $\{(y,\hat\gamma(y)):y\in\bbR^n\}$.  Suppose there are finite constants $K_0,K_1,K_2$ and exponents $\beta_0,\beta_1,\beta_2\geq0$ such that, for every $p\geq2$,
\begin{align}
\left\|
  \|g_{\hat\gamma(Y)}(Y)\|_2
  +\|D_y g_{\hat\gamma(Y)}(Y)\|_F
\right\|_{L^p}
&\leq K_0 p^{\beta_0}, \label{eq:learned_base_y_bound}\\
\left(\E[\|D_\gamma g_{\hat\gamma(Y)}(Y)\|_{\mathrm{op}}^{2p}]\right)^{1/(2p)}
&\leq K_1 p^{\beta_1}, \label{eq:learned_base_gamma_bound}\\
\left(\E[\|D_y\hat\gamma(Y)\|_F^{2p}]\right)^{1/(2p)}
&\leq K_2 p^{\beta_2}. \label{eq:learned_training_sensitivity_bound}
\end{align}
Then the trained adjustment $g(y)=g_{\hat\gamma(y)}(y)$ satisfies
\[
\left(\E[\|g(Y)\|_W^p]\right)^{1/p}
\leq K_0p^{\beta_0}+K_1K_2p^{\beta_1+\beta_2}
\leq (K_0+K_1K_2)p^{\max\{\beta_0,\beta_1+\beta_2\}}.
\]
Thus the per-candidate moment condition \eqref{eq:per_estimator_regularity} holds with $\nu_n=K_0+K_1K_2$ and $\beta=\max\{\beta_0,\beta_1+\beta_2\}$; since the differentiability hypotheses make $y\mapsto g_{\hat\gamma(y)}(y)$ continuously differentiable by the chain rule, the trained candidate satisfies Assumption~\ref{asm:averaging_regularity}.
\end{lemma}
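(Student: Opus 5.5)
The argument is a chain-rule decomposition followed by Minkowski and H\"older in $L^p$. No Gaussian structure is needed beyond the moment hypotheses themselves.

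\emph{Step 1: chain rule.} Since $(y,\gamma)\mapsto g_\gamma(y)$ is continuously differentiable near the trained path and $\hat\gamma$ is continuously differentiable, the composite $g(y)=g_{\hat\gamma(y)}(y)$ is continuously differentiable. Its derivative is
\[
Dg(y)=D_y g_{\hat\gamma(y)}(y)+D_\gamma g_{\hat\gamma(y)}(y)\,D_y\hat\gamma(y),
\]
which matches the chain-rule display of Section~\ref{sec:learned_params}. This already gives the differentiability requirement of Assumption~\ref{asm:averaging_regularity}.

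\emph{Step 2: pointwise bound.} I will use the triangle inequality for the Frobenius norm together with the mixed bound $\|AB\|_F\le\|A\|_{\mathrm{op}}\|B\|_F$. This yields, for every $y$,
\[
\|g(y)\|_W\le \|g_{\hat\gamma(y)}(y)\|_2+\|D_yg_{\hat\gamma(y)}(y)\|_F+\|D_\gamma g_{\hat\gamma(y)}(y)\|_{\mathrm{op}}\|D_y\hat\gamma(y)\|_F .
\]
The mixed bound is the step to check carefully. It holds column by column: $\|AB\|_F^2=\sum_j\|Ab_j\|_2^2\le\|A\|_{\mathrm{op}}^2\|B\|_F^2$.

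\emph{Step 3: moments.} I take $L^p$ norms of the pointwise bound and apply Minkowski. The first two terms are bounded by $K_0p^{\beta_0}$ through \eqref{eq:learned_base_y_bound}. For the product term, Cauchy--Schwarz gives
\[
\big\|\,\|D_\gamma g\|_{\mathrm{op}}\|D_y\hat\gamma\|_F\big\|_{L^p}\le \big(\E\|D_\gamma g\|_{\mathrm{op}}^{2p}\big)^{1/(2p)}\big(\E\|D_y\hat\gamma\|_F^{2p}\big)^{1/(2p)} .
\]
This is the reason the hypotheses are stated with $2p$ moments. By \eqref{eq:learned_base_gamma_bound} and \eqref{eq:learned_training_sensitivity_bound}, the right-hand side is at most $K_1K_2p^{\beta_1+\beta_2}$.

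\emph{Step 4: combine.} Since $p\ge2\ge1$, each of $p^{\beta_0}$ and $p^{\beta_1+\beta_2}$ is at most $p^{\max\{\beta_0,\beta_1+\beta_2\}}$. This gives the second inequality of the lemma, and hence \eqref{eq:per_estimator_regularity} with $\nu_n=K_0+K_1K_2$.

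There is no real obstacle here. The only places needing care are the operator/Frobenius product inequality in Step~2 and the Cauchy--Schwarz split of the $L^p$ norm of a product into $2p$-moments in Step~3.
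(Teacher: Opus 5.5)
Your proposal is correct and follows essentially the same route as the paper's proof. The steps line up one for one: the chain-rule expansion of $Dg$, the triangle inequality with $\|AB\|_F\leq\|A\|_{\mathrm{op}}\|B\|_F$, Minkowski in $L^p$, H\"older with exponents $2$ and $2$ to reduce the product term to the $2p$-moment hypotheses, and finally absorbing both powers into $p^{\max\{\beta_0,\beta_1+\beta_2\}}$ using $p\geq2$.
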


\begin{proof}[Proof of Lemma~\ref{lem:learned_parameter_regularity}]
By the chain rule,
\[
Dg(y)
=
D_y g_{\hat\gamma(y)}(y)
+D_\gamma g_{\hat\gamma(y)}(y)D_y\hat\gamma(y).
\]
By the definition of the $W$-norm for the composite map,
\[
\|g(Y)\|_W
=
\|g_{\hat\gamma(Y)}(Y)\|_2+\|Dg(Y)\|_F .
\]
Combining this identity with the chain-rule display, the triangle inequality,
and $\|AB\|_F\leq\|A\|_{\mathrm{op}}\|B\|_F$ gives
\[
\|g(Y)\|_W
\leq
\|g_{\hat\gamma(Y)}(Y)\|_2
+\|D_y g_{\hat\gamma(Y)}(Y)\|_F
+\|D_\gamma g_{\hat\gamma(Y)}(Y)\|_{\mathrm{op}}\|D_y\hat\gamma(Y)\|_F.
\]
Taking $L^p$ norms and applying Minkowski's inequality gives
\begin{align*}
\left(\E[\|g(Y)\|_W^p]\right)^{1/p}
&\leq
\left\|
  \|g_{\hat\gamma(Y)}(Y)\|_2
  +\|D_y g_{\hat\gamma(Y)}(Y)\|_F
\right\|_{L^p}\\
&\quad+
\left\|
\|D_\gamma g_{\hat\gamma(Y)}(Y)\|_{\mathrm{op}}\|D_y\hat\gamma(Y)\|_F
\right\|_{L^p}.
\end{align*}
The first term is bounded by \eqref{eq:learned_base_y_bound}.  H\"older's inequality with conjugate exponents $2$ and $2$ bounds the product term by
\[
\left(\E[\|D_\gamma g_{\hat\gamma(Y)}(Y)\|_{\mathrm{op}}^{2p}]\right)^{1/(2p)}
\left(\E[\|D_y\hat\gamma(Y)\|_F^{2p}]\right)^{1/(2p)}
\leq K_1K_2p^{\beta_1+\beta_2}.
\]
Combining the two bounds gives the first displayed inequality in the lemma.
Since $p\geq2$ and the exponents are nonnegative, the second displayed
inequality follows.  This proves the stated moment condition.
\end{proof}

\begin{remark}
A Frobenius-norm bound on $D_\gamma g_{\hat\gamma(Y)}(Y)$ is also sufficient for \eqref{eq:learned_base_gamma_bound}, since $\|D_\gamma g_{\hat\gamma(Y)}(Y)\|_{\mathrm{op}}\leq \|D_\gamma g_{\hat\gamma(Y)}(Y)\|_F$.
\end{remark}

\begin{proposition}[Exact smooth training]\label{prop:learned_parameter_ift}
Let $\Gamma\subseteq\bbR^d$, and let $\mathcal L(\gamma,y)$ be the sample objective used to choose the parameter $\gamma$ from the data vector $y$.  Suppose $\hat\gamma:\bbR^n\to\Gamma$ selects interior first-order solutions in the following local sense: for every $y_0\in\bbR^n$, $\hat\gamma(y_0)\in\operatorname{int}(\Gamma)$, $\mathcal L$ is twice continuously differentiable on an open neighborhood of $(\hat\gamma(y_0),y_0)$ contained in $\operatorname{int}(\Gamma)\times\bbR^n$, and there are neighborhoods $G_{y_0}$ of $\hat\gamma(y_0)$ and $V_{y_0}$ of $y_0$ such that, for every $y\in V_{y_0}$, $\hat\gamma(y)$ is the unique $\gamma\in G_{y_0}$ satisfying the first-order condition
\[
\nabla_\gamma\mathcal L(\gamma,y)=0.
\]
Define
\[
H(y):=\nabla_{\gamma\gamma}^2\mathcal L(\hat\gamma(y),y)\in\bbR^{d\times d},
\qquad
R(y):=D_y[\nabla_\gamma\mathcal L](\hat\gamma(y),y)\in\bbR^{d\times n}.
\]
Suppose $H(y)$ is invertible on $\bbR^n$, $\|H(y)^{-1}\|_{\mathrm{op}}\leq K_H$ for a finite constant $K_H$, and there are finite constants $K_R$ and $\beta_2\geq0$ such that, for every moment order $q\geq2$,
\[
\left(\E[\|R(Y)\|_F^q]\right)^{1/q}\leq K_Rq^{\beta_2}.
\]
Then
\[
D_y\hat\gamma(y)=-H(y)^{-1}R(y)
\]
and
\[
\left(\E[\|D_y\hat\gamma(Y)\|_F^q]\right)^{1/q}\leq K_HK_Rq^{\beta_2},\qquad q\geq2.
\]
If, in addition, $(y,\gamma)\mapsto g_\gamma(y)$ is continuously differentiable on an open neighborhood of $\{(y,\hat\gamma(y)):y\in\bbR^n\}$ and there exist finite constants $K_0,K_1$ and exponents $\beta_0,\beta_1\geq0$ such that the base-map bounds \eqref{eq:learned_base_y_bound}--\eqref{eq:learned_base_gamma_bound} hold along the selected map $y\mapsto\hat\gamma(y)$, then $g(y)=g_{\hat\gamma(y)}(y)$ satisfies the per-candidate moment condition \eqref{eq:per_estimator_regularity} with $\nu_n=K_0+2^{\beta_2}K_1K_HK_R$ and $\beta=\max\{\beta_0,\beta_1+\beta_2\}$.
Moreover, a sufficient curvature condition for the inverse-Hessian bound is $H(y)\succeq mI_d$ for all $y$ and some $m>0$, in which case $K_H=m^{-1}$.
\end{proposition}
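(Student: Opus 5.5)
The argument has three parts. First I derive the sensitivity formula from the implicit function theorem. Then I convert it into moment bounds. Finally I feed those bounds into Lemma~\ref{lem:learned_parameter_regularity}.

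\emph{Differentiability and the formula.} Fix $y_0$ and set $\Phi(\gamma,y):=\nabla_\gamma\mathcal L(\gamma,y)$. By assumption $\Phi$ is $C^1$ near $(\hat\gamma(y_0),y_0)$ and $\Phi(\hat\gamma(y_0),y_0)=0$. Its $\gamma$-Jacobian $H(y_0)$ is invertible. The implicit function theorem then gives a $C^1$ map $\phi$ on a neighborhood $V'\subseteq V_{y_0}$ with $\phi(y_0)=\hat\gamma(y_0)$ and $\Phi(\phi(y),y)=0$, and $\phi(y)$ stays inside $G_{y_0}$ after shrinking $V'$ by continuity.

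The local uniqueness hypothesis forces $\hat\gamma=\phi$ on $V'$. Hence $\hat\gamma$ is $C^1$ near $y_0$. Differentiating $\Phi(\hat\gamma(y),y)=0$ gives $H(y)D_y\hat\gamma(y)+R(y)=0$, so
\[
D_y\hat\gamma(y)=-H(y)^{-1}R(y).
\]
Since $y_0$ was arbitrary, $\hat\gamma$ is continuously differentiable on all of $\bbR^n$. This is the one step that needs care: the argument must use the uniqueness neighborhoods to identify the selected map with the implicit-function branch, rather than assuming $\hat\gamma$ is continuous in advance.

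\emph{Moment bound.} Using $\|AB\|_F\leq\|A\|_{\mathrm{op}}\|B\|_F$, we get $\|D_y\hat\gamma(Y)\|_F\leq K_H\|R(Y)\|_F$ pointwise. Taking $L^q$ norms yields $K_HK_Rq^{\beta_2}$. If $H\succeq mI_d$, then every eigenvalue of the symmetric matrix $H$ is at least $m$, so $\|H^{-1}\|_{\mathrm{op}}\leq m^{-1}$. This gives the final sentence of the statement.

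\emph{Regularity.} Apply Lemma~\ref{lem:learned_parameter_regularity}, which requires \eqref{eq:learned_training_sensitivity_bound} at moment order $2p$. Taking $q=2p\geq 4$ gives $(\E\|D_y\hat\gamma\|_F^{2p})^{1/(2p)}\leq K_HK_R(2p)^{\beta_2}=2^{\beta_2}K_HK_R\,p^{\beta_2}$. So the lemma applies with $K_2=2^{\beta_2}K_HK_R$. Its conclusion gives $\nu_n=K_0+2^{\beta_2}K_1K_HK_R$ and $\beta=\max\{\beta_0,\beta_1+\beta_2\}$.

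The remaining differentiability hypotheses of the lemma hold: $\hat\gamma$ is $C^1$ by the first step, and $g$ is $C^1$ near the path by assumption. The composite is therefore $C^1$, and Assumption~\ref{asm:averaging_regularity} holds.
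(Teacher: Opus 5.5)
Your proposal is correct and follows the paper's proof essentially step for step: the implicit function theorem applied to $\nabla_\gamma\mathcal L$, identification of $\hat\gamma$ with the implicit branch through the local uniqueness neighborhoods, the pointwise bound $\|D_y\hat\gamma\|_F\leq K_H\|R\|_F$, and the choice $q=2p$, which gives $K_2=2^{\beta_2}K_HK_R$ in Lemma~\ref{lem:learned_parameter_regularity}. You also spell out the eigenvalue argument for the curvature condition $H\succeq mI_d$, which the paper's proof leaves to the remark that follows it.
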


\begin{proof}[Proof of Proposition~\ref{prop:learned_parameter_ift}]
Let $F(\gamma,y):=\nabla_\gamma\mathcal L(\gamma,y)$.  Fix $y_0\in\bbR^n$.  Since $F$ is continuously differentiable near $(\hat\gamma(y_0),y_0)$ and $D_\gamma F(\hat\gamma(y_0),y_0)=H(y_0)$ is invertible, the implicit function theorem gives a neighborhood $\widetilde V_{y_0}$ of $y_0$ and a continuously differentiable map $\widetilde\gamma_{y_0}:\widetilde V_{y_0}\to\bbR^d$ such that $\widetilde\gamma_{y_0}(y_0)=\hat\gamma(y_0)$ and
\[
F(\widetilde\gamma_{y_0}(y),y)=0,\qquad y\in\widetilde V_{y_0}.
\]
After shrinking $\widetilde V_{y_0}$ if necessary, $\widetilde\gamma_{y_0}(y)$ lies in $G_{y_0}$ for every $y\in\widetilde V_{y_0}$.  The local uniqueness assumption therefore implies $\widetilde\gamma_{y_0}(y)=\hat\gamma(y)$ on $\widetilde V_{y_0}$.  Because $y_0$ was arbitrary, $\hat\gamma$ is continuously differentiable on $\bbR^n$.
Differentiating the first-order condition $F(\hat\gamma(y),y)=0$ in $y$ by the chain rule separates the change through $\hat\gamma(y)$ from the direct change in $y$:
\[
D_\gamma F(\hat\gamma(y),y)D_y\hat\gamma(y)
+D_yF(\hat\gamma(y),y)
=0.
\]
Using the definitions of $H(y)$ and $R(y)$, this derivative identity becomes
\[
H(y)D_y\hat\gamma(y)+R(y)=0.
\]
Since $H(y)$ is invertible by assumption, we can left-multiply by $H(y)^{-1}$ and solve for the derivative of the trained parameter:
\[
D_y\hat\gamma(y)=-H(y)^{-1}R(y).
\]
Taking Frobenius norms in the last display and using $\|AB\|_F\leq \|A\|_{\mathrm{op}}\|B\|_F$,
\[
\|D_y\hat\gamma(y)\|_F
\leq
\|H(y)^{-1}\|_{\mathrm{op}}\|R(y)\|_F
\leq
K_H\|R(y)\|_F.
\]
The assumed moment bound on $R(Y)$ gives
\[
\left(\E[\|D_y\hat\gamma(Y)\|_F^q]\right)^{1/q}\leq K_HK_Rq^{\beta_2}.
\]
To connect this sensitivity bound to Lemma~\ref{lem:learned_parameter_regularity}, take $q=2p$.  Then, for every $p\geq2$,
\[
\left(\E[\|D_y\hat\gamma(Y)\|_F^{2p}]\right)^{1/(2p)}
\leq K_HK_R(2p)^{\beta_2}
=2^{\beta_2}K_HK_Rp^{\beta_2}.
\]
Thus the training-sensitivity condition \eqref{eq:learned_training_sensitivity_bound} in Lemma~\ref{lem:learned_parameter_regularity} holds with $K_2=2^{\beta_2}K_HK_R$.  Under the additional differentiability and base-map assumptions in the statement, the base-map bounds \eqref{eq:learned_base_y_bound}--\eqref{eq:learned_base_gamma_bound} and the continuous differentiability of $(y,\gamma)\mapsto g_\gamma(y)$, the remaining hypotheses of Lemma~\ref{lem:learned_parameter_regularity}, also hold.  Applying Lemma~\ref{lem:learned_parameter_regularity} gives the per-candidate moment condition \eqref{eq:per_estimator_regularity} with $\nu_n=K_0+2^{\beta_2}K_1K_HK_R$ and $\beta=\max\{\beta_0,\beta_1+\beta_2\}$.\end{proof}

\begin{remark}[Convexity as a sufficient condition]
Proposition~\ref{prop:learned_parameter_ift} verifies training sensitivity from local information about the selected parameter map.  For each data vector $y$, the selected value $\hat\gamma(y)$ must be an interior solution of $\nabla_\gamma\mathcal L(\gamma,y)=0$ that is locally unique, with invertible Hessian $H(y)$ and the stated moment control.  Strong local convexity is a convenient way to check the local uniqueness and inverse-Hessian parts of these hypotheses: if, in a neighborhood of each data vector $y$, the map $\gamma\mapsto\mathcal L(\gamma,y)$ has a unique interior minimizer and satisfies $H(y)=\nabla_{\gamma\gamma}^2\mathcal L(\hat\gamma(y),y)\succeq mI_d$ for some $m>0$, then $\|H(y)^{-1}\|_{\mathrm{op}}\leq m^{-1}$, so the inverse-Hessian condition in Proposition~\ref{prop:learned_parameter_ift} holds with $K_H=m^{-1}$.  More generally, convexity can be replaced by direct verification of local uniqueness, nonsingularity of $H(y)$, and enough moment control on $\|H(Y)^{-1}R(Y)\|_F$ to verify the training-sensitivity condition \eqref{eq:learned_training_sensitivity_bound}.
\end{remark}

Proposition~\ref{prop:learned_parameter_ift} verifies the training-sensitivity requirement \eqref{eq:learned_training_sensitivity_bound} when the trained parameter is locally characterized by the first-order condition $\nabla_\gamma\mathcal L(\hat\gamma(y),y)=0$.  Proposition~\ref{prop:learned_parameter_unrolling} treats the case in which training stops after a fixed number of differentiable iterative updates.  It writes the optimization state as $s_t(y)=(\gamma_t(y),a_t(y))$ and sets $\hat\gamma(y)=\gamma_T(y)$, so the proof can track how a perturbation of $y$ propagates through the finite path $s_0(y),\ldots,s_T(y)$.  Fixed runs of stochastic-gradient and related randomized methods fit this formulation after conditioning on the random seed, the random subsets of observations used to form the stepwise objectives, and any other algorithmic randomness.  Conditional on those draws, the realized optimization path is deterministic in $y$.

\begin{proposition}[Finite-step training]\label{prop:learned_parameter_unrolling}
Let $\Gamma\subseteq\bbR^d$.  Fix a realization of any algorithmic randomness used by the iterative optimization procedure, such as random seeds or sampled data subsets used to form stochastic-gradient updates; all maps and constants below are conditional on that realization, and expectations are over $Y$.  Let
\[
s_t(y)=(\gamma_t(y),a_t(y))\in\bbR^d\times\bbR^{m-d}
\]
denote the optimization state after step $t$, for some integer $m\geq d$, where $\gamma_t(y)$ is the parameter vector and $a_t(y)$ collects any auxiliary optimization quantities, such as momentum terms or Adam first and second moments.  Suppose a fixed, non-data-adaptive number $1\leq T<\infty$ of updates satisfies
\[
s_{t+1}(y)=U_t(s_t(y),y),\qquad t=0,\ldots,T-1,
\]
and set $\hat\gamma(y)=\gamma_T(y)\in\Gamma$.  Let $s_0:\bbR^n\to\bbR^m$ be continuously differentiable, and suppose that, for each $t$, the update map $U_t:\bbR^m\times\bbR^n\to\bbR^m$ is continuously differentiable on an open neighborhood of the state-data path $\{(s_t(y),y):y\in\bbR^n\}$.  Define the data-derivative matrices
\[
\begin{aligned}
J_t(y)&:=D_y s_t(y)\in\bbR^{m\times n},\\
A_t(y)&:=D_sU_t(s,y)\big|_{s=s_t(y)}\in\bbR^{m\times m},\\
B_t(y)&:=D_yU_t(s,y)\big|_{s=s_t(y)}\in\bbR^{m\times n}.
\end{aligned}
\]
Here $D_s$ and $D_y$ denote partial derivatives of $U_t$; in the definition of $B_t(y)$, the state argument is held fixed and then evaluated at $s_t(y)$.  Assume that $\|A_t(y)\|_{\mathrm{op}}\leq M$ for all $t=0,\ldots,T-1$ and all $y\in\bbR^n$, for a finite constant $M\geq0$, and that there are finite constants $K_U$ and $\beta_2\geq0$ such that, for every $q\geq2$,
\[
\left(\E[\|J_0(Y)\|_F^q]\right)^{1/q}
+\sum_{t=0}^{T-1}
\left(\E[\|B_t(Y)\|_F^q]\right)^{1/q}
\leq K_Uq^{\beta_2}.
\]
Then, with
\[
\kappa_{T,M}:=M^T+\sum_{r=0}^{T-1}M^{T-1-r},
\]
where $M^0=1$ by convention, the training rule satisfies
\[
\left(\E[\|D_y\hat\gamma(Y)\|_F^q]\right)^{1/q}
\leq
\kappa_{T,M}K_Uq^{\beta_2},\qquad q\geq2.
\]
If, in addition, $(y,\gamma)\mapsto g_\gamma(y)$ is continuously differentiable on an open neighborhood of the selected parameter path $\{(y,\hat\gamma(y)):y\in\bbR^n\}$ and there exist finite constants $K_0,K_1$ and exponents $\beta_0,\beta_1\geq0$ such that the base-map bounds \eqref{eq:learned_base_y_bound}--\eqref{eq:learned_base_gamma_bound} hold along this path, then $g(y)=g_{\hat\gamma(y)}(y)$ satisfies the per-candidate moment condition \eqref{eq:per_estimator_regularity} with $\nu_n=K_0+2^{\beta_2}K_1\kappa_{T,M}K_U$ and $\beta=\max\{\beta_0,\beta_1+\beta_2\}$.
\end{proposition}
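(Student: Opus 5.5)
The plan is to differentiate the recursion in $y$ and bound the resulting linear recursion for the Jacobians $J_t(y)=D_ys_t(y)$. By induction on $t$, each $s_t$ is continuously differentiable. The base case is the assumed $C^1$ property of $s_0$. For the inductive step, $s_{t+1}(y)=U_t(s_t(y),y)$, where $U_t$ is $C^1$ on a neighborhood of the state-data path. The chain rule then gives
\[
J_{t+1}(y)=A_t(y)J_t(y)+B_t(y),\qquad t=0,\ldots,T-1.
\]
Unrolling this affine recursion yields the explicit representation
\[
J_T(y)=\Bigl(\prod_{t=T-1}^{0}A_t(y)\Bigr)J_0(y)+\sum_{r=0}^{T-1}\Bigl(\prod_{t=T-1}^{r+1}A_t(y)\Bigr)B_r(y),
\]
with empty products equal to the identity.

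Since $\hat\gamma=\gamma_T$ is the first $d$ coordinates of $s_T$, we have $D_y\hat\gamma=PJ_T$, where $P$ is the coordinate projection with $\|P\|_{\mathrm{op}}\le1$. The inequality $\|AB\|_F\le\|A\|_{\mathrm{op}}\|B\|_F$ and submultiplicativity of the operator norm then give the pointwise bound
\[
\|D_y\hat\gamma(y)\|_F\le M^T\|J_0(y)\|_F+\sum_{r=0}^{T-1}M^{T-1-r}\|B_r(y)\|_F.
\]
Take $L^q$ norms and apply Minkowski's inequality. Each coefficient $M^T$ and $M^{T-1-r}$ is at most $\kappa_{T,M}$. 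Using the assumed joint moment bound, whose left side is the sum of the individual $L^q$ norms, we obtain $\bigl(\E\|D_y\hat\gamma(Y)\|_F^q\bigr)^{1/q}\le\kappa_{T,M}K_Uq^{\beta_2}$.

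For the final claim, set $q=2p$. This gives the training-sensitivity bound \eqref{eq:learned_training_sensitivity_bound} with $K_2=2^{\beta_2}\kappa_{T,M}K_U$. The map $\hat\gamma$ is continuously differentiable by the induction above, and $(y,\gamma)\mapsto g_\gamma(y)$ is $C^1$ near the selected path. Together with the assumed base bounds \eqref{eq:learned_base_y_bound}--\eqref{eq:learned_base_gamma_bound}, Lemma~\ref{lem:learned_parameter_regularity} yields $\nu_n=K_0+2^{\beta_2}K_1\kappa_{T,M}K_U$ and $\beta=\max\{\beta_0,\beta_1+\beta_2\}$, exactly as in the proof of Proposition~\ref{prop:learned_parameter_ift}.

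The only delicate point is the differentiability bookkeeping in the induction. The hypothesis requires $U_t$ to be $C^1$ only on a neighborhood of the path $\{(s_t(y),y)\}$. Continuity of $s_t$, available from the inductive hypothesis, ensures that $y'\mapsto(s_t(y'),y')$ stays inside this neighborhood for $y'$ near $y$, so the chain rule applies locally. Conditioning on the algorithmic randomness makes every $U_t$ deterministic, so no further issue arises. The moment step is routine.
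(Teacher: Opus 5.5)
Your proposal is correct and follows the paper's proof essentially step by step: the chain-rule induction giving $J_{t+1}=A_tJ_t+B_t$, the unrolled product representation, the pointwise bound through $\|AB\|_F\leq\|A\|_{\mathrm{op}}\|B\|_F$, Minkowski's inequality with every coefficient bounded by $\kappa_{T,M}$, and the substitution $q=2p$ into Lemma~\ref{lem:learned_parameter_regularity}. Writing $D_y\hat\gamma=PJ_T$ with a coordinate projection, rather than taking a block of rows of $J_T$, is a cosmetic difference, and your explicit check that $y'\mapsto(s_t(y'),y')$ stays in the open neighborhood fills in a detail the paper only asserts.
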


\begin{proof}[Proof of Proposition~\ref{prop:learned_parameter_unrolling}]
Because $s_0$ and the update maps are continuously differentiable on the relevant neighborhoods, induction on $t$ and the chain rule imply that $s_t$ is continuously differentiable for each $t=0,\ldots,T$.  In particular, $\hat\gamma(y)=\gamma_T(y)$ is continuously differentiable.  Differentiating $s_{t+1}(y)=U_t(s_t(y),y)$ with respect to $y$ gives the recursive formula
\[
J_{t+1}(y)=A_t(y)J_t(y)+B_t(y).
\]
Applying this formula successively from $t=0$ to $t=T-1$ gives
\[
J_T(y)
=
A_{T-1}(y)\cdots A_0(y)J_0(y)
+\sum_{r=0}^{T-1}A_{T-1}(y)\cdots A_{r+1}(y)B_r(y),
\]
where, for the term $r=T-1$, the product multiplying $B_{T-1}(y)$ is the identity.  For the first term,
\[
\|A_{T-1}(Y)\cdots A_0(Y)J_0(Y)\|_F
\leq
M^T\|J_0(Y)\|_F,
\]
and for each $r=0,\ldots,T-1$,
\[
\|A_{T-1}(Y)\cdots A_{r+1}(Y)B_r(Y)\|_F
\leq
M^{T-1-r}\|B_r(Y)\|_F,
\]
using the assumed bound $\|A_t(y)\|_{\mathrm{op}}\leq M$ and $\|AB\|_F\leq\|A\|_{\mathrm{op}}\|B\|_F$.  Therefore the triangle inequality gives
\[
\|J_T(Y)\|_F
\leq
M^T\|J_0(Y)\|_F
+\sum_{r=0}^{T-1}M^{T-1-r}\|B_r(Y)\|_F.
\]
Taking $L^q$ norms and applying Minkowski's inequality,
\begin{align*}
\left(\E[\|J_T(Y)\|_F^q]\right)^{1/q}
&\leq
M^T
\left(\E[\|J_0(Y)\|_F^q]\right)^{1/q}
+\sum_{r=0}^{T-1}
M^{T-1-r}
\left(\E[\|B_r(Y)\|_F^q]\right)^{1/q}\\
&\leq
\kappa_{T,M}
\left\{
\left(\E[\|J_0(Y)\|_F^q]\right)^{1/q}
+\sum_{r=0}^{T-1}
\left(\E[\|B_r(Y)\|_F^q]\right)^{1/q}
\right\}\\
&\leq \kappa_{T,M}K_Uq^{\beta_2},
\end{align*}
where the second inequality uses that each nonnegative coefficient is bounded by $\kappa_{T,M}$, and the last inequality is the displayed moment assumption in the proposition.  Since $\hat\gamma(y)=\gamma_T(y)$, the matrix $D_y\hat\gamma(y)$ is the block of rows of $J_T(y)$ corresponding to the parameter component of the optimization state.  Hence $\|D_y\hat\gamma(y)\|_F\leq\|J_T(y)\|_F$, and
\[
\left(\E[\|D_y\hat\gamma(Y)\|_F^q]\right)^{1/q}
\leq \kappa_{T,M}K_Uq^{\beta_2}.
\]
To connect this bound to Lemma~\ref{lem:learned_parameter_regularity}, take $q=2p$.  Then, for every $p\geq2$,
\[
\left(\E[\|D_y\hat\gamma(Y)\|_F^{2p}]\right)^{1/(2p)}
\leq
\kappa_{T,M}K_U(2p)^{\beta_2}
=
2^{\beta_2}\kappa_{T,M}K_U p^{\beta_2}.
\]
Thus the training-sensitivity condition \eqref{eq:learned_training_sensitivity_bound} holds with $K_2=2^{\beta_2}\kappa_{T,M}K_U$.  Under the additional differentiability and base-map assumptions in the statement, the base-map bounds \eqref{eq:learned_base_y_bound}--\eqref{eq:learned_base_gamma_bound} and the continuous differentiability of $(y,\gamma)\mapsto g_\gamma(y)$, the remaining hypotheses of Lemma~\ref{lem:learned_parameter_regularity}, also hold.  Applying Lemma~\ref{lem:learned_parameter_regularity} gives the per-candidate moment condition \eqref{eq:per_estimator_regularity} with $\nu_n=K_0+2^{\beta_2}K_1\kappa_{T,M}K_U$ and $\beta=\max\{\beta_0,\beta_1+\beta_2\}$.
\end{proof}

\begin{remark}[Applying the result to gradient descent]
For a plain gradient step with no auxiliary optimization state,
\[
U_t(\gamma,y)=\gamma-\eta_t\nabla_\gamma \ell_t(\gamma,y).
\]
Here $\ell_t(\gamma,y)$ denotes the objective used in update $t$.  For
full-sample gradient descent, $\ell_t$ may be the same sample criterion at every
step; for stochastic gradient descent, $\ell_t$ is the criterion formed from the
observations sampled at step $t$, after conditioning on that sampling.  The subscript
$t$ allows the objective, any penalty term, the sampled data subset, and the
step size $\eta_t$ to vary across updates.  In this case, the matrices in Proposition~\ref{prop:learned_parameter_unrolling} are
\[
A_t(y)=I-\eta_t\nabla_{\gamma\gamma}^2\ell_t(\gamma_t(y),y),
\qquad
B_t(y)
=
-\eta_t
D_y\{\nabla_\gamma\ell_t(\gamma,y)\}\big|_{\gamma=\gamma_t(y)},
\]
where the derivative in $B_t(y)$ is taken with respect to $y$, holding $\gamma$
fixed, and then evaluated at $\gamma=\gamma_t(y)$.  Thus $A_t(y)$ propagates
existing data sensitivity in $\gamma_t(y)$, while $B_t(y)$ is the direct
derivative of the step with respect to $y$.

For momentum or Adam-style optimization, Proposition~\ref{prop:learned_parameter_unrolling}
should be applied to the full state $s_t(y)=(\gamma_t(y),a_t(y))$.  The
auxiliary block $a_t(y)$ contains
the deterministic quantities carried from one update to the next, such as
momentum terms, running first and second moments, or learning-rate state.  After
conditioning on seeds and the sampled observations or data subsets used in the
stochastic-gradient steps, these quantities are deterministic
functions of $y$, and the realized state-data path is
$\{(s_t(y),y):y\in\bbR^n,\ t=0,\ldots,T-1\}$.  The moment assumption in
Proposition~\ref{prop:learned_parameter_unrolling} requires the state derivative
matrices $A_t(y)=D_sU_t(s_t(y),y)$ to have bounded operator norm on this path and
the data-derivative matrices $B_t(y)=D_yU_t(s_t(y),y)$ to have controlled
moments.  These are derivatives of the full state update $U_t(s,y)$, not only
of the parameter update for $\gamma_t(y)$.  Nonsmooth operations such as
clipping or projection require smoothing or a separate nonsmooth argument.
Likewise, if the number of updates is data-adaptive, $T=T(Y)$, the resulting
stopping rule requires its own smoothness and moment verification rather than a
direct application of Proposition~\ref{prop:learned_parameter_unrolling}.
\end{remark}

\begin{remark}[Interpreting $\kappa_{T,M}$]
Propositions~\ref{prop:learned_parameter_ift} and
\ref{prop:learned_parameter_unrolling} both verify the same
training-sensitivity requirement by bounding $D_y\hat\gamma(Y)$.  In the exact
first-order-condition case, differentiating
$\nabla_\gamma\mathcal L(\hat\gamma(y),y)=0$ gives
\[
D_y\hat\gamma(y)=-H(y)^{-1}R(y).
\]
In the finite-step case, differentiating the update path gives
\[
J_{t+1}(y)=A_t(y)J_t(y)+B_t(y),
\]
so the derivative of the trained parameter is controlled by repeated
multiplication by the matrices $A_t(y)$ and by the direct data derivatives
$B_t(y)$.  The constant $\kappa_{T,M}=M^T+\sum_{r=0}^{T-1}M^{T-1-r}$ is the resulting
bound on the accumulated products of the matrices $A_t(y)$ under the assumption
$\|A_t(y)\|_{\mathrm{op}}\leq M$.  If $M<1$, $\kappa_{T,M}$ remains bounded as $T$
increases; if $M=1$, $\kappa_{T,M}=T+1$; and if $M>1$, the term $M^T$ grows
geometrically in $T$.  Thus the finite-step sufficient condition yields a large
bound for $D_y\hat\gamma(Y)$ when $M$ is close to or above one and $T$ is large,
or when the quantities controlled by $K_U$, namely $J_0(Y)$ and $B_t(Y)$, are
large.  For gradient descent,
\[
A_t(y)=I-\eta_t\nabla_{\gamma\gamma}^2\ell_t(\gamma_t(y),y),
\]
so curvature enters the finite-step sensitivity bound through the Hessian inside
$A_t(y)$, just as curvature enters the exact first-order-condition calculation
through the inverse Hessian $H(y)^{-1}$.
\end{remark}

\subsection{Regularity Closure for Candidate Constructions}\label{app:regularity_closure}

Throughout this subsection, for $u,v\in\bbR^n$, $u\odot v\in\bbR^n$
denotes componentwise multiplication.

The estimator-form verifications in Section~\ref{sec:verification} use
shrinkage adjustments built from simpler maps of the data.  This subsection
records the closure facts used in those verifications: polynomial
bounds on a map and its Jacobian are preserved under the fixed affine maps,
componentwise products, compositions, and standardization maps used to construct
shrinkage adjustments.

\begin{definition}[Pointwise polynomial regularity]\label{def:pointwise_polynomial_regularity}
A continuously differentiable map $g:\bbR^n\to\bbR^m$ satisfies the pointwise polynomial
regularity condition if, for some constants $C<\infty$ and $0\leq a<\infty$,
\[
\|g(y)\|_2+\|Dg(y)\|_F\leq C(1+\|y\|_2)^a,\qquad y\in\bbR^n .
\]
\end{definition}

\begin{lemma}[Affine maps and finite sums]\label{lem:closure_affine_sum}
Let $g_1,\ldots,g_J:\bbR^n\to\bbR^n$ satisfy the pointwise polynomial
regularity condition in Definition~\ref{def:pointwise_polynomial_regularity}.
Let $A_1,\ldots,A_J\in\bbR^{n\times n}$ and $b\in\bbR^n$ be fixed.  Then
\[
g(y)=b+\sum_{j=1}^J A_j g_j(y)
\]
satisfies the condition in Definition~\ref{def:pointwise_polynomial_regularity}.
In particular, any fixed affine preprocessing map $y\mapsto Ay+b$, including
residualization by a fixed projection matrix, also satisfies the condition in
Definition~\ref{def:pointwise_polynomial_regularity}.
\end{lemma}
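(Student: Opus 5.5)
The plan is a direct computation: bound the function value and the Jacobian of the combined map separately, then fold both bounds into a single polynomial envelope. By hypothesis, for each $j$ there are constants $C_j<\infty$ and $a_j\geq0$ with
\[
\|g_j(y)\|_2+\|Dg_j(y)\|_F\leq C_j(1+\|y\|_2)^{a_j}.
\]
Set $a:=\max_j a_j$. Since $1+\|y\|_2\geq1$, every individual envelope is dominated by $C_j(1+\|y\|_2)^{a}$, so all $J$ maps share the common exponent $a$.

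First, $g$ is continuously differentiable, because it is a finite linear combination of continuously differentiable maps plus a constant. Its Jacobian is $Dg(y)=\sum_j A_j Dg_j(y)$, since the constant $b$ contributes nothing.

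Second, I would apply the triangle inequality together with the operator-norm inequalities $\|A_jv\|_2\leq\|A_j\|_{\mathrm{op}}\|v\|_2$ and $\|A_jM\|_F\leq\|A_j\|_{\mathrm{op}}\|M\|_F$. This gives
\[
\|g(y)\|_2+\|Dg(y)\|_F\leq\|b\|_2+\sum_{j=1}^J\|A_j\|_{\mathrm{op}}C_j(1+\|y\|_2)^{a}.
\]
Using $\|b\|_2\leq\|b\|_2(1+\|y\|_2)^a$, the right-hand side is at most $C(1+\|y\|_2)^a$ with
\[
C:=\|b\|_2+\sum_{j=1}^J\|A_j\|_{\mathrm{op}}C_j.
\]
This is exactly the condition in Definition~\ref{def:pointwise_polynomial_regularity}.

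Finally, the statement about a fixed affine preprocessing map follows by taking $J=1$ and $g_1(y)=y$. The identity map satisfies the definition with $C=1+\sqrt n$ and $a=1$, because $\|y\|_2+\|I\|_F\leq(1+\sqrt n)(1+\|y\|_2)$. The case of a fixed projection matrix is then covered by taking $A_1$ to be that matrix.

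None of these steps is a real obstacle. The only point to watch is that the constant $C$ may depend on $n$, for example through $\|I\|_F=\sqrt n$. Definition~\ref{def:pointwise_polynomial_regularity} permits this, and keeping track of the $n$-scaling is deferred to the later verifications, where the scale $\nu_n$ is matched.
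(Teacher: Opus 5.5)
Your proposal is correct and follows essentially the same route as the paper's proof: a common exponent $a=\max_j a_j$, the triangle inequality with operator-norm submultiplicativity giving the constant $\|b\|_2+\sum_j\|A_j\|_{\mathrm{op}}C_j$, and the affine case obtained by taking $J=1$ and $g_1(y)=y$ with the bound $\|y\|_2+\|I_n\|_F\leq(1+\sqrt n)(1+\|y\|_2)$. Nothing is missing.
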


\begin{proof}[Proof of Lemma~\ref{lem:closure_affine_sum}]
For each $j$, choose $C_j<\infty$ and $0\leq a_j<\infty$ such that
\[
\|g_j(y)\|_2+\|Dg_j(y)\|_F\leq C_j(1+\|y\|_2)^{a_j}.
\]
Let $a:=\max_j a_j$.  By linearity of differentiation,
\[
Dg(y)=\sum_{j=1}^J A_jDg_j(y),
\]
where the matrices $A_j$ are fixed.  The triangle inequality and
submultiplicativity give separate bounds for the level of $g$ and for its
Jacobian:
\[
\begin{aligned}
\|g(y)\|_2+\|Dg(y)\|_F
&\leq
\|b\|_2+\sum_{j=1}^J\|A_jg_j(y)\|_2
+\sum_{j=1}^J\|A_jDg_j(y)\|_F\\
&\leq
\|b\|_2+\sum_{j=1}^J \|A_j\|_{\mathrm{op}}
\{\|g_j(y)\|_2+\|Dg_j(y)\|_F\}\\
&\leq
\|b\|_2+
\sum_{j=1}^J \|A_j\|_{\mathrm{op}}C_j(1+\|y\|_2)^{a_j}\\
&\leq
\left\{\|b\|_2+\sum_{j=1}^J \|A_j\|_{\mathrm{op}}C_j\right\}
(1+\|y\|_2)^a .
\end{aligned}
\]
The penultimate inequality applies the assumed pointwise polynomial bounds for
the maps $g_j$.  The final inequality uses $a\geq a_j$ for each $j$, so that
$(1+\|y\|_2)^{a_j}\leq(1+\|y\|_2)^a$, and uses $a\geq0$, so that
$\|b\|_2\leq \|b\|_2(1+\|y\|_2)^a$.
It remains to verify the statement that any fixed affine preprocessing map
$y\mapsto Ay+b$ satisfies the condition.
For $y\mapsto Ay+b$, take $J=1$, $g_1(y)=y$, and $A_1=A$ in the finite-sum
part of the lemma.  The required input map $g_1(y)=y$ satisfies
Definition~\ref{def:pointwise_polynomial_regularity} because
$Dg_1(y)=I_n$ and
$\|y\|_2+\|I_n\|_F\leq (1+\sqrt n)(1+\|y\|_2)$.
\end{proof}

\begin{lemma}[Products and compositions]\label{lem:closure_products_compositions}
Let $g_1,g_2:\bbR^n\to\bbR^n$ satisfy
Definition~\ref{def:pointwise_polynomial_regularity}.  Then their componentwise
product $g_1\odot g_2$ satisfies the condition in
Definition~\ref{def:pointwise_polynomial_regularity}.  More generally, if
$g:\bbR^n\to\bbR^m$ satisfies the condition in
Definition~\ref{def:pointwise_polynomial_regularity} and
$\phi:\bbR^m\to\bbR^r$ is continuously differentiable with
\[
\|\phi(z)\|_2+\|D\phi(z)\|_{\mathrm{op}}
\leq C_\phi(1+\|z\|_2)^q ,
\]
then $\phi\{g(y)\}$ satisfies the condition in
Definition~\ref{def:pointwise_polynomial_regularity}.
\end{lemma}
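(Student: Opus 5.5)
The argument is a direct pointwise calculation in two parts, both reducing to the elementary inequality $(1+t)^{a}(1+t)^{b}=(1+t)^{a+b}$ and to the submultiplicativity bound $\|AB\|_F\leq\|A\|_{\mathrm{op}}\|B\|_F$. Throughout, fix constants $C_1,C_2,a_1,a_2$ with $\|g_j(y)\|_2+\|Dg_j(y)\|_F\leq C_j(1+\|y\|_2)^{a_j}$, as allowed by Definition~\ref{def:pointwise_polynomial_regularity}.

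\emph{Componentwise product.} First I bound the level of $h:=g_1\odot g_2$:
\[
\|h(y)\|_2\leq\|g_1(y)\|_\infty\|g_2(y)\|_2\leq\|g_1(y)\|_2\|g_2(y)\|_2\leq C_1C_2(1+\|y\|_2)^{a_1+a_2}.
\]
Next, the product rule gives
\[
Dh(y)=\diag\{g_2(y)\}Dg_1(y)+\diag\{g_1(y)\}Dg_2(y).
\]
Since $\|\diag(v)\|_{\mathrm{op}}=\|v\|_\infty\leq\|v\|_2$, submultiplicativity yields
\[
\|Dh(y)\|_F\leq\|g_2(y)\|_2\|Dg_1(y)\|_F+\|g_1(y)\|_2\|Dg_2(y)\|_F\leq 2C_1C_2(1+\|y\|_2)^{a_1+a_2}.
\]
Continuity of $Dh$ is inherited from continuity of $g_j$ and $Dg_j$. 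Hence the product satisfies the definition with constant $3C_1C_2$ and exponent $a_1+a_2$.

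\emph{Composition.} Let $C_g,a$ be the constants for $g$. By the chain rule, $D(\phi\circ g)(y)=D\phi\{g(y)\}Dg(y)$, which is continuous because $\phi$ is $C^1$. Both the level and the Jacobian bound are driven by the size of $1+\|g(y)\|_2$, which I control by
\[
1+\|g(y)\|_2\leq 1+C_g(1+\|y\|_2)^a\leq(1+C_g)(1+\|y\|_2)^a,
\]
using $a\geq0$. Then
\[
\|\phi(g(y))\|_2+\|D\phi(g(y))Dg(y)\|_F\leq C_\phi(1+\|g(y)\|_2)^q\bigl\{1+\|Dg(y)\|_F\bigr\}.
\]
Bounding the first factor by $C_\phi(1+C_g)^q(1+\|y\|_2)^{aq}$ and the second by $(1+C_g)(1+\|y\|_2)^{a}$ gives a polynomial envelope of degree $a(q+1)$, as required.

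\emph{Main obstacle.} The only point needing care is whether the product case should be treated as a special case of the composition result. As stated, the composition clause requires a single inner map, and $(g_1,g_2)\mapsto g_1\odot g_2$ is bilinear rather than globally polynomial-Lipschitz in the needed form. I would therefore keep the direct product-rule argument above, which avoids the issue entirely. The rest of the proof is the elementary bookkeeping shown.
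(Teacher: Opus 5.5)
Your proof follows the paper's own argument. For the componentwise product you use the same product-rule formula and the bound $\|\diag(u)\|_{\mathrm{op}}\le\|u\|_2$. For the composition you use the same chain-rule bound $C_\phi(1+\|g(y)\|_2)^q\{1+\|Dg(y)\|_F\}$.

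One step needs a small repair. Bounding $(1+\|g(y)\|_2)^q$ above by $(1+C_g)^q(1+\|y\|_2)^{aq}$ requires $q\ge 0$, and the lemma places no sign restriction on $q$. For $q<0$ the inequality runs the other way. Your resulting degree $a(q+1)$ can then even be negative, which Definition~\ref{def:pointwise_polynomial_regularity} does not allow. The paper avoids this by replacing $q$ with $q_+:=\max\{q,0\}$. The hypothesis on $\phi$ still holds with $q_+$, since $(1+\|z\|_2)^q\le(1+\|z\|_2)^{q_+}$, and the argument then gives degree $(q_++1)a\ge 0$.

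Separately, your ``main obstacle'' is not actually one. Take $\phi(u,v)=u\odot v$ on $\bbR^{2n}$. It satisfies the growth hypothesis with $q=2$, because $\|u\odot v\|_2\le\|u\|_2\|v\|_2$ and $\|D\phi(u,v)\|_{\mathrm{op}}\le\|u\|_2+\|v\|_2$. The stacked map $y\mapsto(g_1(y),g_2(y))$ satisfies the definition. So the product case is in fact a special case of the composition clause. The direct product-rule argument that you and the paper both give is still perfectly fine.
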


\begin{proof}[Proof of Lemma~\ref{lem:closure_products_compositions}]
We first verify the componentwise product.  The product rule applied coordinate
by coordinate gives
\[
D(g_1\odot g_2)(y)=\diag\{g_2(y)\}Dg_1(y)
+\diag\{g_1(y)\}Dg_2(y).
\]
The Euclidean norm of the product is bounded by the product of the Euclidean
norms:
\[
\|g_1(y)\odot g_2(y)\|_2
\leq \|g_1(y)\|_2\|g_2(y)\|_2 .
\]
For the Jacobian, the triangle inequality and submultiplicativity give
\[
\begin{aligned}
\|D(g_1\odot g_2)(y)\|_F
&\leq
\|\diag\{g_2(y)\}\|_{\mathrm{op}}\|Dg_1(y)\|_F
+\|\diag\{g_1(y)\}\|_{\mathrm{op}}\|Dg_2(y)\|_F\\
&\leq \|g_2(y)\|_2\|Dg_1(y)\|_F
+\|g_1(y)\|_2\|Dg_2(y)\|_F .
\end{aligned}
\]
The last inequality uses
$\|\diag(u)\|_{\mathrm{op}}=\max_i |u_i|\leq\|u\|_2$.  By
Definition~\ref{def:pointwise_polynomial_regularity}, each factor on the right
side is bounded by a polynomial in $1+\|y\|_2$.  A product of two such bounds,
$C_1(1+\|y\|_2)^{a_1}\cdot C_2(1+\|y\|_2)^{a_2}=C_1C_2(1+\|y\|_2)^{a_1+a_2}$, is
a polynomial bound of degree $a_1+a_2$, and a finite sum of polynomial bounds is
bounded by a single polynomial of the maximum degree, so
$g_1\odot g_2$ satisfies
Definition~\ref{def:pointwise_polynomial_regularity}.

We next verify the composition statement.  Let $C_g<\infty$ and
$0\leq a_g<\infty$ be constants such that
\[
\|g(y)\|_2+\|Dg(y)\|_F\leq C_g(1+\|y\|_2)^{a_g}.
\]
Set $q_+:=\max\{q,0\}$.  The assumed bound on $\phi$ and $D\phi$ is also valid
with $q_+$ in place of $q$, since
$(1+\|z\|_2)^q\leq(1+\|z\|_2)^{q_+}$.  The chain rule gives
\[
D\{\phi(g(y))\}=D\phi(g(y))Dg(y).
\]
Taking norms, using submultiplicativity, and applying the assumed bound on
$\phi$ and $D\phi$ gives
\[
\begin{aligned}
&\|\phi(g(y))\|_2+\|D\{\phi(g(y))\}\|_F\\
&\quad\leq
C_\phi(1+\|g(y)\|_2)^{q_+}
+C_\phi(1+\|g(y)\|_2)^{q_+}\|Dg(y)\|_F\\
&\quad=
C_\phi(1+\|g(y)\|_2)^{q_+}\{1+\|Dg(y)\|_F\}.
\end{aligned}
\]
Since $\|g(y)\|_2\leq C_g(1+\|y\|_2)^{a_g}$, the factor
$(1+\|g(y)\|_2)^{q_+}$ is bounded by a constant times $(1+\|y\|_2)^{q_+a_g}$ and
$1+\|Dg(y)\|_F$ by a constant times $(1+\|y\|_2)^{a_g}$, so the right side is
bounded by a polynomial in $1+\|y\|_2$ of degree $(q_++1)a_g$.  Thus
$\phi\{g(y)\}$ satisfies
Definition~\ref{def:pointwise_polynomial_regularity}.
\end{proof}

\begin{lemma}[Standardization maps]\label{lem:nw_standardization_maps}
Let $m,s:\bbR^n\to\bbR^n$ satisfy the pointwise polynomial regularity condition
in Definition~\ref{def:pointwise_polynomial_regularity}.  Suppose the scale map
is bounded away from zero:
\[
s_i(y)\geq s_{\min}>0,\qquad y\in\bbR^n,\quad i=1,\ldots,n.
\]
Define the standardized map
\[
g(y)=\{y-m(y)\}\odot \{1/s(y)\},
\]
where $1/s(y)$ denotes componentwise reciprocals.  Then $g$ satisfies the
pointwise polynomial regularity condition in
Definition~\ref{def:pointwise_polynomial_regularity}.
\end{lemma}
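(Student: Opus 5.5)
The plan is to write $g$ as a composition of maps already covered by Lemma~\ref{lem:closure_affine_sum} and Lemma~\ref{lem:closure_products_compositions}. The only new ingredient is the componentwise reciprocal, whose singularity at zero is harmless because $s$ stays above $s_{\min}$.

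The first factor $u(y):=y-m(y)$ equals the identity minus $m$. The identity satisfies Definition~\ref{def:pointwise_polynomial_regularity}, as shown in the proof of Lemma~\ref{lem:closure_affine_sum}. Lemma~\ref{lem:closure_affine_sum} with $A_1=I$, $A_2=-I$ then gives the polynomial bound for $u$.

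For the reciprocal factor, I would not apply the composition clause of Lemma~\ref{lem:closure_products_compositions} to $\phi(z)=1/z$ componentwise, because $\phi$ is not globally bounded on $\bbR^n$. Instead I would argue directly. Write $r(y):=1/s(y)$. The bound $s_i\geq s_{\min}$ gives $|r_i(y)|\leq s_{\min}^{-1}$, so
\[
\|r(y)\|_2\leq \sqrt n\, s_{\min}^{-1}.
\]
This constant may depend on $n$, which is permitted since Definition~\ref{def:pointwise_polynomial_regularity} allows any finite $C$. The Jacobian is
\[
Dr(y)=-\diag\{s(y)\}^{-2}Ds(y),
\]
so that
\[
\|Dr(y)\|_F\leq s_{\min}^{-2}\|Ds(y)\|_F\leq s_{\min}^{-2}C_s(1+\|y\|_2)^{a_s}.
\]
Continuity of $Dr$ follows from continuity of $s$ and $Ds$ together with $s_i>0$. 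Hence $r$ satisfies Definition~\ref{def:pointwise_polynomial_regularity}.

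An equivalent route keeps the composition lemma: replace $1/z$ by a smooth $\phi$ that agrees with $1/z$ on $[s_{\min},\infty)$ and is bounded with bounded derivative everywhere. Since $s$ only takes values in $[s_{\min},\infty)^n$, this $\phi$ yields the same $r$.

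Finally, $g=u\odot r$, and the product clause of Lemma~\ref{lem:closure_products_compositions} gives the conclusion with degree $a_u+a_r$. The main point needing care is the reciprocal step: the closure lemmas assume globally polynomially bounded outer maps, and the floor $s_{\min}$ is exactly what lets the reciprocal be handled without that assumption. The rest is routine bookkeeping of constants.
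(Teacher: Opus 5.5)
Your proposal is correct and matches the paper's proof step for step. The residual $y-m(y)$ comes from Lemma~\ref{lem:closure_affine_sum}, the reciprocal $1/s$ is bounded directly via $\|1/s(y)\|_2\leq\sqrt n\,s_{\min}^{-1}$ and $\|D\{1/s(y)\}\|_F\leq s_{\min}^{-2}\|Ds(y)\|_F$, and the product clause of Lemma~\ref{lem:closure_products_compositions} finishes; your aside about a smooth bounded extension of $1/z$ is also valid but not needed.
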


\begin{proof}[Proof of Lemma~\ref{lem:nw_standardization_maps}]
Because $s$ satisfies Definition~\ref{def:pointwise_polynomial_regularity},
there are constants $C_s<\infty$ and $0\leq a_s<\infty$ such that
\[
\|s(y)\|_2+\|Ds(y)\|_F\leq C_s(1+\|y\|_2)^{a_s},\qquad y\in\bbR^n .
\]
The lower bound $s_i(y)\geq s_{\min}$ gives the level bound
\[
\|1/s(y)\|_2\leq \sqrt n\,s_{\min}^{-1}.
\]
Let $Ds_i(y)$ denote the $i$th row of $Ds(y)$.  Differentiating each coordinate
of the reciprocal scale map gives
\[
D\{1/s_i(y)\}=-s_i(y)^{-2}Ds_i(y),\qquad i=1,\ldots,n.
\]
Therefore
\[
\|D\{1/s(y)\}\|_F^2
=\sum_{i=1}^n s_i(y)^{-4}\|Ds_i(y)\|_2^2
\leq s_{\min}^{-4}\|Ds(y)\|_F^2,
\]
so
\[
\|D\{1/s(y)\}\|_F
\leq s_{\min}^{-2}C_s(1+\|y\|_2)^{a_s}.
\]
This is the derivative bound for the reciprocal scale map.  Combining the level
and derivative bounds, and using $a_s\geq0$, gives
\[
\begin{aligned}
\|1/s(y)\|_2+\|D\{1/s(y)\}\|_F
&\leq \sqrt n\,s_{\min}^{-1}
+s_{\min}^{-2}C_s(1+\|y\|_2)^{a_s}\\
&\leq
\{\sqrt n\,s_{\min}^{-1}+s_{\min}^{-2}C_s\}
(1+\|y\|_2)^{a_s}.
\end{aligned}
\]
Thus $1/s(y)$ satisfies
Definition~\ref{def:pointwise_polynomial_regularity}.

The residual map $y-m(y)$ satisfies
Definition~\ref{def:pointwise_polynomial_regularity} by
Lemma~\ref{lem:closure_affine_sum}: the identity map satisfies the condition,
and $m(y)$ satisfies the condition by assumption.  Since
\[
g(y)=\{y-m(y)\}\odot \{1/s(y)\},
\]
the componentwise product closure in
Lemma~\ref{lem:closure_products_compositions} gives
Definition~\ref{def:pointwise_polynomial_regularity} for $g$.
\end{proof}

\subsection{Estimator-Form Verifications}\label{sec:verification}

This subsection verifies pointwise polynomial regularity for estimator-form
building blocks used in the empirical library and records why the
value-similarity rule requires a separate envelope argument.
Remark~\ref{rem:normal_normal_close_regular} gives the closure argument for
normal--normal EB shrinkage and for the \close{}-style construction of
\citet{chenEmpiricalBayesWhen2024}, which applies Nadaraya--Watson centering and
scaling before the same adjustment.  Lemma~\ref{lem:fixed_kernel_gp_regularity}
covers fixed-kernel \gp{} smoothing, where the adjustment is linear in the
outcome vector.  Proposition~\ref{prop:basic_bilateral_not_lipschitz} shows
that a basic value-similarity map need not be globally Lipschitz.  The
fixed-candidate pointwise envelope is verified in
Section~\ref{sec:bilateral_verification}, and
Lemma~\ref{lem:pointwise_regular_to_sobolev} gives the corresponding
$k=0$ Sobolev moment envelope.

\begin{remark}[Normal--normal EB and \close{}-style shrinkage]\label{rem:normal_normal_close_regular}
Let $\sigma_1^2,\ldots,\sigma_n^2$ be fixed
known sampling variances, and let $\mu:\bbR^n\to\bbR$ and
$\tau^2:\bbR^n\to\bbR$ satisfy
Definition~\ref{def:pointwise_polynomial_regularity}, with
$\tau^2(y)\geq\tau_{\min}^2>0$.  The adjustment has coordinates
\[
g_i(y)
=
\frac{\sigma_i^2}{\tau^2(y)+\sigma_i^2}\{\mu(y)-y_i\},
\qquad i=1,\ldots,n .
\]
The denominator $\tau^2(y)+\sigma_i^2$ is regular and bounded away from zero,
so the reciprocal-scale argument in
Lemma~\ref{lem:nw_standardization_maps} applies.  The remaining operations are
fixed multiplication by $\sigma_i^2$, affine formation of $\mu(y){\bf 1}-y$,
and componentwise multiplication, so
Definition~\ref{def:pointwise_polynomial_regularity} follows from
Lemmas~\ref{lem:closure_affine_sum} and
\ref{lem:closure_products_compositions}.  For the \close{}-style construction,
Nadaraya--Watson centering and scaling first standardize the observations;
Lemma~\ref{lem:nw_standardization_maps}, together with closure under products
and compositions, gives pointwise polynomial regularity for the resulting
adjustment.
Fixed affine residualization or deresidualization, when present, is covered by
Lemma~\ref{lem:closure_affine_sum}.
\end{remark}

\begin{lemma}[Fixed-kernel \gp{} smoother regularity]\label{lem:fixed_kernel_gp_regularity}
Let $K\in\bbR^{n\times n}$ and $\Sigma\succ0$ be fixed matrices with
$K+\Sigma$ invertible, and define
\[
S=K(K+\Sigma)^{-1},\qquad f(y)=Sy,\qquad g(y)=f(y)-y.
\]
Then $g$ satisfies Definition~\ref{def:pointwise_polynomial_regularity}.  In
particular,
\[
\|g(y)\|_2+\|Dg(y)\|_F
\leq
\|S-I\|_{\mathrm{op}}\|y\|_2+\|S-I\|_F,\qquad y\in\bbR^n .
\]
\end{lemma}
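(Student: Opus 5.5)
The plan is a direct computation, since $g$ is a fixed linear map. Writing $g(y)=Sy-y=(S-I)y$, the map is smooth (indeed linear), so it is continuously differentiable. Its Jacobian is the constant matrix $Dg(y)=S-I$ for every $y$. Well-definedness of $S$ needs only the stated invertibility of $K+\Sigma$, so no further structure on $K$ (positive semidefiniteness, say) is used.

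The displayed bound follows from two norm facts. For the level term, the definition of the operator norm gives $\|(S-I)y\|_2\leq\|S-I\|_{\mathrm{op}}\|y\|_2$. For the Jacobian term, $\|Dg(y)\|_F=\|S-I\|_F$ identically in $y$. Adding these gives
\[
\|g(y)\|_2+\|Dg(y)\|_F\leq\|S-I\|_{\mathrm{op}}\|y\|_2+\|S-I\|_F .
\]

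To conclude that Definition~\ref{def:pointwise_polynomial_regularity} holds, bound the right-hand side by $C(1+\|y\|_2)^a$ with $a=1$ and $C=\|S-I\|_{\mathrm{op}}+\|S-I\|_F$. This constant is finite because $S$ is a fixed finite-dimensional matrix. The step uses $\|y\|_2\leq 1+\|y\|_2$ and $1\leq 1+\|y\|_2$.

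Alternatively, the lemma is the affine case of Lemma~\ref{lem:closure_affine_sum} with $A=S-I$ and $b=0$, but the direct computation yields the sharper explicit constant, so I would present that. There is no real obstacle here. The only point to flag is that the constant $C$ may depend on $n$, which the definition permits. Controlling that dependence, for example via the row-norm bounds of Lemma~\ref{lem:row_norm_linear_smoothers}, is a separate question not needed for this statement.
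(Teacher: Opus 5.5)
Your proposal is correct and follows essentially the same argument as the paper: you write $g(y)=(S-I)y$ with constant Jacobian $S-I$, bound the level term by $\|S-I\|_{\mathrm{op}}\|y\|_2$, and conclude with $a=1$ and $C=\|S-I\|_{\mathrm{op}}+\|S-I\|_F$. Your added remarks, that no positive semidefiniteness of $K$ is needed and that $C$ may depend on $n$, are accurate.
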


\begin{proof}[Proof of Lemma~\ref{lem:fixed_kernel_gp_regularity}]
Because $K$ and $\Sigma$ are fixed, the smoothing matrix $S$ is fixed and does
not depend on $y$.  Hence $g$ is continuously differentiable, and for every $y\in\bbR^n$
\[
g(y)=(S-I)y,\qquad Dg(y)=S-I .
\]
From these two identities, $\|g(y)\|_2=\|(S-I)y\|_2\leq\|S-I\|_{\mathrm{op}}\|y\|_2$ and $\|Dg(y)\|_F=\|S-I\|_F$.  Setting $C:=\|S-I\|_{\mathrm{op}}+\|S-I\|_F$, which does not depend on $y$, gives $\|g(y)\|_2+\|Dg(y)\|_F\leq C(1+\|y\|_2)$, so Definition~\ref{def:pointwise_polynomial_regularity}
holds with $a=1$.
\end{proof}

\citet{bellecSecondOrderStein2021} prove \sure{} selection guarantees for
finite collections of globally Lipschitz estimators.  Proposition~\ref{prop:basic_bilateral_not_lipschitz} shows
that even a simple value-similarity smoother fails that global Lipschitz
condition.  The condition used here is instead the Sobolev moment envelope in
Definition~\ref{def:sobolev_moment_envelope}.  Section~\ref{sec:bilateral_verification}
verifies a pointwise polynomial envelope for the fixed product-kernel
value-similarity rule, with envelope scale $\nu_n=O(\sqrt n)$ under a
bounded row-sum condition on its fixed geographic factor.

\begin{proposition}[A two-dimensional value-similarity map is not globally Lipschitz]\label{prop:basic_bilateral_not_lipschitz}
Let $n=2$, $\lambda,\sigma^2>0$, and define
\[
K_{ij}(y):=\exp\{-\lambda(y_i-y_j)^2\}.
\]
Let
\[
f(y)=K(y)\{K(y)+\sigma^2I_2\}^{-1}y,
\qquad
g(y)=f(y)-y.
\]
Then $f$ and $g$ are not globally Lipschitz on $\bbR^2$.
\end{proposition}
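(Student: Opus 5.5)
The plan is to diagonalize the $2\times2$ system explicitly and exhibit pairs of points along which the ratio $\|f(y)-f(y')\|_2/\|y-y'\|_2$ is unbounded. Write $s:=(y_1+y_2)/2$, $d:=y_1-y_2$ and $e(d):=\exp(-\lambda d^2)\in(0,1]$. Then $K(y)=\begin{pmatrix}1&e\\e&1\end{pmatrix}$, and $K(y)$ and $K(y)+\sigma^2I_2$ share the orthonormal eigenvectors $u_+=(1,1)^\top/\sqrt2$ and $u_-=(1,-1)^\top/\sqrt2$. The eigenvalues of $K(y)$ are $1\pm e$. Hence $S(y)=K(y)\{K(y)+\sigma^2I_2\}^{-1}$ has eigenvalue $a(d):=(1+e)/(1+e+\sigma^2)$ on $u_+$ and $b(d):=(1-e)/(1-e+\sigma^2)$ on $u_-$. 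Since $y=s(1,1)^\top+(d/2)(1,-1)^\top$, this gives the closed form
\[
f(y)=a(d)\,s\,(1,1)^\top+b(d)\,\tfrac d2\,(1,-1)^\top .
\]

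The key observation is that the mean component $a(d)s$ couples the level $s$ with a nonconstant function of the contrast $d$. Its partial derivative in $d$ is $a'(d)s$, which grows without bound in $s$. Concretely, I will fix $d=0$ and $d'=1$ and take $y=(s,s)$ and $y'=(s+\tfrac12,s-\tfrac12)$, so that $\|y-y'\|_2=1/\sqrt2$ for every $s$. Projecting $f(y)-f(y')$ onto $u_+$ gives
\[
\|f(y)-f(y')\|_2\ \ge\ \sqrt2\,|s|\,|a(0)-a(1)|.
\]
Because $x\mapsto(1+x)/(1+x+\sigma^2)$ is strictly increasing when $\sigma^2>0$, and $e(0)=1\neq e^{-\lambda}=e(1)$ when $\lambda>0$, we have $a(0)\neq a(1)$. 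Letting $s\to\infty$ then shows that no finite global Lipschitz constant exists for $f$. (The $u_-$ component could also be bounded, but the projection inequality makes this unnecessary.)

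For $g=f-\mathrm{id}$, I will use the triangle inequality. If $g$ were $L$-Lipschitz, then $f=g+\mathrm{id}$ would be $(L+1)$-Lipschitz, contradicting the first part. Equivalently, the same pairs give $\|g(y)-g(y')\|_2\ge\sqrt2|s||a(0)-a(1)|-1/\sqrt2\to\infty$.

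The only step needing care is the spectral computation, in particular checking that $\{K+\sigma^2I\}^{-1}$ and $K$ commute with the same eigenvectors, so that $S$ acts by scalars on $u_\pm$. Everything else is elementary. I do not expect a real obstacle; the main point to emphasize is that the failure comes from the level $s$ multiplying a $y$-dependent weight, which is exactly the mechanism that the polynomial envelopes in Assumption~\ref{asm:regularity} are designed to accommodate.
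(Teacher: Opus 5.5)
Your proof is correct and follows essentially the same route as the paper. Both arguments fix two contrasts (you take $d\in\{0,1\}$; the paper takes general $c\in\{a,b\}$ with $a^2\neq b^2$) and translate along $\mathbf 1$, so that $K(y)$ stays constant on each path. Both then use that the eigenvalue of $\{K(y)+\sigma^2I_2\}^{-1}$ on $\mathbf 1$ depends on the contrast, so the mean component separates linearly in the level. Your coefficient $a(0)-a(1)$ is exactly the paper's $-\sigma^2(1/d_a-1/d_b)$, since $a(d)=1-\sigma^2/(1+e(d)+\sigma^2)$. The only cosmetic differences are that you treat $f$ first and project onto $u_+$, whereas the paper treats $g=-\sigma^2\{K(y)+\sigma^2I_2\}^{-1}y$ first and uses the reverse triangle inequality against a bounded remainder.
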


\begin{proof}[Proof of Proposition~\ref{prop:basic_bilateral_not_lipschitz}]
For every $y$, the matrix $K(y)$ has $K_{ii}(y)=1$ and
$K_{12}(y)=K_{21}(y)=\exp\{-\lambda(y_1-y_2)^2\}$.  Its eigenvalues are therefore
$1\pm\exp\{-\lambda(y_1-y_2)^2\}$, both nonnegative.  Hence $K(y)\succeq0$, and
since $\sigma^2>0$, $K(y)+\sigma^2I_2$ is invertible.

The adjustment equals
\[
g(y)
=
\left[K(y)\{K(y)+\sigma^2I_2\}^{-1}-I_2\right]y
=
-\sigma^2\{K(y)+\sigma^2I_2\}^{-1}y .
\]
We now compare two input vectors that remain a fixed distance apart.  Choose
fixed $a,b\in\bbR$ with $a\ne b$ and $a^2\ne b^2$.  For
$c\in\{a,b\}$ and $t\geq0$, set
\[
y^{(c)}(t)=(t,t+c),\qquad
r_c=\exp\{-\lambda c^2\},\qquad
d_c=1+\sigma^2+r_c .
\]
The distance between the two inputs is constant:
\[
\|y^{(a)}(t)-y^{(b)}(t)\|_2=|a-b|
\]
for every $t$.  Along the path $y^{(c)}(t)$, the matrix
\[
B_c:=K\{y^{(c)}(t)\}+\sigma^2I_2
=
\begin{pmatrix}
1+\sigma^2&r_c\\
r_c&1+\sigma^2
\end{pmatrix}
\]
does not depend on $t$.  Since ${\bf 1}$ is an eigenvector of $B_c$ with
eigenvalue $d_c$, $B_c^{-1}{\bf 1}=d_c^{-1}{\bf 1}$.  To isolate the part of
the adjustment that grows with $t$, we write
$y^{(c)}(t)=t{\bf 1}+(0,c)^\top$.  Then
\[
B_c^{-1}y^{(c)}(t)
=
\frac{t}{d_c}{\bf 1}+B_c^{-1}(0,c)^\top .
\]
Substituting this decomposition into $g(y)=-\sigma^2\{K(y)+\sigma^2I_2\}^{-1}y$ gives, for
$c\in\{a,b\}$,
\[
\begin{aligned}
g\{y^{(c)}(t)\}
&=
-\sigma^2B_c^{-1}y^{(c)}(t)\\
&=
-\sigma^2
\left\{\frac{t}{d_c}{\bf 1}+B_c^{-1}(0,c)^\top\right\}\\
&=
-\frac{\sigma^2 t}{d_c}{\bf 1}
-\sigma^2B_c^{-1}(0,c)^\top .
\end{aligned}
\]
Taking the difference between the two adjustments gives
\[
\begin{aligned}
g\{y^{(a)}(t)\}-g\{y^{(b)}(t)\}
&=
-\sigma^2t\left(\frac{1}{d_a}-\frac{1}{d_b}\right){\bf 1}\\
&\quad
-\sigma^2\left\{B_a^{-1}(0,a)^\top-B_b^{-1}(0,b)^\top\right\}.
\end{aligned}
\]
Because $\lambda>0$ and $a^2\ne b^2$, $r_a\ne r_b$, so $d_a\ne d_b$.
The matrices $B_a$ and $B_b$ are fixed invertible matrices, and $a,b$ are
fixed scalars.  Hence the part of the difference that does not depend on $t$
has finite norm; write
\[
C:=\sigma^2
\left\|B_a^{-1}(0,a)^\top-B_b^{-1}(0,b)^\top\right\|_2<\infty .
\]
Applying the reverse triangle inequality to the $t$-dependent term and this
bounded remainder gives
\[
\begin{aligned}
\left\|g\{y^{(a)}(t)\}-g\{y^{(b)}(t)\}\right\|_2
&=
\left\|
-\sigma^2t\left(\frac{1}{d_a}-\frac{1}{d_b}\right){\bf 1}
-\sigma^2\left\{B_a^{-1}(0,a)^\top-B_b^{-1}(0,b)^\top\right\}
\right\|_2\\
&\geq
\left\|
-\sigma^2t\left(\frac{1}{d_a}-\frac{1}{d_b}\right){\bf 1}
\right\|_2
-\sigma^2
\left\|B_a^{-1}(0,a)^\top-B_b^{-1}(0,b)^\top\right\|_2\\
&=
\sqrt 2\,\sigma^2
\left|\frac{1}{d_a}-\frac{1}{d_b}\right|t-C .
\end{aligned}
\]
Because $d_a\ne d_b$, the coefficient of $t$ is nonzero, so this lower bound
diverges as $t\to\infty$.  If $g$ were globally Lipschitz with constant
$L<\infty$, then, since $\|y^{(a)}(t)-y^{(b)}(t)\|_2=|a-b|$ for every $t$,
these inputs would have to satisfy
\[
\left\|g\{y^{(a)}(t)\}-g\{y^{(b)}(t)\}\right\|_2
\leq
L\|y^{(a)}(t)-y^{(b)}(t)\|_2
=
L|a-b|
\]
for every $t$.  The right-hand side is constant in $t$, while the preceding
display shows that the left-hand side diverges.  Therefore $g$ is not globally
Lipschitz.  Finally, $g(y)=f(y)-y$, so if $f$ were globally Lipschitz, then
$g$ would also be globally Lipschitz.  Hence $f$ is not globally Lipschitz either.
\end{proof}

Proposition~\ref{prop:basic_bilateral_not_lipschitz} shows that global
Lipschitz regularity can fail even in a two-dimensional value-similarity
example.  Section~\ref{sec:bilateral_verification} verifies instead a
pointwise polynomial envelope for the general fixed-parameter
value-similarity rule.

\subsection{Fixed-Parameter Value-Similarity Regularity}\label{sec:bilateral_verification}

This subsection verifies the per-candidate regularity condition used for
averaging (Assumption~\ref{asm:averaging_regularity}) for one fixed
value-similarity map: the shrinkage map of
Example~\ref{ex:bilateral_shrinkage}, with the geographic parameter
$\gamma$ and the value-similarity parameter $\lambda$ held fixed.  The
result is a single-candidate verification for the averaging guarantee of
Proposition~\ref{prop:averaging} rather than a uniform regularity
statement over a trained class as in Theorem~\ref{thm:uniform_conc}.
Under a bounded row-sum condition on the fixed geographic factor of the
kernel, the adjustment satisfies the pointwise envelope
$\nu_n(1+\|y\|_2/\sqrt n)$ with envelope scale $\nu_n=C\sqrt n$, for a
constant $C$ that does not depend on $n$
(Lemma~\ref{lem:bilateral_product_envelope}).  Through the sufficient condition
stated after Assumption~\ref{asm:averaging_regularity}, this envelope
gives the moment condition of that assumption with envelope scale
$O(\sqrt n)$, a scale at which the regret bound of
Proposition~\ref{prop:averaging} vanishes for fixed $K$.

Fix $n$ and $\lambda\geq0$, and assume that
\[
0<\underline\sigma^2 I\preceq \Sigma\preceq \bar\sigma^2 I,
\]
with constants independent of $n$.  Write
$\varphi(r):=e^{-\lambda r^2}$.  Throughout this subsection, constants may
depend on the fixed values of $\lambda$, $\underline\sigma^2$,
$\bar\sigma^2$, and the geographic row-sum bound $C_G$ of
Assumption~\ref{asm:geo_row_sums}, but not on $y$.  Dependence on $n$ is displayed
explicitly.  Let $G\in\bbR^{n\times n}$ be a fixed matrix, not depending
on $y$, playing the role of the geographic covariance matrix
$K^{\mathrm{geo}}_\gamma$ of Example~\ref{ex:bilateral_shrinkage} at the
fixed parameter value.  For $y\in\bbR^n$, define the \emph{product
kernel} $K(y)$, the entrywise product of $G$ with the value-similarity
factor of the example,
\[
K_{ij}(y):=G_{ij}\,\varphi(y_i-y_j),
\qquad
A(y):=K(y)+\Sigma,
\qquad
w(y):=A(y)^{-1}y .
\]
The shrinkage map and its adjustment are
\[
f(y):=K(y)A(y)^{-1}y,
\qquad
g(y):=f(y)-y .
\]
The matrix whose $(i,j)$th entry is $\varphi(y_i-y_j)=e^{-\lambda(y_i-y_j)^2}$ is positive
semidefinite for every $y\in\bbR^n$, since the Gaussian kernel
$(a,b)\mapsto e^{-\lambda(a-b)^2}$ is a positive-definite function on
$\bbR$ (with $\lambda=0$ giving the all-ones matrix).

\begin{assumption}[Bounded row sums of the geographic covariance matrix]\label{asm:geo_row_sums}
The matrix $G$ is fixed, symmetric, and positive semidefinite, with
nonnegative entries and, for a constant $C_G<\infty$ not depending on
$n$,
\[
\max_{1\leq i\leq n}\sum_{j=1}^nG_{ij}\leq C_G .
\]
\end{assumption}

The verification proceeds in two steps.  The first lemma computes the
Jacobian of the adjustment in closed form, and the second bounds that
Jacobian entry by entry to obtain the envelope.

\begin{lemma}[Derivative of the product-kernel adjustment]
\label{lem:bilateral_product_jacobian}
Under the standing setup of this subsection and Assumption~\ref{asm:geo_row_sums}, the matrix $A(y)$ is
invertible for every $y\in\bbR^n$, with
\[
\|A(y)^{-1}\|_{\mathrm{op}}\leq\frac{1}{\underline\sigma^2},
\qquad
g(y)=-\Sigma A(y)^{-1}y .
\]
Moreover, $g$ is continuously differentiable on $\bbR^n$, with
\[
Dg(y)
=
-\Sigma A(y)^{-1}
+\Sigma A(y)^{-1}
\bigl[\diag\{Q(y)w(y)\}
-Q(y)\,\diag\{w(y)\}\bigr],
\]
where $Q_{ij}(y):=G_{ij}\,\varphi'(y_i-y_j)$.
\end{lemma}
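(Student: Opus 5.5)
First I establish invertibility and the operator bound. The matrix $K(y)$ is the Hadamard product of $G$, which is positive semidefinite by Assumption~\ref{asm:geo_row_sums}, and the matrix $[\varphi(y_i-y_j)]_{ij}$, which is positive semidefinite because the Gaussian kernel is a positive-definite function (as noted just before the assumption). By the Schur product theorem, $K(y)\succeq0$. Hence $A(y)=K(y)+\Sigma\succeq\Sigma\succeq\underline\sigma^2 I$. So $A(y)$ is invertible, and its smallest eigenvalue is at least $\underline\sigma^2$, which gives $\|A(y)^{-1}\|_{\mathrm{op}}\le 1/\underline\sigma^2$.

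The formula for $g$ is then algebra. Since $K(y)=A(y)-\Sigma$,
\[
f(y)=\{A(y)-\Sigma\}A(y)^{-1}y=y-\Sigma A(y)^{-1}y,
\]
and therefore $g(y)=-\Sigma A(y)^{-1}y=-\Sigma w(y)$.

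For the derivative, each entry of $A(y)$ is a smooth function of $y$ because $\varphi$ is $C^\infty$. Matrix inversion is smooth on the open set of invertible matrices, so $w(y)=A(y)^{-1}y$ is continuously differentiable. I differentiate the identity $A(y)w(y)=y$ in the direction $e_k$:
\[
\partial_k A(y)\,w(y)+A(y)\,\partial_k w(y)=e_k .
\]
Solving gives $\partial_k w=A^{-1}\{e_k-(\partial_kA)w\}$. The remaining task is to assemble the $n\times n$ matrix whose $k$th column is $(\partial_k A)w$.

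Since $\Sigma$ does not depend on $y$, $\partial_k A_{ij}=G_{ij}\varphi'(y_i-y_j)(\delta_{ik}-\delta_{jk})=Q_{ij}(\delta_{ik}-\delta_{jk})$. Then
\[
\{(\partial_k A)w\}_i=\sum_j Q_{ij}(\delta_{ik}-\delta_{jk})w_j=\delta_{ik}(Qw)_i-Q_{ik}w_k .
\]
Collecting over $k$, the matrix with columns $(\partial_kA)w$ equals $\diag\{Qw\}-Q\,\diag\{w\}$. This gives
\[
Dw=A^{-1}-A^{-1}\bigl[\diag\{Qw\}-Q\diag\{w\}\bigr],
\]
and multiplying by $-\Sigma$ yields the stated expression for $Dg(y)$. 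Continuity of $Dg$ follows because every factor in this expression is continuous in $y$.

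The main point needing care is this index bookkeeping. The derivative of $\varphi(y_i-y_j)$ contributes with opposite signs through $y_i$ and $y_j$, and the convention must be that rows index outputs and columns index inputs.
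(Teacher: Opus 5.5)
Your proof is correct and follows the paper's argument: the Schur product theorem gives $K(y)\succeq0$, the identity $K=A-\Sigma$ gives $g(y)=-\Sigma A(y)^{-1}y$, and an index computation shows that the matrix with $k$th column $(\partial_k A)w$ equals $\diag\{Qw\}-Q\,\diag\{w\}$. The only difference is cosmetic: you differentiate $A(y)w(y)=y$ along the coordinate directions $e_k$, while the paper differentiates $A(y)^{-1}$ along a general direction $v$ using $D\{A^{-1}\}[v]=-A^{-1}(DA[v])A^{-1}$, and the two calculations are the same.
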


\begin{proof}[Proof of Lemma~\ref{lem:bilateral_product_jacobian}]
The matrix whose $(i,j)$th entry is $\varphi(y_i-y_j)$ is positive
semidefinite for every $y\in\bbR^n$, as recorded above, and by the Schur
product theorem the entrywise product of two positive semidefinite
matrices is positive semidefinite, so $K(y)\succeq0$ for every
$y\in\bbR^n$.  Therefore
$A(y)\succeq\Sigma\succeq\underline\sigma^2I$, which gives invertibility
of $A(y)$ and $\|A(y)^{-1}\|_{\mathrm{op}}\leq1/\underline\sigma^2$.
Since $K(y)-A(y)=-\Sigma$,
\[
g(y)
=\{K(y)-A(y)\}A(y)^{-1}y
=-\Sigma A(y)^{-1}y .
\]

The map $y\mapsto g(y)$ is continuously differentiable because the
entries of $K(y)$ are continuously differentiable in $y$ and $A(y)$ is
invertible for every $y$.  Because $G$ is fixed, only the factor
$\varphi(y_i-y_j)$ in $K_{ij}(y)$ varies with $y$, so by the scalar chain
rule the directional derivative of $K$ at $y$ in the direction
$v\in\bbR^n$ has entries
\[
\bigl\{DK(y)[v]\bigr\}_{ij}
=G_{ij}\,\varphi'(y_i-y_j)\,(v_i-v_j)
=Q_{ij}(y)(v_i-v_j) .
\]

We now differentiate $g(y)=-\Sigma A(y)^{-1}y$ in the direction $v$.
The product rule gives one term from the inverse and one term from the
final factor $y$:
\[
Dg(y)[v]
=-\Sigma\bigl(D\{A(y)^{-1}\}[v]\bigr)y
-\Sigma A(y)^{-1}v .
\]
The first term requires the directional derivative of the matrix
inverse.  Differentiating both sides of the identity $A(y)^{-1}A(y)=I$
in the direction $v$ --- the right-hand side does not depend on $y$, so
its derivative is zero --- the product rule gives
\[
\bigl(D\{A(y)^{-1}\}[v]\bigr)A(y)+A(y)^{-1}\bigl(DA(y)[v]\bigr)=0 .
\]
Multiplying on the right by $A(y)^{-1}$ and rearranging,
\[
D\{A(y)^{-1}\}[v]
=-A(y)^{-1}\bigl(DA(y)[v]\bigr)A(y)^{-1},
\]
and $DA(y)[v]=DK(y)[v]$ because $\Sigma$ does not depend on $y$.
Substituting this into the expression for $Dg(y)[v]$ and using
$A(y)^{-1}y=w(y)$,
\[
Dg(y)[v]
=\Sigma A(y)^{-1}\bigl\{DK(y)[v]\bigr\}w(y)
-\Sigma A(y)^{-1}v .
\]
The second term matches the leading term $-\Sigma A(y)^{-1}$ of the
formula in the lemma statement.  The first term is where the kernel's
dependence on the data enters.  To identify its contribution to
$Dg(y)$, we write the vector $\{DK(y)[v]\}w(y)$ as a matrix acting on
$v$.  Its $i$th entry splits into a $v_i$ part and a $v_j$ part:
\begin{align*}
\bigl[\{DK(y)[v]\}w(y)\bigr]_i
&=\sum_jQ_{ij}(v_i-v_j)w_j\\
&=v_i\sum_jQ_{ij}w_j-\sum_jQ_{ij}w_jv_j\\
&=\bigl[\diag\{Q(y)w(y)\}\,v\bigr]_i-\bigl[Q(y)\,\diag\{w(y)\}\,v\bigr]_i .
\end{align*}
Hence
$\{DK(y)[v]\}w(y)=\bigl[\diag\{Q(y)w(y)\}-Q(y)\,\diag\{w(y)\}\bigr]v$,
and substituting this into the first term of the expression for
$Dg(y)[v]$,
\[
Dg(y)[v]
=\Bigl(-\Sigma A(y)^{-1}
+\Sigma A(y)^{-1}\bigl[\diag\{Q(y)w(y)\}
-Q(y)\,\diag\{w(y)\}\bigr]\Bigr)v .
\]
Since this holds for every direction $v$, the matrix in parentheses is
$Dg(y)$, as stated in the lemma.
\end{proof}

\begin{lemma}[Product-kernel envelope]\label{lem:bilateral_product_envelope}
Under the standing setup of this subsection and Assumption~\ref{asm:geo_row_sums}, the singleton class
$\mathcal F=\{f\}$ satisfies Assumption~\ref{asm:regularity} with
$\beta=1/2$ and $\nu_n=C\sqrt n$ for a constant $C<\infty$ not depending
on $n$; that is,
\[
\|g(y)\|_2+\|Dg(y)\|_F
\leq
\nu_n\left(1+\frac{\|y\|_2}{\sqrt n}\right)
\qquad\text{for all }y\in\bbR^n .
\]
\end{lemma}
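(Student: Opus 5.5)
The plan is to read off the Jacobian from Lemma~\ref{lem:bilateral_product_jacobian} and bound each piece separately. The three ingredients are: the uniform bound $\|A(y)^{-1}\|_{\mathrm{op}}\leq 1/\underline\sigma^2$; the boundedness of $\varphi'$; and the row-sum condition of Assumption~\ref{asm:geo_row_sums}. Since $\Gamma$ is a singleton, only the reference term in Assumption~\ref{asm:regularity} needs to be checked. Continuous differentiability of $g$ is already given by Lemma~\ref{lem:bilateral_product_jacobian}.

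\emph{Level term.} From $g(y)=-\Sigma A(y)^{-1}y$ we get
\[
\|g(y)\|_2\leq \|\Sigma\|_{\mathrm{op}}\|A(y)^{-1}\|_{\mathrm{op}}\|y\|_2\leq (\bar\sigma^2/\underline\sigma^2)\|y\|_2 .
\]
The same operator bounds give $\|w(y)\|_2\leq \|y\|_2/\underline\sigma^2$.

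\emph{First Jacobian term.} The leading term $-\Sigma A(y)^{-1}$ is an $n\times n$ matrix with operator norm at most $\bar\sigma^2/\underline\sigma^2$. Using $\|B\|_F\leq\sqrt n\,\|B\|_{\mathrm{op}}$, its Frobenius norm is at most $\sqrt n\,\bar\sigma^2/\underline\sigma^2$. This term is the source of the $\sqrt n$ in $\nu_n$, exactly as in Lemma~\ref{lem:row_norm_linear_smoothers}.

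\emph{Second Jacobian term.} This is the term I expect to be the main obstacle, because it carries the data dependence of the kernel. Write $M(y):=\diag\{Q(y)w(y)\}-Q(y)\diag\{w(y)\}$. By $\|AB\|_F\leq\|A\|_{\mathrm{op}}\|B\|_F$ it suffices to show $\|M(y)\|_F\lesssim\|w(y)\|_2$ with a constant independent of $n$.

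The key observation is that
\[
|\varphi'(r)|=2\lambda|r|e^{-\lambda r^2}\leq c_\varphi:=\sqrt{2\lambda/e}
\]
uniformly in $r$. Hence $|Q_{ij}(y)|\leq c_\varphi G_{ij}$ entrywise, uniformly in $y$, and this is what lets the value-similarity factor fail global Lipschitzness (Proposition~\ref{prop:basic_bilateral_not_lipschitz}) while still admitting a linear-growth envelope. I would then bound the two parts of $M$ as follows.

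\begin{itemize}
\item[(a)] For the diagonal part, $\|\diag\{Qw\}\|_F=\|Qw\|_2\leq\|Q\|_{\mathrm{op}}\|w\|_2$. By the Schur test with the entrywise majorant $c_\varphi G$, where $G$ is symmetric with row sums at most $C_G$, we get $\|Q\|_{\mathrm{op}}\leq c_\varphi C_G$.
\item[(b)] For the other part, $\|Q\diag\{w\}\|_F^2=\sum_j w_j^2\sum_i Q_{ij}^2\leq c_\varphi^2\sum_j w_j^2\sum_i G_{ij}^2$. Since $G$ has nonnegative entries and bounded row sums, each entry satisfies $G_{ij}\leq C_G$. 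Using symmetry for the column sums, $\sum_i G_{ij}^2\leq C_G\sum_i G_{ij}\leq C_G^2$. So $\|Q\diag\{w\}\|_F\leq c_\varphi C_G\|w\|_2$.
\end{itemize}

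Thus $\|M(y)\|_F\leq 2c_\varphi C_G\|y\|_2/\underline\sigma^2$, and the second Jacobian term is at most $2c_\varphi C_G\bar\sigma^2\|y\|_2/\underline\sigma^4$.

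\emph{Combining.} Summing the three bounds gives
\[
\|g(y)\|_2+\|Dg(y)\|_F\leq C_1\sqrt n+C_2\|y\|_2\leq \max\{C_1,C_2\}\sqrt n\,\bigl(1+\|y\|_2/\sqrt n\bigr),
\]
where $C_1,C_2$ depend only on $\lambda,\underline\sigma^2,\bar\sigma^2,C_G$. This is the envelope of Assumption~\ref{asm:regularity} with $2\beta=1$, i.e.\ $\beta=1/2$, and $\nu_n=C\sqrt n$.
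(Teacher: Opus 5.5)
Your proposal is correct and follows the paper's proof essentially step for step. Both arguments use the same four pieces: the level bound via $\|A(y)^{-1}\|_{\mathrm{op}}\leq1/\underline\sigma^2$, the bound $\|\Sigma A^{-1}\|_F\leq\sqrt n\,\rho_\sigma$ as the source of $\sqrt n$, the Schur test with majorant $\sqrt{2\lambda/e}\,G$ to get $\|Q\|_{\mathrm{op}}\leq L_\lambda C_G$, and the column-wise bound $\sum_iG_{ij}^2\leq C_G^2$ for $\|Q\,\diag(w)\|_F$. These lead to the same constant $C=\rho_\sigma(1+2L_\lambda C_G/\underline\sigma^2)$.
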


\begin{proof}[Proof of Lemma~\ref{lem:bilateral_product_envelope}]
Lemma~\ref{lem:bilateral_product_jacobian} gives the continuous differentiability of $g$.
It remains to verify the displayed envelope.  Write
$\rho_\sigma:=\bar\sigma^2/\underline\sigma^2$.  We first bound
the Euclidean norm of the shrinkage adjustment, the $\|g(y)\|_2$ term in
the displayed envelope.  By Lemma~\ref{lem:bilateral_product_jacobian},
$g(y)=-\Sigma A(y)^{-1}y$ with
$\|A(y)^{-1}\|_{\mathrm{op}}\leq1/\underline\sigma^2$, and
$\|\Sigma\|_{\mathrm{op}}\leq\bar\sigma^2$, so
\[
\|g(y)\|_2\leq\rho_\sigma\|y\|_2,
\qquad
\|w(y)\|_2\leq\frac{\|y\|_2}{\underline\sigma^2} .
\]

Next we bound the Frobenius norm of the derivative.  Write $A:=A(y)$,
$w:=w(y)$, $Q:=Q(y)$.  Applying the triangle inequality to the Jacobian
formula of Lemma~\ref{lem:bilateral_product_jacobian}, and then the bound
$\|MX\|_F\leq\|M\|_{\mathrm{op}}\|X\|_F$ with $M=\Sigma A^{-1}$ and
$\|\Sigma A^{-1}\|_{\mathrm{op}}\leq\rho_\sigma$,
\begin{align*}
\|Dg(y)\|_F
&\leq
\|\Sigma A^{-1}\|_F
+\|\Sigma A^{-1}\diag(Qw)\|_F
+\|\Sigma A^{-1}Q\,\diag(w)\|_F\\
&\leq
\|\Sigma A^{-1}\|_F
+\rho_\sigma\|\diag(Qw)\|_F
+\rho_\sigma\|Q\,\diag(w)\|_F .
\end{align*}
For the middle term, $\|\diag(Qw)\|_F=\|Qw\|_2$.  The Frobenius
norm of an $n\times n$ matrix is at most $\sqrt n$ times its operator
norm, so the first term satisfies
\[
\|\Sigma A^{-1}\|_F
\leq\sqrt n\,\|\Sigma A^{-1}\|_{\mathrm{op}}
\leq\rho_\sigma\sqrt n .
\]
It remains to bound $\|Qw\|_2$ and $\|Q\,\diag(w)\|_F$.

Recall $Q_{ij}(y)=G_{ij}\,\varphi'(y_i-y_j)$, with
$\varphi(r)=e^{-\lambda r^2}$ from the setup, so
$\varphi'(r)=-2\lambda re^{-\lambda r^2}$.  For $\lambda>0$, the function $|\varphi'|$
attains its supremum at $|r|=1/\sqrt{2\lambda}$; together with the trivial case
$\varphi'\equiv0$ at $\lambda=0$, this gives
$L_\lambda:=\sup_r|\varphi'(r)|=\sqrt{2\lambda/e}$ for all $\lambda\geq0$.  Since
the entries of $G$ are nonnegative, $|Q_{ij}|\leq L_\lambda G_{ij}$.
Because $G$ is symmetric, its column sums equal its row sums, so every
row sum and every column sum of $G$ is at most $C_G$, and the same holds
for the entrywise absolute values of $Q$ up to the factor $L_\lambda$:
\[
\max_i\sum_j|Q_{ij}|\leq L_\lambda C_G,
\qquad
\max_j\sum_i|Q_{ij}|\leq L_\lambda C_G .
\]

For $\|Qw\|_2$ we bound the operator norm of $Q$, which is the supremum
of $|u^\top Qv|$ over unit vectors $u,v\in\bbR^n$.  For unit vectors $u$
and $v$, the Cauchy--Schwarz inequality gives
\begin{align*}
|u^\top Qv|
&\leq
\sum_{i,j}\bigl(|Q_{ij}|^{1/2}|u_i|\bigr)\bigl(|Q_{ij}|^{1/2}|v_j|\bigr)\\
&\leq
\Bigl(\sum_{i,j}|Q_{ij}|\,u_i^2\Bigr)^{1/2}
\Bigl(\sum_{i,j}|Q_{ij}|\,v_j^2\Bigr)^{1/2}\\
&\leq
\Bigl(\max_{i}\sum_j|Q_{ij}|\Bigr)^{1/2}
\Bigl(\max_{j}\sum_i|Q_{ij}|\Bigr)^{1/2}\\
&\leq
L_\lambda C_G .
\end{align*}
Taking the supremum over unit vectors $u$ and $v$ gives
$\|Q\|_{\mathrm{op}}\leq L_\lambda C_G$.  Combining this operator-norm
bound with $\|w\|_2\leq\|y\|_2/\underline\sigma^2$ from the first
display,
\[
\|Qw\|_2
\leq\|Q\|_{\mathrm{op}}\|w\|_2
\leq\frac{L_\lambda C_G}{\underline\sigma^2}\,\|y\|_2 .
\]

For $\|Q\,\diag(w)\|_F$, multiplying $Q$ by $\diag(w)$ on the right
rescales its columns: the $j$th column of $Q\,\diag(w)$ is
$w_jQ_{\cdot j}$, so
\[
\|Q\,\diag(w)\|_F^2
=\sum_{j=1}^nw_j^2\sum_{i=1}^nQ_{ij}^2 .
\]
Each column of $Q$ satisfies
\[
\sum_iQ_{ij}^2
\leq L_\lambda^2\sum_iG_{ij}^2
\leq L_\lambda^2\Bigl(\max_iG_{ij}\Bigr)\sum_iG_{ij}
\leq L_\lambda^2C_G^2 ,
\]
using $|Q_{ij}|\leq L_\lambda G_{ij}$ for the first inequality,
$G_{ij}^2\leq(\max_iG_{ij})G_{ij}$ for the second, and
$\max_iG_{ij}\leq\sum_iG_{ij}\leq C_G$ for the third.  Combining the two
displays with $\|w\|_2\leq\|y\|_2/\underline\sigma^2$,
\[
\|Q\,\diag(w)\|_F^2
\leq L_\lambda^2C_G^2\|w\|_2^2
\leq\frac{L_\lambda^2C_G^2}{\underline\sigma^4}\,\|y\|_2^2 .
\]

Collecting the bound on $\|g(y)\|_2$ and the three derivative bounds,
\[
\|g(y)\|_2+\|Dg(y)\|_F
\leq
\rho_\sigma\sqrt n
+\rho_\sigma\left(1+\frac{2L_\lambda C_G}{\underline\sigma^2}\right)
\|y\|_2 .
\]
Set $C:=\rho_\sigma(1+2L_\lambda C_G/\underline\sigma^2)$, which does
not depend on $n$ or $y$.  The first term is at most $C\sqrt n$ because
$\rho_\sigma\leq C$, and the second term equals
$C\sqrt n\,(\|y\|_2/\sqrt n)$, so
\[
\|g(y)\|_2+\|Dg(y)\|_F
\leq
C\sqrt n\left(1+\frac{\|y\|_2}{\sqrt n}\right),
\]
the pointwise envelope with $\beta=1/2$ and $\nu_n=C\sqrt n$.
\end{proof}

\section{Implementation and Computation}\label{app:implementation_computation}

\subsection{Opportunity Atlas Implementation Details}\label{app:implementation}

The main comparison across CZs for the pooled outcome uses the seven-candidate library summarized in Table~\ref{tab:oa_candidates}, which compares non-spatial baselines to geographic and contiguity \gp{} shrinkage candidates.  The additional-outcome and targeting tables use the four-candidate geographic-distance \gp{} comparison stated in Appendix~\ref{app:oa_robustness}.  The value-similarity rule from Example~\ref{ex:bilateral_shrinkage} appears in the Cook County comparison in Section~\ref{sec:oa_value_similarity_ladder} and Figure~\ref{fig:oa_geo_bilateral_ladder}, not as part of the main average across CZs.  All reported \sure{} values use the diagonal sampling covariance formed from the Opportunity Atlas marginal standard errors.  The paragraphs below record how these \sure{} values are computed for the implemented candidates, including randomized trace estimation and the training correction.

The reported Opportunity Atlas computations were run on a single workstation GPU (an NVIDIA RTX 4090).

The trainable \gp{} candidates are trained by minimizing the proxy \sure{} criterion via AdamW optimization \citep{kingmaAdamMethodStochastic2017,loshchilovDecoupledWeightDecay2019} with cosine learning-rate schedule, with the weight-decay coefficient set to zero for the \gp{} parameters, and gradient clipping at norm $1.0$.  The \gp{} candidates in Table~\ref{tab:oa_candidates} use learning rate $0.02$ for 100 epochs; the value-similarity \gpbilat{} candidates used in the Cook County comparison use learning rate $0.10$ for 100 epochs.

For the \gpbilat{} candidates, the value-similarity component is computed after standardizing by the reported standard error and applying an arcsinh transformation: $z_i=\operatorname{arcsinh}(Y_i/\sigma_i)$.  The implementation then uses squared differences $(z_i-z_j)^2$ in this component.  This transformation compresses extreme standardized estimates and led to more stable training in the empirical runs.

Gradient clipping is used only as a numerical safeguard.  Appendix~\ref{app:learned_parameter_regularity} covers the corresponding differentiable AdamW update map with clipping omitted.  The reported \sure{} calculations apply automatic differentiation to the implemented training path used in the empirical runs.

To compute \sure{} in a way that accounts for trained-parameter dependence on $Y$, we propagate derivative information (tangent vectors) through the AdamW optimizer state during training---that is, we differentiate through the training iterations themselves---following the approach described in Section~\ref{sec:learned_params}.  The randomized trace terms use Hutchinson probe vectors---the random vectors $v$ in the trace estimate of Section~\ref{sec:learned_params}---scaled coordinatewise by the reported standard errors: $v_i=\sigma_i z_i$ with $z_i\in\{-1,1\}$, so that $\E[vv^\top]=\Sigma$ \citep{hutchinson_stochastic_1990}.  During optimization, the proxy \sure{} criterion uses 5 probes.  After training, 10 Hutchinson probes are used to evaluate \sure{} for each final estimator.  All probe draws and optimization seeds are treated as external algorithmic randomness.  Conditional on them, each reported estimator is a deterministic map of $Y$, and automatic differentiation differentiates that implemented map.  With finitely many probes, the reported divergence is a Hutchinson estimate of the full divergence trace: it is unbiased over the probe randomness, but noisier than full trace computation.  Thus the reported trainable-candidate values should be read as Hutchinson-estimated \sure{} values for the implemented trained maps.

Geographic distance matrices record Euclidean coordinate distance between tract centroids in longitude--latitude coordinates, and contiguity distance matrices record shortest-path distance on the queen-contiguity tract graph (tracts sharing a boundary or vertex are adjacent).  Both distance types are median-normalized before kernel evaluation, so the kernels operate on normalized distances rather than squared distances, consistent with the kernel definitions in Section~\ref{sec:estimators}.

The \nneb{} candidate is the global normal--normal EB rule.  In the implementation, this closed-form rule is evaluated after a global centering and scaling normalization and then mapped back to the original scale.  Apart from the scale floors, this normalization is algebraically equivalent to applying the same rule directly to $Y$, so it is not listed as preprocessing in Table~\ref{tab:oa_candidates}.  Local Nadaraya--Watson preprocessing is used for \closegauss{} and the \gp{} candidates, following the precision-based standardization in \citet{chenEmpiricalBayesWhen2024}.  Nadaraya--Watson weights based on log reported variance define a local conditional mean $\hat\mu_i$ and standard deviation $\hat s_i$, the outcome is standardized to $\tilde Y_i=(Y_i-\hat\mu_i)/\hat s_i$ before prediction, and the prediction is transformed back afterward.  The log reported variance, the conditioning variable of the Nadaraya--Watson regression, is standardized to unit standard deviation, so the bandwidths are deterministic functions of the tract count: the local-mean bandwidth uses Silverman's rule of thumb, $1.06\,n^{-1/5}$, and the local-scale bandwidth uses the oversmoothed rule $2.0\,n^{-1/6}$.  Neither bandwidth is trained by \sure{} or otherwise depends on $Y$.  Candidates labeled ``OLS'' in Table~\ref{tab:oa_candidates} additionally residualize $Y$ on \texttt{pct\_white}, \texttt{pct\_black}, \texttt{pct\_hispanic}, and \texttt{median\_age} before applying local Nadaraya--Watson preprocessing and the \gp{} smoother.  The fitted demographic component is added back afterward.  Automatic differentiation is applied to the implemented prediction wrapper, including these preprocessing and inverse-preprocessing steps conditional on the precomputed Nadaraya--Watson weights and scale floors.

The weights of the \sure{}-weighted average are computed by solving the simplex-constrained quadratic program
\[
\min_{w \in \Delta^{K-1}} \sure_n(f_w),
\]
using the trained candidate predictions and the per-candidate divergence estimates, which together determine the \sure{} objective for each convex combination.  The oracle comparison treats the selected weights as fixed after selection.  The reported \sure{} value for the \sure{}-weighted average instead differentiates the full map $Y\mapsto f_{\hat w(Y)}(Y)$.  In the appendix comparisons, we also report a fixed-weight proxy that holds $\hat w(Y)$ fixed in the \sure{} calculation.  This proxy isolates the fixed-weight averaging object used in the model-averaging oracle comparison; Appendix~\ref{app:averaging_adaptive_weights} states the condition used to interpret the reported evaluation of the \sure{}-weighted average.

\section{Supplementary Empirical Analyses and Additional Outcomes}\label{app:empirical_diagnostics}

\subsection{Fixed-Parameter Proxy Comparisons for the Main Candidate Library}\label{app:proxy_exact_case}

The proxy comparison is supplementary to the main empirical claim, but it is useful for understanding trained-parameter optimism.  Table~\ref{tab:oa_proxy_exact} reports the tract-weighted pooled comparison for the main pooled-outcome candidate library.  Figure~\ref{fig:oa_proxy_exact_case} displays the correction for the same candidate library and pooled outcome as Table~\ref{tab:oa_proxy_exact}, with the \sure{}-minus-proxy gap on the vertical axis.  For \nneb{} and \closegauss{} the fixed-parameter proxy holds their moment-based estimates fixed---the global mean and prior variance for \nneb{}, the local mean and variance for \closegauss{}---so the gap comes entirely from the data-dependence of those estimates.

\input{tables/auto_oa_proxy_exact_table}

\begin{figure}[H]
  \centering
  \includegraphics[width=0.85\textwidth]{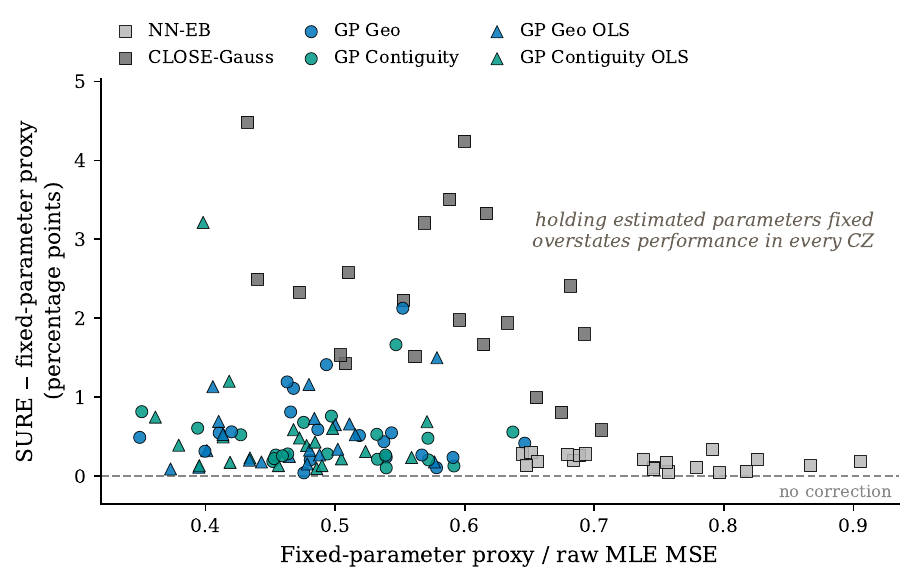}
  \caption{Trained-parameter optimism across commuting zones.  Each point is one candidate in one CZ (\oaProxyGapPoints{} points: the non-\mle{} candidates of the main library in all \oaProxyGapNCzs{} CZs, pooled outcome).  The vertical axis is \sure{} minus the fixed-parameter proxy in percentage points; the horizontal axis is the proxy value divided by the raw-\mle{} MSE benchmark.  For \nneb{} and \closegauss{} the proxy holds their moment-based mean and variance estimates fixed.  The dashed line marks a zero correction.  The gap is positive for every candidate in every CZ.  The size of the correction tracks how much each tract's own observation feeds back into the estimated structure of its fit: \nneb{} estimates two global moments, so its corrections stay small; \closegauss{} estimates the mean and variance locally, giving each tract more influence over its own shrinkage and the largest corrections in the library.  Marker shapes and colors follow Figure~\ref{fig:oa_heterogeneity}.}
  \label{fig:oa_proxy_exact_case}
\end{figure}

\subsection{Additional Opportunity Atlas KFR Outcomes}\label{app:oa_robustness}

Table~\ref{tab:oa_robustness} reports a four-candidate comparison for the pooled outcome and three additional KFR outcomes.  Each column uses the \mle{}, \nneb{}, \closegauss{}, and the geographic-distance \gp{} candidate, which uses local Nadaraya--Watson preprocessing and no OLS residualization.  The table asks whether spatial shrinkage's advantage over the non-spatial EB baselines on the pooled outcome also appears in related outcomes, while keeping the larger seven-candidate library reserved for the main pooled-outcome result.

\input{tables/auto_oa_robustness_table}

The targeting exercise in Table~\ref{tab:oa_targeting_welfare} uses the same four-candidate comparison with the geographic-distance \gp{}.  For each coupled-bootstrap draw, estimators are trained on one perturbed sample, tracts are ranked by the resulting predictions, and the top third are evaluated on the paired perturbed sample.  The table reports tract-weighted averages across CZs for the pooled outcome and the same subgroup outcomes shown above. We use $B=20$ coupled-bootstrap draws per CZ. Each draw retrains all four candidates in every CZ, and each reported cell averages over all \oaNCzs{} CZs, so it pools $400$ ($=20\times\oaNCzs{}$) trained coupled-bootstrap evaluations.

\subsection{Coupled Bootstrap Comparison}\label{app:cb}

As a supplementary check, we compare \sure{} to the coupled bootstrap (CB) of \citet{oliveiraUnbiasedRiskEstimation2024}.  For each bootstrap replicate $b = 1, \ldots, B$:
\begin{enumerate}[nosep]
  \item Draw $\xi^{(b)} \sim \N(0, \Sigma)$ and form the paired samples
  \[
  Y^{*(b)} = Y + \sqrt{\alpha}\,\xi^{(b)},
  \qquad
  Y^{\dagger(b)} = Y - \alpha^{-1/2}\xi^{(b)}.
  \]
  \item Train the estimator on $Y^{*(b)}$ to obtain predictions $\hat\theta^{*(b)}$.
  \item Compute the normalized debiased MSE:
  $n^{-1}\{\|\hat\theta^{*(b)} - Y^{\dagger(b)}\|^2 - \alpha^{-1}\|\xi^{(b)}\|^2 - \tr(\Sigma)\}$.
\end{enumerate}
The subtracted correction has expectation $(1+\alpha^{-1})\tr(\Sigma)$, which equals $\E\|Y^{\dagger}-\theta\|^2$, the evaluation sample's noise contribution.  The training and evaluation noises in $Y^{*(b)}$ and $Y^{\dagger(b)}$ are uncorrelated, and hence independent in the Gaussian case.  The average over $B$ replicates estimates the risk of the estimator applied to the variance-inflated training sample $Y^*$; as $\alpha\downarrow0$ this targets the original risk, while finite $\alpha$ leaves the variance inflation described below.

For a fixed linear smoother $f(Y) = SY$, write $R_n(f)=n^{-1}\E\|SY-\theta\|^2 = n^{-1}\|(S-I)\theta\|^2 + V_n(f)$ for its expected squared-error risk and $V_n(f)=n^{-1}\tr(S\Sigma S^\top)$ for the variance component of that risk.  In the homoskedastic case, this reduces to $V_n(f)=\sigma^2 \tr(S^\top S)/n$.  The expected CB risk satisfies $\E[\mathrm{CB}(f)] = R_n(f) + \alpha\,V_n(f)$.  Normalizing by the homoskedastic $\E[\mathrm{CB}(\mathrm{MLE})]=(1+\alpha)\sigma^2$ gives the ratio calculation below.  The same variance-inflation logic applies in the heteroskedastic case with $\E[\mathrm{CB}(\mathrm{MLE})]=(1+\alpha)n^{-1}\tr(\Sigma)$.  When $V_n(f)/R_n(f)<1$, the CB \emph{ratio} underestimates the \sure{} ratio by a factor $(1+\alpha)/(1+\alpha\,V_n(f)/R_n(f))$.  Figure~\ref{fig:sure_vs_cb} is consistent with this pattern: all points lie above the 45-degree line, with the gap increasing for candidates whose risk has a smaller variance share $V_n(f)/R_n(f)$ (heavier smoothing).

\begin{figure}[H]
  \centering
  \includegraphics[width=0.65\textwidth]{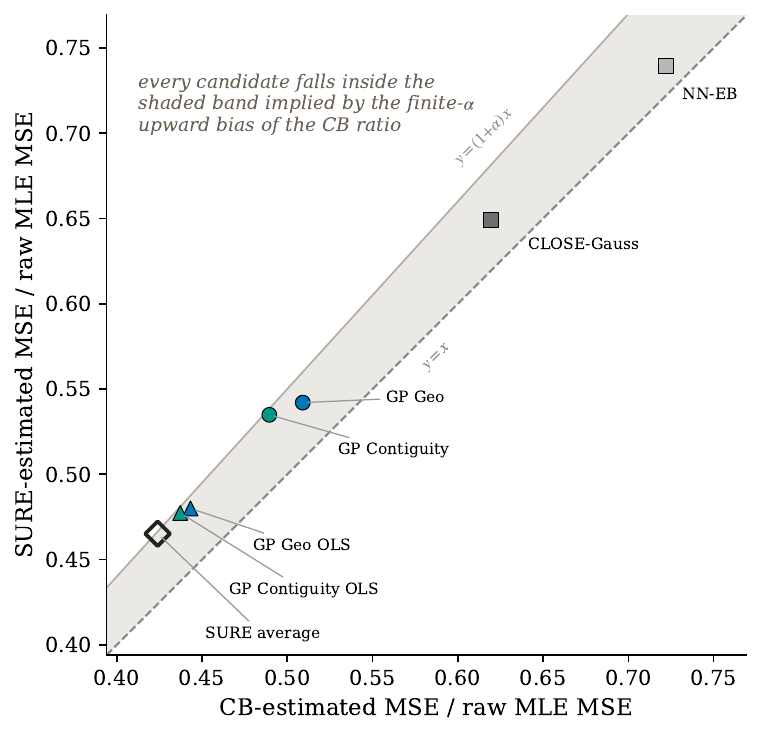}
  \caption{\sure{}-estimated MSE ratio versus coupled-bootstrap MSE estimate for Pittsburgh, the smallest CZ in the sample ($B = 100$, $\alpha = 0.1$).  Each coupled-bootstrap replicate retrains every candidate, so the comparison uses the CZ where the $B=100$ retrainings are least costly.  The plotted candidates are the non-\mle{} members of the main library of Table~\ref{tab:oa_candidates} (\mle{} is the normalizing benchmark), and the open diamond is the \sure{}-weighted average.  The vertical coordinate divides \sure{}-estimated MSE by the raw-\mle{} MSE benchmark.  The horizontal coordinate divides the coupled-bootstrap MSE estimate by its raw-\mle{} counterpart.  Points lie close to, and systematically above, the 45-degree line, showing that this derivative-free check closely tracks \sure{} in this example.  The upward shift reflects the finite-$\alpha$ bias of the CB ratio.  The CB denominator is inflated by the full factor $(1+\alpha)$ while each numerator is inflated only by $(1+\alpha V_n/R_n)$, so the CB ratio underestimates the \sure{} ratio by $(1+\alpha)/(1+\alpha V_n/R_n)$.  The shaded band marks the implied region between $y=x$ and $y=(1+\alpha)x$.  The band is exact in expectation for fixed linear smoothers.  For the trained candidates and the \sure{}-weighted average, whose training is rerun on each variance-inflated sample, it should be read as a reference band.  Individual points carry Monte Carlo error from the $B=100$ replicates.  Marker shapes and colors follow Figure~\ref{fig:oa_heterogeneity}.}
  \label{fig:sure_vs_cb}
\end{figure}

\subsection{Supplementary ASSURE Targeting Comparison}\label{app:assure}

\citet{chenCompoundSelectionDecisions2025} propose ASSURE, which directly optimizes a welfare criterion for compound selection decisions.  In a threshold targeting problem with cost $K$, a rule selects unit $i$ when its decision index exceeds $K$, and the latent welfare is proportional to $(\theta_i-K)$ for selected units.\footnote{In this appendix, $K$ denotes the selection cost threshold and $\lambda$ indexes the smoothed threshold rule. Neither is related to the candidate count $K$ or the value-similarity parameter $\lambda$ used elsewhere in the paper.}  This is a different target from \sure{}: ASSURE targets one decision problem, while \sure{} estimates squared-error risk for reusable estimation maps.  For that reason, the comparison here is appendix-only.

The ASSURE criterion smooths the discontinuous threshold rule.  Write $\sigma_i^2$ for the sampling variance of $Y_i$ and let
\[
u_i(\delta)=\frac{Y_i-\delta_i}{\sigma_i h},
\qquad
h=\{2\log(n)\}^{-1/2}.
\]
With
\[
\operatorname{sinc}(u)=\frac{\sin u}{\pi u},
\qquad
\operatorname{Csinc}(u)=\frac{1}{2}+\frac{\operatorname{Si}(u)}{\pi},
\]
where $\operatorname{Si}(u)=\int_0^u \sin(t)/t\,\mathrm{d}t$ is the sine integral, the implemented ASSURE objective for a threshold vector $\delta=(\delta_1,\ldots,\delta_n)$ is
\begin{equation}
\widehat W_h(\delta;Y)
=
\frac{1}{n}\sum_{i=1}^n
\left[
(Y_i-K)\operatorname{Csinc}\{u_i(\delta)\}
-\frac{\sigma_i}{h}\operatorname{sinc}\{u_i(\delta)\}
\right].
\label{eq:assure_welfare_objective}
\end{equation}
The raw-\mle{} rule corresponds to $\delta_i=K$, so it selects units with $Y_i>K$.

The figure compares three non-\mle{} rules.  Linear ASSURE uses the one-parameter threshold
\[
\delta_i(\beta)=K+\beta\sigma_i^2,
\qquad
\widehat\beta\in\argmax_{\beta\in\mathcal B}
\widehat W_h\{\delta(\beta);Y\},
\]
where $\mathcal B$ is the grid used in this comparison.  Spatial ASSURE is a grid search over geographic \gp{} parameters. It starts from a geographic \gp{} prediction map $\widehat\theta_\lambda(Y)$ indexed by kernel parameters $\lambda$.  For each $\lambda$ in a grid centered at the \sure{}-trained geographic \gp{} parameters, the rule selects units with $\widehat\theta_{\lambda,i}(Y)>K$.  To evaluate this selection rule with \eqref{eq:assure_welfare_objective}, the implementation uses the local threshold representation
\[
\delta_i(\lambda;Y)
=
Y_i-\frac{\widehat\theta_{\lambda,i}(Y)-K}{H_i},
\]
where $H_i>0$ is the estimated diagonal derivative $\partial \widehat\theta_i(Y)/\partial Y_i$ at the \sure{}-trained geographic \gp{} map, held fixed across the grid.  This is a local linearized threshold representation, not an exact global inversion of the prediction map.  Spatial ASSURE chooses the grid point maximizing $\widehat W_h\{\delta(\lambda;Y);Y\}$.  The \sure{} plug-in rule does not optimize \eqref{eq:assure_welfare_objective}.  It trains the geographic \gp{} by \sure{} and then selects units with $\widehat\theta_{\widehat\lambda_{\sure},i}(Y)>K$.

The plotted values are coupled-bootstrap estimates of the realized threshold welfare delivered by the trained rules, not the smoothed objective in \eqref{eq:assure_welfare_objective}.  For bootstrap draw $b$, draw $\xi^{(b)}\sim \N(0,\diag(\sigma_1^2,\ldots,\sigma_n^2))$ independently of $Y$ and form
\[
Y^{*(b)}=Y+\sqrt{\alpha}\,\xi^{(b)},
\qquad
Y^{\dagger(b)}=Y-\alpha^{-1/2}\xi^{(b)}.
\]
Each rule is trained on $Y^{*(b)}$, producing a selection vector
$S_m^{(b)}\in\{0,1\}^n$.  The coupled-bootstrap welfare estimate for method $m$ is
\begin{equation}
\widehat W_{\mathrm{CB}}(m;K)
=
\frac{1}{B}\sum_{b=1}^B
\frac{1}{n}\sum_{i=1}^n
\bigl(Y_i^{\dagger(b)}-K\bigr)S_{m,i}^{(b)}.
\label{eq:assure_cb_welfare}
\end{equation}
Figure~\ref{fig:assure_welfare} reports
$\widehat W_{\mathrm{CB}}(m;K)-\widehat W_{\mathrm{CB}}(\mle;K)$, computed from paired draw-level differences.
For Spatial ASSURE, the \sure{}-trained grid center and diagonal derivative $H_i$ are recomputed within each training draw $Y^{*(b)}$.  The proposition below applies to a rule whose training and selection operations are functions of the training draw.

\begin{proposition}[Coupled-bootstrap targeting evaluation]\label{prop:assure_cb_unbiased}
Fix a measurable selection rule $S:\bbR^n\to\{0,1\}^n$.  Suppose $Y=\theta+\varepsilon$, $\varepsilon\sim \N(0,\Sigma)$, and $\xi\sim \N(0,\Sigma)$ is independent of $Y$.  Let
$Y^*=Y+\sqrt{\alpha}\xi$ and
$Y^\dagger=Y-\alpha^{-1/2}\xi$.  Then
\[
\E\left[
\frac{1}{n}\sum_{i=1}^n (Y_i^\dagger-K)S_i(Y^*)
\right]
=
\E\left[
\frac{1}{n}\sum_{i=1}^n (\theta_i-K)S_i(Y^*)
\right].
\]
Thus the coupled-bootstrap evaluation is unbiased for the welfare of the rule trained on the variance-inflated sample
$Y^*\sim \N(\theta,(1+\alpha)\Sigma)$.  If the welfare of the rule is continuous in the training-noise variance at $\alpha=0$, this target converges to the welfare of the same rule trained on the original sampling distribution as $\alpha\downarrow0$.
\end{proposition}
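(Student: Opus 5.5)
The approach is to decompose $Y^\dagger-K$ into the signal part $\theta-K$ and a noise part, and then show that the noise part is independent of the selection vector $S(Y^*)$ and has mean zero. Write $Y=\theta+\varepsilon$, so that
\[
Y^\dagger-\theta=\varepsilon-\alpha^{-1/2}\xi,
\qquad
Y^*-\theta=\varepsilon+\sqrt{\alpha}\,\xi .
\]
Each summand then splits as
\[
(Y_i^\dagger-K)S_i(Y^*)=(\theta_i-K)S_i(Y^*)+(\varepsilon_i-\alpha^{-1/2}\xi_i)S_i(Y^*).
\]
The claim therefore reduces to showing $\E[(\varepsilon_i-\alpha^{-1/2}\xi_i)S_i(Y^*)]=0$ for each $i$, and then averaging over $i$.

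The key step is independence of the Gaussian vectors $U:=\varepsilon-\alpha^{-1/2}\xi$ and $V:=\varepsilon+\sqrt\alpha\,\xi$. The pair $(\varepsilon,\xi)$ is jointly Gaussian with $\varepsilon$ and $\xi$ independent and both distributed $\N(0,\Sigma)$, so $(U,V)$ is jointly Gaussian. Its cross-covariance is
\[
\Cov(U,V)=\Sigma-\alpha^{-1/2}\sqrt\alpha\,\Sigma=0 .
\]
For a jointly Gaussian pair, zero cross-covariance gives independence of $U$ and $V$. Since $S(Y^*)=S(\theta+V)$ is a measurable function of $V$ alone, with $\theta$ fixed, $S(Y^*)$ is independent of $U$.

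Boundedness handles integrability. The entries $S_i\in\{0,1\}$ are bounded and $U_i$ is Gaussian, so $U_iS_i(Y^*)$ is integrable and $\E[U_iS_i(Y^*)]=\E[U_i]\,\E[S_i(Y^*)]=0$. Summing over $i$ and dividing by $n$ gives the displayed identity.

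For the interpretive sentences, note that $Y^*=\theta+V$ with $\Cov(V)=\Sigma+\alpha\Sigma=(1+\alpha)\Sigma$. The right-hand side of the identity is therefore the expected welfare of $S$ applied to data drawn from $\N(\theta,(1+\alpha)\Sigma)$. The limit statement as $\alpha\downarrow0$ is exactly the assumed continuity of that welfare in the training-noise variance at $\alpha=0$.

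The only step needing care is the independence claim. There is no real obstacle beyond checking that $U$ and $V$ are jointly Gaussian, which holds because both are linear images of the single Gaussian vector $(\varepsilon,\xi)$, so that zero cross-covariance is enough for independence.
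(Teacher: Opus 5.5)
Your proposal is correct and essentially matches the paper's proof. Both arguments show that the training and evaluation draws, or equivalently their noise parts $V=\varepsilon+\sqrt\alpha\,\xi$ and $U=\varepsilon-\alpha^{-1/2}\xi$, are jointly Gaussian with zero cross-covariance and hence independent. The paper phrases this as $\E[Y^\dagger\mid Y^*]=\theta$ and conditions on $Y^*$, whereas you split off $\theta-K$, factor $\E[U_iS_i(Y^*)]$, and also check integrability explicitly.
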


\begin{proof}[Proof of Proposition~\ref{prop:assure_cb_unbiased}]
Because $\varepsilon$ and $\xi$ are jointly Gaussian and $(Y^*,Y^\dagger)$ is an affine function of $(\varepsilon,\xi)$, the pair $(Y^*,Y^\dagger)$ is jointly Gaussian with mean $(\theta,\theta)$ and covariance
\[
\Cov(Y^*,Y^\dagger)
=
\Cov(Y,Y)-\Cov(\sqrt{\alpha}\xi,\alpha^{-1/2}\xi)
=
\Sigma-\Sigma
=0.
\]
Zero cross-covariance between jointly Gaussian blocks implies independence, so $\E[Y^\dagger\mid Y^*]=\E[Y^\dagger]=\theta$.  Conditioning on $Y^*$ gives
\[
\E\left[
\frac{1}{n}\sum_{i=1}^n (Y_i^\dagger-K)S_i(Y^*)
\mid Y^*
\right]
=
\frac{1}{n}\sum_{i=1}^n
(\theta_i-K)S_i(Y^*),
\]
and taking expectations proves the displayed identity.  As $\alpha\downarrow0$, $Y^*\sim \N(\theta,(1+\alpha)\Sigma)$ converges in distribution to $\N(\theta,\Sigma)$, so the continuity condition of the proposition gives the final claim.
\end{proof}

Figure~\ref{fig:assure_welfare} displays this comparison for Black children with genders pooled, the outcome closest to the Opportunity Atlas application in \citet{chenCompoundSelectionDecisions2025}.  The Black-male subgroup gives the same ranking pattern.

\begin{figure}[H]
  \centering
  \includegraphics[width=\textwidth]{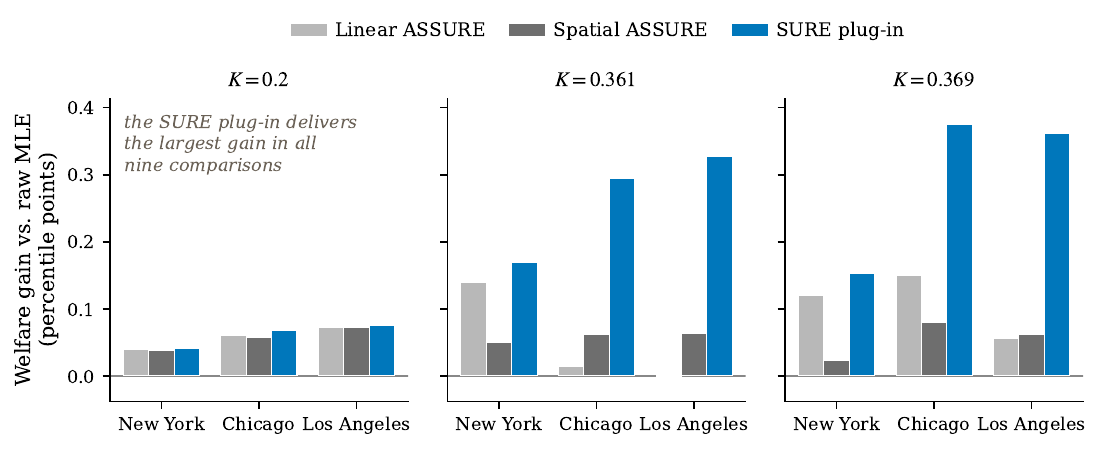}
  \caption{Targeting-welfare comparison with ASSURE in the three largest CZs, New York, Chicago, and Los Angeles.  Each bar is a coupled-bootstrap estimate of the welfare gain over the raw-\mle{} targeting rule, in percentile points of household income rank, for Black children with genders pooled ($B=\assureB{}$, $\alpha=\assureAlphaCb{}$).  Panels vary the cost threshold $K$, following the Opportunity Atlas application in \citet{chenCompoundSelectionDecisions2025}.  Linear ASSURE (a one-parameter variance-based threshold adjustment) and Spatial ASSURE (a grid search over geographic \gp{} kernel parameters) maximize the smoothed targeting-welfare objective within the restricted classes described above.  The \sure{} plug-in rule trains the geographic \gp{} by minimizing \sure{} with no welfare tuning and is evaluated on the same targeting task.}
  \label{fig:assure_welfare}
\end{figure}

\end{document}

%% file: tables/auto_oa_results.tex
\newcommand{\oaNCzs}{20}
\newcommand{\oaTotalTracts}{25,777}
\newcommand{\oaMinN}{723}
\newcommand{\oaMaxN}{3,859}
\newcommand{\oaMedianN}{1,016}

\newcommand{\oaReductionLsClose}{38}
\newcommand{\oaReductionGpGeo}{48}

\newcommand{\oaReductionGpContigOls}{53}

\newcommand{\oaGeoWins}{8}
\newcommand{\oaContigWins}{12}
\newcommand{\oaGeoPlainWins}{8}

\newcommand{\oaGeoOlsWins}{7}

\newcommand{\oaAggFullMinusFixed}{0.007}

\newcommand{\oaAggReportedSure}{0.45}
\newcommand{\oaReductionAggMle}{55}
\newcommand{\oaReductionAggVsClose}{27}
\newcommand{\oaAggBeatsBest}{16}
\newcommand{\oaAggTotal}{20}

\newcommand{\oaWeightGpGeo}{20.1}

\newcommand{\oaWeightGpGeoOls}{27.6}
\newcommand{\oaWeightGpContigOls}{39.0}

%% file: tables/auto_oa_proxy_gaps.tex
\newcommand{\oaProxyGapPoints}{120}
\newcommand{\oaProxyGapNCzs}{20}

%% file: tables/auto_oa_assure_welfare.tex
\newcommand{\assureB}{100}
\newcommand{\assureAlphaCb}{0.1}

%% file: tables/auto_oa_target_scatter.tex
\newcommand{\oaWedgeSharePct}{39}
\newcommand{\oaScatterNTracts}{1,304}

%% file: tables/auto_oa_ladder.tex
\newcommand{\oaLadderAggBeforeBilat}{0.519}
\newcommand{\oaLadderAggAfterBilat}{0.507}

\newcommand{\oaLadderBestBeforeBilat}{0.544}
\newcommand{\oaLadderBestAfterBilat}{0.522}

\newcommand{\oaLadderBilatWeight}{0.655}

%% file: tables/auto_oa_table.tex
\begin{table}[!htbp]
  \centering
  \begin{threeparttable}
  \caption{\sure{}-estimated MSE relative to raw \mle{} across \oaNCzs{} commuting zones ($n = \oaTotalTracts{}$ total tracts, pooled outcome).}\label{tab:oa_results}
  \begin{tabular}{lccc}
    \toprule
    Method & \sure{} MSE / \mle{} MSE & Reduction vs.\ \mle{} & Average weight \\
    \midrule
    \addlinespace
    \multicolumn{4}{@{}l}{\textit{Non-spatial baselines}} \\
    \addlinespace[2pt]
  \quad \mle{} & 1.000 & 0\% & 0.0\% \\
  \quad \nneb{} & 0.752 & 25\% & 1.0\% \\
  \quad \closegauss{} & 0.619 & 38\% & 3.6\% \\
    \addlinespace[6pt]
    \multicolumn{4}{@{}l}{\textit{Spatial (geographic distance)}} \\
    \addlinespace[2pt]
  \quad \gp{} Geo & 0.515 & 48\% & 20.1\% \\
  \quad \gp{} Geo OLS & 0.473 & 53\% & 27.6\% \\
    \addlinespace[6pt]
    \multicolumn{4}{@{}l}{\textit{Spatial (contiguity distance)}} \\
    \addlinespace[2pt]
  \quad \gp{} Contig & 0.511 & 49\% & 8.8\% \\
  \quad \gp{} Contig OLS & 0.466 & 53\% & 39.0\% \\
    \midrule
  \textbf{\sure{} Average} & \textbf{0.450} & \textbf{55\%} & --- \\
    \bottomrule
  \end{tabular}
  \begin{tablenotes}[flushleft]\footnotesize
  \item \textit{Notes:} Lower values in the \sure{} MSE / \mle{} MSE column are better. Values are tract-weighted averages across \oaNCzs{} commuting zones of the \sure{} estimate for each CZ divided by the corresponding raw-\mle{} MSE benchmark, using the diagonal matrix of reported tract variances.  Candidate \sure{} differentiates through trained parameters (Section~\ref{sec:learned_params}).  The average row reports \sure{} for the \sure{}-weighted average, including the derivative of the selected weights with respect to the data; this is an evaluation quantity for the final map, not the fixed-weight oracle criterion in Proposition~\ref{prop:averaging}.  Average weight is the tract-weighted average weight assigned by the candidate \sure{} QP; it is not applicable to the average row.
  \end{tablenotes}
  \end{threeparttable}
\end{table}

%% file: tables/auto_oa_targeting_welfare_table.tex
\begin{table}[!htbp]
  \centering
  \begin{threeparttable}
  \caption{Coupled-bootstrap estimates of top-third targeting gains relative to raw \mle{} across Opportunity Atlas outcomes.}\label{tab:oa_targeting_welfare}
  \setlength{\tabcolsep}{3pt}
  \begin{tabular}{@{}lcccc@{}}
    \toprule
    Method & Pooled & Pooled male & Black male & White male \\
    \midrule
  \nneb{} & -0.09 (-\$115) & -0.16 (-\$184) & -0.44 (-\$454) & -0.03 (-\$27) \\
  \closegauss{} & +0.08 (+\$102) & +0.12 (+\$136) & +1.14 (+\$1,196) & +0.19 (+\$230) \\
  \gp{} Geo & +0.23 (+\$283) & +0.37 (+\$435) & +1.12 (+\$1,173) & +0.67 (+\$834) \\
  \sure{} Average & +0.21 (+\$264) & +0.34 (+\$403) & +1.21 (+\$1,265) & +0.61 (+\$751) \\
    \bottomrule
  \end{tabular}
  \begin{tablenotes}[flushleft]\footnotesize
  \item \textit{Notes:} Cells report gains relative to \mle{}. The first number is the selected group's average rank gain in percentile-rank points; the value in parentheses is the 2015 dollar-equivalent conversion using the official Opportunity Atlas percentile-dollar crosswalk (\texttt{kid\_hh\_income} column). Dollar values are interpretive conversions of rank outcomes, not separately estimated dollar outcomes. Values are tract-weighted averages across commuting zones using $B=20$ coupled-bootstrap draws per CZ at perturbation level $\alpha=0.1$, selecting the top third of tracts. The \gp{} Geo row is the geographic-distance \gp{} with Nadaraya--Watson preprocessing and no OLS residualization. The targeting evaluation construction is described in Appendix~\ref{app:oa_robustness}. Outcome coverage: Pooled: 20 CZs, $n=25{,}777$; Pooled male: 20 CZs, $n=25{,}669$; Black male: 20 CZs, $n=10{,}026$; White male: 20 CZs, $n=21{,}643$.
  \end{tablenotes}
  \end{threeparttable}
\end{table}

%% file: tables/auto_oa_proxy_exact_table.tex
\begin{table}[H]
  \centering
  \begin{threeparttable}
  \small
  \caption{Fixed-parameter proxy comparisons across \oaNCzs{} commuting zones ($n = \oaTotalTracts{}$ total tracts, pooled outcome).}\label{tab:oa_proxy_exact}
  \begin{tabular}{@{}lcccc@{}}
    \toprule
    Method & \shortstack{Fixed-param.\ proxy\\ / \mle{} MSE} & \shortstack{\sure{}\\ / \mle{} MSE} & \sure{} $-$ proxy & Weight \\
    \midrule
    \addlinespace
    \multicolumn{5}{@{}l}{\textit{Non-spatial baselines}} \\
    \addlinespace[2pt]
  \quad \mle{} & 1.000 & 1.000 & 0.000 & 0.0\% \\
  \quad \nneb{} & 0.751 & 0.752 & +0.002 & 1.0\% \\
  \quad \closegauss{} & 0.601 & 0.619 & +0.019 & 3.6\% \\
    \addlinespace[6pt]
    \multicolumn{5}{@{}l}{\textit{Spatial (geographic distance)}} \\
    \addlinespace[2pt]
  \quad \gp{} Geo & 0.510 & 0.515 & +0.005 & 20.1\% \\
  \quad \gp{} Geo OLS & 0.469 & 0.473 & +0.004 & 27.6\% \\
    \addlinespace[6pt]
    \multicolumn{5}{@{}l}{\textit{Spatial (contiguity distance)}} \\
    \addlinespace[2pt]
  \quad \gp{} Contig & 0.508 & 0.511 & +0.004 & 8.8\% \\
  \quad \gp{} Contig OLS & 0.461 & 0.466 & +0.005 & 39.0\% \\
    \midrule
  \textbf{Fixed-Weight Proxy Average} & \textbf{0.443} & & & \\
  \textbf{\sure{} Average} & & \textbf{0.450} & \textbf{+0.007} & \\
    \bottomrule
  \end{tabular}
  \begin{tablenotes}[flushleft]\footnotesize
  \item \textit{Notes:} The fixed-parameter proxy treats trained parameters as constants; it is not a \sure{} value.  Both normalized MSE columns divide by the corresponding raw-\mle{} MSE benchmark.  The \sure{} column differentiates through trained parameters and is used in the main analysis (Section~\ref{sec:learned_params}).  The gap is \sure{} minus the fixed-parameter proxy; positive values indicate proxy optimism.  For \nneb{} and \closegauss{}, the proxy holds their moment-based mean and variance estimates fixed.  Weight is the tract-weighted average weight from the candidate \sure{} QP.  The average rows compare the fixed-weight proxy with \sure{} for the \sure{}-weighted average.  This table is supplementary and is not the main performance table.
  \end{tablenotes}
  \end{threeparttable}
\end{table}

%% file: tables/auto_oa_robustness_table.tex
\begin{table}[H]
  \centering
  \begin{threeparttable}
  \caption{Four-candidate \gp{} Geo comparison across Opportunity Atlas KFR outcomes.}\label{tab:oa_robustness}
  \setlength{\tabcolsep}{3pt}
  \scriptsize
  \begin{tabular*}{\textwidth}{@{\extracolsep{\fill}}lcccc@{}}
    \toprule
    Method & \multicolumn{1}{c}{Pooled} & \multicolumn{1}{c}{Pooled male} & \multicolumn{1}{c}{Black male} & \multicolumn{1}{c}{White male} \\
    \midrule
    \addlinespace[6pt]
    \multicolumn{5}{@{}l}{\textit{Panel A. \sure{} MSE / \mle{} MSE}} \\
    \addlinespace[2pt]
    \multicolumn{5}{@{}l}{\textit{Non-spatial baselines}} \\
    \addlinespace[2pt]
  \quad \mle{} & 1.000 & 1.000 & 1.000 & 1.000 \\
  \quad \nneb{} & 0.752 & 0.586 & 0.393 & 0.349 \\
  \quad \closegauss{} & 0.619 & 0.467 & 0.215 & 0.340 \\
    \addlinespace[6pt]
    \multicolumn{5}{@{}l}{\textit{Spatial candidate (geographic distance)}} \\
    \addlinespace[2pt]
  \quad \gp{} Geo & 0.515 & 0.381 & 0.188 & 0.291 \\
    \midrule
  \textbf{\sure{} Average} & \textbf{0.510} & \textbf{0.372} & \textbf{0.175} & \textbf{0.275} \\
    \midrule
    \addlinespace[6pt]
    \multicolumn{5}{@{}l}{\textit{Panel B. Tract-weighted average weight}} \\
    \addlinespace[2pt]
    \multicolumn{5}{@{}l}{\textit{Non-spatial baselines}} \\
    \addlinespace[2pt]
  \quad \mle{} & 0.000 & 0.000 & 0.000 & 0.000 \\
  \quad \nneb{} & 0.075 & 0.125 & 0.034 & 0.309 \\
  \quad \closegauss{} & 0.107 & 0.110 & 0.398 & 0.059 \\
    \addlinespace[6pt]
    \multicolumn{5}{@{}l}{\textit{Spatial candidate (geographic distance)}} \\
    \addlinespace[2pt]
  \quad \gp{} Geo & 0.818 & 0.764 & 0.568 & 0.631 \\
    \bottomrule
  \end{tabular*}
  \begin{tablenotes}[flushleft]\footnotesize
  \item \textit{Notes:} Lower values in Panel~A are better. Panel~A values are tract-weighted averages across commuting zones of the \sure{} estimate for each CZ divided by the corresponding raw-\mle{} MSE benchmark, with each CZ weighted by its tract count.  Panel~B reports the corresponding tract-weighted average weights from the \sure{} weight optimization; weights in each column sum to one up to rounding.  All columns use the same four-candidate \gp{} Geo comparison: \mle{}, \nneb{}, \closegauss{}, and \gp{} Geo.  The \sure{} Average row reports \sure{} for the \sure{}-weighted average (evaluated as in Table~\ref{tab:oa_results}).  Candidate \sure{} differentiates through trained parameters.  The main pooled-outcome table uses the larger seven-candidate library.  Outcome coverage: Pooled: 20 CZs, $n = 25,777$; Pooled male: 20 CZs, $n = 25,669$; Black male: 20 CZs, $n = 10,026$; White male: 20 CZs, $n = 21,643$.
  \end{tablenotes}
  \end{threeparttable}
\end{table}